\definecolor{myblue}{rgb}{0.078,0.176,0.706}
\definecolor{myred}{rgb}{0.922,0.039,0.039}
\definecolor{myyellow}{rgb}{1.000,0.843,0.000}
\newtheorem{theorem}{Theorem}
\newtheorem{lemma}[theorem]{Lemma}
\newtheorem{prop}[theorem]{Proposition}
\newtheorem{corollary}[theorem]{Corollary}
\newtheorem{assumption}{Assumption}
\theoremstyle{plain}
\newtheorem{remark}[theorem]{Remark}
\theoremstyle{definition} 
\newtheorem{example}[theorem]{Example}
\newcommand{\rmc}{{\rm{c}}}
\newcommand{\rms}{{\rm{s}}}
\newcommand{\rmd}{{\rm{d}}}
\newcommand{\rmsc}{{\rm{sc}}}
\newcommand{\cH}{\mathcal{H}}
\newcommand{\cP}{\mathcal{P}}
\newcommand{\cF}{\mathcal{F}}
\newcommand{\cM}{\mathcal{M}}
\newcommand{\diag}{\text{diag}}
\newcommand{\N}{\mathbb{N}}
\newcommand{\n}{(n)}
\DeclareMathOperator{\sinc}{sinc}
\DeclareMathOperator*{\argmin}{argmin}
\DeclareMathOperator*{\argmax}{argmax}
\DeclareMathOperator{\supp}{supp}
\def \Trm {\textup{Tr}} 
\def \d {\mathrm{d}} 
\def\cH{\mathcal{H}}
\def\ep{\varepsilon}
\def\bN{\mathbb{N}}
\def\bR{\mathbb{R}}
\def\ds{\mbox{ }\mbox{ }}
\def\sand{\widetilde D}
\def\sandq{\widetilde Q}
\newcommand{\reg}[1]{\prescript{\textnormal{reg}}{}{#1}} 
\newcommand{\Dreg}{\reg{\mathfrak{D}}}
\begin{document}

\title{\Large Additivity of quantum relative entropies\\ as a single-copy criterion} 

\author{Salman Beigi}
\affiliation{School of Mathematics, Institute for Research in Fundamental Sciences (IPM), P.O. Box 19395-5746, Tehran, Iran}

\author{Roberto Rubboli}
\affiliation{Centre for Quantum Technologies, National University of Singapore, Singapore 117543, Singapore}

\author{Marco Tomamichel}
\affiliation{Department of Electrical and Computer Engineering, College of Design and Engineering, National University of Singapore 117583, Singapore}
\affiliation{Centre for Quantum Technologies, National University of Singapore, Singapore 117543, Singapore}

\begin{abstract}
The fundamental goal of information theory is to characterize complex operational tasks using efficiently computable information quantities, Shannon's capacity formula being the prime example of this. However, many tasks in quantum information can only be characterized by regularized entropic measures that are often not known to be computable and for which efficient approximations are scarce. It is thus of fundamental importance to understand when regularization is not needed, opening the door to an efficiently computable characterization based on additive quantities. Here, we demonstrate that for a large class of problems, the question of whether regularization is needed or not can be determined at the single-copy level. Specifically, we demonstrate that regularization of the Umegaki relative entropy, along with related quantities such as the Petz and sandwiched relative entropies, is not needed if and only if a single-copy optimizer satisfies a certain property. These problems include hypothesis testing with arbitrarily varying hypotheses as well as quantum resource theories used to derive fundamental bounds for entanglement and magic state distillation. We derive the Stein, Chernoff, and Hoeffding exponents for these problems and establish necessary and sufficient conditions for their additivity, while also presenting partial results for the strong converse exponent.

\end{abstract}

\maketitle


\section{Introduction}
Additivity problems play a central role in quantum information theory, arising in a variety of contexts. 
As an example, the Holevo information of quantum channels had long been believed to be additive~\cite{holevo06additivity}, but counterexamples were ultimately found~\cite{hastings09additivity}. Nonetheless, it becomes additive for certain channels, including entanglement-breaking channels~\cite{shor2002additivity}, which implies that the classical capacity for such channels is efficiently computable.
Additivity problems are also ubiquitous in quantum resource theories~\cite{chitambar19resource}, for example, when characterizing how much entanglement can be distilled from a state~\cite{horodecki2000limits}. Many operational quantities are characterized by entropic quantities that appear in regularized form, defined in the asymptotic limit of many copies. However, such regularized measures are generally intractable as the underlying optimizations become increasingly complex with the number of copies. As a result, explicit evaluation of these quantities is typically possible only in cases where additivity holds, making regularization unnecessary. However, determining whether additivity holds is generally challenging.

A central example, and the main focus of this work, is the minimization of the Umegaki relative entropy over a convex and compact subset of positive operators. In particular, we are interested in the scenario where multiple copies of a state are considered. To this end, let us introduce a family of subsets of positive operators $\{\mathcal{F}_n\}_{n \in \mathbb{N}}$, where $\mathcal F_n$ consists of operators acting on the $n$-fold tensor product of a Hilbert space $\mathcal{H}$. Moreover, let $\rho_1, \dots, \rho_k$ be quantum states on $\mathcal{H}$, $m_1, \dots, m_k \in \mathbb{N}$ and define $M=\sum_{j=1}^k m_j$. We consider the optimization problem
\begin{equation} \mathfrak{D}\left(\bigotimes_{j=1}^k \rho_j^{\otimes m_j}\right) = \min_{\sigma^{(M)} \in \cF_{M}}D \left(\bigotimes_{j=1}^k \rho_j^{\otimes m_j}\middle\|\sigma^{(M)}\right) \,, \end{equation}
where $D(\rho \| \sigma) = \Tr[\rho(\log \rho - \log \sigma)]$ is the Umegaki relative entropy. A main question is to determine under which conditions additivity holds; that is, for any $m_1,\dots, m_k \in \mathbb{N}$,
\begin{equation}
\label{eq:strong-weak additivity}
\mathfrak{D}\left(\bigotimes_{j=1}^k \rho_j^{\otimes m_j}\right) =  \sum_{j=1}^k m_j\min_{\sigma^{(1)} \in \cF_{1}} D \left(\rho_j\middle\|\sigma^{(1)}\right) = \sum_{j=1}^k m_j\mathfrak{D}(\rho_j) \,.
\end{equation}
For a general family of sets $\{\cF_n\}_{n \in \mathbb{N}}$, the minimized relative entropy is generally non-additive. A prominent example is the relative entropy of entanglement, which fails to be additive~\cite{vollbrecht2001entanglement}. A similar phenomenon occurs in the resource theory of magic, where the set of stabilizer states also leads to non-additivity~\cite{veitch2014resource}. Nonetheless, additivity holds for some specific resource theories, such as the resource theory of coherence~\cite{winter2016operational} or for some specific family of states~\cite{zhu2010additivity,audenaert2002asymptotic,rubboli2024new}.

When $\rho_j=\rho$ for all $j=1,\dots,k$, the problem in~\eqref{eq:strong-weak additivity} reduces to weak additivity, which concerns the case of multiple identical copies of a single state.  An important and closely related quantity is the regularized relative entropy:
\begin{equation}
\label{additivity intro}
\Dreg(\rho) = \lim_{n \rightarrow \infty} \frac{1}{n} \mathfrak{D}\big(\rho^{\otimes n}\big),
\end{equation}
which plays a central role in several settings, including providing upper bounds on resource distillation rates~\cite{horodecki2013quantumness} and determining the Stein exponent in generalized quantum hypothesis testing~\cite{Brandao2,hayashi2024generalized,lami2025solution}. The regularized expression involves an optimization problem over the space $\cF_n$ together with a limit \(n \rightarrow +\infty\) which makes its computation generally challenging. In particular, the additivity problem in~\eqref{eq:strong-weak additivity} is directly relevant to the computation of the regularized quantity, as weak additivity implies that the regularized expression coincides with its single-copy form, making the problem tractable in many instances. 

A natural setting where additivity questions become relevant is generalized hypothesis testing, where the error exponents can be expressed in terms of regularized relative entropies. Consider a device that outputs either $n$ copies of a state $\rho$ (null hypothesis), or a state $\sigma^{\n}$ drawn from a closed and compact subset $\mathcal{F}_n$ of positive operators (alternative hypothesis). The goal is to distinguish between these two scenarios
\begin{equation}\label{eq:problem introduction}
\begin{aligned}
& \text{Null hypothesis: The state is }  \rho^{\otimes n} \\
    &\text{Alternative hypothesis: The state is some } \sigma^{(n)} \text{ belonging to } \cF_n.
\end{aligned}
\end{equation} 
The task is to perform a binary measurement $\{T_n,I-T_n\}$, where $0\leq T_n\leq I$ is a test operator, in order to distinguish between the two hypotheses. If the measurement outcome corresponds to $T_n$, we conclude that the null hypothesis is true; if the outcome corresponds to $I - T_n$, we decide that the alternative hypothesis is true. 
The type I and type II errors correspond, respectively, to the probability of incorrectly concluding that the state is some $\sigma^{\n} \in \cF_n$ while the true state is $\rho^{\otimes n}$, and the worst-case probability of incorrectly concluding that the state is $\rho^{\otimes n}$ when it is in fact a state $\sigma^{\n} \in \cF_n$.
We are interested in the asymptotic behavior of these errors as $n$ tends to infinity.

One of the most studied and arguably relevant scenarios in the literature is the asymmetric setting, also known as Stein's exponent, where the objective is to minimize the type II error under the constraint that the type I error vanishes in the asymptotic limit (see Section~\ref{sec: generalized hypothesis} for further details). This problem has been extensively studied across a wide range of settings.
In the classical case, the generalized hypothesis testing with arbitrarily varying sources was investigated in~\cite{FangweiShiyi1996hypothesis}, with a broader characterization of the Stein exponent later established in~\cite{Brandao+2020adversarial}.
In the quantum case, the authors of~\cite{Berta+2021composite} investigated the Stein exponent for the case where the alternative hypothesis is given by the convex hull of i.i.d.\ states. The authors prove that the Stein exponent equals the regularized relative entropy and show that, in general, such regularization cannot be avoided. 
In~\cite{hayashi2024generalized,lami2025solution}, the authors study a general framework in which the alternative hypothesis is represented by an arbitrary convex set of quantum states, subject to certain mild additional conditions. Finally, in~\cite[Theorem 33]{fang2024generalized}, the authors analyze the case where the hypothesis consists of a subset of positive operators satisfying an additional constraint on the corresponding polar sets, which we describe in more detail below.
These works collectively establish that, under various assumptions on the family of subsets $\{\cF_n\}_{n \in \mathbb{N}}$, the Stein exponent is characterized by the regularized relative entropy
\begin{align}\label{eq:Stein-composite results}
\rms(\rho\| \{\cF_n\}_{n\in \N}) = \Dreg(\rho)\,.
\end{align}
Therefore, the additivity property in~\eqref{eq:strong-weak additivity} is directly linked to the ability to express the Stein exponent in a single-letter formula, i.e., one involving only a single copy of the state $\rho$.

\bigskip

In this work, we investigate the additivity problems introduced above within a particular class of theories. Specifically, we focus on cases where the family of sets $\{\cF_n\}_{n \in \mathbb{N}}$
satisfies (in addition to other mild assumptions) a condition on its polar sets: the polar sets are closed under tensor products, meaning that 
\begin{equation}
    \cF_m^\circ \otimes \cF_n^\circ \subseteq \cF_{m+n}^\circ \qquad \forall m, n \in \mathbb{N}\,,
\end{equation}  
where the polar set of a convex set $\cF$ of hermitian operators is defined as
\begin{equation}
    \cF^{\circ} = \{X\geq 0:\Tr[XY]\leq 1, \forall Y \in \cF \}.
\end{equation}
This condition was introduced in~\cite{fang2024generalized} and earlier in~\cite{boreland2022information} in the context of additivity problems and, although somewhat restrictive, encompasses several important cases. These include hypothesis testing with arbitrarily varying sources, as well as sets used to approximate relevant measures in quantum resource theories, such as the Rains set, which approximates separable states, and the non-positive mana set used to approximate stabilizer states in odd dimensions. Within these theories, we derive several results concerning additivity and discuss their applications in the context of generalized hypothesis testing.

\begin{itemize}
    \item Our first main result (Theorem~\ref{th: strong additivity Umegaki}) is to establish necessary and sufficient conditions under which the additivity relation in equation~\eqref{eq:strong-weak additivity} holds. In particular, in Theorem~\ref{th: strong additivity Umegaki}, we prove that $\mathfrak{D}\big(\bigotimes_{j=1}^k \rho_j^{\otimes m_j}\big) = \sum_{j=1}^k m_j\mathfrak{D}(\rho_j)$ for all $m_j\in \mathbb{N}$, $j=1,\dots,k$ if and only if
    \begin{align}\label{eq:comment-8}
        \sup_{t \in \mathbb{R}, \tau \in \mathcal{F}_1}\textup{Tr}\Big[\tau \sigma_{0,j}^{-\frac{1}{2}(1+it)}\rho_j \sigma_{0,j}^{-\frac{1}{2}(1-it)}\Big] = 1, \qquad \forall j=1,\dots, k\,.
    \end{align}
Here, $\sigma_{0,j} \in \argmin_{\sigma \in \cF_1} D(\rho_j\|\sigma)$ is an optimizer for $\rho_j$. We further extend the result to states defined on Hilbert spaces of different dimensions, provided this is compatible with the structure of the theory under consideration.
\end{itemize}

Crucially, this property can be verified at the single-copy level, without the need to consider arbitrarily many copies, as it depends solely on whether the single-copy optimizers of $\rho_j$ satisfy a specific condition. We complement this discussion with several examples. In particular, within the resource theories of entanglement and magic, we show that certain symmetric states—such as Werner and isotropic states, which play a prominent role in entanglement manipulation tasks such as entanglement distillation—satisfy the required condition. As a result, their regularized relative entropy coincides with the single-copy expression. More interestingly, in the context of hypothesis testing with arbitrarily varying sources, we construct qubit examples for which additivity holds up to a certain number of copies (that can be arbitrarily large) but breaks down beyond that threshold. To the best of our knowledge, this constitutes the first explicit and elementary example exhibiting such behavior.

\begin{itemize}
    \item Our second main result is to apply the additivity conditions to error exponents. In particular, we establish a single-copy necessary and sufficient condition for when regularization is required for various error exponents. For the Stein exponent, we obtain that the regularization can be removed, i.e. $s(\rho\|\{\cF_n\}_{n \in \mathbb{N}}) =D(\rho\|\sigma_0)$, 
if and only if
\begin{equation}\label{eq:comment-9}
\sup_{t \in \mathbb{R}, \tau \in \mathcal{F}_1}\textup{Tr}\Big[\tau \sigma_0^{-\frac{1}{2}(1+it)}\rho \sigma_0^{-\frac{1}{2}(1-it)}\Big] = 1.
\end{equation}
Here, $\sigma_0 \in \argmin_{\sigma \in \mathcal{F}_1} D(\rho \| \sigma)$ is an optimizer for $\rho$. We derive expressions involving regularized quantities for the Chernoff and Hoeffding exponents and establish necessary and sufficient conditions under which the regularization can be removed. These conditions are also formulated at the single-copy level and must be verified at a saddle point. In addition, we present partial results concerning the strong converse exponent.
\end{itemize}

To establish the above results, we extend our additivity results to general $\alpha$-$z$-R\'enyi relative entropies, which include both the Petz and sandwiched relative entropies. Beyond the case of simple null and alternative hypotheses, the Chernoff, Hoeffding, and strong converse exponents in quantum hypothesis testing are known only in certain special cases, which are discussed in~\cite{Mosonyi+22errorexponent, HayashiTomamichel2016correlation}.

\begin{itemize}
    \item Our final main result (Proposition~\ref{prop: Additivity regularized}) establishes the strong additivity of the regularized relative entropy, i.e., we prove that 
  \begin{equation}
\label{eq:additivity regularized}
    \Dreg\left(\bigotimes_{j=1}^k \rho_j\right) = \sum_{j=1}^k \Dreg(\rho_j)\,.
\end{equation}
We further extend the result to states defined on Hilbert spaces of different dimensions, provided this is compatible with the structure of the theory under consideration.
\end{itemize}

Since both the Rains set and the set of states with non-positive mana satisfy the required assumptions, our results introduce two additional additive monotones that are based on the Umegaki relative entropy for the resource theories of entanglement and magic. In particular, the Rains set gives rise to a new additive—though unfaithful—entanglement monotone, expanding the family of known additive measures such as squashed entanglement~\cite{christandl2004squashed}, logarithmic negativity~\cite{vidal2002computable, plenio2005logarithmic}, and conditional entanglement of mutual information~\cite{yang2007conditional,yang2008additive}.
For the Rains set, we show that while additivity for the minimized Umegaki relative entropy does not generally hold, additivity holds when regularized. This regularized monotone is computable for small systems whenever regularization is unnecessary—a condition that can be checked using Theorem~\ref{th: strong additivity Umegaki}.
Indeed, the underlying sets admit an SDP representation, and the relative entropy itself can be approximated via semidefinite programming~\cite{fawzi2019semidefinite}. Finally, we note that the additivity of the regularized relative entropy of entanglement remains a long-standing open question since our setting does not include the set of separable states.

\section{Preliminaries}

\subsection{Quantum relative entropies}
In this section, we introduce quantum relative entropies, which will be the central focus of our study on additivity throughout this manuscript.

The Umegaki relative entropy plays a central role in the quantum hypothesis testing problem. Given quantum states $\rho$ and $\sigma$, it is defined as
\begin{equation}
D(\rho \| \sigma)  = \begin{cases}
 \text{Tr}[\rho (\log{\rho} - \log{\sigma})] & \text{if} \; \text{supp}(\rho) \subseteq \text{supp}(\sigma),\\
 +\infty & \text{else}.
\end{cases}
\end{equation}
Two other key quantities in our analysis are the Petz and sandwiched R\'enyi divergences, both of which play a central role in characterizing error exponents in quantum hypothesis testing. The \emph{Petz R\'enyi divergence}~\cite{petz1986quasi,Tomamichel} is defined as
\begin{equation}
\label{definition Petz}
D_{\alpha}(\rho \| \sigma) :=
\begin{cases}
\frac{1}{\alpha-1} \log{\Tr[\rho^\alpha \sigma^{1-\alpha}]}  & \text{if}\; (\alpha<1 \wedge \rho \not \perp \sigma) \vee \rho \ll \sigma, \\
 +\infty & \text{else}.
\end{cases}
\end{equation}
Also, the \textit{sandwiched quantum R\'enyi divergence}~\cite{muller2013quantum, wilde2014strong,Tomamichel} is defined as
\begin{equation}
\label{definition sandwiched}
\widetilde{D}_{\alpha}(\rho \| \sigma) :=
\begin{cases}
\frac{1}{\alpha-1}\log{\Tr\big[\big(\sigma^{\frac{1-\alpha}{2\alpha}}\rho \sigma^{\frac{1-\alpha}{2\alpha}}\big)^{\alpha}\big]}   & \text{if}\; (\alpha<1 \wedge \rho \not \perp \sigma) \vee \rho \ll \sigma, \\
 +\infty & \text{else}.
\end{cases}
\end{equation}

The Petz and sandwiched R\'enyi divergences are special cases within the broader family of $\alpha$-$z$-R\'enyi relative entropies. Although we do not provide an operational interpretation for these quantities, this parametrization provides a convenient and unifying framework for formulating our results in a general way, encompassing many divergences of interest.

The $\alpha$-$z$-R\'enyi relative entropies
are defined for \(\sigma\) and \(\rho\) as~\cite{audenaert2015alpha,zhang2020wigner}
\begin{equation}
\label{definition alpha-z}
D_{\alpha,z}(\rho \| \sigma):=
\begin{cases}
\frac{1}{\alpha-1}\log{\text{Tr}\big[\big(\rho^\frac{\alpha}{2z}\sigma^\frac{1-\alpha}{z}\rho^\frac{\alpha}{2z}\big)^z\big]} & \text{if}\; (\alpha<1 \wedge \rho \not \perp \sigma) \vee \rho \ll \sigma, \\
 +\infty & \text{else}.
\end{cases}
\end{equation}
In the following, we denote $Q_{\alpha,z}(\rho \| \sigma):=\exp \big( (\alpha-1)D_{\alpha,z}(\rho \| \sigma) \big)$.
When $z=1$, the $\alpha$-$z$-R\'enyi relative entropy reduces to the Petz R\'enyi relative entropy, while for $z=\alpha$, it reduces to the sandwiched R\'enyi relative entropy.
Finally, it reduces to Umegaki's relative entropy in the limit of $\alpha\to 1$ (when $z \neq 0$). 

In the following, we consider quantum relative entropies minimized over a convex and compact subset of positive semidefinite operators. Denoting this set by 
$\cF$, we use calligraphic letters to represent the corresponding minimized quantities:
\begin{align}
&\mathfrak{D}(\rho) = \min_{\sigma \in \cF}D(\rho\|\sigma) \\
&\mathfrak{D}_{\alpha,z}(\rho) = \min_{\sigma \in \cF}D_{\alpha,z}(\rho\|\sigma) \,.
\end{align}
Finally, the regularized Umegaki relative entropy is defined as
\begin{equation}
    \Dreg(\rho) = \lim_{n \rightarrow \infty} \frac{1}{n}\mathfrak{D}\big(\rho^{\otimes n}\big) \,.
\end{equation}

\subsection{Polar set and support function}
\label{sec:polar set}

As mentioned in the introduction, our main objective is to study the additivity properties of quantum relative entropies optimized over subsets of positive operators. In this section, we formally introduce the key assumptions imposed on these sets, particularly the condition on their polar sets, which will play a central role throughout the manuscript.

First, we introduce the primary notation for sets. Given two sets of operators $\cF, \cF'$, their tensor product is defined as
\begin{equation}
    \cF\otimes \cF'=\{X\otimes X':\, X\in \cF, X'\in \cF'\}.
\end{equation}
The polar set of a convex set $\cF$ of hermitian operators is defined as\footnote{Here, we indeed consider the intersection of the polar set with the set of positive semidefinite operators and our notation slightly deviates from that of~\cite{fang2024generalized}, which denotes this set by $\cF^\circ_+$.}
\begin{equation}
\cF^{\circ} = \{X\geq 0:\Tr[XY]\leq 1, \forall Y \in \cF \} \,.
\end{equation}
Next, we consider a family of subsets of positive operators, denoted by $\{\cF_n\}_{n \in \mathbb{N}}$, where $\cF_n$ consists of operators acting on $\mathcal H^{\otimes n}$ where $\mathcal H$ is a finite-dimensional Hilbert space. The first assumption we impose on this family is as follows:
\begin{assumption}\label{assumption:polar}
The family of sets $\{\cF_n\}_{n\in \N}$ satisfies the following properties:
\begin{enumerate}[label=(1.\Alph*), itemsep=1pt, topsep=1pt]
\medskip
    \item Each $\cF_n$ is convex and compact;
    \item \label{A.1.B} $\cF_m \otimes \cF_n \subseteq \cF_{m+n}$,  for all $m, n \in \mathbb{N};$
    \item \label{A.4-polar} $\cF_m^\circ \otimes \cF_n^\circ \subseteq \cF_{m+n}^\circ$,  for all  $m, n \in \mathbb{N}$. 
\end{enumerate}
\end{assumption}

We now introduce additional assumptions that will be essential in various parts of our analysis. The first asserts that, for the given state $\rho$, the set $\cF_1$ contains at least one state whose support includes that of $\rho$. This condition ensures, for instance, the continuity of the minimized divergences over the set $\cF_1$ (see Appendix~\ref{app:continuity}), and it also guarantees the strong convexity of the reliability functions associated with error exponents with respect to certain parameters (see Section~\ref{sec:Chernoff}).
\begin{assumption}\label{assumption:support}
There exists $\sigma\in \cF_1$ such that the support of $\sigma$ includes that of $\rho$.
\end{assumption}

Finally, we sometimes impose the additional assumption that each set $\cF_n$ is invariant under permutations of the subsystems. In other words, if $\sigma^{\n} \in \cF_n$, then any state $\sigma'^{\n}$ obtained by permuting the subsystems in $\mathcal{H}^{\otimes n}$ also belongs to $\cF_n$.

\begin{assumption}\label{assumption:perm-inv}
 Each $\cF_n$ is closed under permutations.
\end{assumption}

In the study of strong additivity, we also consider pairs of states defined on possibly different Hilbert spaces. This situation arises frequently in quantum information theory, with a
prominent example being entanglement theory. Consider bipartite states
$\rho_1, \rho_2$ defined on (possibly different) finite-dimensional Hilbert spaces $\mathcal H_{AB}, \mathcal H_{A'B'}$, respectively. The subsystems $A$ and $A'$ are assumed to be held by the same
party, and likewise for $B$ and $B'$.
Under this assumption, the tensor
product state $\rho_1 \otimes \rho_2$ is naturally regarded as a
bipartite state shared between the composite parties $AA'$ and $BB'$. 
When families of states indexed by different Hilbert spaces are considered, we make the underlying Hilbert spaces explicit in the
notation. We use the notation $\mathcal{F}(\mathcal{H})$ to denote subsets of positive semidefinite operators on the Hilbert space $\mathcal{H}$.
We introduce the following additional assumption.
\begin{assumption}
\label{ass:tensor-compatibility}
The family of sets
$\bigl\{\,\mathcal{F}\bigl(\bigotimes_{j=1}^k \mathcal{H}_j^{\otimes m_j}\bigr)\,\bigr\}_{(m_1,\dots,m_k)\in\mathbb{N}^k}
$ satisfies the following properties:
\begin{enumerate}[label=(4.\Alph*), itemsep=1pt, topsep=1pt]
\medskip
 \item Each $\mathcal{F}\bigl(\bigotimes_{j=1}^k \mathcal{H}_j^{\otimes m_j}\bigr)$ is convex and compact;
\item \label{4B-prime} $\bigotimes_{j=1}^k\mathcal{F}\bigl( \mathcal{H}_j^{\otimes m_j}\bigr) \subseteq \mathcal{F}\bigl(\bigotimes_{j=1}^k \mathcal{H}_j^{\otimes m_j}\bigr)$ for all $(m_1,\dots,m_k)\in\mathbb{N}^k$;
\item \label{4C-prime}
$\bigotimes_{j=1}^k\mathcal{F}\bigl( \mathcal{H}_j^{\otimes m_j}\bigr)^\circ \subseteq \mathcal{F}\bigl(\bigotimes_{j=1}^k \mathcal{H}_j^{\otimes m_j}\bigr)^\circ$ for all $(m_1,\dots,m_k)\in\mathbb{N}^k$;
\item There exists $\sigma\in \cF(\mathcal{H}_j)$ such that the support of $\sigma$ includes that of $\rho_j$ for all $j=1,\dots,k$.
\item Each $\mathcal{F}\bigl(\bigotimes_{j=1}^k \mathcal{H}_j^{\otimes m_j}\bigr)$ is invariant under permutations of the $m_j$ tensor factors corresponding to each $\mathcal{H}_j$.
\end{enumerate}
\end{assumption}

While most of the above assumptions are standard in the literature on generalized quantum hypothesis testing, the assumptions concerning the polar sets in Assumptions~\ref{assumption:polar} and~\ref{ass:tensor-compatibility} are less common. They first appeared in the context of composite hypothesis testing in~\cite[Assumption~25]{fang2024generalized}. Equivalently, these conditions can be expressed as the submultiplicativity of the so-called support functions, which will serve as one of our main tools throughout the manuscript for establishing additivity results. The support function associated with a convex set $\cF$, evaluated on a positive semidefinite operator $X$, is defined by
\begin{equation}
h_{\cF}(X) = \sup_{\sigma \in \cF}\Tr(\sigma X) \,.
\end{equation}
As shown in~\cite[Lemma 8]{fang2024generalized}, the condition on the polar set~\ref{A.4-polar} is related to the submultiplicativity of the corresponding support functions.

\begin{lemma}
\label{lemma:multiplicativity of support function}
Let $\mathcal{H}_{j}$, $j=1,\dots,k$ be finite-dimensional Hilbert spaces. Let $\cF(\mathcal{H}_j)$ and $\cF(\bigotimes_{j=1}^k\mathcal{H}_j)$ be sets of positive semidefinite operators on the corresponding Hilbert spaces. Their polar sets are closed under tensor product if and only if their support functions are submultiplicative, i.e.,
\begin{align}
\label{eq:submultiplivativity support function}
\bigotimes_{j=1}^k\mathcal{F}(\mathcal{H}_j)^\circ
\subseteq \mathcal{F}\left(\bigotimes_{j=1}^k\mathcal{H}_j\right)^\circ
\quad \Longleftrightarrow \quad
h_{\mathcal{F}\left(\bigotimes_{j=1}^k\mathcal{H}_j\right)}\left(\bigotimes_{j=1}^k X_{j}\right) \le \prod_{j=1}^k h_{\cF(\mathcal{H}_j)}(X_{j}), \quad \forall X_{j}\geq 0.
\end{align}
If, moreover, $\bigotimes_{j=1}^k\mathcal{F}(\mathcal{H}_j)
\subseteq \mathcal{F}(\bigotimes_{j=1}^k\mathcal{H}_j)$, then
\begin{align}
\label{eq:multiplivativity support function}
\bigotimes_{j=1}^k\mathcal{F}(\mathcal{H}_j)^\circ
\subseteq \mathcal{F}\left(\bigotimes_{j=1}^k\mathcal{H}_j\right)^\circ
\quad \Longleftrightarrow \quad
h_{\mathcal{F}(\bigotimes_{j=1}^k\mathcal{H}_j)}\left(\bigotimes_{j=1}^k X_{j}\right) = \prod_{j=1}^k h_{\cF(\mathcal{H}_j)}(X_{j}), \quad \forall X_{j}\geq 0.
\end{align}
\end{lemma}
\begin{proof}
The first statement in~\eqref{eq:submultiplivativity support function} is a straightforward generalization of the result in~\cite[Lemma~8]{fang2024generalized}. For completeness, we provide a proof based on the dual program of the support function~\cite[Lemma~5]{fang2024generalized} stating that $h_{\cF}(\omega) = \inf\{t > 0: \omega \in t \cF^\circ\}$.

($\Rightarrow$ in~\eqref{eq:submultiplivativity support function}) 
Let $t_j$ for $j=1,\dots,k$ be feasible solutions of the dual programs for $h_{\cF(\mathcal{H}_j)}(X_j)$. 
Then, we have $X_j \in t_j \cF(\mathcal{H}_j)^\circ$. 
This implies 
\begin{align}
    \bigotimes_{j=1}^k X_j \in  \bigotimes_{j=1}^k \left(t_j\cF(\mathcal{H}_j)^\circ\right)  \subseteq \left (\prod \limits_{j=1}^k t_j \right ) \mathcal{F}\left(\bigotimes_{j=1}^k\mathcal{H}_j\right)^\circ,
\end{align}
and therefore, $\prod_{j=1}^k t_j$ is also a feasible solution for the dual program for $h_{\mathcal{F}(\bigotimes_{j=1}^k\mathcal{H}_j)}(\bigotimes_{j=1}^k X_{j})$. Hence, $h_{\mathcal{F}(\bigotimes_{j=1}^k\mathcal{H}_j)}(\bigotimes_{j=1}^k X_{j}) \leq \prod_{j=1}^k t_j$. As this holds for any feasible solutions $t_1, \dots, t_k$, we conclude that 
\begin{align}
    h_{\mathcal{F}(\bigotimes_{j=1}^k\mathcal{H}_j)}\left(\bigotimes_{j=1}^k X_{j}\right) \leq  \prod_{j=1}^k h_{\cF(\mathcal{H}_j)}(X_{j}) \,.
\end{align}

($\Leftarrow$ in~\eqref{eq:submultiplivativity support function}) For any $X_j \in \cF(\mathcal{H}_j)^\circ$, we have $0 \leq h_{\cF(\mathcal{H}_j)}(X_j) \leq 1$. Then, the submultiplicativity assumption implies that $h_{\mathcal{F}(\bigotimes_{j=1}^k\mathcal{H}_j)}(\bigotimes_{j=1}^k X_{j}) \leq 1$ and therefore, $\bigotimes_{j=1}^k X_{j} \in \mathcal{F}(\bigotimes_{j=1}^k\mathcal{H}_j)^\circ$. This shows that $\bigotimes_{j=1}^k\mathcal{F}(\mathcal{H}_j)^\circ
\subseteq \mathcal{F}(\bigotimes_{j=1}^k\mathcal{H}_j)^\circ$.

To establish multiplicativity under the condition $\bigotimes_{j=1}^k\mathcal{F}(\mathcal{H}_j)
\subseteq \mathcal{F}(\bigotimes_{j=1}^k\mathcal{H}_j)$, it remains to prove the reverse inequality
$h_{\mathcal{F}(\bigotimes_{j=1}^k\mathcal{H}_j)}(\bigotimes_{j=1}^k X_{j}) \geq \prod_{j=1}^k h_{\cF(\mathcal{H}_j)}(X_{j})$. 
This inequality is immediate if $ h_{\cF(\mathcal{H}_j)}(X_{j})=0$ for some $j$. Otherwise, 
for any sufficiently small $\varepsilon > 0$, there exist positive semidefinite operators $\sigma_{j}\in \cF(\cH_j)$ such that
$\Tr(\sigma_{j} X_{j}) \geq  h_{\cF(\mathcal{H}_j)}(X_{j}) - \varepsilon> 0$ for all $j$.
The condition $\bigotimes_{j=1}^k\mathcal{F}(\mathcal{H}_j)
\subseteq \mathcal{F}(\bigotimes_{j=1}^k\mathcal{H}_j)$ guarantees that $\bigotimes_{j=1}^k \sigma_j$ is a feasible point, so we have
\begin{equation}
h_{\mathcal{F}(\bigotimes_{j=1}^k\mathcal{H}_j)}\left(\bigotimes_{j=1}^k X_{j}\right) \ge \Tr(\bigotimes_{j=1}^k(\sigma_{j}X_{j} )) \geq \prod_{j=1}^k \big(h_{\cF(\mathcal{H}_j)}(X_{j}) - \varepsilon\big).
\end{equation}
Taking the limit $\varepsilon \to 0$ completes the argument.
\end{proof}

\subsection{Examples of sets satisfying the assumptions}
We now provide several examples of sets that satisfy Assumption~\ref{assumption:polar} given in the previous section.

\begin{example}[Arbitrarily varying sources] \label{example:AV}
Fix $\cF_1=\cF$ to be a closed convex set of quantum states over a finite-dimensional Hilbert space. Let $\cF_n$ be the convex hull of $\cF^{\otimes n}$, i.e., $\cF_n$ is the convex hull of product states in $\cF$:
\begin{equation}
    \cF_n=\text{conv}\left\{\bigotimes_{i=1}^n\sigma_i: \sigma_1, \dots,\sigma_n \in \cF\right\} \,.
\end{equation}
Then, the collection $\{\cF_n\}_{n\in \N}$ satisfies both Assumption~\ref{assumption:polar} and Assumption~\ref{assumption:perm-inv}. In particular, it satisfies condition~\ref{A.4-polar}. This is because any linear function on $\cF_n$ takes its maximum on its extreme points, and by definition, the extreme points of $\cF_n$ are tensor product quantum states. Thus, the corresponding support functions are multiplicative, and by Lemma~\ref{lemma:multiplicativity of support function}, condition~\ref{A.4-polar} is met.    
\end{example}

\begin{example}[Rains set] \label{example:Rains}
The Rains set, which contains the set of separable states as a subset, is a relaxation introduced by Rains to obtain an upper bound on distillable entanglement~\cite{rains1999bound}. Given a composite system $AB$,
the Rains set is a subset of positive operators on the composite system defined by:
\begin{equation}
\mathrm{Rains}(A : B) := \left\{ \sigma_{AB} \geq 0 :\, \big\|\sigma^{\top_B}\big\|_1 \leq 1 \right\},
\end{equation}
where $\sigma_{AB}^{\top_B}$ denotes the partial transpose of $\sigma_{AB}$. Letting $\cF_n=\mathrm{Rains}(A^n: B^n)$, the collection $\mathrm{Rains}=\{\cF_n\}_{n\in \N}$ satisfies Assumptions~\ref{assumption:polar},~\ref{assumption:support}, and~\ref{assumption:perm-inv}. Crucially, it satisfies the condition on the polar sets~\ref{A.4-polar}~\cite{fang2024generalized}, unlike the set of separable states (SEP) or the set of states with a positive partial transpose (PPT), which do not~\cite{zhu2010additivity}. To prove this fact, we can use the SDP duality to write
\begin{equation}
h_{\mathrm{Rains}(A:B)}(X_{AB}) = \sup_{\sigma_{AB} \in \mathrm{Rains}(A:B)} \mathrm{Tr}[X_{AB} \sigma_{AB}] = \inf_{\gamma_{AB} \geq X_{AB}} \|\gamma_{AB}^{\top_B}\|_{\infty}, 
\end{equation}
where $\|\cdot\|_{\infty}$ is the spectral norm. Then, by the multiplicativity of $\|\cdot\|_{\infty}$, we can easily verify that the support functions for the Rains sets are multiplicative under the tensor product. By Lemma~\ref{lemma:multiplicativity of support function} this is equivalent to condition~\ref {A.4-polar}. The rest of the conditions are immediate. The Rains set also satisfies Assumption~\ref{ass:tensor-compatibility} when considering tensor products of different finite-dimensional Hilbert spaces.

The Rains set is particularly useful to bound the set of separable states and hence can be used to provide bounds on the relative entropy of entanglement. 
In particular, for a bipartite state $\rho$, we have the bounds
\begin{equation}
\label{bound entanglement set}
\min_{\sigma \in \mathrm{SEP} }D(\rho \|\sigma) \geq \min_{\sigma \in \mathrm{PPT} }D(\rho \| \sigma) \geq \min_{\sigma \in \mathrm{Rains} } D(\rho \| \sigma).
\end{equation}
Indeed, we have the inclusion $\mathrm{SEP} \subseteq \mathrm{PPT} \subseteq \mathrm{Rains}$.
Also, the same holds for the regularized versions, so the Rains set can be used to bound the regularized relative entropy of entanglement. The latter quantity is related to the error exponent in distinguishing a quantum state from the set of separable states~\cite{Mosonyi+22errorexponent, HayashiTomamichel2016correlation}.
\end{example}

\begin{example}[Subnormalized states with non-positive mana] \label{example:mana}
Inspired by the Rains bound, in~\cite{wang2020efficiently} the authors relaxed the set of stabilizer states to the set of subnormalized states with non-positive mana:
\begin{equation}
\mathcal{W}_n=\{\sigma  \geq 0: \|\sigma \|_{W,1} \leq 1\} \,,
\end{equation}
where $\| \cdot\|_{W,1}$ denotes the Wigner trace norm and the index $n$ represents the number of underlying subsystems. The set $\mathcal W_n$ is convex~\cite[Appendix B]{wang2020efficiently}. Moreover, similarly to the case of the Rains set, the family of sets $\{\mathcal{W}_n\}_{n\in \mathbb N}$ satisfies the assumption on the polar sets~\cite{fang2024generalized}. This is because we have, by the SDP duality, that
\begin{equation}
h_{\mathcal{W}_n}(X) = \sup_{\sigma \in \mathcal{W}_n} \mathrm{Tr}[X \sigma] = \inf_{\gamma \geq X} \|\gamma\|_{W,\infty}, 
\end{equation}
where $\|\cdot\|_{W,\infty}$ is the Wigner spectral norm. Then, by the multiplicativity of $\|\cdot\|_{W, \infty}$, we can easily check that $h_{\mathcal{W}_n}(\cdot)$ is multiplicative under tensor product, which by Lemma~\ref{lemma:multiplicativity of support function} is equivalent to the assumption~\ref{A.4-polar} on the polar sets.
The rest of the assumptions can also be easily verified. 

The relative entropy minimized over the set $\mathcal{W}_n$ is known as thauma and, along with the regularized thauma, can be used to provide bounds on magic state distillation. Since for the family of stabilizer states, which we denote by $\{{\rm{STAB}}_n\}_{n\in \N}$, we have the inclusion ${\rm{STAB}}_n \subseteq \mathcal{W}_n$, we obtain the bound
\begin{equation}
\min_{\sigma \in \mathrm{STAB}_1}D(\rho\|\sigma) \geq \min_{\sigma \in \mathcal{W}_1}D(\rho\|\sigma) \,,
\end{equation}
and the same holds for the regularized quantities. 
\end{example}

For other examples of sets satisfying Assumption~\ref{assumption:polar} or Assumption~\ref{ass:tensor-compatibility}  we refer to~\cite{fang2024generalized}.

\subsection{Generalized quantum hypothesis testing and Stein's lemma}
\label{sec: generalized hypothesis}
The additivity conditions we derive below find natural applications in the framework of generalized hypothesis testing. To this end, this section is dedicated to formally introducing the quantum hypothesis testing setting with a generalized alternative hypothesis.

Consider a device that outputs either $n$ copies of a state $\rho$ (null hypothesis), or a state $\sigma^{(n)}$ drawn from a closed and compact subset $\mathcal{F}_n$ of positive operators (alternative hypothesis). The goal is to distinguish between these two scenarios.

 In particular, our hypothesis testing problem is defined by a pair \((\rho, \{\mathcal{F}_n\}_{n\in \N} )\), where \(\rho\) is a quantum state on a finite-dimensional Hilbert space $\mathcal H$, and \(\mathcal{F}_n\) for any $n\in \N$ is a convex and compact subset of positive operators  $\mathcal H^{\otimes n}$.
The two hypotheses are defined as follows:
\begin{equation}\label{eq:problem}
\begin{aligned}
& \text{Null hypothesis: The state is }  \rho^{\otimes n} \\
    &\text{Alternative hypothesis: The state is some } \sigma^{(n)} \text{ belonging to } \cF_n.
\end{aligned}
\end{equation} 
The task is to perform a binary measurement $\{T_n,I-T_n\}$, where $0\leq T_n\leq I$ is a test operator, in order to distinguish between the two hypotheses. If the measurement outcome corresponds to $T_n$, we conclude that the null hypothesis is true; if the outcome corresponds to $I - T_n$, we decide that the alternative hypothesis is true. The Type I and Type II errors associated with this test are
 \begin{align}
 & \text{Type I error:} \quad \; \alpha_n(\rho|T_n):=\operatorname{Tr}\big[(I - T_n)\rho^{\otimes n}\big] ,\label{eq:type-I-error}\\
& \text{Type II error:} \quad \beta_n(\cF_n|T_n):=\sup_{\sigma^{\n} \in \cF_n} \operatorname{Tr}\big[\sigma^{\n}T_n\big] \label{eq:type-II-error} \,.
 \end{align}
These correspond, respectively, to the probability of incorrectly concluding that the state is a state $\sigma^{\n} \in \cF_n$ when the true state is $\rho^{\otimes n}$, and the worst-case probability of incorrectly concluding that the state is $\rho^{\otimes n}$ when it is in fact a state $\sigma^{\n} \in \cF$.
 We are interested in the asymptotic behavior of these errors when $n$ tends to infinity.

\bigskip

The trade-off between the type I and the type II error probabilities has been studied in various settings. One of the most explored and arguably relevant settings in the literature is the asymmetric case, where the objective is to minimize the type II errors while assuming that the type I error vanishes in the asymptotic limit.
To this end, the \emph{Stein exponent} is defined as
\begin{align}\label{eq:def-Stein-exp}
&\rms(\rho\| \{\cF_n\}_{n\in \N}) \nonumber\\
&:= \sup\left\{r\geq 0:\, \exists (T_n)_{n\in \N}, \, 0\leq T_n\leq I,\,   \liminf_{n\to +\infty} -\frac 1n\log \beta_n(\cF_n|T_n)\geq r,\,  \lim_{n\to +\infty} \alpha_n(\rho| T_n)=0  \right\}.
\end{align}
The Stein exponent, 
when each set in the alternative hypothesis \(\{\mathcal{F}_n\}_{ n\in \N}\) consists of a single element, i.e., \(\mathcal{F}_n = \{\sigma^{\otimes n}\}\), is according to the quantum Stein lemma~\cite{HiaiPetz1991proper, OgawaNagaoka00, audenaert+08asymptotic} equal to
\begin{align}\label{eq:Stein}
\rms(\rho\| \sigma) = D(\rho\| \sigma),
\end{align}
where $D(\rho\|\sigma)$ is the Umegaki relative entropy.
The generalized Stein’s lemma has been established in various settings and under broad conditions. For the purposes of this work, we state the result in the specific case where the assumption on the polar sets is met. A proof for this setting can be found in~\cite{fang2024generalized}.

\begin{theorem}[\cite{fang2024generalized}]\label{thm:Stein}
Consider the quantum hypothesis testing problem~\eqref{eq:problem} for some state $\rho$ and suppose that the alternative hypothesis $\{\cF_n\}_{n\in \N}$ satisfies Assumption~\ref{assumption:polar}, Assumption~\ref{assumption:support}, and Assumption~\ref{assumption:perm-inv}. 
Then, the Stein exponent is 
\begin{align}\label{eq:Stein-composite}
\rms(\rho\| \{\cF_n\}_{n\in \N}) = \lim_{n\to +\infty} \frac 1n \min_{\sigma^{\n}\in \cF_n}\,  D\big(\rho^{\otimes n}\big\| \sigma^{(n)}\big) = \Dreg(\rho)\,.
\end{align}
where $\rms(\rho\| \{\cF_n\}_{n\in \N})$ is defined in~\eqref{eq:def-Stein-exp} in terms of type I and type II errors given in~\eqref{eq:type-I-error} and~\eqref{eq:type-II-error}.
\end{theorem}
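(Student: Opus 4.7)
The theorem asserts two inequalities. The limit defining $\Dreg(\rho)$ exists by Fekete's lemma applied to the subadditive sequence $n \mapsto \mathfrak{D}(\rho^{\otimes n})$; subadditivity is immediate from Assumption~\ref{assumption:polar}, since for any $\sigma_m \in \cS_m$ and $\sigma_n \in \cS_n$ the element $\sigma_m \otimes \sigma_n$ lies in $\cS_{m+n}$, giving $\mathfrak{D}(\rho^{\otimes (m+n)}) \leq \mathfrak{D}(\rho^{\otimes m}) + \mathfrak{D}(\rho^{\otimes n})$.

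For the converse $\rms(\rho\|\{\cS_n\}_{n\in \N}) \leq \Dreg(\rho)$, I would apply the data-processing inequality for the Umegaki divergence under the binary coarse-graining $\{T_n, I - T_n\}$. For any admissible test $T_n$ and any $\sigma^{(n)} \in \cS_n$,
\begin{equation}
D(\rho^{\otimes n} \| \sigma^{(n)}) \geq (1 - \alpha_n)\log \tfrac{1}{\Tr[T_n \sigma^{(n)}]} - \log 2.
\end{equation}
Picking $\sigma^{(n)}$ to be the minimizer of $D(\rho^{\otimes n}\|\cdot)$ over $\cS_n$ and using $\Tr[T_n \sigma^{(n)}] \leq \beta_n(\cS_n|T_n)$, then dividing by $n$ and sending $n \to \infty$ with $\alpha_n \to 0$, yields $\Dreg(\rho) \geq \liminf_n -\tfrac{1}{n}\log \beta_n(\cS_n|T_n)$, which gives the converse for every admissible sequence of tests.

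For the achievability $\rms(\rho\|\{\cS_n\}_{n\in \N}) \geq \Dreg(\rho)$, I would use a blocking argument built on the support-function submultiplicativity of Lemma~\ref{lemma:multiplicativity of support function} (equivalent to the polar condition). Fix a large block size $k$ and let $\sigma^*_k \in \argmin_{\sigma \in \cS_k} D(\rho^{\otimes k}\|\sigma)$; by convexity and Assumption~\ref{assumption:perm-inv}, $\sigma^*_k$ can be taken permutation-invariant, so it has only $\mathrm{poly}(k)$ distinct eigenvalues. I would apply Hayashi's pinching-based Neyman--Pearson construction to the \emph{simple} test $\rho^{\otimes k}$ vs.\ $\sigma^*_k$ repeated over $m$ independent blocks, producing for any $\epsilon > 0$ a test $T^{(m)}_k$ on the $km$-copy system whose type I error vanishes as $m \to \infty$ and which satisfies an operator bound of the form $T^{(m)}_k \preceq \exp\!\big(\!-m(D(\rho^{\otimes k}\|\sigma^*_k) - \epsilon)\big)\,(\sigma^*_k)^{\otimes m}$. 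Tracing against an arbitrary $\sigma^{(km)} \in \cS_{km}$ and then taking the supremum, together with the submultiplicativity $h_{\cS_{km}}((\sigma^*_k)^{\otimes m}) \leq h_{\cS_k}(\sigma^*_k)^m$, bounds the worst-case type II error by a quantity whose decay rate is $\mathfrak{D}(\rho^{\otimes k})/k - \epsilon$. Sending first $m \to \infty$ and then $k \to \infty$ drives this rate to $\Dreg(\rho)$, and leftover copies $n \bmod k$ are padded by identity tests at negligible cost.

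The main obstacle I anticipate is to secure simultaneously (i) the vanishing type I error and (ii) the exponentially small \emph{worst-case} type II error over the entire set $\cS_n$. Naive tensor products of single-block tests fail type I because the accept probability decays geometrically, while a direct universal test on $\rho^{\otimes n}$ vs.\ $\sigma^*_n$ produces an operator that does not obviously factorize under $h_{\cS_n}$. The technical heart is therefore to establish the operator-inequality form of the pinching test so that simple-hypothesis type II guarantees can be converted into composite ones via the polar condition; Assumption~\ref{assumption:support} is what ensures continuity of the relevant minimizers and the regularity of the one-shot hypothesis-testing quantities involved.
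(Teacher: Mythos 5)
First, note that the paper does not actually prove Theorem~\ref{thm:Stein}: it imports it verbatim from~\cite{fang2024generalized} ("A proof for this setting can be found in..."), so there is no internal proof to compare yours against. Judged on its own terms, your first two steps are fine: Fekete's lemma gives the existence of $\Dreg(\rho)$ from the subadditivity implied by $\cS_m\otimes\cS_n\subseteq\cS_{m+n}$, and the converse via data processing through the binary POVM $\{T_n,I-T_n\}$, giving $\tfrac1n\mathfrak{D}(\rho^{\otimes n})\ge \tfrac{1-\alpha_n}{n}\log\tfrac{1}{\beta_n(\cS_n|T_n)}-\tfrac{\log 2}{n}$, is the standard weak-converse argument and is correct.

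The achievability direction, however, has a genuine gap, and it sits exactly where you located "the technical heart." The operator inequality $T^{(m)}_k\preceq e^{-m(D(\rho^{\otimes k}\|\sigma^*_k)-\epsilon)}(\sigma^*_k)^{\otimes m}$ is impossible for any test with non-vanishing type~I acceptance probability: taking traces gives $\Tr\big[T^{(m)}_k\rho^{\otimes km}\big]\le \Tr\big[T^{(m)}_k\big]\le e^{-m(D-\epsilon)}\Tr\big[(\sigma^*_k)^{\otimes m}\big]\le e^{-m(D-\epsilon)}\to 0$, so the type~I error tends to $1$, not $0$. The correct pinching-based statement is the \emph{relative} inequality $T\preceq e^{-ma}\,\mathcal P_{(\sigma^*_k)^{\otimes m}}(\rho^{\otimes km})\big((\sigma^*_k)^{\otimes m}\big)^{-1}$ (the two factors commute), and when you expand the pinching as a mixture of unitary conjugations by $(\sigma^*_k)^{\otimes it}$ and apply the polar condition block by block—exactly as in the proof of Theorem~\ref{thm:additivity-strong-converse}—you obtain
\begin{equation*}
\sup_{\sigma^{(km)}\in\cS_{km}}\Tr\big[T\,\sigma^{(km)}\big]\;\le\; e^{-ma}\,\Big(\sup_{t\in\mathbb R,\,\tau\in\cS_k}\Tr\Big[\tau\,(\sigma^*_k)^{-\frac12(1+it)}\rho^{\otimes k}(\sigma^*_k)^{-\frac12(1-it)}\Big]\Big)^{m}.
\end{equation*}
The supremum on the right is precisely the additivity defect of Theorem~\ref{thm:Stein-additivity} at block level $k$: it equals $1$ if and only if weak additivity holds there, and is otherwise strictly greater than $1$, costing a rate loss $\delta_k>0$ per block. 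Your construction therefore achieves only $\tfrac1k\big(\mathfrak{D}(\rho^{\otimes k})-\delta_k\big)$, and you give no argument that $\delta_k/k\to 0$; indeed the paper's qubit example shows the defect can be strictly positive for all large $k$. Since the entire content of the theorem is that the exponent equals the \emph{regularized} quantity even when additivity fails, the polar condition plus per-block pinching cannot close the argument; this conversion of simple-hypothesis type~II guarantees into composite ones at the full regularized rate is the genuinely hard step that~\cite{fang2024generalized} resolves with substantially different machinery. (A minor side remark: a permutation-invariant $\sigma^*_k$ does not have only $\mathrm{poly}(k)$ distinct eigenvalues—$\rho^{\otimes k}$ is itself a counterexample—though nothing in your argument hinges on this.)
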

Observe that Assumption~\ref{assumption:support} ensures that the minimum relative entropy on the right-hand side of~\eqref{eq:Stein-composite} is finite and grows at most linearly.

\section{Additivity for the Umegaki relative entropy}
\label{sec:Additivity of Umegaki}
In this section, we establish several additivity results for the Umegaki relative entropy. In particular, we provide necessary and sufficient conditions for additivity, formulated in terms of properties of the single-copy optimizer.

Our approach is based on the fundamental property that a point is a minimizer of a convex function if and only if its derivative at that point is non-negative in all feasible directions. This condition is formulated using the framework of Fr\'echet derivatives (see~\cite{bhatia2013matrix} for a comprehensive review), and first appeared in the context of entanglement theory in~\cite{vedral1997quantifying}. It has since been further developed in subsequent works~\cite{friedland2011explicit,girard2014convex,rubboli2024new}.

\subsection{Weak-additivity of the Umegaki relative entropy}
In this section, we establish a weak additivity result for the Umegaki relative entropy, corresponding to the case of multiple copies of the same state. 

We first derive a necessary and sufficient condition that must be satisfied by the minimizer of the Umegaki relative entropy. This result combines ideas from~\cite[Theorem 1]{friedland2011explicit} (see also~\cite[Theorem 4]{rubboli2024new}), together with the integral representation of the derivative of the logarithm established in~\cite[Lemma 3.4]{Sutter+2017multivariate}. Let us define the positive operator
\begin{equation}\label{eq:comment-37}
\Xi(\rho,\sigma) = \int_{-\infty}^{+\infty} \sigma^{-\frac{1}{2}(1+it)}\rho \sigma^{-\frac{1}{2}(1-it)} \beta_0(t)\d t \,,
\end{equation}
where $\beta_0(t) = \frac{\pi}{2(\cosh(\pi t) +1)}$ is a probability density.
Here, negative powers are taken in the sense of generalized inverses. Then, we have
\begin{lemma}
\label{necessary and sufficient Umegaki}
    Let $\rho$ be a quantum state, and let $\cF$ be a convex and compact subset of positive operators that includes at least one state whose support contains the support of $\rho$. Then $\sigma_0 \in \argmin_{\sigma \in \mathcal{F}} D(\rho \| \sigma)$ if and only if $\supp(\rho) \subseteq \supp(\sigma_0)$ and
\begin{align}\label{eq:derivative-condition}
\sup_{\tau\in \cF} \textup{Tr}\big[\tau\, \Xi(\rho,\sigma_0)\big] =  1\,.
\end{align}
\end{lemma}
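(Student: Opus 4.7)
The plan is to exploit that $\sigma \mapsto D(\rho\|\sigma)$ is convex, so that $\sigma_0$ is a minimizer over the convex set $\cS$ if and only if the one-sided directional derivative at $\sigma_0$ is non-negative in every feasible direction $\tau - \sigma_0$ with $\tau\in\cS$. The content of the lemma is then to repackage this first-order optimality condition into the single supremum condition \eqref{eq:derivative-condition}. Before doing that, I would note that Assumption~\ref{assumption:support} (applied to $\cS_1=\cS$) guarantees a state in $\cS$ whose support contains that of $\rho$, hence any minimizer has finite value, forcing $\supp(\rho)\subseteq \supp(\sigma_0)$ as a necessary condition. With this support condition, every convex combination $\sigma_\lambda := (1-\lambda)\sigma_0 + \lambda \tau$ also contains $\supp(\rho)$, so $\lambda\mapsto D(\rho\|\sigma_\lambda)$ is finite and differentiable for small $\lambda\ge 0$.

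Next I would compute the Fr\'echet derivative of $\sigma\mapsto -\Tr[\rho\log \sigma]$ at $\sigma_0$. Using the integral representation of the derivative of the operator logarithm from~\cite[Lemma 3.4]{Sutter+2017multivariate}, one gets
\begin{equation}
\frac{d}{d\lambda}\bigg|_{\lambda=0}\Tr[\rho\log(\sigma_0+\lambda \Delta)]
= \int_{-\infty}^{+\infty}\beta_0(t)\,\Tr\!\big[\rho\, \sigma_0^{-\frac{1}{2}(1+it)}\Delta\, \sigma_0^{-\frac{1}{2}(1-it)}\big]\,\d t,
\end{equation}
with negative powers understood as generalized inverses on $\supp(\sigma_0)$. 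Applying cyclicity of the trace and the evenness of $\beta_0$ (symmetrising $t\mapsto -t$), this equals $\Tr[\Delta\, \Xi(\rho,\sigma_0)]$. Therefore
\begin{equation}
\frac{d}{d\lambda}\bigg|_{\lambda=0^+} D(\rho\|\sigma_\lambda) = -\Tr\!\big[(\tau-\sigma_0)\,\Xi(\rho,\sigma_0)\big].
\end{equation}

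For necessity, first-order optimality then yields $\Tr[\tau\,\Xi(\rho,\sigma_0)]\le \Tr[\sigma_0\,\Xi(\rho,\sigma_0)]$ for every $\tau\in\cS$. A direct computation using cyclicity, the identity $\sigma_0^{-\frac{1}{2}(1-it)}\sigma_0\sigma_0^{-\frac{1}{2}(1+it)} = \Pi_{\sigma_0}$ on $\supp(\sigma_0)$, the inclusion $\supp(\rho)\subseteq \supp(\sigma_0)$, and the normalisation $\int\beta_0(t)\,\d t = 1$ gives $\Tr[\sigma_0\,\Xi(\rho,\sigma_0)] = \Tr[\rho] = 1$. Since $\sigma_0\in\cS$ attains this value, the supremum equals $1$. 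For sufficiency, the same computation shows that $\sup_{\tau\in\cS}\Tr[\tau\,\Xi(\rho,\sigma_0)]=1$ implies $\Tr[(\tau-\sigma_0)\Xi(\rho,\sigma_0)]\le 0$ for all $\tau\in\cS$, i.e.\ the directional derivative of the convex function $D(\rho\|\cdot)$ at $\sigma_0$ is non-negative in all feasible directions; convexity then upgrades this local condition to $D(\rho\|\sigma_0)\le D(\rho\|\tau)$ globally on $\cS$.

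The main technical obstacle I expect is the careful handling of the generalized inverses when $\sigma_0$ is rank-deficient: one has to verify that the integrand $\sigma_0^{-\frac{1}{2}(1+it)}\Delta\, \sigma_0^{-\frac{1}{2}(1-it)}$ is integrable in $t$ and that differentiation under the integral sign is legitimate along the half-line $\lambda\in[0,\lambda_0)$. This is routine provided one works within $\supp(\sigma_0)$ and uses that $\sigma_0$ has strictly positive spectrum there, so the powers remain bounded and the weight $\beta_0(t)$ (with fast decay at $|t|\to\infty$) ensures absolute convergence, but it is the step that requires the most care to write out cleanly.
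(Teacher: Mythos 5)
Your overall strategy---convexity plus first-order optimality in all feasible directions, the integral representation of the derivative of the operator logarithm from~\cite{Sutter+2017multivariate}, the symmetrization $t\mapsto -t$ to rewrite the derivative as $-\Tr[(\tau-\sigma_0)\,\Xi(\rho,\sigma_0)]$, and the computation $\Tr[\sigma_0\,\Xi(\rho,\sigma_0)]=\Tr[\rho]=1$---is exactly the route the paper takes, and the necessity/sufficiency bookkeeping is correct.

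The one genuine gap is in the step where you ``compute the Fr\'echet derivative of $\sigma\mapsto-\Tr[\rho\log\sigma]$ at $\sigma_0$.'' When $\sigma_0$ is rank-deficient and the direction $\Delta=\tau-\sigma_0$ has components off $\supp(\sigma_0)$ (i.e., $\supp(\tau)\not\subseteq\supp(\sigma_0)$), the operator logarithm is \emph{not} Fr\'echet differentiable at $\sigma_0$ in that direction: eigenvalues of $\sigma_0+\lambda\Delta$ emerging from zero scale like $\lambda$, so $\log(\sigma_0+\lambda\Delta)$ diverges as $\lambda\searrow 0$. What saves the argument is that only the compression $P\log(\sigma_0+\lambda\Delta)P$, with $P$ the projection onto $\supp(\sigma_0)\supseteq\supp(\rho)$, enters the trace against $\rho$; one must then prove that this compression is one-sidedly differentiable at $\lambda=0$ with derivative given by the same formula with $P\Delta P$ in place of $\Delta$---which is precisely what the generalized-inverse convention in $\Xi(\rho,\sigma_0)$ silently enforces. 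This is the content of Lemma~\ref{lem:derivative-support} in Appendix~\ref{app:derivative} (proved via Rellich's perturbation theorem, and genuinely false for functions such as $\phi(t)=1/t$, so it is not automatic). By contrast, the obstacle you single out---integrability in $t$ and differentiation under the integral---is essentially harmless: $\sigma_0^{\pm it/2}$ acts unitarily on $\supp(\sigma_0)$, so the integrand is uniformly bounded and $\beta_0$ is a probability density.
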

\begin{proof}
We note that by the assumption on the supports, we have $\min_{\sigma\in \cF}D(\rho\| \sigma)<+\infty$ and hence $\supp(\rho)\subseteq \supp(\sigma_0)$.
Since the Umegaki relative entropy is convex in the second argument, the condition $\sigma_0\in \argmin_{\sigma \in \mathcal{F}} D(\rho \| \sigma)$ is equivalent to (see, e.g.,~\cite[Theorem 1]{friedland2011explicit} or~\cite[Theorem 4]{rubboli2024new})
\begin{align}\label{eq:first-der-condition}
\frac{\d}{\d x} D\big(\rho\big\| (1-x)\sigma_0+x\tau)\big) \Bigg\vert_{x=0}\geq 0, \qquad \qquad \forall \tau\in \cF.
\end{align}
We consider the case where the support of $\tau$ is included in the support of $\sigma_0$.
In the case where this does not hold, we use~\cite[Section VII]{friedland2011explicit}; see also Appendix~\ref{app:derivative}.
The derivative on the left-hand side can be computed using the following identity
\begin{align}\label{eq:logx-logy}
\frac{\log{x}-\log{y}}{x-y} & = 
 \int_0^\infty \frac{1}{(x+\lambda)(y+\lambda)} \dd \lambda = \int_{-\infty}^{\infty} x^{-\frac{1}{2}(1+it)}y^{-\frac{1}{2}(1-it)} \beta_{0}(t) \dd t,
\end{align}
where $\beta_0(t) = \frac{\pi}{2(\cosh(\pi t) +1)}$ is a probability density. Here, the first equality is standard and can be verified by direct integration. The second equality is proven in~\cite[Eq.~(96)]{Sutter+2017multivariate}. Another proof of this equality is given in Lemma~\ref{lem:derivative-integral} below. We can then use these equalities to compute the Frech\'et derivative of the log function. To this end, let $A= \sum_k \mu_k \ketbra{k}{k}$ be the spectral decomposition of a positive definite operator $A$. Then, $(A+\lambda)^{-1} = \sum_k (\mu_k+\lambda)^{-1}\ketbra{k}{k}$ implies  
\begin{align}
D \log A[H] &= \lim_{s\to 0} \frac{\log (A+sH)-\log A}{s}  = \int_0^{\infty} (A+\lambda)^{-1}H(A+\lambda)^{-1}\dd \lambda \\
& =  \sum_{k, \ell }\int_0^{\infty} (\mu_k+\lambda)^{-1}(\mu_\ell +\lambda)^{-1} \ketbra{k}{k} H \ketbra{\ell}{\ell}\dd \lambda\\
& = \sum_{k, \ell }\int_{-\infty}^{+\infty}  \mu_k^{-\frac{1}{2}(1+it)}  \mu_{\ell}^{-\frac{1}{2}(1-it)}   \beta_{0}(t)  \ketbra{k}{k} H \ketbra{\ell}{\ell}\dd t\\
& =\int_{-\infty}^{+\infty}  A^{-\frac{1}{2}(1+it)} H A^{-\frac{1}{2}(1-it)} \beta_{0}(t) \dd t.
\end{align}
Here, the first line is standard, for the third line we use~\eqref{eq:logx-logy} and the last line follows from $A^{-\frac{1}{2}(1+it)} = \sum_k \mu_k^{-\frac{1}{2}(1+it)}\ketbra{k}{k}$. This equation has first been proven in~\cite[Lemma 3.4]{Sutter+2017multivariate}.
Therefore, we have 
\begin{align}
\frac{\d}{\d x} D\big(\rho\big\| (1-x)\sigma_0+x\tau)\big) \Bigg\vert_{x=0} & = -\frac{\d}{\d x} \textup{Tr}\Big[\rho \log((1-x)\sigma_0+x\tau)\Big]\Bigg\vert_{x=0} \label{eq:comment-41} \\
&= \int_{-\infty}^{+\infty} \Tr\Big[\rho\sigma_0^{-\frac{1}{2}(1+it)}(\sigma_0 -\tau)  \sigma_0^{-\frac{1}{2}(1-it)}\Big] \beta_0(t)\d t \label{eq:comment-42}\\
& = 1- \int_{-\infty}^{+\infty} \Tr\Big[\rho\sigma_0^{-\frac{1}{2}(1+it)}\tau \sigma_0^{-\frac{1}{2}(1-it)}\Big] \beta_0(t)\d t.
\end{align}
Therefore, the condition $\sigma_0\in \argmin_{\sigma \in \mathcal{F}} D(\rho \| \sigma)$ is equivalent to 
\begin{equation}
\int_{-\infty}^{+\infty} \Tr\Big[\rho\sigma_0^{-\frac{1}{2}(1+it)}\tau \sigma_0^{-\frac{1}{2}(1-it)}\Big] \beta_0(t)\d t \leq 1 \,\qquad \qquad \forall \tau\in \cF.
\end{equation}
Here, the powers are understood as generalized powers and hence taken only on the support of the states. Since the inequality must hold for all $\tau$, it is equivalent to consider the supremum over $\tau$. Furthermore, equality is attained at $\tau=\sigma_0$.

\end{proof}

The next result provides a necessary and sufficient single-copy criterion in the context of weak additivity.
\begin{theorem}
\label{thm:Stein-additivity}
Suppose that $\rho$ and $\{\cF_n\}_{n\in \N}$ satisfy Assumption~\ref{assumption:polar}, and Assumption~\ref{assumption:support}.
 Moreover, let $\sigma_0 \in \argmin_{\sigma \in \mathcal{F}_1} D(\rho \| \sigma)$. Then, 
\begin{equation}
\label{eq:additivity Umegaki}
\min_{\sigma^{\n} \in \mathcal{F}_n} D\big(\rho^{\otimes n} \big\| \sigma^{\n}\big) = n\min_{\sigma \in \mathcal{F}_1} D\big(\rho \big\| \sigma\big) \,, \quad \forall n \geq 1 ,
\end{equation}
if and only if
\begin{equation}
\label{eq:additivity-Stein-condition weak additivity}
\sup_{t \in \mathbb{R}, \tau \in \mathcal{F}_1}\textup{Tr}\Big[\tau \sigma_0^{-\frac{1}{2}(1+it)}\rho \sigma_0^{-\frac{1}{2}(1-it)}\Big] = 1.
\end{equation}
\end{theorem}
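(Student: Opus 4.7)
The plan is to combine the single-state optimality characterization of Lemma~\ref{necessary and sufficient Umegaki} with the submultiplicativity of support functions from Lemma~\ref{lemma:multiplicativity of support function}. The crucial preliminary observation is that, despite its non-Hermitian appearance, the operator $A(t) := \sigma_0^{-\frac{1}{2}(1+it)}\, \rho\, \sigma_0^{-\frac{1}{2}(1-it)}$ is in fact positive semidefinite for every $t \in \mathbb{R}$: since $\sigma_0^{\pm it/2}$ is unitary on the support of $\sigma_0$ (with powers interpreted in the generalized sense), one can factor $A(t) = \sigma_0^{-1/2}\,(U_t \rho U_t^\dagger)\, \sigma_0^{-1/2}$ with $U_t = \sigma_0^{-it/2}$, which is a conjugation of the state $U_t \rho U_t^\dagger$ by the Hermitian operator $\sigma_0^{-1/2}$. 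Hence $\textup{Tr}[\tau A(t)] \geq 0$ for every $\tau \geq 0$, so the supremum in~\eqref{eq:additivity-Stein-condition weak additivity} is unambiguously a supremum of non-negative real numbers.

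For the sufficiency direction, since $\sigma_0^{\otimes n} \in \cS_n$ by Assumption~\ref{assumption:polar} and $D(\rho^{\otimes n}\|\sigma_0^{\otimes n}) = n\, D(\rho\|\sigma_0)$ is always an upper bound on $\min_{\sigma^{(n)} \in \cS_n} D(\rho^{\otimes n}\|\sigma^{(n)})$, it suffices to show that $\sigma_0^{\otimes n}$ attains this minimum. By Lemma~\ref{necessary and sufficient Umegaki} this reduces to $h_{\cS_n}\big(\Xi(\rho^{\otimes n},\sigma_0^{\otimes n})\big) \leq 1$. Writing $\Xi(\rho^{\otimes n},\sigma_0^{\otimes n}) = \int \beta_0(t)\, A(t)^{\otimes n}\, dt$ and using $A(t)^{\otimes n} \geq 0$, I would exchange the supremum over $\cS_n$ with the integral (sup of an integral bounded by the integral of the sup), and then invoke Lemma~\ref{lemma:multiplicativity of support function} to obtain $h_{\cS_n}(A(t)^{\otimes n}) \leq h_{\cS_1}(A(t))^n \leq 1$ by hypothesis. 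Since $\beta_0$ is a probability density, $\int \beta_0(t)\, dt = 1$ closes the argument; the matching lower bound comes from testing against $\tau = \sigma_0^{\otimes n}$.

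For the necessity direction, additivity for every $n$ forces $\sigma_0^{\otimes n}$ to be a minimizer in $\cS_n$, and Lemma~\ref{necessary and sufficient Umegaki} then delivers $h_{\cS_n}(\Xi(\rho^{\otimes n},\sigma_0^{\otimes n})) = 1$. Testing against the product element $\tau^{\otimes n} \in \cS_n$ for an arbitrary $\tau \in \cS_1$ (which lies in $\cS_n$ by iterating Assumption~\ref{assumption:polar}) yields
\begin{equation}
\int_{\mathbb{R}} \beta_0(t)\, g(t)^n\, dt \;\leq\; 1 \qquad \forall\, n \in \mathbb{N},
\end{equation}
where $g(t) := \textup{Tr}[\tau A(t)] \geq 0$ is continuous in $t$. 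A soft Laplace-type argument then forces $\sup_t g(t) \leq 1$: if instead $g(t_\ast) > 1$ at some $t_\ast$, continuity produces an open neighborhood of $t_\ast$ on which $g \geq 1+\delta$, and because $\beta_0$ is strictly positive on all of $\mathbb{R}$ the integral grows unboundedly with $n$, contradicting the bound. Combined with the trivial equality at $(t,\tau) = (0,\sigma_0)$ this gives~\eqref{eq:additivity-Stein-condition weak additivity}. The main conceptual hurdle is the recognition that $A(t)$ is positive semidefinite: only then does the polar-set submultiplicativity apply to $A(t)^{\otimes n}$, providing the bridge that converts the $n$-copy optimality condition into a purely single-copy criterion.
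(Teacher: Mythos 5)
Your proposal is correct and follows essentially the same route as the paper: both directions rest on the optimality criterion of Lemma~\ref{necessary and sufficient Umegaki}, with sufficiency obtained by exchanging the supremum with the integral and invoking the multiplicativity of support functions, and necessity obtained by testing the $n$-copy optimality condition against product directions $\tau^{\otimes n}$. Your Laplace-type contradiction argument for extracting $\sup_t g(t)\le 1$ from $\int \beta_0(t) g(t)^n\,dt\le 1$ is just a rephrasing of the paper's $n$-th-root-and-limit step, and your explicit check that $\sigma_0^{-\frac{1}{2}(1+it)}\rho\,\sigma_0^{-\frac{1}{2}(1-it)}$ is positive semidefinite is a welcome (if implicit in the paper) justification for applying the support-function machinery.
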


We note that if $\sigma_0$ is full-rank, then $\sigma_0^{-it}$ is unitary. In this case,   condition~\eqref{eq:additivity-Stein-condition weak additivity} can be written in terms of an optimization on a  commutative subgroup of unitaries, which by taking its closure can be assumed to be compact.

\begin{proof}
We first show the equivalence between~\eqref{eq:additivity Umegaki} and
\begin{equation}
\label{eq:additivity-Stein-condition weak additivity ineq}
\sup_{t \in \mathbb{R}, \tau \in \mathcal{F}_1}\textup{Tr}\Big[\tau \sigma_0^{-\frac{1}{2}(1+it)}\rho \sigma_0^{-\frac{1}{2}(1-it)}\Big] \leq  1.
\end{equation}
Indeed, by choosing $\tau =\sigma_0$, we note that 
\begin{equation}
\label{eq:additivity-Stein-condition}
\sup_{t \in \mathbb{R}, \tau \in \mathcal{F}_1}\textup{Tr}\Big[\tau \sigma_0^{-\frac{1}{2}(1+it)}\rho \sigma_0^{-\frac{1}{2}(1-it)}\Big] \geq 1 \,,
\end{equation}
and hence the inequality~\eqref{eq:additivity-Stein-condition weak additivity ineq} must hold as an equality.

Let us first show how~\eqref{eq:additivity Umegaki} implies~\eqref{eq:additivity-Stein-condition weak additivity ineq}. Let us assume additivity for all $n \geq 1$. This means that $\sigma_0^{\otimes n}$ is an optimizer for the $n$-letter optimization problem. Thus, the  necessary condition of Lemma~\ref{necessary and sufficient Umegaki} for the optimality of $\sigma_0^{\otimes n}$ for state $\rho^{\otimes n}$ and for the specific direction $\tau^{\otimes n}\in \cF_n$ yield
\begin{equation}
\left(\int_{-\infty}^\infty  \Trm\Big[\tau \sigma_0^{-\frac{1}{2}(1+it)}\rho \sigma_0^{-\frac{1}{2}(1-it)}\Big]^n \beta_{0}(t) \d t\right)^\frac{1}{n} \leq 1.
\end{equation}
 Taking the limit of $n\to \infty$, this inequality implies~\eqref{eq:additivity-Stein-condition weak additivity}.

We now show how~\eqref{eq:additivity-Stein-condition weak additivity ineq} implies~\eqref{eq:additivity Umegaki}. We need to verify that, under the assumption~\eqref{eq:additivity-Stein-condition weak additivity}, the sufficient condition of Lemma~\ref{necessary and sufficient Umegaki} holds for $\sigma_0^{\otimes n}$. Then, noting that $\beta_{0}(t)$ is a probability measure, we obtain
\begin{align}
&\sup_{\tau^{\n} \in \cF_n}\Trm\Bigg[\tau^{\n}\int_{-\infty}^\infty  (\sigma_0^{\otimes n})^{-\frac{1}{2}(1+it)}\rho^{\otimes n}(\sigma_0^{\otimes n})^{-\frac{1}{2}(1-it)}\Bigg] \beta_{0}(t)\dd t \\
&\qquad \qquad \qquad \leq \int_{-\infty}^\infty \sup_{\tau^{\n} \in \cF_n}\Trm\Big[\tau^{\n} (\sigma_0^{\otimes n})^{-\frac{1}{2}(1+it)}\rho^{\otimes n}(\sigma_0^{\otimes n})^{-\frac{1}{2}(1-it)}\Big] \beta_{0}(t)\dd t  \\
\label{critical}
& \qquad \qquad \qquad\leq \int_{-\infty}^\infty \sup_{\tau \in \cF_1}\Trm\Big[\tau \sigma_0^{-\frac{1}{2}(1+it)}\rho \sigma_0^{-\frac{1}{2}(1-it)}\Big]^n \beta_{0}(t)\dd t\\
&\qquad \qquad \qquad\leq 1.
\end{align}
where in~\eqref{critical} we used the assumption on the multiplicativity of the support function in Lemma~\ref{lemma:multiplicativity of support function}.  

\end{proof}
We observe that if the optimizer $\sigma_0$ commutes with $\rho$, then the additivity condition is automatically satisfied. In this case, the terms $\sigma_0^{\frac{it}{2}}$ and $\sigma_0^{-\frac{it}{2}}$ simplify, reducing the condition to the single-copy criterion given in Corollary~\ref{commuting} in Appendix~\ref{app:necessary and sufficient} for $\alpha = 1$. As discussed in Section~\ref{sec:examples}, this situation arises when the states exhibit certain symmetries.

In the following sections, we demonstrate that this observation directly yields the necessary and sufficient conditions for the additivity of the Stein exponent.

\subsection{Stein exponent and additivity}
The generalized Stein’s lemma in Theorem~\ref{thm:Stein} states that the Stein exponent is characterized by the regularized relative entropy. Since it is straightforward to see that the absence of regularization, i.e., $\Dreg(\rho) = \mathfrak{D}(\rho)$, implies additivity for all $n \in \mathbb{N}$, Theorem~\ref{thm:Stein-additivity} yields necessary and sufficient conditions under which the regularization in the Stein exponent can be omitted.

\begin{corollary}
\label{cor: Stein}
Consider the quantum hypothesis testing problem corresponding to the state $\rho$ and
collection $\{\cF_n\}_{n\in \N}$ satisfying Assumption~\ref{assumption:polar}, Assumption~\ref{assumption:support}, and Assumption~\ref{assumption:perm-inv}.
 Moreover, let $\sigma_0 \in \argmin_{\sigma \in \mathcal{F}_1} D(\rho \| \sigma)$. Then, the regularization in~\eqref{eq:Stein-composite} can be removed, i.e.
\begin{equation}
s(\rho\|\cF) =D(\rho\|\sigma_0) 
\end{equation}
if and only if
\begin{equation}
\sup_{t \in \mathbb{R}, \tau \in \mathcal{F}_1}\textup{Tr}\Big[\tau \sigma_0^{-\frac{1}{2}(1+it)}\rho \sigma_0^{-\frac{1}{2}(1-it)}\Big] = 1.
\end{equation}
\end{corollary}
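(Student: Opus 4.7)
The plan is to combine the generalized Stein's lemma from Theorem~\ref{thm:Stein} with the weak additivity characterization of Theorem~\ref{thm:Stein-additivity}. By Theorem~\ref{thm:Stein}, the Stein exponent is given by $\rms(\rho\|\{\cS_n\}_{n\in\N}) = \Dreg(\rho) = \lim_{n\to\infty}\tfrac{1}{n}\mathfrak{D}(\rho^{\otimes n})$, so the statement ``the regularization can be removed'' is equivalent to the equality $\Dreg(\rho) = \mathfrak{D}(\rho) = D(\rho\|\sigma_0)$. The task is therefore to translate this asymptotic equality into the single-copy statement~\eqref{eq:additivity Umegaki}, at which point Theorem~\ref{thm:Stein-additivity} finishes the job.

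The key structural observation I would invoke is that Assumption~\ref{assumption:polar}(1.B), namely $\cS_m\otimes \cS_n\subseteq \cS_{m+n}$, implies the subadditivity
\begin{equation}
\mathfrak{D}(\rho^{\otimes(m+n)}) \leq \mathfrak{D}(\rho^{\otimes m}) + \mathfrak{D}(\rho^{\otimes n}),
\end{equation}
since any product $\sigma^{(m)}\otimes\sigma^{(n)}$ of single-block minimizers is a feasible alternative and the Umegaki relative entropy is additive on tensor products. Subadditivity together with Fekete's lemma gives $\Dreg(\rho) = \inf_n \tfrac{1}{n}\mathfrak{D}(\rho^{\otimes n})$, and in particular $\Dreg(\rho) \leq \tfrac{1}{n}\mathfrak{D}(\rho^{\otimes n}) \leq \mathfrak{D}(\rho)$ for every $n\in\N$.

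From here, both implications are immediate. If the single-copy condition holds, Theorem~\ref{thm:Stein-additivity} gives $\mathfrak{D}(\rho^{\otimes n}) = n\,\mathfrak{D}(\rho)$ for all $n$, hence $\Dreg(\rho) = \mathfrak{D}(\rho) = D(\rho\|\sigma_0)$, and by Theorem~\ref{thm:Stein} this equals $\rms(\rho\|\{\cS_n\}_{n\in\N})$. Conversely, if $\rms(\rho\|\{\cS_n\}_{n\in\N}) = D(\rho\|\sigma_0)$, then $\Dreg(\rho) = \mathfrak{D}(\rho)$; combined with the squeeze $\Dreg(\rho) \leq \tfrac{1}{n}\mathfrak{D}(\rho^{\otimes n}) \leq \mathfrak{D}(\rho)$, this forces equality $\mathfrak{D}(\rho^{\otimes n}) = n\,\mathfrak{D}(\rho)$ for every $n\in\N$, which is precisely~\eqref{eq:additivity Umegaki}, and Theorem~\ref{thm:Stein-additivity} then delivers the single-copy condition.

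I do not expect a serious obstacle, as all the heavy lifting is done by the two theorems being cited; the only thing to double-check is that Assumption~\ref{assumption:support} indeed guarantees that $\sigma_0$ exists and that $\mathfrak{D}(\rho)$ is finite, so that the chain of inequalities above is meaningful and Theorem~\ref{thm:Stein-additivity} applies verbatim. The role of Assumption~\ref{assumption:perm-inv} enters only through the invocation of Theorem~\ref{thm:Stein}; it is not needed for the equivalence between the additivity relation and the single-copy condition itself.
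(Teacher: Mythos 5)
Your proposal is correct and follows essentially the same route as the paper, which justifies the corollary by noting that the absence of regularization, $\Dreg(\rho)=\mathfrak{D}(\rho)$, is equivalent to additivity for all $n$ and then invoking Theorem~\ref{thm:Stein-additivity}; your subadditivity-plus-Fekete squeeze simply makes explicit the step the paper labels ``straightforward to see.'' Your remarks on the roles of Assumption~\ref{assumption:support} and Assumption~\ref{assumption:perm-inv} are also accurate.
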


\subsection{Strong additivity of the Umegaki relative entropy}
In this section, we show that the regularized relative entropy is strongly additive. As we demonstrate below, this is a consequence of the generalized Stein’s lemma (Theorem~\ref{thm:Stein}), together with the assumption on the polar sets.

\begin{prop}
\label{prop: Additivity regularized}
Let $\rho_1, \dots, \rho_k$ be quantum states on  finite-dimensional Hilbert spaces $\cH_1, \dots, \cH_k$, respectively. Let $\bigl\{\,\mathcal{F}\bigl(\bigotimes_{j=1}^k \mathcal{H}_j^{\otimes m_j}\bigr)\,\bigr\}_{(m_1,\dots,m_k)\in\mathbb{N}^k}$ be a family of sets of positive semidefinite operators that
satisfies Assumption~\ref{ass:tensor-compatibility}.
Then, we have
\begin{align}
\Dreg\left(\bigotimes_{j=1}^k \rho_j\right) = \sum_{j=1}^k \Dreg(\rho_j)\,.
\end{align}
\end{prop}
\begin{proof}
We begin by proving subadditivity and then demonstrate superadditivity, thereby establishing additivity.
Subadditivity follows from the assumption $\bigotimes_{j=1}^k\mathcal{F}(\mathcal{H}^{\otimes n}_j)
\subseteq \mathcal{F}(\bigotimes_{j=1}^k\mathcal{H}^{\otimes n}_j)$ which gives
    \begin{equation}
        \min_{\sigma^{(n)} \in \mathcal{F}(\bigotimes_{j=1}^k\mathcal{H}^{\otimes n}_j) }D\left(\bigotimes_{j=1}^k \rho_j^{\otimes n}\middle\|\sigma^{(n)}\right) \leq \sum_{j=1}^k\min_{\sigma_j^{(n)} \in \mathcal F(\mathcal{H}^{\otimes n}_j) }D\big(\rho_j^{\otimes n}\big\|\sigma_j^{(n)}\big).
    \end{equation}
Dividing by $n$ and taking the limit $n \rightarrow +\infty$ on both sides, we obtain the desired inequality.

 To prove superadditivity, we invoke the generalized Stein's lemma in~\eqref{thm:Stein}, which provides a variational characterization of the regularized relative entropy. Explicitly, we have that
    \begin{align}
    &\Dreg\left(\bigotimes_{j=1}^k \rho_j\right) =
      \sup\Big\{r\geq 0:\, \exists \{T_{n}\}_{n\in \N}, \, 0\leq T_{n}\leq I,\nonumber\\
      &  \qquad \qquad  \,   \liminf_{n\to +\infty} -\frac 1n\log \beta_{n}\Big(\mathcal F\Big(\bigotimes \nolimits_{j=1}^k\mathcal{H}^{\otimes n}_j\Big)\Big| T_{n}\Big)\geq r,\,  \lim_{n\to +\infty} \alpha_n\Big(\bigotimes\nolimits_{j=1}^k \rho_j\Big| T_{n}\Big)=0  \Big\}.
    \end{align}
We note that, here, $T_n$ is an operator acting on the Hilbert space $\bigotimes_{j=1}^k\mathcal{H}_j^{\otimes n}$ which is chosen as follows. Applying the generalized Stein's lemma for the state $\rho_j$, there are operators $T_{n, j}$ acting on  $\cH_{j}^{\otimes n}$ such that 
\begin{align}
\lim_{n\to +\infty}\alpha_n(\rho_j| T_{n, j})=0, \qquad\text{ and } \qquad \liminf_{n\to+\infty} -\frac 1n \log \beta_n(\cF(\cH_j^{\otimes n})| T_{n, j}) =  \Dreg(\rho_j).
\end{align} 
Now we let $T_n=T_{n, 1}\otimes\cdots\otimes T_{n,k}$.   We note that this product test satisfies
    \begin{equation}
        \lim_{n\to +\infty}\alpha_n\Big (\bigotimes\nolimits_{j=1}^k \rho_j\Big| T_{n}\Big) = 1-\lim_{n\to +\infty}\prod_{j=1}^k\Trm\big[\rho_j T_{n,j}\big] =0\,.
    \end{equation}
On the other hand, we can apply the assumption $\bigotimes_{j=1}^k\mathcal{F}(\mathcal{H}^{\otimes n}_j)^\circ
\subseteq \mathcal{F}(\bigotimes_{j=1}^k\mathcal{H}^{\otimes n}_j)^\circ$ and Lemma~\ref{lemma:multiplicativity of support function} to conclude the multiplicativity of the support functions and write
    \begin{equation}
        \beta_{n}\Big(\mathcal F\Big(\bigotimes\nolimits_{j=1}^k\mathcal{H}^{\otimes n}_j\Big)\Big|T_{n}\Big) = \prod_{j=1}^k \beta_n\big(\mathcal F(\mathcal{H}^{\otimes n}_j)\big|T_{n, j}\big).
    \end{equation}
This implies 
\begin{align}
\Dreg\left(\bigotimes_{j=1}^k \rho_j\right) &\geq \liminf_{n\to +\infty} -\frac 1n\log \beta_{n}\Big(\mathcal F\Big(\bigotimes\nolimits_{j=1}^k\mathcal{H}^{\otimes n}_j\Big)\Big|T_{n}\Big) \\
& =  \sum_{j=1}^m \liminf_{n\to +\infty} -\frac 1n \log \beta_n(\cF(\cH_j^{\otimes n})|T_{n, j}) = \sum_{j=1}^k \Dreg(\rho_j).
\end{align}    
\end{proof}

Using the proposition above, we determine the conditions under which the non-regularized, minimized Umegaki relative entropy is additive.
\begin{theorem}
\label{th: strong additivity Umegaki}
  Let $\rho_1, \dots, \rho_k$ be quantum states on finite-dimensional Hilbert spaces $\cH_1, \dots, \cH_k$, respectively.  Let $\bigl\{\,\mathcal{F}\bigl(\bigotimes_{j=1}^k \mathcal{H}_j^{\otimes m_j}\bigr)\,\bigr\}_{(m_1,\dots,m_k)\in\mathbb{N}^k}$ be a family of sets of positive semidefinite operators that
satisfies Assumption~\ref{ass:tensor-compatibility}.
Let $\sigma_{0,j} \in \argmin_{\sigma_j \in \cF(\mathcal{H}_j)} D(\rho_j\|\sigma_j)$ be an optimizer for $\rho_j$, $j=1,\dots, k$.
Then, we have
\begin{align}
\label{eq:strong additivity}
\mathfrak{D}\left(\bigotimes_{j=1}^k \rho_j^{\otimes m_j}\right) = \sum_{j=1}^k m_j\mathfrak{D}(\rho_j) \qquad \forall m_j \in \mathbb{N}, \,j=1,\dots,k
\end{align}
if and only if 
\begin{align}
\label{eq:additivity first state}
 \sup_{t \in \mathbb{R}, \tau_j \in \cF(\mathcal{H}_j)}\textup{Tr}\Big[\tau_j \sigma_{0,j}^{-\frac{1}{2}(1+it)}\rho_j \sigma_{0,j}^{-\frac{1}{2}(1-it)}\Big] = 1, \qquad \forall j=1,\dots, k\,.
\end{align}
\end{theorem}
\begin{proof}
    We establish the two directions separately. We first show how the conditions in~\eqref{eq:additivity first state} imply~\eqref{eq:strong additivity}.
    By invoking the subadditivity of the minimized relative entropy, the additivity of the regularized relative entropy stated in Proposition~\ref{prop: Additivity regularized}, and the sufficient conditions for weak additivity in Theorem~\ref{thm:Stein-additivity}, we find that
\begin{equation}
        \mathfrak{D}\left(\bigotimes_{j=1}^k \rho_j^{\otimes m_j}\right) \geq \Dreg\left(\bigotimes_{j=1}^k \rho_j^{\otimes m_j}\right) = \sum_{j=1}^k m_j\Dreg(\rho_j)  = \sum_{j=1}^k m_j\mathfrak{D}(\rho_j) \,.
    \end{equation}
We now turn to the converse direction. Suppose that condition~\eqref{eq:additivity first state} is not satisfied (say) for $j=1$. Then, by Theorem~\ref{thm:Stein-additivity}, it follows that additivity must fail for some value $m_1$. Setting $m_2=\cdots=m_k = 1$, we thus obtain
\begin{equation}
        \mathfrak{D}(\rho_1^{\otimes m_1} \otimes \rho_2\otimes \cdots \otimes \rho_k) \leq \mathfrak{D}(\rho_1^{\otimes m_1} )+  \sum_{j=2}^k\mathfrak{D}(\rho_j) < m_1 \mathfrak{D}(\rho_1) + \sum_{j=2}^k  \mathfrak{D}(\rho_j)\,,
    \end{equation}
where in the first inequality we use subadditivity. This shows that additivity is violated for $m_1$ and $m_2=\cdots=m_k = 1$. Thus,~\eqref{eq:strong additivity} implies~\eqref{eq:additivity first state}.
    
\end{proof}


\subsection{Examples}
\label{sec:examples}

In this section, we illustrate the above conditions by presenting concrete examples of both additivity and non-additivity, based on Theorem~\ref{thm:Stein-additivity}.

Our first example is a qubit scenario in the context of hypothesis testing with an arbitrarily varying source. In this case, we identify parameter ranges where additivity holds and others where it fails.
Remarkably, by tuning a single parameter, we construct examples where the relative entropy remains additive for $n$ copies but loses additivity at 
$n+1$ copies, even for arbitrarily large $n$.

We then turn to the resource theories of entanglement and magic, focusing on symmetric states for which the optimizer can be shown to commute with the input state.
As shown in Corollary~\ref{commuting} in Appendix~\ref{app:necessary and sufficient}, if the optimizer $\sigma_0 \in \argmin_{\sigma \in \mathcal{F}} D(\rho \| \sigma)$ commutes with $\rho$, then the supremum over $t \in \mathbb{R}$ in~\eqref{eq:additivity-Stein-condition} can be omitted. In this case, the resulting inequality holds by the optimality of $\sigma_0$, and regularization is not needed. This typically occurs when the states exhibit symmetries. Moreover, if all elements of $\mathcal{F}_1$ commute with each other (e.g., in the classical case), and in particular with $\sigma_0$, then a similar argument shows that additivity holds.

\subsubsection{A qubit example for additivity and non-additivity}
We start with an example for which the regularization in the Stein exponent can or cannot be removed, depending on the value of a real parameter $\lambda$. In particular, in the non-additivity case, we show that while for small $n$, the minimized relative entropy is additive, there exists $n_0$ such that for some $n \geq n_0$ it fails to be additive. 

Let $\rho = \ketbra{+}{+}$ with $\ket{\pm} = \frac{1}{\sqrt{2}}(\ket0 \pm \ket 1)$. Also, let $\cF=\cF_1=\text{conv}\{\sigma, \ketbra{-}{-} \}$ with 
\begin{align}\label{eq:example-def-sigma}
\sigma = \frac{1+\lambda}{2}\ketbra{0}{0} + \frac{1-\lambda}{2}\ketbra{1}{1},
\end{align}
for some fixed parameter $\lambda\in [0,1)$. We then define $\cF_n$, for $n\geq 2$, as in Example~\ref{example:AV} so that $\{\cF_n\}_{n\in \N}$ is an arbitrarily varying source.
We note that $\rho$ is orthogonal to $\ketbra{-}{-}$, so one expects that $\argmin_{\tau\in \cF} D(\rho\| \tau)=\sigma$ for small values of $\lambda$. In fact, this holds at least for $\lambda \leq \frac{3}{4}$. 
In the case $\lambda = 0$, the regularization is not needed since in this case $\sigma$ equals the fully mixed state and commutes with $\rho$, in which case additivity holds as discussed above.
When $\lambda\neq 0$, there exists $t\in \mathbb R$ such that $\sigma^{-it/2} \ket -$ equals $\ket +$ up to a phase. Indeed, such a value can be found by solving for \( t \) such that \(\left(1+\lambda\right)^{it/2} = -\left(1-\lambda\right)^{it/2} \), which yields 
\begin{equation}
    t = 2(\pi + 2\pi m)\left(\log\left(\frac{1+\lambda}{1-\lambda}\right)\right)^{-1} \,, \quad  m \in \mathbb{Z}\,.
\end{equation}
 In this case
\begin{align}
\Trm\Big[ \rho \sigma^{-\frac 12 (1+it)} \ketbra{-}{-} \sigma^{-\frac 12 (1-it)} \Big]= \big|\bra + \sigma^{-\frac12}\ket +  \big|^2=\frac{1}{1-\lambda^2}+\frac{1}{\sqrt{1-\lambda^2}}>1.
\end{align}
Thus, by Theorem~\ref{thm:Stein-additivity}, regularization is necessary.

The above argument shows that $\sigma^{\otimes n}$ is not an optimizer of $\min_{\sigma^{\n}\in \cF^{\n}}  D\big(\rho^{\otimes n}\big\| \sigma^{\n}\big)$  for some values of $n$ if $\lambda\neq 0$. One may wonder for which values of $n$ this quantity is not additive. To this end, one can check for which values of $n$ the optimality condition in Theorem~\ref{thm:derivative-condition} is violated for $\sigma^{\otimes n}$, i.e.,
\begin{align}
    \sup_{\tau^{\n}\in \cF^{\n}} \Trm\Bigg[\tau^{\n}\int_{-\infty}^\infty  (\sigma^{\otimes n})^{-\frac{1}{2}(1+it)}\rho^{\otimes n}(\sigma^{\otimes n})^{-\frac{1}{2}(1-it)}\Bigg] \beta_{0}(t)\dd t > 1.
\end{align}
We need to examine the above inequality for extreme points of $\cF_n$, which up to the permutation of subsystems are of the form $\sigma^{\otimes m}\otimes \ketbra{-}^{\otimes (n-m)}$. Writing down the expression for these states, we realize that we may consider only the case of $\tau^{\n} = \ketbra{-}{-}^{\otimes n}$. Expanding the operators in the computational basis, a straightforward calculation yields
\begin{align}
\label{eq:violation n}
    f_p(n):&= 
    \int_{-\infty}^\infty \big|\big\langle-\big|  \sigma^{-\frac{1}{2}(1+it)}\big|+\big\rangle\big|^{ n} \beta_{0}(t)\dd t \\
    &= \frac{1}{2^{2n}}\sum_{k,\ell=1}^n \binom{n}{k}\binom{n}{\ell} (-1)^{k+\ell} \Delta_{\log}(p^k(1-p)^{n-k},p^\ell(1-p)^{n-\ell}) \, ,
\end{align}
where $\Delta_{\log}(x,y) = (\log{x}-\log{y})/(x-y)$ denotes the logarithmic divided difference and $p=(1+\lambda)/2$. Thus, we need to find values of $n$ for which $f_p(n)>1$.

\begin{figure}
\centering
\begin{tikzpicture}
\node at (0,0) {\includegraphics[width=.6\textwidth]{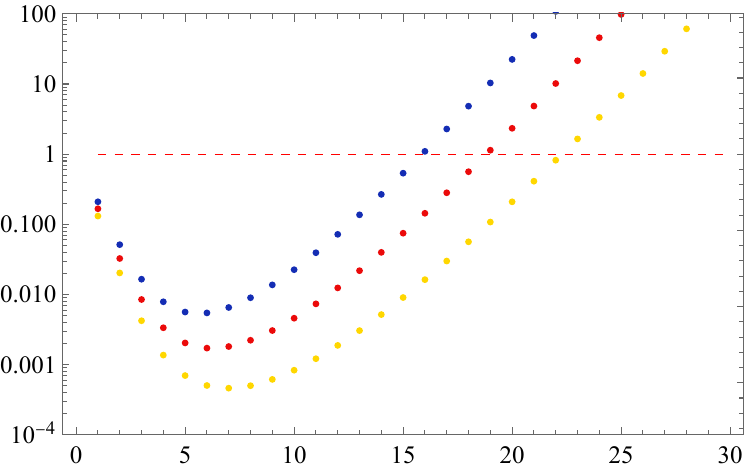}};

\draw[myyellow, thick] (1.8,-1.1) -- (2.5,-1.1);
\node[right] at (2.6,-1.1) { $p=0.70$};

\draw[myred, thick] (1.8,-1.5) -- (2.5,-1.5);
\node[right] at (2.6,-1.5) { $p=0.72$};

\draw[myblue, thick] (1.8,-1.9) -- (2.5,-1.9);
\node[right] at (2.6,-1.9) { $p=0.74$};

\node at (0.5,-3.4) {Number of copies $n$};
\node at (-5.4,0) {\rotatebox{90}{Value of $f_p(n)$}};
\end{tikzpicture}
\caption{The plot represents the value $f_p(n)$ as a function of $n$ for $p = 0.70, 0.72, 0.74$. The horizontal red line is the constant function $1$. Once the curve becomes bigger than $1$, the minimized relative entropy is not additive anymore. As $p$ tends to $1/2$ (commuting case), the additivity is violated for larger and larger values of $n$.}
\label{fig:Violation_n}
\end{figure}

Fig.~\ref{fig:Violation_n} illustrates the validity of this inequality for sufficiently large $n$. We note that the violation of additivity occurs at larger values as $p$ tends to $1/2$ (and $\lambda$ tends to $0$), in which case the quantities are additive.

This example shows that in some cases the limit in~\eqref{eq:Stein-composite} is indeed necessary and the problem of computing the Stein exponent $\rms(\rho\|\{\cF_n\}_{n\in \N})$ is not easy in general. However, we note that $\frac 1n \min_{\sigma^{\n}\in \cF_n} D\big(\rho^{\otimes n}\big\|\sigma^{\n}\big)$ for any $n$ is an upper bound on $\rms(\rho\|\cF)$. In Appendix~\ref{app:lower-Stein-exp} we also show that $\min_{\sigma\in \cF} D^\downarrow(\rho\|\sigma)$ is a lower bound on $\rms(\rho\|\cF)$ where  $D^{\downarrow}\big(\rho\big\|\sigma)= \lim_{\alpha \rightarrow 1^-}D_{\alpha,1-\alpha}(\rho\|\sigma)$ is the reversed sandwiched R\'enyi relative entropy. For the specific examples considered in Fig.~\ref{fig:Violation_n} and $n=1$ we obtain $\min_{\sigma\in \cF} D^\downarrow(\rho\|\sigma)\leq   \rms(\rho\|\cF) \leq \min_{\sigma\in \cF} D(\rho\|\sigma)$ with 
$\min_{\sigma\in \cF} D(\rho\|\sigma)= 0.78, 0.80, 0.82$, for $p=0.70,0.72,0.74$, respectively, and $\min_{\sigma\in \cF} D^\downarrow(\rho\|\sigma)= 0.69$ for any $p$.

\subsubsection{Additivity for Werner states}
A bipartite quantum state $\rho_{A B}$, with subsystems $A, B$ having the same dimension $d$, is called a (generalized) Werner state if it is invariant under the following twirling map~\cite{werner1989quantum, bennett1996purification,vollbrecht2001entanglement} 
\begin{equation}
\mathcal{E}: \rho \rightarrow \int dU (U \otimes U) \rho (U^\dagger \otimes U^\dagger) \,,
\end{equation} 
where $\d U$ is the Haar measure over the group of unitary matrices of size $d\times d$. 
A Werner state can be written as
\begin{equation}
\label{W}
\rho^W(p) = p \frac{2}{d(d+1)}P^{\text{SYM}}_{AB} + (1-p)\frac{2}{d(d-1)}P^{\text{AS}}_{AB} \,,
\end{equation}
where $P^{\text{SYM}}_{AB}$ and $P^{\text{AS}}_{AB}$ are the projections onto the symmetric and antisymmetric subspaces, respectively. The Werner state is separable for $p \geq 1/2$ and entangled for $p<1/2$~\cite{vollbrecht2001entanglement}.

The relative entropy $D(\rho^{W}(0)\| \sigma)$ when minimized over $\sigma$ belonging to the set of separable states (SEP) or positive partial transpose states (PPT) is non-additive~\cite{vollbrecht2001entanglement}. Therefore, by continuity (see Proposition~\ref{Continuity} in Appendix~\ref{app:continuity}), additivity does not hold even in a neighborhood of $p=0$. However, as we discussed above, according to Theorem~\ref{thm:Stein-additivity} and Lemma~\ref{lem:additivity-commuting}, the relative entropy minimized over the Rains set is additive. The point is that by the data processing inequality $D(\rho^W(p)\| \sigma)\geq D(\rho^W(p)\| \mathcal E(\sigma))$. Moreover, the Rains set is invariant under the twirling map because it is an LOCC operation~\cite{regula2019one}. Thus, when minimizing over $\sigma$ belonging to the Rains set, we may replace $\sigma$ with $\mathcal E(\sigma)$ which we know is a Werner state. On the other hand, Werner states constitute a commuting family. Hence, weak additivity follows from Theorem~\ref{thm:Stein-additivity} and the discussion thereafter. Moreover, in Lemma~\ref{lem:additivity-commuting}, we show that additivity holds more generally in this setting—specifically, for any pair of states where at least one is a Werner state.

Interestingly, it is shown in~\cite{audenaert2002asymptotic} that for Werner states, the regularized relative entropy minimized over PPT states coincides with the relative entropy minimized over the Rains set (even without regularization as discussed above). Hence, the approximation of the set of separable or PPT states with the Rains set is sometimes tight. 

We conclude by noting that the Umegaki relative entropy minimized over the Rains set is not additive in general~\cite{wang2017nonadditivity}. Analogously, violations of the additivity of the Rains relative entropy can be found by identifying a single-copy optimizer and constructing a phase $t$ and direction $\tau$ such that the inequality in Theorem~\ref{thm:Stein-additivity} is violated.

\subsubsection{Additivity for noisy strange states}
We consider the class of qutrit strange states subject to depolarizing noise
\begin{equation}
\rho(p)=pS + (1-p) \frac{I}{3} \,,
\end{equation}
where $S = \ketbra{S}{S}$ is the strange state with $\ket{S} = (\ket{1}-\ket{2})/\sqrt{2}$ (see, e.g.,~\cite{veitch2014resource}).
This class of states is invariant under stabilizer operations (see, e.g.,~\cite[proof of Proposition 14]{takagi2022one} as well as the operation~\cite[Section III.A]{veitch2014resource})
\begin{equation}
\mathcal{E}(\rho) = \Tr[S \rho] S + \Tr[(I-S) \rho] \frac{I-S}{2} \,.
\end{equation}
The pure qutrit strange state serves as a general counterexample to the additivity of magic monotones defined via quantum divergences minimized over the set of stabilizer states (see~\cite{veitch2014resource} and~\cite[Proposition 12]{rubboli2024mixed}). As before, we can leverage the data processing inequality to restrict the optimization over the set $\mathcal W_1$ given in Example~\ref{example:mana} to the set of states of the form $\mathcal{E}(\sigma)$ where $\sigma \in \mathcal{W}_1$. On the other hand, these states form a commuting family. Hence, weak additivity for the family $\{\mathcal{W}_n\}_{n\in \mathbb{N}}$ follows from Theorem~\ref{thm:Stein-additivity} and the discussion thereafter. In Lemma~\ref{lem:additivity-commuting}, we further show that a stronger result holds in this case: additivity holds for any pair of states as long as at least one of them is a strange state.


\section{Additivity problem for the $\alpha$-$z$-R\'enyi relative entropy}\label{sec:additivity}
In the preceding section, we established how the additivity conditions for the Umegaki relative entropy yield necessary and sufficient conditions for removing the regularization in Stein's exponent. In the following sections, we examine additional error exponents, which we will express in terms of the Petz and sandwiched R\'enyi divergences. In this section, we investigate the additivity of these quantities under the same assumptions imposed in the previous section.

In this section, we fix a quantum state $\rho$ and a collection $\{\cF_n\}_{n\in \N}$ satisfying Assumption~\ref{assumption:polar} and Assumption~\ref{assumption:support}, and discuss the additivity of the $\alpha$-$z$-R\'enyi relative entropies minimized over the set $\cF_n$. More precisely, we ask under what conditions we have 
\begin{equation}\label{eq:main-additivity-question}
\min_{\sigma^{\n} \in \mathcal{F}_n} D_{\alpha,z}\big(\rho^{\otimes n} \big\| \sigma^{\n}\big) =n \min_{\sigma \in \mathcal{F}_1} D_{\alpha,z}(\rho\| \sigma), \qquad \quad \forall n.
\end{equation}
To solve this problem, we first find a necessary and sufficient condition to characterize the optimizer. 
To this end, we start with a key technical lemma which allows us to write the necessary and sufficient conditions for the minimizers on the right-hand side of~\eqref{eq:main-additivity-question} in a convenient form. This lemma might also be of independent interest. 

\begin{lemma}
\label{lem:derivative-integral}
For any $\theta \in (-1,1)$ and positive real number $x, y$ we have
\begin{align}
\frac{x^{\theta}-y^{\theta}}{\theta(x-y)} & = 
\sinc(\theta\pi) \int_0^\infty  \frac{1}{(x+t)(y+t)} t^\theta\dd t \\
& =  \int_{-\infty}^{\infty} x^{-\frac{1}{2}(1-\theta+it)}y^{-\frac{1}{2}(1-\theta-it)} \beta_{\theta}(t) \dd t,
\end{align}
where
\begin{align}\label{eq:def-beta-dist}
\beta_{\theta}(t)=\frac{\sin(\theta \pi)}{2\theta(\cosh(\pi t) + \cos(\theta \pi))}
\end{align}
is a probability measure. Here, the left-hand side for $\theta=0$ is understood in the limit and is equal to $\frac{\log x - \log y}{x-y}$. We also have $\beta_0(t) = \lim_{\theta\to 0} \beta_\theta(t) = \frac{\pi}{2(\cosh(\pi t) +1)}$.
\end{lemma}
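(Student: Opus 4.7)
My plan is to prove the two equalities separately and then verify that $\beta_\theta$ is a probability density, focusing throughout on $\theta \in (0, 1)$; the case $\theta \in (-1, 0)$ follows from the substitution $\lambda \mapsto xy/\mu$ in the middle integral (which maps $\lambda^\theta d\lambda/[(x+\lambda)(y+\lambda)]$ to $(xy)^\theta\,\mu^{-\theta} d\mu/[(\mu+x)(\mu+y)]$, thereby exchanging $\theta$ and $-\theta$ up to a factor of $(xy)^\theta$), and $\theta = 0$ is recovered in the limit using the logarithmic formula already established in the preceding lemma. For the first equality, I would start from the classical Cauchy/Mellin representation
\begin{equation}
x^{\theta} = \frac{\sin(\theta\pi)}{\pi}\int_0^\infty \frac{x\,\lambda^{\theta-1}}{x+\lambda}\,d\lambda, \qquad \theta\in(0,1),\; x>0,
\end{equation}
which follows from $\int_0^\infty u^{\theta-1}/(1+u)\,du = \pi/\sin(\theta\pi)$ by rescaling $\lambda = xu$. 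Subtracting the analogous formula for $y$, using the elementary identity $\tfrac{x}{x+\lambda} - \tfrac{y}{y+\lambda} = \tfrac{(x-y)\lambda}{(x+\lambda)(y+\lambda)}$, and dividing by $\theta(x-y)$ immediately yields the first equality with the $\sinc(\theta\pi)$ prefactor.

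For the second equality, the strategy is to set $\xi := \tfrac12\log(x/y)$ and express both sides as $(xy)^{-(1-\theta)/2}$ times a common function of $\xi$. Writing $x = \sqrt{xy}\,e^{\xi}$ and $y = \sqrt{xy}\,e^{-\xi}$, the right-hand side becomes $(xy)^{-(1-\theta)/2}\int_{-\infty}^\infty e^{-it\xi}\beta_\theta(t)\,dt$. On the left, the substitution $\lambda = \sqrt{xy}\,e^{u}$ combined with the product-to-sum identity $2\cosh A\,\cosh B = \cosh(A+B) + \cosh(A-B)$ simplifies $(x+\lambda)(y+\lambda) = 2xy\,e^{u}(\cosh u + \cosh\xi)$ and reduces the left-hand side to $(xy)^{-(1-\theta)/2}\cdot\frac{\sin(\theta\pi)}{2\pi\theta}\int_{-\infty}^\infty\frac{e^{u\theta}}{\cosh u + \cosh\xi}\,du$. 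Matching the two sides thus reduces to the pair of hyperbolic Fourier evaluations
\begin{equation}
\int_{-\infty}^\infty\frac{e^{u\theta}}{\cosh u+\cosh\xi}\,du \;=\; \frac{2\pi\sinh(\theta\xi)}{\sin(\theta\pi)\sinh\xi} \;=\; \frac{2\pi}{\sin(\theta\pi)}\int_{-\infty}^\infty e^{-it\xi}\beta_\theta(t)\,dt,
\end{equation}
each of which is a standard residue calculation: integrate around a rectangle of height $2\pi i$ (respectively $2i$) in the complex plane, exploit the $2\pi i$-periodicity of $\cosh$ to relate the top and bottom edges, and sum residues at the two simple poles inside the contour, namely $u = i\pi \pm \xi$ and $t = i(1\pm\theta)$, respectively.

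Finally, to close the argument I would verify that $\beta_\theta$ is a probability density on $\mathbb{R}$: positivity follows from $\sin(\theta\pi)/\theta > 0$ for $\theta \in (-1,1)\setminus\{0\}$ together with $\cosh(\pi t) + \cos(\theta\pi) \geq 1 + \cos(\theta\pi) > 0$, normalization follows by setting $\xi = 0$ in the Fourier identity and using $\lim_{\xi\to 0}\sinh(\theta\xi)/(\theta\sinh\xi) = 1$, and the limit $\beta_0 = \lim_{\theta\to 0}\beta_\theta$ is a direct computation from $\sin(\theta\pi)/\theta \to \pi$ and $\cos(\theta\pi) \to 1$. The main technical obstacle is the contour integration step, where one must verify that the integrals along the vertical edges of the rectangle vanish as $R \to \infty$ and correctly enumerate the residues inside; the vanishing is routine once one notes the exponential decay of $e^{u\theta}/(\cosh u + \cosh\xi)$ in $|\mathrm{Re}\,u|$ for $\theta \in (0, 1)$, but the bookkeeping of the two residues and the relation to the Fourier variable $\xi$ requires care.
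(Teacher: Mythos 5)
Your proposal is correct in substance and, for the second equality, takes a genuinely different route from the paper. The first equality is handled identically in both: the paper also starts from the representation $x^\theta = \frac{\sin(\theta\pi)}{\pi}\int_0^\infty \frac{x}{t+x}t^{\theta-1}\dd t$ and calls the rest "direct computation," which is exactly your partial-fraction step. For the second equality, however, the paper invokes the Poisson kernel of the strip $\{z:\,0\le \mathrm{Re}\,z\le 1\}$ from complex interpolation theory, applies the resulting harmonic-extension formula to $f(z)=e^{cz}$, and then specializes $e^c=\sqrt{y/x}$; you instead normalize by $\sqrt{xy}$, reduce both sides to functions of $\xi=\tfrac12\log(x/y)$, and verify the resulting hyperbolic Fourier identities by direct residue calculus on rectangular contours. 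Your route is more elementary and self-contained (no interpolation-theoretic input, and it yields the normalization of $\beta_\theta$ for free at $\xi=0$), at the price of carrying out two contour integrations with the edge estimates you correctly flag; the paper's route is shorter once the kernel formula is taken as known. Your reduction of $\theta\in(-1,0)$ to $\theta\in(0,1)$ via $\lambda\mapsto xy/\mu$ is equivalent to the paper's remark that multiplying the identity by $x^{-\theta}y^{-\theta}$ exchanges $\theta$ and $-\theta$, and both treat $\theta=0$ as a limit. One small correction: in your displayed chain of hyperbolic identities the last member should carry the prefactor $\frac{2\pi\theta}{\sin(\theta\pi)}$ rather than $\frac{2\pi}{\sin(\theta\pi)}$, since $\int_{-\infty}^{\infty}e^{-it\xi}\beta_\theta(t)\dd t=\frac{\sinh(\theta\xi)}{\theta\sinh\xi}$; with that factor restored, the matching of the two sides (each equal to $(xy)^{-(1-\theta)/2}\,\frac{\sinh(\theta\xi)}{\theta\sinh\xi}$) goes through exactly as you intend.
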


\begin{proof} Multiplying these equations by $x^{-\theta}y^{-\theta}$,  a straightforward calculation shows that it suffices to prove the lemma for non-negative $\theta$. Also, the case of $\theta=0$ can be proven by taking the limit of $\theta\searrow 0$, and in fact is already known~\cite[Lemma 3.4]{Sutter+2017multivariate}. Thus, we focus on $\theta\in (0,1)$. In this case, the first equation follows by direct computation and is also a consequence of the following well-known integral representation~\cite{Bhatia-P}:  
\begin{align}\label{eq:x-alpha-integral}
x^\theta = \frac{\sin(\theta \pi)}{\pi}\int_0^{\infty}   \frac{x}{t+x} t^{\theta-1}\dd t\,, \qquad \theta\in (0,1).
\end{align}
To prove the second equation, recall that the Poisson kernel of the strip $\{z=\theta+i\xi:\, 0\leq \theta\leq 1, \xi\in \mathbb R\}$ is equal to~\cite{BerghLofstrom1976interpolation} 
\begin{align}
K(\theta+i\xi, b+it) = \frac{\sin(\pi \theta)}{2\big( \cosh(\pi(t-\xi)) + \cos(\pi (\theta+b-1))  \big)},\qquad b=0,1.
\end{align} 
Then, using $\sin(\pi \theta) = \sin(\pi(1-\theta))$ for $\xi=0$ we have
\begin{align}
    K(\theta, it) = (1-\theta) \beta_{1-\theta}(t), \qquad K(\theta, 1+it) = \theta \beta_\theta(t) .
\end{align}
This means that for any analytic function $f(z)$ on the strip we have
\begin{align}
    f(\theta) = (1-\theta)\int_{-\infty}^{+\infty}  f(it) \beta_{1-\theta}(t)\dd t +  \theta\int_{-\infty}^{+\infty}  f(1+it) \beta_{\theta}(t)\dd t \,.
\end{align}
Then, taking $f(z) = e^{cz}$ for some $c\in \mathbb R$ we find that
\begin{align}
e^{c(1+\theta)} - e^{c(1-\theta)}  = e^c f(\theta) - f(1-\theta)
 = \theta\int_{-\infty}^{+\infty} \big( e^{2c} -1\big) e^{ict} \beta_\theta(t)\dd t.
\end{align}
Writing down this equation for $e^c=\sqrt{\frac yx}$ where $x,y$ are positive real numbers and multiplying both sides by  $(xy)^{-\frac 12(1-\theta)} (\frac yx-1)^{-1}$ we obtain 
\begin{align}\label{eq:comment-84}
\frac{x^{\theta} - y^{\theta}}{x-y}  
 = \theta\int_{-\infty}^{+\infty} x^{-\frac{1}{2}(1-\theta+it)} y^{-\frac{1}{2}(1-\theta-it)}\beta_\theta(t)\dd t.
\end{align}
\end{proof}
As a corollary, we obtain an integral expression for the Frech\'et derivative of the power function. This result extends the corresponding statement for the logarithm established in~\cite[Lemma 3.4]{Sutter+2017multivariate}. For a positive definite $A$ and Hermitian $H$, we define
\begin{equation}
      D x^\theta(A)[H] = \lim_{t \rightarrow 0} \frac{(A+tH)^\theta -A^\theta}{t} \,.
\end{equation}
\begin{corollary}
\label{Frechet derivative power}
    Let $A$ be a positive definite operator and $H$ be a Hermitian operator. Then, for any $\theta \in (-1,1)$, we have 
    \begin{equation}\label{eq:comment-86}
        D x^\theta(A)[H] = \theta \sinc\left(\theta \pi\right) \int_0^\infty \lambda^\theta(A + \lambda)^{-1}H(A + \lambda)^{-1}  \dd \lambda = \theta\int_{-\infty}^{\infty} A^{-\frac{1}{2}(1-\theta+it)} H A^{-\frac{1}{2}(1-\theta-it)} \beta_{\theta}(t) \d t \,.
    \end{equation}
\end{corollary}
\begin{proof}
    The first equality is proven, e.g. in~\cite[Lemma 1]{rubboli2024new} and follows from the integral representation~\eqref{eq:x-alpha-integral}. For the second equality we use Lemma~\ref{lem:derivative-integral}. To this end, we let  $A = \sum_{k}\mu_k \ketbra{k}{k}$ be the spectral decomposition of $A$ with $\mu_k>0$. Then, $(A+\lambda)^{-1}= \sum_k \frac{1}{\mu_k+\lambda}\ketbra{k}{k}$ yields 
   \begin{align}
     D x^\theta(A)[H] & = \theta \sinc\left(\theta \pi\right) \int_0^\infty \lambda^\theta(A + \lambda)^{-1}H(A + \lambda)^{-1}  \dd \lambda \\
 &= \theta\sinc(\theta\pi )\sum_{k,\ell}  \left(\int_0^\infty \lambda^\theta \frac{1}{(\mu_k+\lambda)(\mu_\ell + \lambda)}   \dd \lambda \right)   \ketbra{k}{k}H\ketbra{\ell}{\ell} \\
&  = \theta \sum_{k,\ell}  \left(  \int_{-\infty}^{\infty} \mu_k^{-\frac{1}{2}(1-\theta+it)}\mu_\ell^{-\frac{1}{2}(1-\theta-it)} \beta_{\theta}(t) \dd t \right)   \ketbra{k}{k}H\ketbra{\ell}{\ell} \\
& = \theta \int_{-\infty}^{\infty} A^{-\frac{1}{2}(1-\theta+it)}H A^{-\frac{1}{2}(1-\theta-it)} \beta_{\theta}(t) \dd t,
\end{align}
where in the third line we use Lemma~\ref{lem:derivative-integral}, and the last line follows from $A^{-\frac{1}{2}(1-\theta\pm it)} = \sum_k \mu_k^{-\frac{1}{2}(1-\theta\pm it)}\ketbra{k}{k}$. 
\end{proof}
Next, we use the latter result to derive necessary and sufficient conditions for the optimizer of the $\alpha$-$z$ R\'enyi relative entropies.
For quantum states $\rho, \sigma$ define
\begin{align}\label{eq:chi def}
\chi_{\alpha,z}(\rho,\sigma):=  \rho^\frac{\alpha}{2z}\big( \rho^{\frac{\alpha}{2z}}\sigma^{\frac{1-\alpha}{z}} \rho^{\frac{\alpha}{2z}}\big)^{z-1} \rho^\frac{\alpha}{2z},
\end{align}
where negative powers are taken in the sense of generalized inverses; also, $\rho^0$ equals the orthogonal projection on the support of $\rho$, which we denote by $\supp(\rho)$. Let
\begin{align}
\label{equation problem}
&\Xi_{\alpha,z}(\rho,\sigma) :=  \begin{dcases}
\chi_{\alpha,1-\alpha}(\rho,\sigma) &  \textup{if} \; z=1-\alpha, \\
\sigma^{-1}  \chi_{\alpha,\alpha-1}(\rho, \sigma) \sigma^{-1}  &  \textup{if} \; z=\alpha-1, \\
\int_{-\infty}^{\infty} \sigma^{-\frac{1}{2}(1-\frac{1-\alpha}{z}+it)}\chi_{\alpha,z}(\rho,\sigma)\sigma^{-\frac{1}{2}(1-\frac{1-\alpha}{z}-it)} \beta_{\frac{1-\alpha}{z}}(t) \d t & \textup{if} \; |1-\alpha|/z \neq 1,
\end{dcases} 
\end{align}
where $\beta_{\theta}(t)$ is the probability measure defined in Lemma~\ref{lem:derivative-integral}. Moreover, we define
\begin{align}
\label{sets}
S_{\alpha,z}(\rho) =  \begin{cases} 
\left\{ \sigma \; \text{is a state} : \supp(\rho) = \supp\!\left(\rho^0\sigma \rho^0\right) \right\}  &  \textup{if} \; (1-\alpha)/z = 1,\\
\left\{ \sigma\; \text{is a state} : \supp(\rho) \subseteq \supp(\sigma)  \right\} &  \textup{otherwise} .
\end{cases} 
\end{align} 
Finally, following~\cite{rubboli2024new} we let $\mathcal D$ be the set of pairs $(\alpha, z)$ for which the $\alpha$-$z$-R\'enyi relative entropy satisfies the data processing inequality:
\begin{align}
\mathcal D&=  \{(\alpha, z): 0<\alpha<1, z\geq \max\{\alpha, 1-\alpha\}\}\\
&\quad\,\, \cup \{(\alpha, z): \alpha>1, \max\{\alpha/2,\alpha-1\}\leq z\leq \alpha\}\\
 &\quad\,\, \cup \{(\alpha, z): \alpha=1,z>0 \}.
\end{align}

The next result combines Corollary~\ref{Frechet derivative power} and the necessary and sufficient conditions derived in~\cite{rubboli2024new}.

\begin{theorem}
\label{thm:derivative-condition}
Let $(\alpha,z) \in \mathcal{D}$, $z \neq \alpha-1$, $\rho$ be a quantum state and $\cF$ be a convex and compact subset of positive operators that includes at least one state whose support contains the support of $\rho$.
Then, $\sigma_0 \in \argmin_{\sigma \in \mathcal{F}} D_{\alpha,z}(\rho \| \sigma)$ if and only if $\sigma_0 \in S_{\alpha,z}(\rho)$ and
\begin{align}\label{eq:derivative-condition alpha-z}
\sup_{\tau\in \cF} \textup{Tr}\big[\tau\, \Xi_{\alpha,z}(\rho,\sigma_0)\big] =  Q_{\alpha,z}(\rho \|\sigma_0) \,.
\end{align}
Moreover, if $\rho$ is full-rank, the same holds for the case $z=\alpha-1$.
\end{theorem}

\begin{proof}
The proof combines the result of Theorem~\ref{main Theorem} from Appendix~\ref{app:necessary and sufficient} with the integral representation in Corollary~\ref{Frechet derivative power}.
The support conditions remain the same.  Moreover, by the same theorem (as well as Lemma~\ref{lem:derivative-support} in Appendix~\ref{app:derivative}), the necessary and sufficient condition~\eqref{eq:derivative-condition} holds if $\Xi_{\alpha,z}(\rho,\sigma_0)$ is replaced with $\Xi'_{\alpha,z}(\rho,\sigma_0)$ given by
\begin{align}\label{eq:comment-98}
&\Xi'_{\alpha,z}(\rho,\sigma) :=  \begin{dcases}
\Xi_{\alpha,z}(\rho,\sigma)  &  \quad  |(1-\alpha)/z| = 1, \\
\sinc\left(\frac{{(1-\alpha)\pi}}{z}\right) \int_0^\infty \lambda^\frac{1-\alpha}{z}(\sigma + \lambda)^{-1}\chi_{\alpha,z}(\rho,\sigma)(\sigma + \lambda)^{-1}  \dd \lambda & \quad  |(1-\alpha)/z| \neq 1.
\end{dcases} 
\end{align}
Thus, we only need to show that $\Xi'_{\alpha,z}(\rho,\sigma_0)=\Xi_{\alpha,z}(\rho,\sigma_0)$. 
To this end, let $\theta = \frac{1-\alpha}{z}$ and observe that for $(\alpha, z)\in \mathcal D$ we have $|\theta|\leq 1$. Assuming that $|\theta|\neq 1$, the second equality in Corollary~\ref{Frechet derivative power} gives $\Xi'_{\alpha,z}(\rho,\sigma) = \Xi_{\alpha,z}(\rho,\sigma)$.
\end{proof}

We can now state our main result regarding the additivity problem~\eqref{eq:main-additivity-question}.

\begin{theorem}\label{thm:additivity-derivative}
Suppose that the quantum state $\rho$ and $\{\cF_n\}_{n\in \N}$ satisfy Assumption~\ref{assumption:polar} and Assumption~\ref{assumption:support}.
Let $(\alpha,z) \in \mathcal{D}$ be such that $|(1-\alpha)|/z \neq 1$  
Moreover, let $\sigma_0 \in \argmin_{\sigma \in \mathcal{F}_1} D_{\alpha,z}(\rho \| \sigma)$. Then, 
\begin{equation}
\label{eq:additivity}
\min_{\sigma^{\n} \in \mathcal{F}_n} D_{\alpha,z}\big(\rho^{\otimes n} \big\| \sigma^{\n}\big) = nD_{\alpha,z}(\rho\| \sigma_0)\, \quad \forall n \geq 1 ,
\end{equation}
if and only if
\begin{equation}
\label{condition on t}
\sup_{t \in \mathbb{R}, \tau \in \mathcal{F}_1}\textup{Tr}\Bigg[\tau \sigma_0^{-\frac{1}{2}\big(1-\frac{1-\alpha}{z}+it\big)}\chi_{\alpha,z}(\rho,\sigma_0) \sigma_0^{-\frac{1}{2}\big(1-\frac{1-\alpha}{z}-it\big)}\Bigg] = Q_{\alpha,z}(\rho\|\sigma_0).
\end{equation}
\end{theorem}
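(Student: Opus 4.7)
The plan is to mirror the argument of Theorem~\ref{thm:Stein-additivity}, replacing Lemma~\ref{necessary and sufficient Umegaki} with Theorem~\ref{thm:derivative-condition} and using the integral representation of $\Xi_{\alpha,z}$ from the last case of~\eqref{equation problem}, which applies thanks to the hypothesis $|(1-\alpha)|/z \neq 1$. Throughout I would denote
\begin{equation*}
M(t) := \sigma_0^{-\frac{1}{2}(1-\frac{1-\alpha}{z}+it)}\,\chi_{\alpha,z}(\rho,\sigma_0)\,\sigma_0^{-\frac{1}{2}(1-\frac{1-\alpha}{z}-it)},
\end{equation*}
which is positive semidefinite, since $\sigma_0^{-(1-\frac{1-\alpha}{z}-it)/2}$ is the adjoint of $\sigma_0^{-(1-\frac{1-\alpha}{z}+it)/2}$ and $\chi_{\alpha,z}(\rho,\sigma_0)\geq 0$. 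My first step would be to reduce the equivalence claimed in the theorem to the inequality $\sup_{t,\tau}\Trm[\tau M(t)] \leq Q_{\alpha,z}(\rho\|\sigma_0)$: choosing $\tau = \sigma_0$ and $t = 0$, cyclicity together with the definition~\eqref{eq:chi def} of $\chi_{\alpha,z}$ gives $\Trm[\sigma_0 M(0)] = \Trm[(\rho^{\alpha/2z}\sigma_0^{(1-\alpha)/z}\rho^{\alpha/2z})^z] = Q_{\alpha,z}(\rho\|\sigma_0)$, so the reverse inequality is automatic.

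For the forward direction I would assume~\eqref{eq:additivity}, so that $\sigma_0^{\otimes n}$ is an optimizer for $\rho^{\otimes n}$. Since powers distribute over tensor products, $\chi_{\alpha,z}(\rho^{\otimes n},\sigma_0^{\otimes n}) = \chi_{\alpha,z}(\rho,\sigma_0)^{\otimes n}$, and the integrand defining $\Xi_{\alpha,z}(\rho^{\otimes n},\sigma_0^{\otimes n})$ is exactly $M(t)^{\otimes n}$. Evaluating the necessary condition of Theorem~\ref{thm:derivative-condition} against the product direction $\tau^{\otimes n}\in\cS_n$ with $\tau\in\cS_1$ then yields
\begin{equation*}
\int_{-\infty}^{+\infty}\Trm[\tau M(t)]^n\,\beta_{\frac{1-\alpha}{z}}(t)\,\dd t \;\leq\; Q_{\alpha,z}(\rho\|\sigma_0)^n.
\end{equation*}
Taking $n$-th roots and letting $n\to\infty$ (the standard $L^n\to L^\infty$ convergence for the probability measure $\beta_{(1-\alpha)/z}$, justified by continuity of $t\mapsto\Trm[\tau M(t)]$) produces $\sup_t\Trm[\tau M(t)]\leq Q_{\alpha,z}(\rho\|\sigma_0)$, and a final supremum over $\tau\in\cS_1$ closes this direction.

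For the converse I would assume the inequality form of~\eqref{condition on t} and verify the sufficient condition of Theorem~\ref{thm:derivative-condition} for $\sigma_0^{\otimes n}$. Pushing $\sup_{\tau^{(n)}\in\cS_n}$ inside the integral, and then invoking the multiplicativity of the support function from Lemma~\ref{lemma:multiplicativity of support function} --- applicable precisely because $M(t)\geq 0$ --- yields
\begin{equation*}
\sup_{\tau^{(n)}\in\cS_n}\Trm\bigl[\tau^{(n)}\Xi_{\alpha,z}(\rho^{\otimes n},\sigma_0^{\otimes n})\bigr] \;\leq\; \int\Bigl(\sup_{\tau\in\cS_1}\Trm[\tau M(t)]\Bigr)^n\,\beta_{\frac{1-\alpha}{z}}(t)\,\dd t \;\leq\; Q_{\alpha,z}(\rho\|\sigma_0)^n,
\end{equation*}
and since $Q_{\alpha,z}(\rho^{\otimes n}\|\sigma_0^{\otimes n}) = Q_{\alpha,z}(\rho\|\sigma_0)^n$, this is precisely the sufficient bound required by Theorem~\ref{thm:derivative-condition}, giving~\eqref{eq:additivity}.

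The main obstacle will be observing that $M(t)$ is positive semidefinite for every real $t$, which is essential to invoking the support-function multiplicativity in Lemma~\ref{lemma:multiplicativity of support function} in the converse step; this is the exact analogue of the transition leading to~\eqref{critical} in Theorem~\ref{thm:Stein-additivity}, and it follows cleanly from the self-adjointness of $\sigma_0$ and $\chi_{\alpha,z}(\rho,\sigma_0)$. A secondary technical point is the $\|\cdot\|_n\to\|\cdot\|_\infty$ passage in the forward direction, which is routine given the continuity of the integrand.
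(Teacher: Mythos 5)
Your proposal is correct and follows essentially the same route as the paper's proof: reducing the equivalence to an inequality via the choice $\tau=\sigma_0$, testing the optimality condition of Theorem~\ref{thm:derivative-condition} against product directions $\tau^{\otimes n}$ and taking $n$-th roots for the forward direction, and exchanging the supremum with the integral plus the multiplicativity of the support function from Lemma~\ref{lemma:multiplicativity of support function} for the converse. Your explicit observation that $M(t)\geq 0$ (needed to invoke the support-function multiplicativity) is a point the paper leaves implicit, but otherwise the arguments coincide.
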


\begin{proof}
We derive the equivalence between~\eqref{eq:additivity} and
\begin{equation}
\label{condition on t ineq}
\sup_{t \in \mathbb{R}, \tau \in \mathcal{F}_1}\textup{Tr}\Bigg[\tau \sigma_0^{-\frac{1}{2}\big(1-\frac{1-\alpha}{z}+it\big)}\chi_{\alpha,z}(\rho,\sigma_0) \sigma_0^{-\frac{1}{2}\big(1-\frac{1-\alpha}{z}-it\big)}\Bigg] \leq Q_{\alpha,z}(\rho\|\sigma_0)
\end{equation}
and then note that the latter inequality is an equality by choosing $\tau = \sigma_0$.

We first start by showing how~\eqref{eq:additivity} implies~\eqref{condition on t ineq}. Let us assume additivity for all $n \geq 1$. This means that $\sigma_0^{\otimes n}$ is an optimizer for the $n$-letter optimization problem. Thus, writing the necessary and sufficient condition of Theorem~\ref{thm:derivative-condition} for $\rho^{\otimes n}, \sigma_0^{\otimes n}$ and in the direction of $\tau^{\otimes n}\in \cF_n$, we find that
\begin{equation}
\left(\int_{-\infty}^\infty  \Trm\Big[\tau \sigma_0^{-\frac{1}{2}(1-\frac{1-\alpha}{z}+it)}\chi_{\alpha,z}(\rho,\sigma_0) \sigma_0^{-\frac{1}{2}(1-\frac{1-\alpha}{z}-it)}\Big]^n \beta_{\frac{1-\alpha}{z}}(t) \dd t\right)^\frac{1}{n} \leq Q_{\alpha,z}(\rho\|\sigma_0).
\end{equation}
Here, we use $\chi_{\alpha,z}(\rho^{\otimes n},\sigma_0^{\otimes n}) = \chi_{\alpha,z}(\rho,\sigma_0)^{\otimes n}$ and $Q_{\alpha,z}(\rho^{\otimes n},\sigma_0^{\otimes n}) = Q_{\alpha,z}(\rho,\sigma_0)^{ n}$. Taking the limit of $n\to \infty$ and using Theorem~\ref{thm:derivative-condition}, this inequality implies~\eqref{condition on t}.

We now show how~\eqref{condition on t} implies~\eqref{eq:additivity}. We need to verify that, under the assumption~\eqref{condition on t}, the necessary and sufficient conditions of Theorem~\ref{thm:derivative-condition} hold for $\sigma_0^{\otimes n}$. To this end, using condition~\ref{A.4-polar}, the identity 
$\chi_{\alpha,z}\big(\rho^{\otimes n},\sigma_0^{\otimes n}\big)=\chi_{\alpha,z}(\rho,\sigma_0)^{\otimes n}$ and the fact that $\beta_{\frac{1-\alpha}{z}}(t)$ is a density function, we find that
\begin{align}
& \sup_{\tau^{\n}\in \cF_n} \Trm\big[\tau^{\n}\, \Xi_{\alpha,z}\big(\rho^{\otimes n},\sigma_0^{\otimes n}\big)\big] \\
& \qquad\qquad= \sup_{\tau^{\n}\in \cF_n} \int_{-\infty}^\infty  \Trm\Big[\tau^{\n} \Big(\sigma_0^{-\frac{1}{2}(1-\frac{1-\alpha}{z}+it)}\chi_{\alpha,z}(\rho,\sigma_0) \sigma_0^{-\frac{1}{2}(1-\frac{1-\alpha}{z}-it)}\Big)^{\otimes n}\Big] \beta_{\frac{1-\alpha}{z}}(t)\dd t \\
&\qquad\qquad \leq \sup_{t\in \mathbb R, \tau^{\n}\in \cF_n}\Trm\Big[\tau^{\n} \Big(\sigma_0^{-\frac{1}{2}(1-\frac{1-\alpha}{z}+it)}\chi_{\alpha,z}(\rho,\sigma_0) \sigma_0^{-\frac{1}{2}(1-\frac{1-\alpha}{z}-it)}\Big)^{\otimes n}\Big]  \\
&\qquad\qquad =
\sup_{t\in \mathbb R, \tau\in \cF_1} \Trm\Big[\tau \sigma_0^{-\frac{1}{2}(1-\frac{1-\alpha}{z}+it)}\chi_{\alpha,z}(\rho,\sigma_0) \sigma_0^{-\frac{1}{2}(1-\frac{1-\alpha}{z}-it)}\Big]^n \\
&\qquad\qquad \leq Q_{\alpha,z}(\rho\|\sigma_0)^n,
\end{align}
where in the last line we use~\eqref{condition on t}. The desired result then follows from Theorem~\ref{thm:derivative-condition}.

\end{proof}

As for the case $\alpha=1$, we note that in the case where an optimizer $\sigma_0$ for $\rho$ commutes with $\rho$, the above condition for additivity is automatically satisfied. Indeed, in this case, the phases can be simplified, leading to a condition that is automatically satisfied, as it coincides with the single-copy necessary and sufficient condition stated in Corollary~\ref{commuting} in Appendix~\ref{app:necessary and sufficient}. Moreover, given two states, additivity holds when at least one of them commutes with its optimizer.

\begin{lemma}
\label{lem:additivity-commuting}
Let $(\alpha,z)\in\mathcal{D}$ be such that $\lvert 1-\alpha\rvert/z \neq 1$,
and let $\rho_1$ and $\rho_2$ be quantum states on the finite-dimensional
Hilbert spaces $\mathcal{H}_1$ and $\mathcal{H}_2$, respectively.
Let $\mathcal{F}(\mathcal{H}_1)$, $\mathcal{F}(\mathcal{H}_2)$, and
$\mathcal{F}(\mathcal{H}_1\otimes\mathcal{H}_2)$ be sets of positive
semidefinite operators, where $\mathcal{F}(\mathcal{H}_j)$ contains a state
whose support includes that of $\rho_j$ for $j=1,2$, and assume that
$\mathcal{F}(\mathcal{H}_1)^\circ \otimes \mathcal{F}(\mathcal{H}_2)^\circ
\subseteq \mathcal{F}(\mathcal{H}_1\otimes\mathcal{H}_2)^\circ$.
Let $\sigma_{0,1}\in\arg\min_{\sigma_1\in\mathcal{F}(\mathcal{H}_1)}
D_{\alpha,z}(\rho_1\|\sigma_1)$.
If $[\sigma_{0,1},\rho_1]=0$, then we have
\begin{equation}
\mathfrak{D}_{\alpha,z}(\rho_1 \otimes \rho_2) = \mathfrak{D}_{\alpha,z}(\rho_1)+ \mathfrak{D}_{\alpha,z}(\rho_2) \,.
\end{equation}
\end{lemma}

\begin{proof}
According to Corollary~\ref{commuting} in Appendix~\ref{app:necessary and sufficient}, we need to show that
\begin{align}
\sup_{\tau  \in \mathcal{F}(\mathcal{H}_1\otimes \mathcal{H}_2)} \Trm \Big[\tau \Xi_{\alpha,z}(\rho_1 \otimes \rho_2,\sigma_{0,1} \otimes \sigma_{0,2})  \Big] = Q_{\alpha,z}(\rho_1 \otimes \rho_2 \|\sigma_{0,1}\otimes \sigma_{0,2})\,,
\end{align}
where $\sigma_{0,j} \in \argmin_{\sigma_j \in \mathcal{F}(\mathcal{H}_j)}D_{\alpha,z}(\rho_j\|\sigma_{j})$ for $j=1,2$.
Now the point is that $Q_{\alpha,z}(\rho_1 \otimes \rho_2 \|\sigma_{0,1}\otimes \sigma_{0,2}) = Q_{\alpha,z}(\rho_1  \|\sigma_{0,1})Q_{\alpha,z}(\rho_2\|\sigma_{0,2})$ and $\Xi_{\alpha,z}(\rho_1 \otimes \rho_2,\sigma_{0,1} \otimes \sigma_{0,2}) =\Xi_{\alpha,z}(\rho_1,\sigma_{0,1})\otimes \Xi_{\alpha,z}(\rho_2,\sigma_{0,2})$ since $[\sigma_{0,1},\rho_1]=0$ (see~\cite[Proof of Theorem 7]{rubboli2024new}). Therefore, by the assumption $\mathcal{F}(\mathcal{H}_1)^\circ \otimes \mathcal{F}(\mathcal{H}_2)^\circ
\subseteq \mathcal{F}(\mathcal{H}_1\otimes\mathcal{H}_2)^\circ$ and Lemma~\ref{lemma:multiplicativity of support function} for $k=2$, the support function is multiplicative and we have
\begin{align}
&\sup_{\tau \in \mathcal{F}(\mathcal{H}_1\otimes \mathcal{H}_2)} \Trm \Big[\tau \Xi_{\alpha,z}(\rho_1 \otimes \rho_2,\sigma_{0,1} \otimes \sigma_{0,2})  \Big] \\
&\qquad\qquad  = \sup_{\tau_1 \in \mathcal{F}(\mathcal{H}_1)} \Trm \Big[\tau_1 \Xi_{\alpha,z}(\rho_1,\sigma_{0,1})  \Big] \sup_{\tau_2 \in \mathcal{F}(\mathcal{H}_2)} \Trm \Big[\tau_2 \Xi_{\alpha,z}(\rho_2,\sigma_{0,2})  \Big] \\
&\qquad \qquad = Q_{\alpha,z}(\rho_1  \|\sigma_{0,1})Q_{\alpha,z}(\rho_2\|\sigma_{0,2}),
\end{align}
where the last inequality follows from the optimality of $\sigma_{0,1}$ and $\sigma_{0,2}$ and Corollary~\ref{commuting}.
\end{proof}

\begin{remark}\label{rem:az additivity simplified}
Assume that $z>1$, and define $X:=\rho^{\frac{\alpha}{2z}}\sigma_0^{\frac{1-\alpha}{2z}}$. Substituting \eqref{eq:chi def}, the condition in \eqref{condition on t} can be rewritten as
\begin{align}
&Q_{\alpha,z}(\rho\|\sigma_0)\\
& =
\sup_{t \in \mathbb{R}, \tau \in \mathcal{F}_1}\textup{Tr}\bigg[\tau 
\sigma_0^{-\frac{it}{2}}\sigma_0^{\frac{z+\alpha-1}{z}}
\rho^\frac{\alpha}{2z}\Big( \rho^{\frac{\alpha}{2z}}\sigma_0^{\frac{1-\alpha}{z}} \rho^{\frac{\alpha}{2z}}\Big)^{z-1} \rho^\frac{\alpha}{2z}
\sigma_0^{\frac{z+\alpha-1}{z}} \sigma_0^{-\frac{it}{2}}\bigg]\\
&\ds=
\sup_{t \in \mathbb{R}, \tau \in \mathcal{F}_1}\textup{Tr}\bigg[\tau 
\sigma_0^{-\frac{it}{2}}\sigma_0^{\frac{z+\alpha-1}{z}}\sigma_0^{\frac{\alpha-1}{2z}}\underbrace{\sigma_0^{\frac{1-\alpha}{2z}}
\rho^\frac{\alpha}{2z}\Big( \rho^{\frac{\alpha}{2z}}\sigma_0^{\frac{1-\alpha}{z}} \rho^{\frac{\alpha}{2z}}\Big)^{z-1} \rho^\frac{\alpha}{2z}
\sigma_0^{\frac{1-\alpha}{2z}}}_{=X^*(XX^*)^{z-1}X=(X^*X)^z}\sigma_0^{\frac{\alpha-1}{2z}}\sigma_0^{\frac{z+\alpha-1}{z}} \sigma_0^{-\frac{it}{2}}\bigg]\\
&=
\sup_{t \in \mathbb{R}, \tau \in \mathcal{F}_1}\textup{Tr}\bigg[\tau 
\sigma_0^{-\frac{1+it}{2}}\Big(\sigma_0^{\frac{1-\alpha}{2z}} \rho^{\frac{\alpha}{z}}\sigma_0^{\frac{1-\alpha}{2z}} \Big)^{z}
\sigma_0^{-\frac{1-it}{2}}\bigg].
\end{align}
\end{remark}

\medskip
In the following theorem, we prove that in the remaining case of $|(1-\alpha)|/z =1$, the additivity~\eqref{eq:main-additivity-question} always holds. Note that these cases include, among many others, quantities such as the log-fidelity, reversed sandwiched R\'enyi relative entropy, min-relative entropy, and max-relative entropy.\footnote{The latter two quantities are recovered in the limits $\alpha \rightarrow 0$ and $\alpha \rightarrow \infty$, respectively~\cite[Section D]{rubboli2024mixed}.} Interestingly, as shown in Appendix~\ref{app:lower-Stein-exp}, the additivity of these quantities can be used to provide bounds on some error exponents that are tight in the classical case.

\begin{theorem}
\label{Additivity in the line}
Let $(\alpha,z)\in\mathcal{D}$ be such that $\lvert 1-\alpha\rvert/z = 1$,
and let $\rho_1$ and $\rho_2$ be quantum states on the finite-dimensional
Hilbert spaces $\mathcal{H}_1$ and $\mathcal{H}_2$, respectively.
Let $\mathcal{F}(\mathcal{H}_1)$, $\mathcal{F}(\mathcal{H}_2)$, and
$\mathcal{F}(\mathcal{H}_1\otimes\mathcal{H}_2)$ be sets of positive
semidefinite operators, where $\mathcal{F}(\mathcal{H}_j)$ contains a state
whose support includes that of $\rho_j$ for $j=1,2$, and assume that
$\mathcal{F}(\mathcal{H}_1)^\circ \otimes \mathcal{F}(\mathcal{H}_2)^\circ
\subseteq \mathcal{F}(\mathcal{H}_1\otimes\mathcal{H}_2)^\circ$. Then, we have
\begin{equation}
\mathfrak{D}_{\alpha,z}(\rho_1 \otimes \rho_2) = \mathfrak{D}_{\alpha,z}(\rho_1)+ \mathfrak{D}_{\alpha,z}(\rho_2) \,.
\end{equation}
\end{theorem}

\begin{proof}
It suffices to prove the result in the case where $\rho$ is full-rank as the general case then follows by a continuity argument using Proposition~\ref{Continuity} in Appendix~\ref{app:continuity}. 
In this case, let us denote with $\mathcal{H}_1$ and $\mathcal{H}_2$ the dimensions of $\rho_1$ and $\rho_2$, respectively. Moreover, let $\sigma_{0,1} \in \argmin_{\sigma_1 \in \cF(\mathcal{H}_1)} D(\rho_1\|\sigma_1)$ and $\sigma_{0,2} \in \argmin_{\sigma_2 \in \cF(\mathcal{H}_2)} D(\rho_2\|\sigma_2)$. Writing down the necessary and sufficient condition of Theorem~\ref{thm:derivative-condition} which holds in our setting, we need to show that
\begin{align}
\sup_{\tau \in \mathcal{F}(\mathcal{H}_1\otimes \mathcal{H}_2)} \Trm \Big[\tau \Xi_{\alpha,z}(\rho_1 \otimes \rho_2,\sigma_{0,1} \otimes \sigma_{0,2})  \Big] = Q_{\alpha,z}(\rho_1 \otimes \rho_2 \|\sigma_{0,1}\otimes \sigma_{0,2}).
\end{align}
Now the point is that $Q_{\alpha,z}(\rho_1 \otimes \rho_2 \|\sigma_{0,1}\otimes \sigma_{0,2}) = Q_{\alpha,z}(\rho_1  \|\sigma_{0,1})Q_{\alpha,z}(\rho_2\|\sigma_{0,2})$ and since $|(1-\alpha)/z|=1$ we have $\Xi_{\alpha,z}(\rho_1 \otimes \rho_2,\sigma_{0,1} \otimes \sigma_{0,2}) = \Xi_{\alpha,z}(\rho_1,\sigma_{0,1})\otimes \Xi_{\alpha,z}(\rho_2,\sigma_{0,2})$. Therefore, by the assumption $\mathcal{F}(\mathcal{H}_1)^\circ \otimes \mathcal{F}(\mathcal{H}_2)^\circ
\subseteq \mathcal{F}(\mathcal{H}_1\otimes\mathcal{H}_2)^\circ$ and Lemma~\ref{lemma:multiplicativity of support function} for $k=2$, the support function is multiplicative and we have
\begin{align}
&\sup_{\tau \in \mathcal{F}(\mathcal{H}_1\otimes \mathcal{H}_2)} \Trm \Big[\tau \Xi_{\alpha,z}(\rho_1 \otimes \rho_2,\sigma_{0,1} \otimes \sigma_{0,2})  \Big] \\
&\qquad\qquad  = \sup_{\tau_1 \in \mathcal{F}(\mathcal{H}_1)} \Trm \Big[\tau_1 \Xi_{\alpha,z}(\rho_1,\sigma_{0,1})  \Big] \sup_{\tau_2 \in \mathcal{F}(\mathcal{H}_2)} \Trm \Big[\tau_2 \Xi_{\alpha,z}(\rho_2,\sigma_{0,2})  \Big] \\
&\qquad \qquad = Q_{\alpha,z}(\rho_1  \|\sigma_{0,1})Q_{\alpha,z}(\rho_2\|\sigma_{0,2}).
\end{align}
where the last inequality follows from the optimality of $\sigma_{0,1}$ and $\sigma_{0,2}$.   
\end{proof}

\subsubsection{The conditional entropy}

As an example to illustrate an application of the above results, we show that the conditional entropy is weakly additive.

Fix some bipartite state $\rho_{AB}$ as the null hypothesis and let the alternative hypothesis be determined by $\cF_n=\big\{\pi_A^{\otimes n}\otimes \sigma_{B}^{\n}:\, \sigma_{B}^{\n} \; \text{is a state}\big\}$, where $\pi_A = I_A/d_A$ is the fully mixed state.  It is not hard to verify that $\{\cF_n\}_{n\in \N}$ satisfies Assumption~\ref{assumption:polar}.

The $\alpha$-$z$-R\'enyi conditional entropy is defined by 
\begin{align}\label{eq:def-conditional-ent}
   H^\uparrow_{\alpha,z}(A|B)_{\rho_{AB}}= \inf_{\sigma_B}D(\rho_{AB}\|\pi_A \otimes \sigma_B)-\log{d_A}  \,.
\end{align}
We claim that this quantity is additive in the sense that 
\begin{align}
    H^\uparrow_{\alpha,z}(A^n|B^n)_{\rho_{AB}^{\otimes n}}
    = nH^\uparrow_{\alpha,z}(A|B)_{\rho_{AB}} \,.
\end{align} 
To prove this, let $\sigma_B=\sigma_0$ be an optimizer in~\eqref{eq:def-conditional-ent}, which is known to satisfy~\cite{rubboli2024quantum}
\begin{equation}\label{eq:conditional-entropy-optimizer}
    \sigma_B = \frac{1}{\gamma} \Tr_A\Big[(\pi_A \otimes \sigma_B)^\frac{1-\alpha}{2z}\rho_{AB}^{\frac{\alpha}{z}}(\pi_A \otimes \sigma_B)^\frac{1-\alpha}{2z}\Big]^{z} \,,
\end{equation}
where $\gamma = Q_{\alpha,z}(\rho_{AB}\|\pi_A \otimes \sigma_B)$ is a normalization factor. Next, we use the argument of Remark~\ref{rem:az additivity simplified} to write
\begin{equation}
    \chi_{\alpha,z}(\rho_{AB},\pi_A \otimes \sigma_B) =  (\pi_A \otimes \sigma_B)^{-\frac{1-\alpha}{2z}} \Big((\pi_A \otimes \sigma_B)^\frac{1-\alpha}{2z}\rho_{AB}^{\frac{\alpha}{z}}(\pi_A \otimes \sigma_B)^\frac{1-\alpha}{2z}\Big)^{z} (\pi_A \otimes \sigma_B)^{-\frac{1-\alpha}{2z}}. 
\end{equation}
Then, to verify condition~\eqref{condition on t}, we compute
\begin{align}
&\Trm\Big[ \chi_{\alpha,z}(\rho_{AB},\pi_A \otimes \sigma_B) (\pi_A\otimes \sigma_B)^{-\frac 12 \big(1-\frac{1-\alpha}{z}-it\big)} (\pi_A\otimes \tau_B) (\pi_A\otimes \sigma_B)^{-\frac 12 \big(1-\frac{1-\alpha}{z}+it\big)} \Big]\\
&\qquad\qquad \qquad  = \Trm\Big[ \Big((\pi_A \otimes \sigma_B)^\frac{1-\alpha}{2z}\rho_{AB}^{\frac{\alpha}{z}}(\pi_A \otimes \sigma_B)^\frac{1-\alpha}{2z}\Big)^{z} \,  \Big(I_A\otimes \sigma_B^{-\frac 12 (1-it)}  \tau_B  \sigma_B^{-\frac 12 (1+it)} \Big)\Big] \label{eq:comment-127}\\
&\qquad\qquad \qquad  = \Trm\Big[ \Tr_A\Big((\pi_A \otimes \sigma_B)^\frac{1-\alpha}{2z}\rho_{AB}^{\frac{\alpha}{z}}(\pi_A \otimes \sigma_B)^\frac{1-\alpha}{2z}\Big)^{z} \,  \Big( \sigma_B^{-\frac 12 (1-it)}  \tau_B  \sigma_B^{-\frac 12 (1+it)} \Big)\Big] \label{Fixed-point condition-0}\\
&\qquad\qquad \qquad =  \gamma\Trm\Big[ \sigma_{B} \sigma_B^{-\frac 12 (1-it)}  \tau_B  \sigma_B^{-\frac 12 (1+it)} \Big]\label{Fixed-point condition} \\ 
&\qquad\qquad\qquad  = Q_{\alpha,z}(\rho_{AB}\|\pi_A \otimes \sigma_B)\Trm[   \tau_B  ] \label{eq:gamma-definition}\\
&\qquad \qquad\qquad =Q_{\alpha,z}(\rho_{AB}\|\pi_A \otimes \sigma_B).
\end{align} 
where in~\eqref{Fixed-point condition} we use~\eqref{eq:conditional-entropy-optimizer} and in~\eqref{eq:gamma-definition} we use $\gamma = Q_{\alpha,z}(\rho_{AB}\|\pi_A \otimes \sigma_B)$. This gives the desired additivity.

Similarly, one could also consider the $\alpha$-$z$-R\'enyi mutual information corresponding to the family of sets $\cF_n=\{\rho_1^{\otimes n}\otimes \sigma_{B}^{\n}: \sigma_{B}^{\n} \; \text{is a state}\}$ and proceed analogously to prove additivity once deriving a fixed-point condition as in~\eqref{eq:conditional-entropy-optimizer}.

\section{Chernoff exponent}\label{sec:Chernoff}
In this section, we first characterize the Chernoff exponent for the generalized quantum hypothesis testing problem under Assumption~\ref{assumption:polar} based on a regularized quantity, and then explore when the regularization can be removed. 

In the setting of symmetric hypothesis testing, we are interested in the summation of errors $\alpha_n(\rho|T_n) + \beta_n(\cF_n|T_n)$, and our goal is to compute the associated error exponent:
\begin{align}\label{eq:def-Chernoff-exp}
\rmc(\rho\| \{\cF_n\}_{n\in \N}) := \liminf_{n\to +\infty} -\frac{1}{n} \log\inf \big\{\alpha_n(\rho|T_n) + \beta_n(\cF_n|T_n):\, 0\leq T_n\leq I   \big\}.
\end{align}
This error exponent is called the \emph{Chernoff exponent}.  
When the sets in the alternative hypothesis consist of a single state, i.e., $\mathcal{F}_n=\{\sigma^{\otimes n}\}$, the Chernoff exponent is given in terms of Petz R\'enyi relative entropy by~\cite{NussbaumSzkola2009chernoff, Audenaert+07discriminating}
\begin{align}\label{eq:Chernoff}
\rmc(\rho\| \sigma) = \sup_{\alpha\in (0,1)} (1-\alpha)D_\alpha(\rho\| \sigma).
\end{align}

The regularization problem for this error exponent, as well as the one considered later, requires further analysis due to the additional optimization over the parameter \( \alpha \). To address this, we first introduce some technical lemmas.

\subsection{Technical lemmas}

We begin with a technical lemma that will be useful for the achievability proof.
\begin{lemma}\label{lem:Audenaert-ineq}
Let $A$ be a positive operator and $\cF$ be a closed convex set of positive operators on a finite-dimensional Hilbert space. Then, for any $\alpha\in (0,1)$, there exists a test $0\leq T\leq I$ such that
\begin{align}\label{eq:Audenaert-ineq-lem}
 \textup{Tr}[A (I-T)] + \textup{Tr}[B T]\leq \max_{B'\in \cF} \,  \textup{Tr}[A^\alpha B'^{1-\alpha}], \qquad \forall B\in \cF.
 \end{align}
\end{lemma}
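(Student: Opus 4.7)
The plan is to cast this as a minimax problem and combine Sion's theorem with the standard (single-$B$) Audenaert inequality. Set $C := \sup_{B' \in \cS} \Trm[A^\alpha B'^{1-\alpha}]$; the case $C = +\infty$ is trivial, so we may assume $C$ is finite. Define the bi-affine function $f(T,B) := \Trm[A(I-T)] + \Trm[BT]$. The claim reduces to exhibiting a test $T^{*}$ with $f(T^{*}, B) \leq C$ for all $B \in \cS$, which will follow once we show
\begin{equation*}
V := \inf_{0 \leq T \leq I}\,\sup_{B \in \cS} f(T, B) \;\leq\; C,
\end{equation*}
together with attainment of the outer infimum.

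To bound $V$, I would apply Sion's minimax theorem: the set of tests $\{T : 0 \leq T \leq I\}$ is compact and convex in the finite-dimensional space of Hermitian operators, $\cS$ is convex, and $f$ is affine (hence quasi-concave/quasi-convex and continuous) separately in each argument. Swapping the order yields
\begin{equation*}
V \;=\; \sup_{B \in \cS}\,\inf_{0 \leq T \leq I} f(T, B).
\end{equation*}
For each fixed $B \in \cS$, the inner minimum is the standard symmetric discrimination of the pair $(A, B)$, and the original (simple-hypothesis) Audenaert inequality~\cite{Audenaert+07discriminating} gives $\inf_{T} f(T, B) \leq \Trm[A^\alpha B^{1-\alpha}]$. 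Taking the supremum over $B$ delivers $V \leq C$.

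Attainment of the outer infimum is automatic: the map $T \mapsto \sup_{B \in \cS} f(T, B)$ is a supremum of continuous linear functions and is therefore lower semicontinuous, and every lower semicontinuous function attains its infimum on the compact set $\{T : 0 \leq T \leq I\}$. Any such minimizer $T^{*}$ is the test promised by the lemma, since by construction $f(T^{*}, B) \leq V \leq C$ for every $B \in \cS$.

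The main subtlety will be verifying the hypotheses of Sion's minimax theorem cleanly when $\cS$ is only closed and convex, possibly unbounded: for such $\cS$ the function $\sup_{B \in \cS} f(T, B)$ could a priori equal $+\infty$ at some $T$. One can either appeal to a version of Sion accommodating extended real-valued, lower semicontinuous objectives, or first restrict attention to the compact truncation $\cS_M := \cS \cap \{B : \Trm[B] \leq M\}$, apply the argument above to obtain $T^{*}_M$, and use the finiteness of $C$ together with a standard diagonal/compactness extraction to pass to the limit $M \to \infty$. Either route preserves the minimax identity and yields the stated test.
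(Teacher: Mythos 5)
Your proposal is correct and follows essentially the same route as the paper: swap the infimum over tests and the supremum over $\cS$ via a minimax theorem (the paper uses its Lemma~\ref{lem:minimax Farkas}, which accommodates a merely convex $\cS$ and an extended-real-valued objective, so your worry about unboundedness is already handled there), then apply the single-operator Audenaert inequality to the inner minimization. Your additional remarks on attainment of the outer infimum via lower semicontinuity are a fine, if implicit in the paper, supplement.
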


\begin{proof}
This theorem is a generalization of the celebrated result of~\cite{Audenaert+07discriminating}, establishing the same inequality in the special case where $\cF$ consists of a single operator. Besides the original proof in~\cite{Audenaert+07discriminating}, there is an elegant proof of this inequality in~\cite{Audenaert2014comparisons, Jakvsic+2012quantum}. See also~\cite{BeigiHircheTomamichel2025divergence} for a more recent proof. We give yet another proof of this inequality in Appendix~\ref{app:proof-Audenaert-ineq}. Having this special case, to prove~\eqref{eq:Audenaert-ineq-lem} we write
\begin{align}
\min_{0\leq T\leq I}\,\, \max_{B\in \cF} \Trm[A (I-T)] + \Trm[B T] 
& = \max_{B\in \cF} \,\,\min_{0\leq T\leq I}  \Trm[A (I-T)] + \Trm[B T]\\
& \leq \max_{B\in \cF} \,  \Trm[A^\alpha B^{1-\alpha}]. 
\end{align} 
Here, the equality follows from the minimax theorem, and the inequality follows from the special case of the theorem discussed above. 

\end{proof}

The following lemma is an extension of standard minimax theorems and will be used in the sequel.

\begin{lemma}\label{lem:minimax Farkas} \cite[Theorem 5.2]{farkas2006potential}
Let $X$ be a compact convex set in a topological vector space, and $Y$ be a convex subset of another vector space. Let $f : X \times Y \to \mathbb{R} \cup \{+\infty\}$ be such that
\medskip
\begin{enumerate}[itemsep=.8pt, topsep=1pt]
    \item  $f(x,\cdot)$ is concave on $Y$ for each $x \in X$, and
    \label{ciao}
    \item $f(\cdot,y)$ is convex and lower semi-continuous on $X$ for each $y \in Y$.
\end{enumerate}
\medskip
Then
\begin{equation}
\label{minimax Farkas}
	\inf_{x \in X} \sup_{y \in Y} f(x,y) = \sup_{y \in Y} \inf_{x \in X} f(x,y),
\end{equation}
and the infima in~\eqref{minimax Farkas} can be replaced by minima.
\end{lemma}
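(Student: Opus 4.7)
The plan is to prove the non-trivial direction $\inf_X \sup_Y f \le \sup_Y \inf_X f$ via a reduction to a finite-dimensional minimax theorem in the spirit of Ky Fan. Weak duality $\sup_Y \inf_X f \le \inf_X \sup_Y f$ is immediate. The claim that ``infima can be replaced by minima'' follows since the inner optimization in each case is over the compact set $X$ of a lower semicontinuous function: for the left-hand side of \eqref{minimax Farkas}, the map $F(x):=\sup_{y\in Y} f(x,y)$ is lsc as a pointwise supremum of lsc functions, while for the right-hand side, each $f(\cdot,y)$ is lsc by hypothesis (ii); hence both attain minima on the compact $X$.

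For the equality I would argue by contradiction. Assume $v:=\sup_Y \inf_X f(x,y) < \inf_X \sup_Y f(x,y) =: V$ and pick any $c \in (v,V)$. Hypothesis (ii) implies that $U_y := \{x \in X : f(x,y) > c\}$ is open in $X$ for every $y\in Y$. Since $\sup_{y\in Y} f(x,y) \ge V > c$ for all $x\in X$, the family $\{U_y\}_{y\in Y}$ covers $X$. Compactness extracts a finite subcover $U_{y_1},\ldots,U_{y_n}$, equivalently
\begin{equation}
\max_{1\le i\le n} f(x,y_i) > c \qquad \text{for all } x\in X.
\end{equation}
Because $\max_i f(\cdot,y_i)$ is itself lsc on the compact set $X$, the infimum $\min_{x\in X}\max_i f(x,y_i)$ is attained and hence strictly exceeds $c$.

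The final step is a finite-dimensional minimax reduction. I would consider the auxiliary map $\tilde f : X\times \Delta_n \to \mathbb{R}\cup\{+\infty\}$ on the probability simplex $\Delta_n$ defined by $\tilde f(x,\lambda) = \sum_{i=1}^n \lambda_i f(x,y_i)$, which is convex and lsc in $x$ and linear (hence continuous and concave) in $\lambda$, on the compact convex sets $X$ and $\Delta_n$. A standard minimax theorem such as Sion's then yields
\begin{equation}
\min_{x\in X}\max_{\lambda \in \Delta_n} \tilde f(x,\lambda) = \max_{\lambda \in \Delta_n}\min_{x\in X}\tilde f(x,\lambda).
\end{equation}
The left-hand side equals $\min_{x\in X}\max_i f(x,y_i) > c$ by the previous paragraph. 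For the right-hand side, concavity of $f(x,\cdot)$ together with convexity of $Y$ gives $\tilde f(x,\lambda) \le f(x, y_\lambda)$ with $y_\lambda := \sum_i \lambda_i y_i \in Y$, whence $\min_x \tilde f(x,\lambda) \le \min_x f(x,y_\lambda) \le \sup_{y\in Y}\inf_x f(x,y) = v < c$, which contradicts the displayed minimax identity. The main obstacle is to identify a finite-dimensional minimax theorem whose hypotheses match ours (convex lsc on one side, linear on the other, both sets compact and convex); Sion's theorem meets these requirements exactly, so no additional regularity on $Y$ beyond convexity is ever needed, explaining why the lemma makes no topological assumption on $Y$.
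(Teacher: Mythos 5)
The paper offers no proof of this lemma at all—it is quoted directly from \cite[Theorem 5.2]{farkas2006potential}—so there is no in-paper argument to compare against; your proof is an independent reconstruction. The route you take is the classical Kneser--Ky Fan one (weak duality; an open cover $\{U_y\}$ of the compact $X$ obtained from lower semicontinuity; a finite subcover; a finite-dimensional minimax step on $X\times\Delta_n$; and concavity of $f(x,\cdot)$ to pass from the convex combination $\sum_i\lambda_i f(x,y_i)$ back to a single point $y_\lambda\in Y$). All of those steps are sound, as is your observation that attainment of both infima follows from lower semicontinuity of $f(\cdot,y)$ and of $\sup_{y}f(\cdot,y)$ on the compact set $X$.

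The one genuine gap is the appeal to Sion's theorem for $\tilde f(x,\lambda)=\sum_i\lambda_i f(x,y_i)$. Because $f$ is allowed to take the value $+\infty$, your claim that $\tilde f(x,\cdot)$ is ``linear (hence continuous and concave)'' on $\Delta_n$ fails: if $f(x,y_1)=+\infty$ and $f(x,y_2)=0$, then $\tilde f\big(x,(t,1-t)\big)=+\infty$ for all $t>0$ but equals $0$ at $t=0$, so $\tilde f(x,\cdot)$ is not upper semicontinuous (and not real-valued), whereas Sion's theorem requires real values and upper semicontinuity in the variable being maximized. The conclusion of that step is nevertheless true—it is precisely the $n$-point special case of the lemma you are proving—but it needs its own justification rather than a citation of Sion. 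Standard repairs: a separating-hyperplane argument applied to the convex set $\big\{z\in\mathbb{R}^n:\ \exists\, x\in X,\ z_i\ge f(x,y_i)\ \forall i\big\}$, treating separately the degenerate case where $\bigcap_i\{x\in X: f(x,y_i)<+\infty\}$ is empty; or Kneser's original induction on $n$. With that single step patched, your argument is complete and proves exactly the cited statement, including the absence of any topological hypothesis on $Y$.
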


In the following lemma, we collect some properties of the reliability function of the Chernoff exponent.

\begin{lemma}\label{lem:func-chernoff}
Let $\rho$ be a quantum state and $\cF$ be a closed convex set. Define $\phi_{\rho}:[0,1]
\times \cF\to  \mathbb{R} \cup \{+\infty\}$ by
\begin{align}
\phi_{\rho}(\alpha, \sigma) = (1-\alpha) D_\alpha(\rho\| \sigma)= -\log \textup{Tr}\big[\rho^\alpha \sigma^{1-\alpha}\big],
\end{align}
if $\rho, \sigma$ are not orthogonal, and $\phi_{\rho}(\alpha, \sigma)=+\infty$ otherwise.
Then the followings hold:
\medskip
\begin{enumerate}[itemsep=1pt, topsep=1pt]
\item $\alpha \mapsto \phi_{\rho}(\alpha, \sigma)$ is concave and continuous on $[0,1]$, and  $\sigma \mapsto \phi_{\rho}(\alpha, \sigma)$ is convex and continuous for all $\alpha \in [0, 1)$. 
\label{item: chernoff (i)}
\item $\alpha\mapsto \inf_{\sigma \in \cF}\phi_{\rho}(\alpha, \sigma)$ is concave and continuous on $[0,1]$, and $\sigma\mapsto \sup_{\alpha \in [0,1)} \phi_{\rho}(\alpha, \sigma)$ is convex and lower semi-continuous on $\cF$. 
\label{item: chernoff (ii)}
\item $\inf_{\sigma\in \cF} \sup_{\alpha\in [0,1)} \phi_{\rho}(\alpha, \sigma) = \sup_{\alpha\in [0,1)} \inf_{\sigma\in \cF}  \phi_{\rho}(\alpha, \sigma)$ and the infima can be replaced by minima.
\label{item: chernoff (iii)}
\item The supremum over $\alpha \in [0,1]$ of the function $\inf_{\sigma\in \cF}\phi_{\rho}(\alpha, \sigma)$ and the infimum over $\cF$ of $ \sup_{\alpha\in [0,1)}\phi_{\rho}(\alpha, \sigma)$ are achieved. 
\label{item: chernoff (iv)}
\item Suppose that $\supp(\rho)\subseteq \supp(\sigma_0)$. Then, either $\argmax_{\alpha\in [0,1]}\phi_{\rho}(\alpha, \sigma_0)$ consists of a single point (i.e., $\alpha_0$ is unique) or $\rho=\sigma_0$.   
\label{item: chernoff (v)}
\end{enumerate}
\end{lemma}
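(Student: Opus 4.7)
The plan is to prove the five items in order, since each builds on those before.

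For (i), the key analytic fact is that $\alpha\mapsto \Trm[\rho^\alpha\sigma^{1-\alpha}]$, expanded in the eigenbases of $\rho$ and $\sigma$, is a non-negative linear combination of pure exponentials $\lambda_i^\alpha\mu_j^{1-\alpha}=e^{\alpha\log\lambda_i+(1-\alpha)\log\mu_j}$, hence log-convex in $\alpha$; taking $-\log$ yields concavity of $\phi_\rho(\cdot,\sigma)$. Continuity in $\alpha\in[0,1]$ follows from joint continuity of $\alpha\mapsto \rho^\alpha$ and $\alpha\mapsto \sigma^{1-\alpha}$ (with the generalized-inverse convention at the endpoints), with the understanding that the value may be $+\infty$ at $\alpha=1$ when $\supp(\rho)\not\subseteq\supp(\sigma)$. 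For fixed $\alpha\in[0,1)$, operator concavity of $x\mapsto x^{1-\alpha}$ makes $\sigma\mapsto \Trm[\rho^\alpha\sigma^{1-\alpha}]$ concave, so $\phi_\rho(\alpha,\cdot)$ is convex; continuity in $\sigma$ again follows from continuity of $\sigma\mapsto\sigma^{1-\alpha}$.

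Items (ii)--(iv) then follow by soft analysis. The infimum over $\sigma\in\cS$ of concave functions is concave, while the supremum over $\alpha$ of convex lower semi-continuous functions is convex and l.s.c.; continuity of $g(\alpha):=\inf_\sigma\phi_\rho(\alpha,\sigma)$ on the open interior $(0,1)$ is automatic from concavity. The minimax identity in (iii) is a direct application of Lemma~\ref{lem:minimax Farkas} with $X=\cS$ compact convex, $Y=[0,1)$ convex, and $f(\sigma,\alpha)=\phi_\rho(\alpha,\sigma)$; the required concavity/convexity and lower semi-continuity hypotheses are precisely those established in (i), and the infimum over the compact $\cS$ of the l.s.c.\ function $\phi_\rho(\alpha,\cdot)$ is attained, replacing ``inf'' by ``min''. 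Item (iv) is then immediate: the concave continuous function $g$ on the compact interval $[0,1]$ attains its supremum, and the l.s.c.\ function $\sigma\mapsto \sup_\alpha\phi_\rho(\alpha,\sigma)$ attains its infimum on the compact $\cS$.

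For (v), I would exploit that $\alpha\mapsto \Trm[\rho^\alpha\sigma_0^{1-\alpha}]$ is real-analytic on $(0,1)$, being a non-negative combination of analytic functions $\lambda_i^\alpha\mu_j^{1-\alpha}$ with strictly positive $\lambda_i,\mu_j$ (the assumption $\supp(\rho)\subseteq\supp(\sigma_0)$ rules out orthogonality and keeps the trace strictly positive on $(0,1)$). Thus $\phi_\rho(\cdot,\sigma_0)$ is real-analytic and concave on $(0,1)$. Its argmax in $[0,1]$ is a convex subset (the level set of a concave function); if it has positive length, then by analyticity $\phi_\rho(\cdot,\sigma_0)$ is constant on all of $(0,1)$, and continuity at $\alpha=1$ with $\phi_\rho(1,\sigma_0)=-\log\Trm[\rho\,\sigma_0^0]=0$ (using the support hypothesis) forces the constant to be zero. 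This gives $D_\alpha(\rho\|\sigma_0)=0$ for every $\alpha\in(0,1)$, and hence $\rho=\sigma_0$ by the definiteness of the Petz R\'enyi divergence.

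The main obstacle I anticipate is verifying continuity of the concave function $g(\alpha)$ at the endpoints $\alpha\in\{0,1\}$, since a concave function on a closed interval can jump downward at a boundary point even when finite there. I would handle this by combining the chord inequality (which yields $\liminf_{\alpha\to 1^-}g(\alpha)\geq g(1)$ via $g(1-t)\geq tg(0)+(1-t)g(1)$) with a pointwise approximation of the infimum: pick $\sigma_\varepsilon\in\cS$ with $\phi_\rho(1,\sigma_\varepsilon)<g(1)+\varepsilon$ and use continuity of $\alpha\mapsto\phi_\rho(\alpha,\sigma_\varepsilon)$ near $\alpha=1$ to bound $g(\alpha)\leq\phi_\rho(\alpha,\sigma_\varepsilon)<g(1)+2\varepsilon$ for $\alpha$ sufficiently close to $1$. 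This yields $\limsup_{\alpha\to 1^-}g(\alpha)\leq g(1)$, and the argument at $\alpha=0$ is analogous.
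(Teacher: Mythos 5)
Your proposal is correct. Items (i)--(iv) follow essentially the same route as the paper: standard concavity/convexity and semicontinuity facts, the minimax theorem of Lemma~\ref{lem:minimax Farkas} with $X=\cS$ and $Y=[0,1)$, and compactness for attainment. You are in fact more careful than the paper about continuity of $g(\alpha)=\inf_\sigma\phi_\rho(\alpha,\sigma)$ at the endpoints: the paper invokes ``concave and upper semi-continuous on a closed interval implies continuous,'' while your chord-plus-near-optimizer argument proves the same upper semicontinuity by hand; both are valid. (One small imprecision: with the generalized-power convention, $\phi_\rho(1,\sigma)=-\log\Trm[\rho\,\sigma^0]$ is \emph{finite} whenever $\rho\not\perp\sigma$, even if $\supp(\rho)\not\subseteq\supp(\sigma)$; this does not affect your argument.)

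Where you genuinely diverge is item (v). The paper differentiates: constancy of $\phi_\rho(\cdot,\sigma_0)$ on the interval between two maximizers forces $\frac{\dd}{\dd\alpha}\Trm[\rho^\alpha\sigma_0^{1-\alpha}]=\Trm[\rho^\alpha(\log\rho-\log\sigma_0)\sigma_0^{1-\alpha}]=0$ there, from which it extracts $\rho(\log\rho-\log\sigma_0)=0$ and hence $D(\rho\|\sigma_0)=0$. You instead observe that $\alpha\mapsto\Trm[\rho^\alpha\sigma_0^{1-\alpha}]$ is a finite positive combination of exponentials $e^{\alpha\log\lambda_i+(1-\alpha)\log\mu_j}$, hence real-analytic and strictly positive on $(0,1)$; constancy on a nondegenerate subinterval (forced by concavity of the level set) propagates to all of $(0,1)$ by the identity theorem, and continuity at $\alpha=1$ together with $\supp(\rho)\subseteq\supp(\sigma_0)$ pins the constant to $0$, whence $D_\alpha(\rho\|\sigma_0)=0$ and $\rho=\sigma_0$ by definiteness of the Petz divergence. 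Your route avoids the paper's somewhat terse step of passing from the vanishing of a trace on an interval to the vanishing of the operator $\rho^{\alpha}(\log\rho-\log\sigma_0)$, at the cost of invoking analyticity; both are sound.
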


\begin{proof}
We prove each part separately.

\medskip
\noindent
 \eqref{item: chernoff (i)} The continuities are clear from the definition of $\phi_{\rho}(\alpha, \sigma)$. The concavity and convexity properties are well known (see, e.g.,~\cite{Mosonyi+22errorexponent,Tomamichel}). 

\medskip
 \noindent
 \eqref{item: chernoff (ii)} This follows from the fact that the infimum of a concave function is concave and upper semi-continuous. Since $[0,1]$ is a closed interval, concavity and upper semi-continuity imply continuity on the whole interval.
Moreover, the pointwise supremum of a continuous function is lower semi-continuous, and the pointwise supremum of a convex function is convex.

\medskip
\noindent
 \eqref{item: chernoff (iii)} This is a consequence of the minimax theorem in Lemma~\ref{lem:minimax Farkas}. 

\medskip
\noindent
 \eqref{item: chernoff (iv)} By~\eqref{item: chernoff (ii)} the function $\alpha \mapsto \inf_{\sigma \in \cF}\phi_{\rho}(\alpha, \sigma)$ is continuous on the compact set $[0,1]$, and the supremum of a continuous function on a compact set is always achieved in the set. Moreover, the infimum of the function  $\sigma \mapsto \sup_{\alpha \in [0,1)}\phi_{\rho}(\alpha, \sigma)$ is achieved since the infimum of a lower semicontinuous function is always achieved in a compact set.

\medskip
\noindent
 \eqref{item: chernoff (v)} Suppose that $\alpha_1\neq \alpha_0$ is such that $\phi_{\rho}(\alpha_1, \sigma_0)=\phi_{\rho}(\alpha_0, \sigma_0)$. Then, the concavity of $\alpha\mapsto \phi_{\rho}(\alpha, \sigma_0)$ implies that $\phi_{\rho}(\alpha, \sigma_0)$ is a constant for any $\alpha$ between $\alpha_0, \alpha_1$. Hence, for any $\alpha$ between $\alpha_0,\alpha_1$ we must have
\begin{align}
0
  = \frac{\dd^2}{\dd \alpha^2}  \Trm\big[\rho^{\alpha}\sigma_0^{1-\alpha}\big]  = \Trm\big[(\log \rho - \log \sigma_0) \rho^{\alpha}(\log \rho - \log \sigma_0) \sigma_0^{1-\alpha}\big]  .
\end{align}
where the logarithms are taken on the support of the states. Letting $X=\rho^{\alpha/2}(\log \rho-\log \sigma_0)$, this is equivalent to $\Tr[X^\dagger X \sigma_0^{1-\alpha}]=0$. Since $\supp(\rho)\subseteq \supp(\sigma_0)$, this implies $X^\dagger X=0$ and $X=\rho^{\alpha/2}(\log \rho-\log \sigma_0)=0$, which in turn gives $\rho(\log \rho-\log \sigma_0)=0$. As a result, $D(\rho\| \sigma_0)=0$ and then $\rho=\sigma_0$.

\end{proof}

In the next section, we derive an additivity result for the Chernoff exponent, based on a single-copy criterion that can be verified at a saddle point. To this end, we provide a convenient characterization of such saddle points as solutions to specific outer optimization problems. We recall that a pair $(\alpha_0,\sigma_0)$ is called a saddle point of $\phi_{\rho}(\cdot, \cdot)$ if 
\begin{equation}
\label{saddle point Chernoff}
\sup_{\alpha \in [0,1]} \phi_{\rho}(\alpha,\sigma_0) =  \phi_{\rho}(\alpha_0,\sigma_0) = \inf_{\sigma\in \cF} \phi_{\rho}(\alpha_0,\sigma) \,,
\end{equation}
i.e., if “$\sigma_0$ minimizes against $\alpha_0$” and “$\alpha_0$ maximizes against $\sigma_0$”.
The next lemma, based on arguments similar to those presented in~\cite[Proposition 3.4.1]{bertsekas2009convex}, provides a procedure to construct such saddle points. 

\begin{lemma}
\label{lem: saddle point chernoff}
Let $\rho$ be a quantum state. A pair $(\alpha_0,\sigma_0)$ is a saddle point of $\phi_{\rho}(\cdot, \cdot)$ if and only if $\alpha_0$ and $\sigma_0$ are optimal solutions of the optimization problems
\begin{align}
\label{outer alpha0}
   & \alpha_0\in \argmax_{\alpha\in [0,1]} \inf_{\sigma\in \cF}\phi_{\rho}(\alpha, \sigma) ,\\
   \label{outer sigma0}
  & \sigma_0\in \argmin_{\sigma\in \cF} \sup_{\alpha\in [0,1)}\phi_{\rho}(\alpha, \sigma)\,.
\end{align} 
\end{lemma}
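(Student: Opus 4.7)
The plan is to reduce this to a standard equivalence from minimax theory: a saddle point exists iff the outer max-min and min-max problems both have optimal solutions that share a common value, namely the saddle value. The key ingredients are already provided by Lemma~\ref{lem:func-chernoff}: part (i) gives continuity of $\phi_\rho(\alpha,\sigma)$ in $\alpha$ on $[0,1]$, part (iii) provides the minimax equality, and part (iv) guarantees the outer optima are attained.

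For the forward direction (saddle point implies outer optimality), I would start from the saddle condition \eqref{saddle point Chernoff}, call the common value $v=\phi_\rho(\alpha_0,\sigma_0)$, and argue as follows. For any $\alpha\in[0,1]$, we have
\begin{equation}
\inf_{\sigma\in\cS}\phi_\rho(\alpha,\sigma)\leq \phi_\rho(\alpha,\sigma_0)\leq \sup_{\alpha'\in [0,1]}\phi_\rho(\alpha',\sigma_0)=v=\inf_{\sigma\in\cS}\phi_\rho(\alpha_0,\sigma),
\end{equation}
which shows $\alpha_0$ is a maximizer of $\alpha\mapsto\inf_\sigma\phi_\rho(\alpha,\sigma)$ on $[0,1]$, i.e.\ \eqref{outer alpha0}. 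A symmetric chain of inequalities (with the roles of $\alpha$ and $\sigma$ swapped) gives \eqref{outer sigma0}. One small nuisance is that the inner supremum in \eqref{outer sigma0} is taken over $[0,1)$ while the saddle condition uses $[0,1]$; by part (i) the map $\alpha\mapsto\phi_\rho(\alpha,\sigma)$ is continuous on the closed interval, so $\sup_{[0,1)}=\sup_{[0,1]}$ and there is nothing to check.

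For the converse direction, assume $\alpha_0$ and $\sigma_0$ solve \eqref{outer alpha0} and \eqref{outer sigma0} respectively. By the minimax equality of part (iii) (again extended from $[0,1)$ to $[0,1]$ by continuity), the two outer problems share a common optimal value $v$, so $\inf_\sigma\phi_\rho(\alpha_0,\sigma)=v=\sup_\alpha\phi_\rho(\alpha,\sigma_0)$. Sandwiching,
\begin{equation}
v=\inf_{\sigma\in\cS}\phi_\rho(\alpha_0,\sigma)\leq \phi_\rho(\alpha_0,\sigma_0)\leq \sup_{\alpha\in [0,1]}\phi_\rho(\alpha,\sigma_0)=v,
\end{equation}
so equality holds throughout, which is exactly the saddle condition \eqref{saddle point Chernoff}.

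I do not expect a real obstacle here: the argument is a textbook saddle-point characterization and everything one needs (convex/concave structure, continuity, achievement of the suprema and infima, and the minimax identity) has already been established in Lemma~\ref{lem:func-chernoff}. The only point requiring a little care is the minor notational discrepancy between the half-open interval $[0,1)$ used in Lemma~\ref{lem:func-chernoff}(iii) and the closed interval $[0,1]$ used in both the saddle condition and \eqref{outer alpha0}, and this is dispensed with by invoking continuity of $\phi_\rho(\cdot,\sigma)$ on $[0,1]$.
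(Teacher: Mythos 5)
Your proof is correct and follows essentially the same route as the paper's: both directions are the standard saddle-point characterization, with the chain of inequalities closed by the minimax identity and attainment results of Lemma~\ref{lem:func-chernoff}, and your handling of the $[0,1)$ versus $[0,1]$ discrepancy via continuity is the right (and only) point needing care. No gaps.
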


\begin{proof}
We first prove that if $\alpha_0$ and $\sigma_0$ are solutions of the outer optimizations in ~\eqref{outer alpha0} and~\eqref{outer sigma0}, then $(\alpha_0,\sigma_0)$ is a saddle point. Exploiting the fact that the optimizers are attained as stated in part~\eqref{item: chernoff (iv)} of Lemma~\ref{lem:func-chernoff}, we obtain 
\begin{align}
\sup_{\alpha \in [0,1)} \inf_{\sigma \in \cF}  \phi_{\rho}(\alpha,\sigma) &=\inf_{\sigma \in \cF} \phi_{\rho}(\alpha_0,\sigma) \\
&\leq \phi_{\rho}(\alpha_0,\sigma_0) \\
&\leq  \sup_{\alpha \in [0,1]} \phi_{\rho}(\alpha,\sigma_0) \\
&=\sup_{\alpha \in [0,1)} \phi_{\rho}(\alpha,\sigma_0) \\
&=  \inf_{\sigma \in \cF} \sup_{\alpha \in [0,1)} \phi_{\rho}(\alpha,\sigma) ,
\end{align}
where in the first and last equality, we used the definition of $\sigma_0$ and $\alpha_0$. By the minimax theorem of part~\eqref{item: chernoff (iii)}, all inequalities must be equalities. Hence, the inner equalities show that $(\alpha_0,\sigma_0)$ is a saddle point.

We now prove that if $(\alpha_0,\sigma_0)$ is a saddle point, then  $\alpha_0$ and $\sigma_0$ are solutions of the outer optimizations in~\eqref{outer alpha0} and~\eqref{outer sigma0}. We have 
\begin{align}
\label{outer optimization 1}
\inf_{\sigma \in \cF} \sup_{\alpha \in [0,1)}  \phi_{\rho}(\alpha,\sigma) &\leq \sup_{\alpha \in [0,1)} \phi_{\rho}(\alpha,\sigma_0) \\
&= \phi_{\rho}(\alpha_0,\sigma_0) \\
&= \inf_{\sigma \in \cF} \phi_{\rho}(\alpha_0,\sigma) \\
\label{outer optimization 2}
&\leq \sup_{\alpha \in [0,1)} \inf_{\sigma \in \cF} \phi_{\rho}(\alpha,\sigma_0) \,,
\end{align}
where in the inner equalities we used that $(\alpha_0,\sigma_0)$ is a saddle point.  Then, by the standard minimax inequality
\begin{equation}
	\sup_{\alpha \in [0,1)} \inf_{\sigma \in \cF} \phi_{\rho}(\alpha,\sigma_0) = \inf_{\sigma \in \cF} \phi_{\rho}(\alpha_0,\sigma_0) \leq \inf_{\sigma \in \cF} \sup_{\alpha \in [0,1)}  \phi_{\rho}(\alpha,\sigma)\,,
\end{equation}
all inequalities must be equalities. This means that $\alpha_0$ and $\sigma_0$ are solutions of the outer optimization problems.
\end{proof}

\subsection{Chernoff exponent and the additivity problem}

We first derive a characterization of the Chernoff exponent.

\begin{theorem}\label{thm:Chernoff}
Consider the quantum hypothesis testing problem~\eqref{eq:problem} for some state $\rho$ and suppose that the alternative hypothesis $\{\cF_n\}_{n\in \N}$ satisfies Assumption~\ref{assumption:polar} and Assumption~\ref{assumption:support}. Then, the Chernoff exponent is
\begin{align}\label{eq:Chernoff-composite}
\rmc(\rho\| \{\cF_n\}_{n\in \N}) & = \lim_{n\to +\infty} \frac 1n \sup_{\alpha\in (0,1)}\min_{\sigma^{\n}\in \cF_n}\,  (1-\alpha)D_\alpha\big(\rho^{\otimes n}\big\| \sigma^{\n}\big)\\
&=\lim_{n\to +\infty} \frac 1n \min_{\sigma^{\n}\in \cF_n}\sup_{\alpha\in (0,1)}\,  (1-\alpha)D_\alpha\big(\rho^{\otimes n}\big\| \sigma^{\n}\big),
\end{align}
where $\rmc(\rho\| \{\cF_n\}_{n\in \N})$ is defined in~\eqref{eq:def-Chernoff-exp} in terms of type I and type II errors given in~\eqref{eq:type-I-error} and~\eqref{eq:type-II-error}.
\end{theorem}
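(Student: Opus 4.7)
My plan is to establish the theorem in three ingredients: (i) equality of the inner $\sup_\alpha\min_\sigma$ and $\min_\sigma\sup_\alpha$ at each finite $n$; (ii) an achievability bound $\rmc \ge \liminf_n \tfrac 1n f(n)$, where $f(n):=\sup_{\alpha\in(0,1)}\min_{\sigma^{(n)}\in \cS_n}(1-\alpha)D_\alpha(\rho^{\otimes n}\|\sigma^{(n)})$; and (iii) a matching converse $\rmc\le \limsup_n \tfrac 1n f(n)$. Ingredient~(i) follows by applying Lemma~\ref{lem:func-chernoff}(iii) to the state $\rho^{\otimes n}$ and the convex compact set $\cS_n$, once I check that Assumption~\ref{assumption:support} lifts to the $n$-fold level: if $\sigma\in\cS_1$ has support containing that of $\rho$, then $\sigma^{\otimes n}\in\cS_n$ by Assumption~\ref{assumption:polar}(1.B), and its support contains that of $\rho^{\otimes n}$. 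Existence of the limit itself follows from subadditivity of $f$: the product $\sigma^{(n)}\otimes \sigma^{(m)}\in\cS_{n+m}$ is a feasible point and $(1-\alpha)D_\alpha$ is additive under tensor products, which together give $f(n+m)\le f(n)+f(m)$; Fekete's lemma then yields $\lim_n f(n)/n=\inf_n f(n)/n$.

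For achievability, I would apply Lemma~\ref{lem:Audenaert-ineq} with $A=\rho^{\otimes n}$ and $\cS=\cS_n$: for every $\alpha\in(0,1)$ there exists a test $T_n^{(\alpha)}$ with
\begin{equation*}
\alpha_n(\rho|T_n^{(\alpha)})+\beta_n(\cS_n|T_n^{(\alpha)})\le \sup_{\sigma^{(n)}\in\cS_n}\Trm\bigl[(\rho^{\otimes n})^\alpha(\sigma^{(n)})^{1-\alpha}\bigr] = \exp\bigl(-\min_{\sigma^{(n)}\in\cS_n}(1-\alpha)D_\alpha(\rho^{\otimes n}\|\sigma^{(n)})\bigr).
\end{equation*}
Taking $-\tfrac 1n\log$, then optimizing over $\alpha\in(0,1)$, and passing to $\liminf_n$ recovers the achievability direction $\rmc\ge \lim_n f(n)/n$.

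For the converse, I would fix an arbitrary test $T_n$ and apply the data-processing inequality for $D_\alpha$ along the two-outcome measurement channel $\{T_n,I-T_n\}$. For every $\sigma^{(n)}\in \cS_n$ and every $\alpha\in(0,1)$, writing $a_n:=\alpha_n(\rho|T_n)$ and $b_n:=\Trm[T_n\sigma^{(n)}]\le\beta_n(\cS_n|T_n)$, DPI yields
\begin{equation*}
\exp\bigl(-(1-\alpha)D_\alpha(\rho^{\otimes n}\|\sigma^{(n)})\bigr)\le (1-a_n)^\alpha b_n^{1-\alpha}+a_n^\alpha(1-b_n)^{1-\alpha}\le a_n^\alpha+b_n^{1-\alpha}.
\end{equation*}
Minimizing over $\sigma^{(n)}\in\cS_n$, combining with the elementary bound $a_n^\alpha+b_n^{1-\alpha}\le 2(a_n+\beta_n)^{\min(\alpha,1-\alpha)}$ valid when $a_n+\beta_n\le 1$, and then taking $-\tfrac 1n\log$, one obtains an upper bound on $-\tfrac 1n \log \inf_T(\alpha_n+\beta_n)$ in terms of $\tfrac 1n f(n)$, yielding the matching converse after passing to the limit in $n$.

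The main obstacle I anticipate is making the $\alpha$-optimization in the converse tight: the naive elementary bound above costs a factor $1/\min(\alpha,1-\alpha)$, and avoiding this loss in order to recover the Chernoff rate requires choosing the test and the parameter $\alpha$ jointly, along a trajectory that tracks a saddle point of $(\alpha,\sigma^{(n)})\mapsto (1-\alpha)D_\alpha(\rho^{\otimes n}\|\sigma^{(n)})$. The saddle-point toolkit developed in Lemma~\ref{lem: saddle point chernoff} is designed precisely for this. An alternative route, which I would pursue if the direct $\alpha$-tracking becomes too delicate, is to commute $\inf_T$ and $\sup_{\sigma^{(n)}\in \cS_n}$ via the minimax theorem (possible by convex-compactness of $\cS_n$), reducing the composite Chernoff problem to a worst-case simple one, and then invoking an appropriate non-i.i.d.\ quantum Chernoff converse along the worst-case sequence.
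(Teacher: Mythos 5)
Your ingredient~(i) (the minimax exchange via Lemma~\ref{lem:func-chernoff}, plus Fekete's lemma for the existence of the limit) and your achievability step (Lemma~\ref{lem:Audenaert-ineq} applied to $A=\rho^{\otimes n}$ and $\cS=\cS_n$) are exactly the paper's argument and are correct.

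The converse, however, has a genuine gap, and you have correctly diagnosed where it lies but not closed it. The two-outcome DPI route provably cannot give the Chernoff rate: the bound $a_n^\alpha+b_n^{1-\alpha}\le 2(a_n+\beta_n)^{\min(\alpha,1-\alpha)}$ costs a factor $1/\min(\alpha,1-\alpha)$ in the exponent, and no choice of a single $\alpha$ removes it (this is the same reason the pre-Nussbaum--Szko\l{}a attempts at the simple quantum Chernoff converse failed; the tight converse there goes through the Nussbaum--Szko\l{}a distributions and a classical large-deviation lower bound, not through a binary measurement). Your fallback — swap $\inf_T$ and $\sup_{\sigma^{(n)}}$ and invoke ``an appropriate non-i.i.d.\ quantum Chernoff converse along the worst-case sequence'' — does not resolve this, because no such converse is available for arbitrary non-i.i.d.\ alternative sequences: the lower bound on $\inf_T(\alpha_n+\beta_n)$ in terms of $Q_\alpha$ requires either i.i.d.\ structure (for Cram\'er's theorem) or strong regularity of the limiting cumulant generating function, neither of which you have for a general optimizing sequence $\sigma^{(n)}\in\cS_n$.

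The paper's converse avoids all of this with a restriction argument that you should adopt: fix any $n_0$ and any $\sigma^{(n_0)}\in\cS_{n_0}$, and consider the sub-hypothesis $\cS'_m=\{(\sigma^{(n_0)})^{\otimes m}\}$. Assumption~\ref{assumption:polar} gives $\cS'_m\subseteq\cS_{mn_0}$, so shrinking the alternative hypothesis can only decrease the error sum and hence $\rmc(\rho\|\{\cS_n\}_{n\in\N})\le\frac{1}{n_0}\rmc(\rho^{\otimes n_0}\|\sigma^{(n_0)})$, where the right-hand side is now a \emph{simple i.i.d.}\ Chernoff problem with single letters $\rho^{\otimes n_0}$ versus $\sigma^{(n_0)}$, evaluated by~\eqref{eq:Chernoff}. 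Taking the infimum over $n_0$ and $\sigma^{(n_0)}$ and invoking Fekete gives exactly the claimed limit. This is the missing idea; with it, the rest of your write-up goes through.
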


\begin{proof}
The fact that the order of supremum over $\alpha\in (0,1)$ and minimum over $\sigma^{\n}\in \cF_n$ can be exchanged is established in Lemma~\ref{lem:func-chernoff}. Also, the fact that the limits exist (and indeed can be replaced by infima) follows from Fekete's lemma.
Moreover, for the converse part, for any $\sigma^{(n_0)}\in \cF_{n_0}$, we may consider the hypothesis testing problem with the alternative hypothesis being determined by $\cF'_m=\{(\sigma^{(n_0)})^{\otimes m}\}$. By Assumption~\ref{assumption:polar} we have $\cF'_m\subseteq \cF_{mn_0}$, which by the definition of the Chernoff exponent implies 
\begin{align}
\rmc(\rho\| \{\cF_n\}_{n\in \N})\leq \frac 1{n_0} \rmc(\rho\| \{\cF'_m\}_{m\in \N})= \frac{1}{n_0} \sup_{\alpha\in (0,1)} (1-\alpha)D_\alpha(\rho^{\otimes n_0}\| \sigma^{(n_0)}),
\end{align}
where the equality follows from~\eqref{eq:Chernoff}. Taking the infimum over $n_0$ and $\sigma^{(n_0)}$, the desired converse bound is obtained.
To establish the achievability part, it suffices to apply Lemma~\ref{lem:Audenaert-ineq} for $A=\rho^{\otimes n}$ and the set $\cF_n$. 
\end{proof}

We note that the above theorem holds without requiring the condition \ref{A.4-polar} of Assumption~\ref{assumption:polar} on the polar sets as this condition is not used in the proof.

The next result addresses the additivity of the Chernoff exponent. In this setting, the additivity conditions must be verified at a (single-copy) saddle point of the reliability function. A convenient characterization of such saddle points is provided in Lemma~\ref{lem: saddle point chernoff}, which outlines a practical procedure to construct them by evaluating specific “sup” and “inf” functions. We finally emphasize that a tuple achieving the optimal value in the minimax inequality does not necessarily correspond to a saddle point.

\begin{theorem} \label{thm:additivity-Chernoff}
Consider the quantum hypothesis testing problem corresponding to the state $\rho$ and collection $\{\cF_n\}_{n\in \N}$ satisfying Assumption~\ref{assumption:polar} and Assumption~\ref{assumption:support}.
Let $(\alpha_0, \sigma_0)$ be a saddle point of the reliability function of the Chernoff exponent $\phi_\rho(\cdot, \cdot)$ defined in~\eqref{saddle point Chernoff}.  Then, the regularization in~\eqref{eq:Chernoff-composite} can be removed, i.e., 
\begin{equation}
\label{eq:no-regul-Chernoff}
\rmc(\rho\|\{\cF_n\}_{n\in \N}) = (1-\alpha_0)D_{\alpha_0}(\rho\| \sigma_0)
\end{equation}
if and only if 
\begin{equation}
\label{eq:add-ineq-Chernoff}
\alpha_0=0 \quad \text{ or }\quad \sup_{t \in \mathbb{R}, \tau \in \mathcal{F}_1}\textup{Tr}\Big[\tau \sigma_0^{-\frac{1}{2}(\alpha_0+it)}\rho^{\alpha_0} \sigma_0^{-\frac{1}{2}(\alpha_0-it)}\Big] = \textup{Tr}\big[\rho^{\alpha_0}\sigma^{1-\alpha_0}_0\big] \,.
\end{equation}
\end{theorem}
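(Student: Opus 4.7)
The plan is to reduce the equivalence to the additivity of the Petz R\'enyi divergence of order $\alpha_0$, which will then follow from Theorem~\ref{thm:additivity-derivative} specialized to $z=1$. With $z=1$ one computes $\chi_{\alpha,1}(\rho,\sigma_0)=\rho^{\alpha}$ and $1-\tfrac{1-\alpha}{z}=\alpha$, so the single-copy additivity criterion of Theorem~\ref{thm:additivity-derivative} at $(\alpha,z)=(\alpha_0,1)$ matches the sup condition in~\eqref{eq:add-ineq-Chernoff} (modulo the sign convention of the imaginary part in the integral kernel of Theorem~\ref{thm:additivity-derivative}), and its conclusion is that $\min_{\sigma^{(n)}\in\cS_n}D_{\alpha_0}(\rho^{\otimes n}\|\sigma^{(n)})=n\,D_{\alpha_0}(\rho\|\sigma_0)$ for every $n\in\N$, equivalently $\Dreg_{\alpha_0}(\rho)=D_{\alpha_0}(\rho\|\sigma_0)$.

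Starting from the characterization of Theorem~\ref{thm:Chernoff}, taking the product ansatz $\sigma^{\otimes n}\in\cS_n$ yields the subadditivity estimate
\[
\sup_{\alpha\in(0,1)}\min_{\sigma^{(n)}\in\cS_n}(1-\alpha)D_\alpha(\rho^{\otimes n}\|\sigma^{(n)})\leq n(1-\alpha_0)D_{\alpha_0}(\rho\|\sigma_0),
\]
where the right-hand side is the single-copy saddle value $\sup_\alpha(1-\alpha)\mathfrak D_\alpha(\rho)=(1-\alpha_0)D_{\alpha_0}(\rho\|\sigma_0)$. Dividing by $n$ and passing to the limit gives $\rmc(\rho\|\{\cS_n\}_{n\in\N})\leq(1-\alpha_0)D_{\alpha_0}(\rho\|\sigma_0)$. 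Conversely, dropping the supremum over $\alpha$ in the Chernoff formula gives $\rmc(\rho\|\{\cS_n\}_{n\in\N})\geq(1-\alpha)\Dreg_\alpha(\rho)$ for every $\alpha\in[0,1]$. Hence, if the sup condition in~\eqref{eq:add-ineq-Chernoff} holds, Theorem~\ref{thm:additivity-derivative} at $(\alpha_0,1)$ yields $\Dreg_{\alpha_0}(\rho)=D_{\alpha_0}(\rho\|\sigma_0)$ and the two bounds collapse, establishing the ``if'' direction for $\alpha_0\in(0,1)$. The endpoint $\alpha_0=0$ is handled separately: here $\min_{\sigma^{(n)}\in\cS_n}D_0(\rho^{\otimes n}\|\sigma^{(n)})=-\log h_{\cS_n}((\rho^0)^{\otimes n})$, and multiplicativity of the support function under Assumption~\ref{assumption:polar} (Lemma~\ref{lemma:multiplicativity of support function}) gives $h_{\cS_n}((\rho^0)^{\otimes n})=h_{\cS_1}(\rho^0)^n$, so $\Dreg_0(\rho)=D_0(\rho\|\sigma_0)$ and the identity $\rmc=D_0(\rho\|\sigma_0)$ follows automatically.

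For the ``only if'' direction with $\alpha_0\neq 0$, the crucial step is to exchange the limit in $n$ and the supremum in $\alpha$ and identify
\[
\rmc(\rho\|\{\cS_n\}_{n\in\N})=\sup_{\alpha\in[0,1]}(1-\alpha)\Dreg_\alpha(\rho)=:\sup_\alpha\Phi(\alpha).
\]
The functions $\phi_n(\alpha):=\tfrac1n\min_{\sigma^{(n)}}(1-\alpha)D_\alpha(\rho^{\otimes n}\|\sigma^{(n)})$ are concave and continuous on $[0,1]$ (Lemma~\ref{lem:func-chernoff}), uniformly bounded by the continuous single-copy function $(1-\alpha)\mathfrak D_\alpha(\rho)$, and converge pointwise to $\Phi$; standard equi-Lipschitz arguments for uniformly bounded concave sequences on a compact interval then promote this to uniform convergence, legitimizing the swap. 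Assuming $\rmc=(1-\alpha_0)D_{\alpha_0}(\rho\|\sigma_0)$ and choosing $\alpha_\infty\in\arg\max_\alpha\Phi(\alpha)$, the chain
\[
\Phi(\alpha_\infty)\leq(1-\alpha_\infty)\mathfrak D_{\alpha_\infty}(\rho)\leq(1-\alpha_0)\mathfrak D_{\alpha_0}(\rho)=\rmc=\Phi(\alpha_\infty)
\]
forces all inequalities to be equalities. In the non-trivial case $\rho\neq\sigma_0$, uniqueness of $\alpha_0$ in $\arg\max_\alpha\phi_\rho(\alpha,\sigma_0)$ from Lemma~\ref{lem:func-chernoff}(v), combined with $(1-\alpha)\mathfrak D_\alpha(\rho)\leq\phi_\rho(\alpha,\sigma_0)$ (with equality at $\alpha_0$), transfers to uniqueness of $\alpha_0$ as the maximizer of $(1-\alpha)\mathfrak D_\alpha(\rho)$, so that $\alpha_\infty=\alpha_0$ and $\Dreg_{\alpha_0}(\rho)=\mathfrak D_{\alpha_0}(\rho)$. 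A final application of Theorem~\ref{thm:additivity-derivative} at $(\alpha_0,1)$ then delivers the sup condition in~\eqref{eq:add-ineq-Chernoff}.

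The main obstacle will be the rigorous justification of the $\lim_n/\sup_\alpha$ exchange---in particular, promoting the pointwise convergence $\phi_n\to\Phi$ of concave functions to uniform convergence on $[0,1]$ and controlling the boundary behaviour of $\Phi$ at $\alpha\in\{0,1\}$. Once that is in hand, the remainder is a straightforward combination of the saddle-point structure with Theorem~\ref{thm:additivity-derivative}.
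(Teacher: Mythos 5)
Your ``if'' direction is essentially the paper's: combine the single-copy additivity criterion of Theorem~\ref{thm:additivity-derivative} at $(\alpha_0,z)=(\alpha_0,1)$ (resp.\ Theorem~\ref{Additivity in the line} or the support-function computation when $\alpha_0=0$) with the saddle-point property to sandwich the regularized value between the product-test upper bound and the fixed-$\alpha_0$ lower bound. The ``only if'' direction, however, is a genuinely different route. The paper never exchanges $\lim_n$ and $\sup_\alpha$: it observes that, by Fekete, the limit of $\frac1n\min_{\sigma^{(n)}}\max_\alpha\phi_{\rho^{\otimes n}}$ is an infimum, so the assumed equality with the single-copy saddle value forces equality at \emph{every finite} $n$; it then introduces $n$-copy saddle points $(\alpha_0^{(n)},\sigma_0^{(n)})$ and uses the two saddle inequalities plus Lemma~\ref{lem:func-chernoff}(v) to conclude $\alpha_0^{(n)}=\alpha_0$ (or $\rho=\sigma_0$), from which additivity of $\mathfrak D_{\alpha_0}$ at every $n$ follows. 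You instead work asymptotically, identifying $\rmc=\sup_\alpha\Phi(\alpha)$ with $\Phi(\alpha)=(1-\alpha)\Dreg_\alpha(\rho)$ and transferring uniqueness of the maximizer from $\phi_\rho(\cdot,\sigma_0)$ to $(1-\alpha)\mathfrak D_\alpha(\rho)$; that transfer is correct, and the paper's finite-$n$ argument buys you exactly the avoidance of the limit/supremum exchange that your route requires.

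The one step that does not hold as stated is the justification of that exchange. Uniformly bounded concave functions on $[0,1]$ are equi-Lipschitz only on compact subsets of the \emph{interior}; near the endpoints the derivative can blow up (e.g.\ $\alpha\mapsto\sqrt{\alpha}$), so pointwise convergence of $\phi_n$ to $\Phi$ does not by itself give uniform convergence on all of $[0,1]$, and the maximizers could in principle drift to the boundary. The gap is repairable within your framework: along the subsequence $n_k=k!$ the functions $\phi_{n_k}$ decrease pointwise to $\Phi$ (subadditivity along divisibility), each $\phi_{n_k}$ is continuous by Lemma~\ref{lem:func-chernoff}, and $\Phi$, being concave and upper semicontinuous on a compact interval, is continuous there; Dini's theorem then gives uniform convergence, hence $\sup_\alpha\phi_{n_k}\to\sup_\alpha\Phi$, and since the full limit exists by Fekete it equals $\sup_\alpha\Phi$. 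With that repair (and attention to the harmless $(0,1)$ versus $[0,1]$ boundary conventions), your argument goes through.
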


\begin{proof}
We first show how~\eqref{eq:add-ineq-Chernoff} implies~\eqref{eq:no-regul-Chernoff}. Considering the two cases $\alpha_0=0$ and $\alpha_0\neq 0$, by Theorem~\ref{Additivity in the line} and Theorem~\ref{thm:additivity-derivative}, condition~\eqref{eq:add-ineq-Chernoff} implies that
\begin{equation}
\label{additivity}
\min_{\sigma^{\n} \in \mathcal{F}_n }D_{\alpha_0}\big(\rho^{\otimes n}\big\| \sigma^{\n}\big) = n \min_{\sigma \in \mathcal{F}_1} D_{\alpha_0}(\rho\| \sigma),
\end{equation}
for every $n$. 
Moreover, since by Lemma~\ref{lem:func-chernoff} the tuple $(\alpha_0,\sigma_0)$ is a saddle point, we have
\begin{align}
\label{by now}
&\max_{\alpha \in [0,1]} \phi_{\rho}(\alpha,\sigma_0)=\phi_{\rho}(\alpha_0,\sigma_0),
\end{align}
where as before for simplicity of notation we use $\phi_{\rho}(\alpha,\sigma) =(1-\alpha)D_\alpha(\rho\|\sigma)$. 
Next, replacing the sets with an instance of the sets, we obtain
\begin{align}
\label{interline}
&\min_{\sigma^{\n} \in \cF_n} \max_{\alpha \in [0,1]} \phi_{\rho^{\otimes n}}\big(\alpha, \sigma^{\n}\big) \leq \max_{\alpha \in [0,1]} \phi_{\rho^{\otimes n}}\big(\alpha, \sigma_0^{\otimes n}\big) = n \phi_{\rho}(\alpha_0,\sigma_0), \\
\label{additivity using}
& \max_{\alpha \in [0,1]} \min_{\sigma^{\n} \in \cF_n} \phi_{\rho^{\otimes n}}\big(\alpha,\sigma^{\n}\big)  \geq  \min_{\sigma^{\n} \in \cF_n} \phi_{\rho^{\otimes n}}\big(\alpha_0,\sigma^{\n}\big) = n \phi_{\rho}(\alpha_0,\sigma_0),
\end{align}
where the equalities follow from~\eqref{additivity} and~\eqref{by now}. By the minimax theorem, the two quantities are equal, implying that
\begin{equation}
\rmc(\rho\|\{\cF_n\}_{n\in \N}) = \lim_{n \rightarrow \infty}\frac{1}{n} \min_{\sigma^{\n} \in \cF_n }\max_{\alpha \in [0,1]} \phi_{\rho^{\otimes n}}\big(\alpha,\sigma^{\n}\big) = \phi_{\rho}(\alpha_0,\sigma_0).
\end{equation}

We now prove that~\eqref{eq:no-regul-Chernoff} implies \eqref{eq:add-ineq-Chernoff}. 
By the additivity of $D_\alpha(\cdot\| \cdot)$ under tensor product states, for any $n$, it holds the upper bound
\begin{equation}
    \frac 1n \min_{\sigma^{\n}\in \cF_n} \max_{\alpha} \phi_{\rho^{\otimes n}}\big(\alpha,\sigma^{\n}\big) \leq \phi_{\rho}(\alpha_0,\sigma_0),
\end{equation}
and by Theorem~\ref{thm:Chernoff} it tends to $\rmc(\rho\| \{\cF_n\}_{n\in \N})$. Then,~\eqref{eq:no-regul-Chernoff} implies  
\begin{equation}
\label{finite number}
\frac{1}{n} \min_{\sigma^{\n} \in \cF_n }\max_{\alpha \in [0,1]} \phi_{\rho^{\otimes n}}\big(\alpha, \sigma^{\n}\big) = \phi_{\rho}(\alpha_0,\sigma_0).
\end{equation} 
Define $\big(\alpha_0^{\n}, \sigma_0^{\n}\big)\in [0,1]\times \cF_n$ as saddle points for the function $\phi_{\rho^{\otimes n}}(\cdot, \cdot)$. We refer to Lemma~\ref{lem: saddle point chernoff} for a characterization as solution of outer optimization problems. Then,~\eqref{finite number} implies 
\begin{align}\label{eq:add-implication-Chernoff}
n\phi_{\rho}(\alpha_0,\sigma_0) = \phi_{\rho^{\otimes n}}\big(\alpha_0^{\n},\sigma^{\n}_0\big).
\end{align}
Since both $(\alpha_0, \sigma_0)$ and $\big(\alpha_0^{\n}, \sigma_0^{\n}\big)$ are saddle points, we have 
\begin{equation}
    \phi_{\rho}(\alpha_0,\sigma_0) \geq \phi_{\rho}\big(\alpha_0^{\n},\sigma_0\big),
\end{equation}
and
\begin{equation}
    \phi_{\rho^{\otimes n}}\big(\alpha_0^{\n},\sigma_0^{\n}\big) \leq \phi_{\rho^{\otimes n}}\big(\alpha_0^{\n},\sigma_0^{\otimes n}\big).
\end{equation}
Comparing these two equations with~\eqref{eq:add-implication-Chernoff}, we find that both the above inequalities are tight. Hence, we have
\begin{equation}
\label{equalities}
\phi_{\rho}(\alpha_0,\sigma_0) = \phi_{\rho}\big(\alpha_0^{\n},\sigma_0\big).
\end{equation} 
Therefore, by part~\eqref{item: chernoff (v)} of Lemma~\ref{lem:func-chernoff} either $\alpha_0^{\n}= \alpha_0$ or $\rho=\sigma_0$. In the latter case,~\eqref{eq:add-ineq-Chernoff} is obtained using Corollary~\ref{commuting}. On the other hand, the former case implies that
\begin{equation}
\inf_{\sigma^{\n}\in \cF_n}\phi_{\rho^{\otimes n}}\big(\alpha_0,\sigma^{\n}\big)= n\phi_{\rho}(\alpha_0,\sigma_0).
\end{equation}
Therefore, by Theorem~\ref{thm:additivity-derivative}, the desired inequality in~\eqref{eq:add-ineq-Chernoff} holds if $\alpha_0\neq 0$.

\end{proof}

\section{Hoeffding exponent and the additivity problem}\label{sec:Hoeffding}
In this section, we deal with the characterization of the Hoeffding exponent and its additivity problem.

The trade-off between the two errors, assuming that both of them decay at an exponential rate, is given by
\begin{align}\label{eq:def-Hoeffding-exp}
\rmd_r(\rho\| \{\cF_n\}_{n\in \N}) := \sup\left\{ \liminf_{n\to +\infty} -\frac 1n \log\alpha_n(\rho| T_n):\,   \liminf_{n\to +\infty} -\frac 1n\log \beta_n(\cF_n|T_n)\geq r, \, 0\leq T_n\leq I\right\}.
\end{align}
This error exponent is called the \emph{direct exponent} or the \emph{Hoeffding exponent}. 

When $\mathcal{F}_n=\{\sigma^{\otimes n}\}$ for some state $\sigma$, the Hoeffding exponent is given by~\cite{Hayashi2007error, Nagaoka2006converse}
\begin{align}\label{eq:Hoeffding}
\rmd_r(\rho\| \sigma) = \sup_{\alpha\in (0,1]} \frac{\alpha-1}{\alpha}\big( r-D_\alpha(\rho\| \sigma)\big).
\end{align}

\subsection{A technical lemma}

We now establish some properties of the reliability function associated with the Hoeffding exponent. 
This lemma is analogous to Lemma~\ref{lem:func-chernoff}.

\begin{lemma}\label{lem:func-Hoeffding}
Let $\rho$ be a quantum state and $\cF$ be a compact convex subset of quantum states. Fix $r>0$ and define $\psi_{r, \rho}:(0,1]
\times \cF\to \mathbb R\cup\{+\infty\}$ by
\begin{align}\label{eq:psi-hoeffidng-def}
\psi_{r, \rho}(\alpha, \sigma) = \frac{\alpha-1}{\alpha}\big(r- D_\alpha(\rho\| \sigma)\big)= \frac{\alpha-1}{\alpha} r -\frac 1\alpha \log \textup{Tr}\big[\rho^\alpha \sigma^{1-\alpha}\big],
\end{align}
if $\rho, \sigma$ are not orthogonal, and $\psi_{r, \rho}(\alpha, \sigma)=+\infty$ otherwise.
Then the followings hold:
\begin{enumerate}
\item $\alpha\mapsto \psi_{r, \rho}(\alpha, \sigma)$ is continuous on $(0,1]$, and $\sigma\mapsto \psi_{r, \rho}(\alpha, \sigma)$ is continuous on $\cF$ if $\alpha\in (0,1)$ and lower semi-continuous if $\alpha=1$. 
\label{item: Hoeffding (i)}
\item$\sigma\mapsto \psi_{r, \rho}(\alpha, \sigma)$ is convex for any $\alpha\in (0,1]$.
\label{item: Hoeffding (ii)}

\item $\beta\mapsto \psi_{r, \rho}(1/\beta, \sigma)$ is concave on $[1, +\infty)$. Moreover, if $\supp(\rho)\subseteq \supp(\sigma)$, then
\begin{align}\label{eq:psi-equality-case}
\psi_{r, \rho}\Big(\frac{1-\theta}{\beta_0} + \frac{\theta}{\beta_1}, \sigma\Big)=(1-\theta)\psi_{r, \rho}\Big(\frac{1}{\beta_0}, \sigma\Big) +\theta\psi_{r, \rho}\Big(\frac{1}{\beta_1}, \sigma\Big),
\end{align}
for some $\beta_0\neq \beta_1$ and $\theta\in(0,1)$ imply that $\rho, \sigma$ commute.
\label{item: Hoeffding (iii)}
 
\item $\min_{\sigma\in \cF} \sup_{\alpha\in (0,1]} \psi_{r, \rho}(\alpha, \sigma) = \sup_{\alpha\in (0,1]} \min_{\sigma\in \cF}  \psi_{r, \rho}(\alpha, \sigma)$.
\label{item: Hoeffding (iv)}
\item Suppose that $r>D_0(\rho\| \cF)= \min_{\sigma\in \cF} D_0(\rho\| \sigma)$. Then, the supremum over $\alpha\in (0,1]$ of the function $\min_{\sigma\in \cF} \psi_{r, \rho}(\alpha, \sigma)$ is achieved. 
\label{item: Hoeffding (v)}
\end{enumerate}
\end{lemma}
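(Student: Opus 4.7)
The plan is to address parts (i)--(v) in turn, with (iii) being the main technical step. For (i) and (ii), I would use the form $\psi_{r,\rho}(\alpha,\sigma)=\frac{\alpha-1}{\alpha}r-\frac{1}{\alpha}\log\Trm[\rho^\alpha\sigma^{1-\alpha}]$ and rely on the joint continuity of $(\alpha,\sigma)\mapsto\Trm[\rho^\alpha\sigma^{1-\alpha}]$ on $(0,1)\times\cS$, with the convention that $\sigma^0$ is the support projection of $\sigma$. This gives continuity in $\alpha$ on $(0,1]$ and continuity in $\sigma$ for $\alpha\in(0,1)$; at $\alpha=1$, the relevant semi-continuity of $\sigma\mapsto-\log\Trm[\rho\,\sigma^0]$ follows from the monotonicity of support projections under limits. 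Convexity in $\sigma$ for $\alpha\in(0,1)$ reduces to the standard convexity of $\sigma\mapsto D_\alpha(\rho\|\sigma)$ combined with $(1-\alpha)/\alpha\geq 0$; at $\alpha=1$, the support projection of a convex combination dominates the convex combination of support projections, and concavity of $\log$ then yields convexity of $\sigma\mapsto-\log\Trm[\rho\,\sigma^0]$.

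The heart is (iii). Set $f(\alpha):=\log\Trm[\rho^\alpha\sigma^{1-\alpha}]=(\alpha-1)D_\alpha(\rho\|\sigma)$, which is convex in $\alpha$ by Hadamard's three-lines theorem applied to $z\mapsto\Trm[\rho^z\sigma^{1-z}]$. Rewriting
\begin{equation*}
\psi_{r,\rho}(1/\beta,\sigma)=(1-\beta)r-\beta f(1/\beta),
\end{equation*}
the map $\beta\mapsto\beta f(1/\beta)$ is the perspective transform of the convex function $f$, hence convex on $\beta>0$, so $\beta\mapsto\psi_{r,\rho}(1/\beta,\sigma)$ is concave on $[1,\infty)$. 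For the equality case, suppose~\eqref{eq:psi-equality-case} holds for some $\beta_0\neq\beta_1$; then $\beta\mapsto\beta f(1/\beta)$ is affine on the interval between them, and the identity $\frac{d^2}{d\beta^2}[\beta f(1/\beta)]=\beta^{-3}f''(1/\beta)$ forces $f$ itself to be affine on a subinterval of $(0,1]$. Writing $\rho=\sum_k p_k|f_k\rangle\langle f_k|$ and $\sigma=\sum_j q_j|e_j\rangle\langle e_j|$ in their spectral decompositions and setting $\mu_{kj}:=|\langle f_k|e_j\rangle|^2\,q_j$, one expands
\begin{equation*}
f(\alpha)=\log\sum_{k,j}\mu_{kj}\,(p_k/q_j)^\alpha,
\end{equation*}
which, under $\supp(\rho)\subseteq\supp(\sigma)$, is a bona fide logarithmic moment generating function of the variable $\log(p_k/q_j)$ against the positive measure $\{\mu_{kj}\}$. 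Log-MGFs are strictly convex unless the underlying variable is a.s.\ constant, so affinity of $f$ forces $p_k/q_j$ to equal a common constant $c$ on all pairs with $\mu_{kj}>0$. This pins each $|f_k\rangle$ with $p_k>0$ inside the single $\sigma$-eigenspace of eigenvalue $p_k/c$, and hence $[\rho,\sigma]=0$.

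For (iv), set $\tilde\psi(\beta,\sigma):=\psi_{r,\rho}(1/\beta,\sigma)$: by (i)--(iii), $\tilde\psi$ is lower semi-continuous and convex in $\sigma$ on the compact set $\cS$ and concave in $\beta$ on $[1,\infty)$, so Lemma~\ref{lem:minimax Farkas} applied with $X=\cS$, $Y=[1,\infty)$ yields the identity. For (v), the map $\alpha\mapsto\min_{\sigma\in\cS}\psi_{r,\rho}(\alpha,\sigma)$ is upper semi-continuous on $(0,1]$ as the pointwise infimum of a family continuous in $\alpha$. Under $r>D_0(\rho\|\cS)$, choose $\sigma_0\in\cS$ with $D_0(\rho\|\sigma_0)<r$; then $\psi_{r,\rho}(\alpha,\sigma_0)=\frac{\alpha-1}{\alpha}(r-D_\alpha(\rho\|\sigma_0))\to-\infty$ as $\alpha\to 0^+$, forcing $\min_\sigma\psi_{r,\rho}(\alpha,\sigma)\to-\infty$, while $\min_\sigma\psi_{r,\rho}(1,\sigma)=-\log\sup_\sigma\Trm[\rho\,\sigma^0]\geq 0$. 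Thus the supremum is nonnegative and any maximizing sequence is bounded away from $0$; extracting a convergent subsequence, upper semi-continuity at the limit point in $(0,1]$ delivers attainment.

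The main obstacle I anticipate is the equality case in (iii): the perspective argument for concavity is clean, but extracting $[\rho,\sigma]=0$ from the affinity of $f$ requires the eigenbasis rewrite together with the observation that constancy of the ratios $p_k/q_j$ on the support of the overlap weights precisely confines each eigenvector of $\rho$ to a single $\sigma$-eigenspace. The remaining work in (i), (ii), (iv), and (v) is essentially continuity, convexity, and semi-continuity bookkeeping.
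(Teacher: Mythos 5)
Your proposal is correct and structurally parallel to the paper's proof, but it takes a genuinely different route through the key part~\eqref{item: Hoeffding (iii)}. The paper reparametrizes $\psi_{r,\rho}(\alpha,\sigma)=r-\log\|fg^{1/\alpha}\|_\alpha$ with explicit vectors $f,g$ built from the joint spectral data of $\rho,\sigma$, obtains concavity from H\"older's inequality, and extracts the equality case directly from H\"older's equality condition ($f_{ik}^{\alpha_0}g_{ik}=c\,f_{ik}^{\alpha_1}g_{ik}$, hence $\lambda_i/\mu_k$ constant on the overlap support). You instead write $\psi_{r,\rho}(1/\beta,\sigma)=(1-\beta)r-\beta f(1/\beta)$ with $f(\alpha)=\log\Trm[\rho^\alpha\sigma^{1-\alpha}]$ convex, so that concavity is the perspective transform, and you settle the equality case via $\frac{\dd^2}{\dd\beta^2}[\beta f(1/\beta)]=\beta^{-3}f''(1/\beta)$ together with strict convexity of the log-moment-generating function $f(\alpha)=\log\sum_{k,j}\mu_{kj}(p_k/q_j)^\alpha$ unless $\log(p_k/q_j)$ is constant on $\supp\mu$. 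Both arguments bottom out at the same combinatorial step (constant ratios on the overlap support confine each $\rho$-eigenvector to a single $\sigma$-eigenspace, hence $[\rho,\sigma]=0$). The H\"older route needs no differentiability discussion and yields the affinity statement in one stroke; your route makes the concavity and the degeneracy condition conceptually transparent, at the cost of implicitly using that $f$ is twice differentiable — which does hold under $\supp(\rho)\subseteq\supp(\sigma)$, since $f$ is then the logarithm of a positive finite sum of exponentials in $\alpha$. Parts~\eqref{item: Hoeffding (ii)},~\eqref{item: Hoeffding (iv)},~\eqref{item: Hoeffding (v)} match the paper: \eqref{item: Hoeffding (iv)} is the same appeal to Lemma~\ref{lem:minimax Farkas}, and \eqref{item: Hoeffding (v)} is the same argument that the objective tends to $-\infty$ as $\alpha\to 0^+$ while being nonnegative at $\alpha=1$, so the supremum may be restricted to a compact subinterval.

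One caution on part~\eqref{item: Hoeffding (i)} (which the paper dismisses as immediate): your justification of lower semicontinuity of $\sigma\mapsto\psi_{r,\rho}(1,\sigma)=-\log\Trm[\rho\,\sigma^0]$ via ``monotonicity of support projections under limits'' points the wrong way. Supports can only shrink in the limit, i.e.\ $\Trm[\rho\,\sigma^0]\le\liminf_n\Trm[\rho\,\sigma_n^0]$, which gives \emph{upper} rather than lower semicontinuity of $-\log\Trm[\rho\,\sigma^0]$; e.g.\ take $\sigma_n=(1-\tfrac1n)\ketbra{0}{0}+\tfrac1n\ketbra{1}{1}\to\ketbra{0}{0}$ with $\rho$ the maximally mixed qubit. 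This is benign for the downstream use in~\eqref{item: Hoeffding (iv)}: one can run the minimax over $\alpha\in(0,1)$, where $\sigma\mapsto\psi_{r,\rho}(\alpha,\sigma)$ is genuinely continuous, and recover the endpoint $\alpha=1$ from the continuity of $\alpha\mapsto\psi_{r,\rho}(\alpha,\sigma)$ on $(0,1]$. But the one-line justification as you wrote it would not survive scrutiny.
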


\bigskip 
\begin{proof}
We prove each part separately.

\medskip
 \noindent
\eqref{item: Hoeffding (i)} This is clear from the definition of $\psi_{r, \rho}(\cdot, \cdot)$.

\medskip
 \noindent
\eqref{item: Hoeffding (ii)} This is equivalent to the convexity of $\sigma\mapsto D_\alpha(\rho\|\sigma)$ and is well known.

\medskip
 \noindent
\eqref{item: Hoeffding (iii)} The concavity of part~\eqref{item: Hoeffding (iii)} is also well known, yet here, in order to handle the equality case, we give a proof. 
Let $\rho = \sum_i \lambda_i P_i$ and $\sigma = \sum_k \mu_k Q_k$ with $\{P_i:\, i\}$ and $\{Q_k:\, k\}$ being orthogonal projections and $\lambda_i, \mu_k>0$ being distinguished non-zero eigenvalues of $\rho, \sigma$ respectively.  Then, we have 
\begin{equation}
    \psi_{r, \rho}(\alpha, \sigma) = r - \log \|fg^{1/\alpha}\|_\alpha,
\end{equation}
where $f, g$ are vectors with coordinates $f_{ik} = \lambda_i/\mu_k$ and $g_{ik} = e^{r}\mu_k \tr[P_iQ_k]$, with $\|\cdot\|_\alpha$ being the $\alpha$-norm: $\|v\|_\alpha = (\sum_j |v_j|^\alpha)^{1/\alpha}$. Now suppose that $\alpha_0, \alpha_1\in (0,1]$ and for $\theta\in (0,1)$ define $\alpha_\theta$ by 
\begin{align}
    \frac{1}{\alpha_\theta} = \frac{1-\theta}{\alpha_0} + \frac{\theta}{\alpha_1}.
\end{align}
By H\"older's inequality, we have 
\begin{align}
\label{eq:concavity-Holder}
\big\|fg^{1/\alpha_\theta}\big\|_{\alpha_\theta} & \leq \big\|f^{1-\theta}g^{(1-\theta)/\alpha_0}\big\|_{\alpha_0/(1-\theta)} \big\|f^\theta g^{\theta/\alpha_1}\big\|_{\alpha_1/\theta} \leq  \big\|fg^{1/\alpha_0}\big\|_{\alpha_0}^{1-\theta} \big\|f g^{1/\alpha_1}\big\|_{\alpha_1}^{\theta}.
\end{align}
Taking the logarithm of both sides, the concavity of $\beta\mapsto \psi_{r, \rho}(1/\beta, \sigma)$ is implied. 

Now suppose we have equality as in~\eqref{eq:psi-equality-case}. This means that we have equality in the above H\"older's inequality, and the equality condition of H\"older's inequality yields
\begin{equation}
    \left(f^{1-\theta}g^{(1-\theta)/\alpha_0}\right)^{\alpha_0/(1-\theta)} = c  \left(f^{\theta}g^{\theta/\alpha_1}\right)^{\alpha_1/\theta} 
\end{equation}
for some constant $c$. This equivalently means that  $f_{ik}^{\alpha_0}g_{ik} = c f_{ik}^{\alpha_1} g_{ik}$ for any $i, k$. Therefore, for any $i, k$ satisfying $\tr[P_iQ_k]\neq 0$ we have $(\lambda_i/\mu_k)^{\alpha_0-\alpha_1} =c$. 
Since by definition $\lambda_i$'s and $\mu_k$'s are unequal, this means that for any $i$ there is at most one $k$ such that $\tr[P_iQ_k]\neq 0$. On the other hand, by the assumption $\supp(\rho)\subseteq \supp(\sigma)$, for any $i$ there is at least one $k$ satisfying $\tr[P_iQ_k]\neq 0$. Putting these together, we conclude that for any $i$ there is a unique $k_i$ such that $P_i\leq Q_{k_i}$ and $P_iQ_k=Q_kP_i=0$ for any $k\neq k_i$. In any case, $P_i$ commutes with $Q_k$ for all $k$, which implies that $\rho, \sigma$ commute. 

\medskip
 \noindent
\eqref{item: Hoeffding (iv)} This is implied by Lemma~\ref{lem:minimax Farkas} and~\eqref{item: Hoeffding (ii)} and~\eqref{item: Hoeffding (iii)}.

\medskip
 \noindent
 \eqref{item: Hoeffding (v)} The fact that $\sigma_0$ is well-defined is already established in~\eqref{item: Hoeffding (iv)}. To prove that $\alpha_0$ is well-defined we use the condition $r>D_0(\rho\| \cF)$. First note that for any $\sigma\in \cF$ and $\alpha\in (0, 1]$ we have $\sigma^{1-\alpha}\leq I$. Therefore, 
\begin{align}
    \sup_{\alpha\in (0, 1]}\min_{\sigma\in \cF} \psi_{r, \rho}(\alpha, \sigma)\geq \sup_{\alpha\in (0, 1]} \frac{\alpha-1}{\alpha}r - \frac{1}{\alpha}\log \Trm[\rho^\alpha]\geq 0,
\end{align}
where the last inequality follows by taking $\alpha=1$. Next, we note that $\alpha\mapsto D_\alpha(\rho\|\sigma)$ is continuous, particularly at $\alpha=0$. Thus, $D_0(\rho\| \cF)<r$ implies that for a minimizer $\sigma'\in \argmin_{\sigma\in \cF}D_0(\rho\| \sigma)$ we have $D_\alpha(\rho\|\sigma')<r$ in some neighborhood $\alpha\in [0, \epsilon)$. Comparing to the above inequality, we find that the supremum of $\alpha\mapsto \min_{\sigma\in \cF}\psi_{r, \rho}(\alpha, \sigma)=\frac{\alpha-1}{\alpha}(r-D_\alpha(\rho\|\sigma))$ cannot be achieved in the interval $(0,\epsilon)$. We conclude that
\begin{align}
    \sup_{\alpha\in (0, 1]}\min_{\sigma\in \cF} \psi_{r, \rho}(\alpha, \sigma) = \sup_{\alpha\in [\epsilon, 1]}\min_{\sigma\in \cF} \psi_{r, \rho}(\alpha, \sigma),
\end{align} 
implying that the supremum is achieved.
\end{proof}

Suppose that $r>D_0(\rho\| \cF)= \min_{\sigma\in \cF} D_0(\rho\| \sigma)$. A pair $(\alpha_0,\sigma_0)$ is called a saddle point of $\psi_{r,\rho}(\cdot, \cdot)$ if 
\begin{equation}
\label{saddle point Hoeffding}
\sup_{\alpha \in (0,1]} \psi_{r,\rho}(\alpha,\sigma_0) =  \psi_{r,\rho}(\alpha_0,\sigma_0) = \inf_{\sigma\in \cF} \psi_{r,\rho}(\alpha_0,\sigma) \,,
\end{equation}
Similarly to the Chernoff exponent, the next lemma provides a procedure to construct saddle points by evaluating some specific “sup” and “inf” functions.

\begin{lemma}
\label{lem: saddle point Hoeffding}
Let $\rho$ be a quantum state and suppose that $r>D_0(\rho\| \cF)= \min_{\sigma\in \cF} D_0(\rho\| \sigma)$. A pair $(\alpha_0,\sigma_0)$ is a saddle point of $\psi_{r,\rho}$ if and only if $\alpha_0$ and $\sigma_0$ are optimal solutions of the optimization problems
\begin{align}
   & \alpha_0\in \argmax_{\alpha\in (0,1]} \inf_{\sigma\in \cF}\psi_{r,\rho}(\alpha, \sigma) ,\\
  & \sigma_0\in \argmin_{\sigma\in \cF} \sup_{\alpha\in (0,1]}\psi_{r,\rho}(\alpha, \sigma)\,.
\end{align} 
\end{lemma}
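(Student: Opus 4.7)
The plan is to mimic the proof of Lemma~\ref{lem: saddle point chernoff} essentially verbatim, since Lemma~\ref{lem:func-Hoeffding} furnishes for $\psi_{r,\rho}$ the same structural ingredients that Lemma~\ref{lem:func-chernoff} provides for $\phi_\rho$. Concretely, the tools I will rely on are: the minimax equality
\begin{equation*}
\inf_{\sigma\in\cS}\sup_{\alpha\in(0,1]}\psi_{r,\rho}(\alpha,\sigma)=\sup_{\alpha\in(0,1]}\inf_{\sigma\in\cS}\psi_{r,\rho}(\alpha,\sigma),
\end{equation*}
from part~\eqref{item: Hoeffding (iv)} of Lemma~\ref{lem:func-Hoeffding}; the attainment of the outer $\sup$ over $\alpha$ under the hypothesis $r>D_0(\rho\|\cS)$, from part~\eqref{item: Hoeffding (v)}; and the attainment of the outer $\inf$ over $\sigma$, which follows from compactness of $\cS$ together with lower semi-continuity of $\sigma\mapsto\sup_{\alpha\in(0,1]}\psi_{r,\rho}(\alpha,\sigma)$ as the pointwise supremum of lower semi-continuous functions.

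For the ``if'' direction, assume $\alpha_0$ and $\sigma_0$ solve the two outer optimization problems. Then I will write the chain
\begin{align*}
\sup_{\alpha\in(0,1]}\inf_{\sigma\in\cS}\psi_{r,\rho}(\alpha,\sigma)
&=\inf_{\sigma\in\cS}\psi_{r,\rho}(\alpha_0,\sigma)\\
&\leq\psi_{r,\rho}(\alpha_0,\sigma_0)\\
&\leq\sup_{\alpha\in(0,1]}\psi_{r,\rho}(\alpha,\sigma_0)\\
&=\inf_{\sigma\in\cS}\sup_{\alpha\in(0,1]}\psi_{r,\rho}(\alpha,\sigma),
\end{align*}
where the first and last equalities use optimality of $\alpha_0$ and $\sigma_0$, respectively. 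By the minimax theorem the two ends are equal, so both middle inequalities are equalities, which is exactly the saddle point condition~\eqref{saddle point Hoeffding}.

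For the ``only if'' direction, assume $(\alpha_0,\sigma_0)$ is a saddle point. I will write the chain
\begin{align*}
\inf_{\sigma\in\cS}\sup_{\alpha\in(0,1]}\psi_{r,\rho}(\alpha,\sigma)
&\leq\sup_{\alpha\in(0,1]}\psi_{r,\rho}(\alpha,\sigma_0)\\
&=\psi_{r,\rho}(\alpha_0,\sigma_0)\\
&=\inf_{\sigma\in\cS}\psi_{r,\rho}(\alpha_0,\sigma)\\
&\leq\sup_{\alpha\in(0,1]}\inf_{\sigma\in\cS}\psi_{r,\rho}(\alpha,\sigma),
\end{align*}
where the middle equalities are the two halves of the saddle point condition. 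Combining this with the opposite, standard minimax inequality $\sup\inf\leq\inf\sup$ forces every inequality in sight to be an equality. In particular, $\sup_{\alpha}\psi_{r,\rho}(\alpha,\sigma_0)=\inf_{\sigma}\sup_{\alpha}\psi_{r,\rho}(\alpha,\sigma)$ and $\inf_{\sigma}\psi_{r,\rho}(\alpha_0,\sigma)=\sup_{\alpha}\inf_{\sigma}\psi_{r,\rho}(\alpha,\sigma)$, which are precisely the two outer optimality statements claimed.

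I do not anticipate any real obstacle: the only subtlety is to make sure that the attainment hypothesis $r>D_0(\rho\|\cS)$ is actually needed (it is, to guarantee that the $\sup$ over $\alpha\in(0,1]$ is achieved and hence that the statement ``$\alpha_0$ is an optimal solution'' is non-vacuous), and to note that the $\alpha=1$ boundary behaves correctly because $\psi_{r,\rho}(1,\sigma)=D(\rho\|\sigma)$ and $\sigma\mapsto\psi_{r,\rho}(1,\sigma)$ is lower semi-continuous by part~\eqref{item: Hoeffding (i)}, so the outer minimization over $\sigma$ is well-posed. Everything else is a routine repetition of the Chernoff argument.
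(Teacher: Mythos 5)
Your proposal is correct and is essentially the paper's own argument: the paper's proof of this lemma simply defers to the Chernoff case (Lemma~\ref{lem: saddle point chernoff}), whose two chains of inequalities you reproduce verbatim for $\psi_{r,\rho}$, invoking the minimax equality and attainment results of Lemma~\ref{lem:func-Hoeffding} exactly as intended.
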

\begin{proof}
    The proof follows the same reasoning as the one for the Chernoff exponent in Lemma~\ref{lem: saddle point chernoff} and builds on the observations established in Lemma~\ref{lem:func-Hoeffding}.
\end{proof}

\subsection{Hoeffding exponent and  regularization}

\begin{theorem}\label{thm:Hoeffding}
Consider the quantum hypothesis testing problem~\eqref{eq:problem} for some state $\rho$ and suppose that the alternative hypothesis $\{\cF_n\}_{n\in \N}$ satisfies Assumption~\ref{assumption:polar} and Assumption~\ref{assumption:support}.
Let $\max\{0, D_0(\rho\| \cF_1)\}<r<\rms(\rho\|\{\cF_n\}_{n\in \N})$ where $D_0(\rho\| \cF_1) = \inf_{\sigma\in \cF_1} D_0(\rho\| \sigma)$ and $\rms(\rho\|\{\cF_n\}_{n\in \N})$ is the Stein exponent. Then, the Hoeffding exponent is
\begin{align}\label{eq:Hoeffding-composite}
\rmd_r(\rho\| \{\cF_n\}_{n\in \N}) &= \lim_{n\to +\infty}  \sup_{\alpha\in (0,1]}\min_{\sigma^{\n}\in \cF_n}\, \frac{\alpha-1}{\alpha}\Big(r-\frac 1nD_\alpha\big(\rho^{\otimes n}\big\|\sigma^{\n} \big)\Big)\\
&= \lim_{n\to +\infty} \min_{\sigma^{\n}\in \cF_n} \sup_{\alpha\in (0,1]}\, \frac{\alpha-1}{\alpha}\Big(r-\frac 1nD_\alpha\big(\rho^{\otimes n}\big\|\sigma^{\n} \big)\Big),
\end{align}
where $\rmd_r(\rho\| \{\cF_n\}_{n\in \N}\})$ is defined in~\eqref{eq:def-Hoeffding-exp} in terms of type I and type II errors given in~\eqref{eq:type-I-error} and~\eqref{eq:type-II-error}.
\end{theorem}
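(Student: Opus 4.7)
The plan is to follow the structure of the Chernoff proof in Theorem~\ref{thm:Chernoff}, adapted to the Hoeffding regime. The argument splits into (i) the inner $\min_{\sigma^{\n}}$--$\sup_\alpha$ interchange and the existence of the limit in $n$, (ii) a converse by reduction to i.i.d.\ alternatives, and (iii) achievability via the extended Audenaert inequality applied to a rescaled operator.

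For (i), the minimax identity of Lemma~\ref{lem:func-Hoeffding}~\eqref{item: Hoeffding (iv)} applies at each blocklength $n$ to the state $\rho^{\otimes n}$, the set $\cS_n$, and rate $nr$, delivering the equality of the two regularized forms in~\eqref{eq:Hoeffding-composite}. The hypothesis $r>D_0(\rho\|\cS_1)$ combined with the subadditivity of $D_0(\rho^{\otimes n}\|\cS_n)$ (which is immediate from $\cS_m\otimes \cS_n\subseteq \cS_{m+n}$) ensures $nr>D_0(\rho^{\otimes n}\|\cS_n)$ for all $n$, which is the hypothesis required by Lemma~\ref{lem:func-Hoeffding}~\eqref{item: Hoeffding (v)}. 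For the existence of the limit, letting $g(n)$ denote either inner quantity, Fekete's lemma applied to $\{ng(n)\}_{n\in \N}$ suffices; its subadditivity follows from the subadditivity of $\min_{\sigma^{\n}}D_\alpha(\rho^{\otimes n}\|\sigma^{\n})$ in $n$ for each fixed $\alpha$, together with the inequality that the supremum of a sum is bounded by the sum of the suprema.

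For (ii), following the Chernoff template, I fix $n_0\in \N$ and $\sigma^{(n_0)}\in \cS_{n_0}$, and consider the simpler alternative $\cS'_m=\{(\sigma^{(n_0)})^{\otimes m}\}$. By Assumption~\ref{assumption:polar}, $\cS'_m\subseteq \cS_{mn_0}$, so any test achieving rate $r$ against $\{\cS_n\}_{n\in \N}$ along blocklengths $n=mn_0$ also achieves rate $r$ against $\{\cS'_m\}_m$, giving $\rmd_r(\rho\|\{\cS_n\}_{n\in \N})\leq \frac{1}{n_0}\rmd_{n_0 r}(\rho^{\otimes n_0}\|\sigma^{(n_0)})$. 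Applying the single-state formula~\eqref{eq:Hoeffding}, taking the infimum over $\sigma^{(n_0)}\in \cS_{n_0}$, and invoking the minimax identity of Lemma~\ref{lem:func-Hoeffding}~\eqref{item: Hoeffding (iv)} yields $\rmd_r\leq g(n_0)$; letting $n_0\to\infty$ (equivalently, taking the infimum, which equals the limit by Fekete) produces the converse bound.

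For (iii), I fix $\alpha\in (0,1)$ and apply Lemma~\ref{lem:Audenaert-ineq} to $A=c_n\,\rho^{\otimes n}$ with a positive scalar $c_n$ to be chosen. This yields a test $T_n$ with $c_n\alpha_n(\rho|T_n)+\beta_n(\cS_n|T_n)\leq c_n^{\alpha}\sup_{\sigma^{\n}\in \cS_n}Q_\alpha(\rho^{\otimes n}\|\sigma^{\n})$. Selecting $c_n$ so that the right-hand side equals $e^{-nr}$ ensures simultaneously $\beta_n(\cS_n|T_n)\leq e^{-nr}$ and $\alpha_n(\rho|T_n)\leq\exp\bigl(-n\frac{1-\alpha}{\alpha}\bigl(\frac{1}{n}\min_{\sigma^{\n}\in \cS_n}D_\alpha(\rho^{\otimes n}\|\sigma^{\n})-r\bigr)\bigr)$. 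Passing to $\liminf_n$ and then taking the supremum over $\alpha\in (0,1]$ delivers the matching lower bound.

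The main obstacle is the interchange of $\sup_\alpha$ and $\lim_n$: achievability as above yields $\rmd_r\geq \sup_\alpha\lim_n f_n(\alpha)$, whereas the converse gives $\rmd_r\leq \lim_n\sup_\alpha f_n(\alpha)$, where $f_n(\alpha)=\frac{\alpha-1}{\alpha}\bigl(r-\frac{1}{n}\min_{\sigma^{\n}\in \cS_n}D_\alpha(\rho^{\otimes n}\|\sigma^{\n})\bigr)$. Closing this gap is where the assumption $\max\{0,D_0(\rho\|\cS_1)\}<r<\rms(\rho\|\{\cS_n\}_{n\in \N})$ enters crucially: the lower bound keeps an optimizer $\alpha_n^{\star}$ bounded away from $0$ via Lemma~\ref{lem:func-Hoeffding}~\eqref{item: Hoeffding (v)}, while the upper bound ensures the Hoeffding exponent is strictly positive so the saddle-point structure is non-degenerate. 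Extracting a convergent subsequence $\alpha_{n_k}^{\star}\to \alpha^{\star}\in (0,1]$ and exploiting equicontinuity of $\{f_n\}$ in $\alpha$ on any compact interval $[\eps,1]$ then gives $\lim_k f_{n_k}(\alpha_{n_k}^{\star})=\lim_n f_n(\alpha^{\star})\leq \sup_\alpha\lim_n f_n(\alpha)$, which closes the loop.
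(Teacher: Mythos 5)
Your parts (i)--(iii) are exactly the paper's argument: the minimax via Lemma~\ref{lem:func-Hoeffding}, Fekete for the limits, the Chernoff-style converse through $\cS'_m=\{(\sigma^{(n_0)})^{\otimes m}\}\subseteq\cS_{mn_0}$ together with~\eqref{eq:Hoeffding}, and achievability from Lemma~\ref{lem:Audenaert-ineq} with a rescaling (your scalar $c_n$ on $A$ is the same as the paper's $e^x$ on $\cS_n$).

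The final paragraph, however, manufactures an obstacle that is not there and then patches it incompletely. The interchange of $\sup_\alpha$ and $\lim_n$ is only needed because you fix $\alpha$ \emph{before} letting $n\to\infty$. The Hoeffding exponent only requires a single test sequence $(T_n)_n$ with $\liminf_n-\frac1n\log\beta_n\geq r$; your construction enforces $\beta_n(\cS_n|T_n)\leq e^{-nr}$ \emph{exactly} at every $n$ by the choice of $c_n$, so nothing prevents you from letting $\alpha$ depend on $n$. Choosing $\alpha_n$ with $f_n(\alpha_n)\geq\sup_{\alpha}f_n(\alpha)-\tfrac1n$ gives a single admissible test sequence with $\liminf_n-\frac1n\log\alpha_n(\rho|T_n)\geq\liminf_n\sup_\alpha f_n(\alpha)=\lim_n\sup_\alpha f_n(\alpha)$, which already matches the converse bound $\inf_n\sup_\alpha f_n(\alpha)$. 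No $\sup$--$\lim$ exchange is required, and this is the intended reading of the paper's choice of the parameter $x=x_n(\alpha)$.

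If you insist on your route, the patch as written has two unproven assertions. First, the uniform bound $\alpha_n^\star\geq\eps$ does not follow from a bare citation of Lemma~\ref{lem:func-Hoeffding}~\eqref{item: Hoeffding (v)} applied level by level, since the $\eps$ produced there could a priori depend on $n$; you need the additivity $D_0(\rho^{\otimes n}\|\cS_n)=nD_0(\rho\|\cS_1)$ together with the uniform estimate $\frac1nD_\alpha(\rho^{\otimes n}\|\cS_n)\leq D_\alpha(\rho\|\sigma')$ for a fixed $D_0$-minimizer $\sigma'\in\cS_1$, which forces $f_n(\alpha)<0\leq\sup_\alpha f_n(\alpha)$ on a common interval $(0,\eps)$. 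Second, the equicontinuity of $\{f_n\}$ on $[\eps,1]$ is asserted without proof; it is true, but it needs an argument (e.g., uniform boundedness and concavity of $\alpha\mapsto\frac1n\min_{\tau^{\n}}\phi_{\rho^{\otimes n}}(\alpha,\tau^{\n})$ give uniform Lipschitz bounds on compact subsets of the interior, and the sandwich between $g_1$ and $g_\infty$, both vanishing at $\alpha=1$, handles the right endpoint). As it stands, this last step is a gap; the cleanest repair is simply to drop it in favor of the blocklength-dependent $\alpha_n$ above.
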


Note that if $r>\rms(\rho\| \{\cF_n\}_{n\in \N})$, then $\rmd_r(\rho\| \{\cF_n\}_{n\in \N})=0$. Also, it is not hard to verify that $D_0(\rho\| \cF_1)$ is additive and if $r<D_0(\rho\| \cF_1)$ then $\rmd_r(\rho\| \{\cF_n\}_{n\in \N})=+\infty$. See~\cite[Lemma II.9]{Mosonyi+22errorexponent} for more details. This is why in the above theorem we restrict to $D_0(\rho\| \cF_1)<r<\rms(\rho\|\{\cF_n\}_{n\in \N})$.

\begin{proof}
The fact that the order of supremum and minimum can be exchanged is already established in Lemma~\ref{lem:func-Hoeffding}. Also, the limits exist by Fekete's lemma. 
We skip the proof of the converse as it is similar to the proof of the converse part of Theorem~\ref{thm:Chernoff} and is based on~\eqref{eq:Hoeffding}. For the achievability we apply Lemma~\ref{lem:Audenaert-ineq} for $A=\rho^{\otimes n}$ and the set $e^x\cF_n = \big\{e^x \sigma^{\n}:\, \sigma^{\n}\in \cF_n\big\}$ for some real parameter $x$ to be determined. We find that there is a test $0\leq T_n\leq I$ such that for any $\sigma^{\n}\in \cF_n$,
\begin{align}
1- \tr[T_n\rho^{\otimes n}] + e^x \tr\big[T_n \sigma^{\n}\big]\leq    e^{(1-\alpha) x} \sup_{\tau^{\n}\in \cF_n} \Trm\big[\big(\rho^{\otimes n}\big)^\alpha \big(\tau^{\n}\big)^{1-\alpha}\big].
\end{align}
Letting
\begin{equation}
    x = \frac{1}{\alpha} \Big(r+\frac 1n\sup_{\tau^{\n}\in \cF_n}\log \Trm\big[\big(\rho^{\otimes n}\big)^\alpha \big(\tau^{\n}\big)^{1-\alpha}\big] \Big),
\end{equation}
yields the desired bound. 
\end{proof}

We once again note that the above theorem holds without requiring the condition \ref{A.4-polar} of Assumption~\ref{assumption:polar} on the polar sets as this condition is not used in the proof.

We now present the additivity condition for the Hoeffding exponent. As in the case of the Chernoff exponent, verifying additivity requires identifying a (single-copy) saddle point of the reliability function. A practical procedure for constructing such saddle points is again based on evaluating specific “sup” and “inf” functions, as detailed in Lemma~\ref{lem: saddle point Hoeffding}.

\begin{theorem}\label{thm:additivity-Hoeffding}
Consider the quantum hypothesis testing problem corresponding to the state $\rho$ and collection $\{\cF_n\}_{n\in \N}$ satisfying Assumption~\ref{assumption:polar} and Assumption~\ref{assumption:support}. Let $\max\{0, D_0(\rho\| \cF_1)\}<r<\rms(\rho\|\{\cF_n\}_{n\in \N})$ and $(\alpha_0, \sigma_0)$ be a saddle point of the reliability function of the Hoeffding exponent defined in~\eqref{saddle point Hoeffding}.  Then, the regularization in~\eqref{eq:Hoeffding-composite} can be removed, i.e., 
\begin{align}\label{eq:additivity-Hoeffding}
\rmd_r(\rho\| \{\cF_n\}_{n\in \N}) = \frac{\alpha_0-1}{\alpha_0}\big(r- D_{\alpha_0}(\rho\| \sigma_0)\big),
\end{align}
if and only if 
\begin{align}\label{eq:additivity-Hoeffding-condition}
\sup_{t\in \mathbb R, \tau\in \cF_1}\textup{Tr}\Big[ \rho^{\alpha_0} \sigma_0^{-\frac 12 (\alpha_0+it)} \tau \sigma_0^{-\frac 12 (\alpha_0-it)} \Big] =  \textup{Tr}\big[\rho^{\alpha_0}\sigma_0^{1-\alpha_0}\big].
\end{align}
\end{theorem}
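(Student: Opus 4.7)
The plan is to emulate the argument given for Theorem~\ref{thm:additivity-Chernoff}, with the Hoeffding reliability function $\psi_{r,\rho}$ replacing $\phi_\rho$. The bridge between the two settings is that at $z=1$ we have $\chi_{\alpha_0,1}(\rho,\sigma_0)=\rho^{\alpha_0}$ (when $\supp(\rho)\subseteq\supp(\sigma_0)$) and $Q_{\alpha_0,1}(\rho\|\sigma_0)=\textup{Tr}[\rho^{\alpha_0}\sigma_0^{1-\alpha_0}]$, so, after using cyclicity of the trace and replacing $t$ by $-t$, the condition~\eqref{eq:additivity-Hoeffding-condition} is exactly the Petz--R\'enyi additivity criterion~\eqref{condition on t} of Theorem~\ref{thm:additivity-derivative} at parameters $(\alpha_0,z=1)$; the boundary case $\alpha_0=1$ is handled by Theorem~\ref{thm:Stein-additivity} instead. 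Throughout the argument I would abbreviate $\hat\psi_{n}(\alpha,\sigma^{(n)}):=\tfrac{\alpha-1}{\alpha}\bigl(r-\tfrac{1}{n}D_\alpha(\rho^{\otimes n}\|\sigma^{(n)})\bigr)$, which reduces to $\psi_{r,\rho}(\alpha,\sigma)$ on product alternatives $\sigma^{\otimes n}$ by additivity of $D_\alpha$ under tensor products.

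For the ``if'' direction, assuming~\eqref{eq:additivity-Hoeffding-condition} gives, via Theorem~\ref{thm:additivity-derivative}, the additivity $\min_{\sigma^{(n)}\in\cS_n} D_{\alpha_0}(\rho^{\otimes n}\|\sigma^{(n)})=n\,D_{\alpha_0}(\rho\|\sigma_0)$ for every $n\in\N$. Combining this with the saddle-point relation~\eqref{saddle point Hoeffding} and evaluating on the product alternative $\sigma_0^{\otimes n}$, I would obtain
\[
\min_{\sigma^{(n)}\in\cS_n}\sup_{\alpha}\hat\psi_{n}(\alpha,\sigma^{(n)})\le\sup_{\alpha}\psi_{r,\rho}(\alpha,\sigma_0)=\psi_{r,\rho}(\alpha_0,\sigma_0),
\]
together with the matching lower bound $\sup_\alpha\min_{\sigma^{(n)}\in\cS_n}\hat\psi_{n}(\alpha,\sigma^{(n)})\ge\min_{\sigma^{(n)}}\hat\psi_{n}(\alpha_0,\sigma^{(n)})=\psi_{r,\rho}(\alpha_0,\sigma_0)$, where one must track the sign of $(\alpha_0-1)/\alpha_0$ when pushing the minimum past the prefactor onto the Petz divergence. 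Lemma~\ref{lem:func-Hoeffding}(iv) closes the minimax gap, and taking $n\to\infty$ in Theorem~\ref{thm:Hoeffding} produces~\eqref{eq:additivity-Hoeffding}.

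For the ``only if'' direction I would proceed as in the converse of Theorem~\ref{thm:additivity-Chernoff}. The single-letter upper bound $a_n:=\min_{\sigma^{(n)}}\sup_\alpha\hat\psi_{n}(\alpha,\sigma^{(n)})\le\psi_{r,\rho}(\alpha_0,\sigma_0)$ (obtained by evaluating at $\sigma_0^{\otimes n}$), together with the identity $\rmd_r(\rho\|\{\cS_n\}_{n\in\N})=\inf_n a_n$ that follows from the converse part of Theorem~\ref{thm:Hoeffding} via Assumption~\ref{assumption:polar}(B), promotes the hypothesis~\eqref{eq:additivity-Hoeffding} into $a_n=\psi_{r,\rho}(\alpha_0,\sigma_0)$ for every~$n$. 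Fixing a saddle point $(\alpha_0^{(n)},\sigma_0^{(n)})$ of $\hat\psi_{n}$ from Lemma~\ref{lem: saddle point Hoeffding}, the chain of inequalities
\[
\psi_{r,\rho}(\alpha_0,\sigma_0)=\hat\psi_{n}(\alpha_0^{(n)},\sigma_0^{(n)})\le\hat\psi_{n}(\alpha_0^{(n)},\sigma_0^{\otimes n})=\psi_{r,\rho}(\alpha_0^{(n)},\sigma_0)\le\psi_{r,\rho}(\alpha_0,\sigma_0)
\]
collapses to equalities, so $\alpha_0^{(n)}$ is another maximizer of $\psi_{r,\rho}(\cdot,\sigma_0)$. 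The equality clause of Lemma~\ref{lem:func-Hoeffding}(iii), applied to the concave reparametrization $\beta\mapsto\psi_{r,\rho}(1/\beta,\sigma_0)$, then implies that either $\alpha_0^{(n)}=\alpha_0$ for every $n$, or else $\rho$ and $\sigma_0$ commute. The commuting case yields~\eqref{eq:additivity-Hoeffding-condition} directly through Corollary~\ref{commuting}, while in the other case $\sigma_0^{(n)}$ minimizes $\hat\psi_{n}(\alpha_0,\cdot)$ at value $\psi_{r,\rho}(\alpha_0,\sigma_0)$, which, after the same sign-of-$(\alpha_0-1)/\alpha_0$ manipulation, is precisely the Petz--R\'enyi additivity at $\alpha_0$, so Theorem~\ref{thm:additivity-derivative} delivers~\eqref{eq:additivity-Hoeffding-condition}.

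I expect the subtlest step to be the $\alpha_0$-uniqueness argument: in contrast to the Chernoff setting, where the equality clause of Lemma~\ref{lem:func-chernoff}(v) collapses either to $\alpha_0^{(n)}=\alpha_0$ or to the strong conclusion $\rho=\sigma_0$, for Hoeffding the concavity of $\psi_{r,\rho}(\cdot,\sigma_0)$ only holds after the reparametrization $\beta=1/\alpha$, so the equality case of Lemma~\ref{lem:func-Hoeffding}(iii) only yields the weaker conclusion that $\rho$ and $\sigma_0$ commute. This weaker statement is nevertheless exactly what Corollary~\ref{commuting} requires, but one has to exercise care to keep the sign of $(\alpha_0-1)/\alpha_0$ consistent when converting between minimization of $\hat\psi_{n}$ and minimization of the Petz R\'enyi relative entropy, and to treat the boundary case $\alpha_0=1$ via Theorem~\ref{thm:Stein-additivity} rather than Theorem~\ref{thm:additivity-derivative}.
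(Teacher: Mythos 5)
Your proposal is correct and follows essentially the same route as the paper's proof: the ``only if'' direction via $n$-letter saddle points $(\alpha_0^{(n)},\sigma_0^{(n)})$, the collapsed chain of saddle-point inequalities, the equality case of Lemma~\ref{lem:func-Hoeffding}(iii) yielding either $\alpha_0^{(n)}=\alpha_0$ or commutation of $\rho$ and $\sigma_0$, and then Theorem~\ref{thm:additivity-derivative} (resp.\ Corollary~\ref{commuting}) to extract condition~\eqref{eq:additivity-Hoeffding-condition}. The ``if'' direction, which the paper only sketches by reference to the Chernoff argument, is filled in correctly, and your explicit bookkeeping of the sign of $(\alpha_0-1)/\alpha_0$ and of the identification of~\eqref{eq:additivity-Hoeffding-condition} with the $z=1$ case of~\eqref{condition on t} is accurate.
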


\begin{proof}
The proof that~\eqref{eq:additivity-Hoeffding-condition} implies~\eqref{eq:additivity-Hoeffding} is similar to that of Theorem~\ref{thm:additivity-Chernoff}, and we skip it. For the other direction, suppose that additivity~\eqref{eq:additivity-Hoeffding} holds.
Let $\alpha_0^{\n}, \sigma_0^{\n}$ be saddle points for the $n$-th level optimization problem for the constant $nr$ and set $\cF_n$, i.e., for the function $\psi_{nr, \rho^{\otimes n}}:(0, 1]\times \cF_n\to \mathbb R\cup\{+\infty\}$ defined in~\eqref{eq:psi-hoeffidng-def}. We refer to Lemma~\ref{lem: saddle point Hoeffding} for a characterization as solutions of outer optimization problems. We note that the assumptions of part~\eqref{item: Hoeffding (v)} are valid for $\psi_{nr, \rho^{\otimes n}}(\cdot, \cdot)$ since by Theorem~\ref{Additivity in the line} we have $nr> nD_0(\rho\| \cF_1)= D_0(\rho^{\otimes n}\| \cF_n)$. 

Now, by the additivity assumption~\eqref{eq:additivity-Hoeffding} we have 
\begin{align}\label{eq:equality-additivity-Hoeffding-pr}
\psi_{nr, \rho^{\otimes n}}\big(\alpha_0^{\n}, \sigma_0^{\n}\big) = n\psi_{r, \rho}(\alpha_0,\sigma_0)\,.
\end{align} 
Since $(\alpha_0, \sigma_0)$ and $\big(\alpha_0^{\n}, \sigma_0^{\n}\big)$ are saddle points, we obtain 
\begin{align}\label{eq:ii-1st-eq-condition}
\psi_{r, \rho}(\alpha_0,\sigma_0) \geq  \psi_{r, \rho}\big(\alpha_0^{\n}, \sigma_0\big),
\end{align}
and
\begin{align}\label{eq:ii-2dn-eq-condition}
\psi_{nr, \rho^{\otimes n}}\big(\alpha_0^{\n},\sigma^{\n}_0\big) \leq  \psi_{nr, \rho^{\otimes n}}\big(\alpha_0^{\n},\sigma_0^{\otimes n}\big).
\end{align}
Comparing these two equations with the equality~\eqref{eq:equality-additivity-Hoeffding-pr}, we conclude that both the above inequalities are tight, and we have 
\begin{align}
\psi_{r, \rho}(\alpha_0,\sigma_0) = \psi_{r, \rho}\big(\alpha_0^{\n}, \sigma_0\big).
\end{align}
Suppose that for some $n$, $\alpha_0^{\n}\neq \alpha_0$. Then, the optimality of $\alpha_0$ and part~\eqref{item: Hoeffding (iii)} of Lemma~\ref{lem:func-Hoeffding} together with $\supp(\rho)\subseteq \supp(\sigma_0)$ that follows from Theorem~\ref{thm:derivative-condition}, imply that $\psi_{r, \rho}(\alpha, \sigma_0)$ is a constant for any $\alpha$ between $\alpha_0, \alpha_0^{\n}$ and $\rho, \sigma_0$ commute.  
Then,~\eqref{eq:additivity-Hoeffding-condition} holds since by Theorem~\ref{thm:derivative-condition} it is equivalent to the optimality condition for $\sigma_0$. Therefore, assume that for all $n$, $\alpha_0^{\n}=\alpha_0$. This means that 
\begin{align}
\min_{\sigma^{\n}}     \psi_{nr, \rho^{\otimes n}}\big(\alpha_0,\sigma^{\n}\big)  = n\psi_{r, \rho}(\alpha_0,\sigma_0). 
\end{align}
Then, the desired inequality~\eqref{eq:additivity-Hoeffding-condition} follows by Theorem~\ref{thm:additivity-derivative}. 

\end{proof}

\section{Strong converse exponent}\label{sec:Strong-Converse}

In this section, we introduce conditions under which the strong converse exponent of a quantum hypothesis testing problem with an alternative hypothesis satisfying Assumption~\ref{assumption:polar} and Assumption~\ref{assumption:support}
is given by a single-letter formula.

The \emph{strong converse exponent} is defined as follows. By the definition of the Stein exponent, if the type II error decays with an exponential rate larger than the Stein exponent, then the type I error does not go to zero. The strong converse property says that in this case, the type I error tends to $1$. The strong converse exponent characterizes this convergence rate:
\begin{align}\label{eq:def-strong-coverse-exp}
\rmsc_r(\rho\| \{\cF_n\}_{n\in \N}) := \inf\left\{ \limsup_{n\to +\infty} -\frac 1n \log(1-\alpha_n(\rho| T_n)):\,   \liminf_{n\to +\infty} -\frac 1n\log \beta_n(\cF_n|T_n)\geq r\right\},
\end{align}
where the infimum is taken over sequences of tests $\{T_n:\, n\in \N\}$ satisfying $0\leq T_n\leq I$.
It is not hard to verify that under Assumption~\ref{A.4-polar}, the strong converse exponent can be equivalently written as 
\begin{align}\label{eq:def-strong-coverse-exp2}
\rmsc_r(\rho\| \{\cF_n\}_{n\in \N}) 
&= \inf\left\{ \liminf_{n\to +\infty} -\frac 1n \log(1-\alpha_n(\rho| T_n)):\,   \liminf_{n\to +\infty} -\frac 1n\log \beta_n(\cF_n|T_n)\geq r\right\}\\
&= \inf\left\{ \lim_{n\to +\infty} -\frac 1n \log(1-\alpha_n(\rho| T_n)):\,   \liminf_{n\to +\infty} -\frac 1n\log \beta_n(\cF_n|T_n)\geq r\right\},
\end{align}
where in the second line, the infimum is taken only over test sequences for which the indicated limit exists; see, e.g., \cite{bunth2025}. 
For more details on these definitions, we refer to~\cite{Mosonyi+22errorexponent}. The strong converse exponent when $\mathcal{F}_n=\{\sigma^{\otimes n}\}$ for a fixed state $\sigma$, is given in terms of the sandwiched R\'enyi relative entropy by~\cite{mosonyi2015quantum}
\begin{align}\label{eq:Strong-Converse}
\rmsc_r(\rho\| \sigma) = H_r^*(\rho\|\sigma):=\max_{\alpha\in[1,+\infty]} \frac{\alpha-1}{\alpha}\Big( r-\widetilde D_\alpha(\rho\| \sigma)\Big),
\end{align}
where $H_r^*(\cdot \| \cdot)$ is the \emph{Hoeffding anti-divergence}, and $\frac{\alpha-1}{\alpha}$ for $\alpha=+\infty$ is defined to be $1$. For a more general alternative hypothesis represented by a family $\{\cF_n\}_{n\in \N}$, expressions for the strong converse exponent are naturally sought in terms of 
\begin{align}
 H_r^*(\rho^{\otimes n}\|\cF_n)&:=\max_{\sigma^{\n}\in\cF_n} H_r^*(\rho^{\otimes n}\|\sigma^{(n)})=
\max_{\alpha\in[1,+\infty]}\max_{\sigma^{(n)}\in\cF_n} \frac{\alpha-1}{\alpha}\Big( r-\widetilde D_\alpha\big(\rho^{\otimes n}\| \sigma^{(n)}\big)\Big)\\
&=
\max_{\alpha\in[1,+\infty]} \frac{\alpha-1}{\alpha}\Big( r-\min_{\sigma^{(n)}\in\cF_n}\widetilde D_\alpha\big(\rho^{\otimes n}\| \sigma^{(n)}\big)\Big).
\end{align}
For the existence of the maxima in the above expressions see \cite[Corollary V.17]{Mosonyi+22errorexponent}.

 \subsection{Strong converse exponent and regularization}

Characterizing the strong converse exponent $\rmsc_r(\rho\| \{\cF_n\}_{n\in \N})$ does not seem as easy as in the case of the Chernoff and Hoeffding exponents. This is because the strong converse exponent, even for simple hypotheses, is achieved only in the asymptotic limit, while for the Chernoff and the Hoeffding exponents, we obtained one-shot achievability bounds using Lemma~\ref{lem:Audenaert-ineq}. This is similar to the Stein exponent, which is also achieved only in the asymptotic limit. Thus, one idea towards the characterization of the strong converse exponent is to follow the proof idea of Theorem~\ref{thm:Stein} and use measured quantities. Taking this approach,
we prove the following proposition in  Appendix~\ref{app:strong-converse}.

\begin{prop}\label{prop:sc bounds}
Consider the quantum hypothesis testing problem~\eqref{eq:problem}, and assume that Assumption~\ref{assumption:polar} and Assumption~\ref{assumption:support} are satisfied. For any $r>0$, the strong converse exponent of this problem can be bounded as
\begin{align}
 &\lim_{n\to +\infty}\frac{1}{n}H_{nr}^*(\rho\|\cF_n)\label{eq:sc lower}\\
&\ds =\sup_{n\in\bN}\sup_{\alpha>1} \sup_{\sigma^{\n}\in \cF_n} \frac{\alpha-1}{\alpha}\Big(r-\frac 1n \widetilde D_\alpha\big(\rho^{\otimes n} \big\| \sigma^{\n}\big)  \Big)\\
&\ds\le\rmsc_r(\rho\| \{\cF_n\}_{n\in \N})\\
&\ds\le
\inf_{n\in \N}  \inf_{\mathcal M_n} \sup_{\alpha>1} \sup_{\sigma^{\n}\in \cF_n} \frac{\alpha-1}{\alpha}\Big(r-\frac 1n D_\alpha\big(\mathcal M_n(\rho^{\otimes n}) \big\| \mathcal M_n(\sigma^{\n})\big)  \Big), \label{eq:sc upper}
\end{align}
where $\cM_n$ denotes a measurement applied on the tensor product space, and $\cM_n(\rho^{\otimes n}), \cM_n(\sigma^{\n})$ are the output distributions of this measurement applied on $\rho^{\otimes n}, \sigma^{\n}$, respectively.
\end{prop}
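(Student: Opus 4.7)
The statement comprises three pieces: the equality $\lim_n = \sup_n\sup_\alpha\sup_\sigma$ in the lower quantity, the lower bound inequality, and the measurement-based upper bound.

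For the equality, I would first observe that for each fixed $\alpha>1$, Assumption~\ref{assumption:polar} (specifically $\cS_n\otimes\cS_m\subseteq\cS_{n+m}$) together with the additivity of $\widetilde D_\alpha$ on product states gives subadditivity of $a_n(\alpha):=\min_{\sigma^{(n)}\in\cS_n}\widetilde D_\alpha(\rho^{\otimes n}\|\sigma^{(n)})$, namely $a_{n+m}(\alpha)\le a_n(\alpha)+a_m(\alpha)$. By Fekete's lemma, $\lim_n a_n(\alpha)/n=\inf_n a_n(\alpha)/n$. Setting $g_n(\alpha):=\frac{\alpha-1}{\alpha}\bigl(r-a_n(\alpha)/n\bigr)$, this gives $g_n(\alpha)\le g_\infty(\alpha):=\lim_m g_m(\alpha)$ for every $n$, hence $\sup_n g_n(\alpha)=g_\infty(\alpha)$. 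A standard interchange of suprema, combined with $g_n(\alpha_\star)\to g_\infty(\alpha_\star)$ for each fixed $\alpha_\star$, then yields $\lim_n\sup_\alpha g_n=\sup_\alpha g_\infty=\sup_n\sup_\alpha g_n$, which is the claimed chain of equalities.

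For the lower bound, I would combine a blocking argument with the Mosonyi-Ogawa strong converse for simple hypotheses. Fix $n_0\in\bN$ and $\sigma^{(n_0)}\in\cS_{n_0}$; by Assumption~\ref{assumption:polar} we have $(\sigma^{(n_0)})^{\otimes k}\in\cS_{n_0k}$ for every $k$. For any valid test sequence $\{T_n\}$ with $\liminf_n -\frac{1}{n}\log\beta_n(\cS_n|T_n)\ge r$, the subsequence $\{T_{n_0k}\}_k$ satisfies $\liminf_k -\frac{1}{k}\log\Trm[T_{n_0k}(\sigma^{(n_0)})^{\otimes k}]\ge n_0 r$, so it is a valid test sequence for the simple hypothesis test of $\rho^{\otimes n_0}$ against $\sigma^{(n_0)}$ at rate $n_0r$. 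Mosonyi-Ogawa~\cite{mosonyi2015quantum} then gives $\limsup_k -\frac{1}{k}\log\bigl(1-\Trm[T_{n_0k}\rho^{\otimes n_0k}]\bigr)\ge H^*_{n_0 r}(\rho^{\otimes n_0}\|\sigma^{(n_0)})$, and rescaling to the index $n$ yields $\limsup_n -\frac{1}{n}\log(1-\alpha_n(\rho|T_n))\ge \frac{1}{n_0}H^*_{n_0 r}(\rho^{\otimes n_0}\|\sigma^{(n_0)})$. Taking the infimum over valid test sequences and the supremum over $\sigma^{(n_0)}\in\cS_{n_0}$ and $n_0$, combined with the equality established above, gives the lower bound.

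For the upper bound, I would fix $n_0$ and a POVM $\mathcal{M}_{n_0}=\{M_x\}_x$, set $P:=\mathcal{M}_{n_0}(\rho^{\otimes n_0})$ and $\mathcal{Q}:=\{\mathcal{M}_{n_0}(\sigma^{(n_0)}):\sigma^{(n_0)}\in\cS_{n_0}\}$, and construct quantum tests by lifting classical ones. By the classical composite strong converse theorem applied to $P$ against $\overline{\textup{conv}}\,\mathcal{Q}$ at rate $n_0r$, there exist classical tests $t_k:\mathcal{X}^k\to[0,1]$ with $\sup_{Q\in\mathcal{Q}}\E_{Q^{\otimes k}}[t_k]\le e^{-kn_0 r}$ and $\limsup_k -\frac{1}{k}\log(1-\E_{P^{\otimes k}}[t_k])\le\sup_{\alpha>1}\frac{\alpha-1}{\alpha}\bigl(n_0 r-\inf_{Q\in\mathcal{Q}}D_\alpha(P\|Q)\bigr)$. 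I would then lift to quantum tests $T_{n_0k}:=(\mathcal{M}_{n_0}^{\otimes k})^*(t_k)=\sum_{\mathbf{x}}t_k(\mathbf{x})\,M_{x_1}\otimes\cdots\otimes M_{x_k}$, whose type-I error equals $1-\E_{P^{\otimes k}}[t_k]$. The main obstacle is controlling the type-II error $\beta_{n_0k}(\cS_{n_0k}|T_{n_0k})=h_{\cS_{n_0k}}(T_{n_0k})$ against the full composite alternative---which strictly contains the iid envelope $\{(\sigma^{(n_0)})^{\otimes k}\}$---and here I would expand term by term and invoke the multiplicativity of support functions (Lemma~\ref{lemma:multiplicativity of support function}, available from Assumption~\ref{A.4-polar}) to factorize each product $h_{\cS_{n_0k}}(M_{x_1}\otimes\cdots\otimes M_{x_k})=\prod_j h_{\cS_{n_0}}(M_{x_j})$. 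Matching this reduction tightly with the measured-R\'enyi expression in the statement (rather than a cruder envelope bound) requires choosing $t_k$ to be a Neyman-Pearson-type test built from a saddle point $Q^\star\in\mathcal{Q}$ for the variational expression $\inf_{Q\in\mathcal{Q}}D_\alpha(P\|Q)$. Dividing the type-I rate by $n_0$ and taking infima over $n_0$ and $\mathcal{M}_{n_0}$ produces the claimed upper bound.
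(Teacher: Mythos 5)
Your treatment of the first two parts is correct and matches the paper's route: Fekete's lemma applied to the subadditive sequence $a_n(\alpha)=\min_{\sigma^{(n)}\in\cS_n}\widetilde D_\alpha(\rho^{\otimes n}\|\sigma^{(n)})$ gives the equality, and the lower bound follows by blocking ($(\sigma^{(n_0)})^{\otimes k}\in\cS_{n_0k}$) plus the converse half of the Mosonyi--Ogawa simple-hypothesis strong converse.

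The upper bound, however, has a genuine gap in the type-II error control. After lifting a classical test to $T_{n_0k}=\sum_{\mathbf x}t_k(\mathbf x)\,M_{x_1}\otimes\cdots\otimes M_{x_k}$, your term-by-term factorization yields
\begin{align}
\beta_{n_0k}(\cS_{n_0k}|T_{n_0k})\;\le\;\sum_{\mathbf x}t_k(\mathbf x)\prod_{j}h_{\cS_{n_0}}(M_{x_j})\;=\;\sum_{\mathbf x}t_k(\mathbf x)\prod_j\sup_{Q\in\mathcal Q}Q(x_j),
\end{align}
i.e., the supremum over the alternative is pulled \emph{inside} the product over positions. Since $x\mapsto\sup_{Q\in\mathcal Q}Q(x)$ generally has total mass strictly larger than $1$, this cannot be bounded by $e^{-kn_0r}$ even for the saddle-point Neyman--Pearson test: with $t_k(\mathbf x)\le e^{-k\alpha_0\gamma}\prod_jP(x_j)^{\alpha_0}Q^\star(x_j)^{-\alpha_0}$ you would need $\sum_xP(x)^{\alpha_0}Q^\star(x)^{-\alpha_0}\sup_QQ(x)\le\sum_xP(x)^{\alpha_0}Q^\star(x)^{1-\alpha_0}$, which is the optimality condition of Corollary~\ref{commuting} with the supremum moved inside the sum over $x$ --- a strictly stronger statement that fails in general. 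Relatedly, the ``classical composite strong converse'' you invoke as a black box must handle the alternative $\{\cM_{n_0}^{\otimes k}(\sigma^{(n_0k)}):\sigma^{(n_0k)}\in\cS_{n_0k}\}$, which strictly contains $\conv(\mathcal Q^{\otimes k})$; in the paper this is precisely Theorem~\ref{thm:SC-commutative}, itself a nontrivial consequence of the additivity machinery. The working mechanism (see the proof of Theorem~\ref{thm:additivity-strong-converse}, of which Theorem~\ref{thm:SC-commutative} is the commutative specialization plus a rate-shifting step for $r\ge\widetilde D_\infty$, a regime your saddle-point construction does not cover) is to dominate the \emph{whole} test operator by a single tensor power, $T_{n_0k}\le e^{-k\alpha_0\gamma}\big(P^{\alpha_0}(Q^\star)^{-\alpha_0}\big)^{\otimes k}$, apply the multiplicativity of the support function (Lemma~\ref{lemma:multiplicativity of support function}) once to that product operator, and only then close the bound with the single-copy optimality condition $\sup_{\tau}\Trm\big[\tau\, P^{\alpha_0}(Q^\star)^{-\alpha_0}\big]=\Trm\big[P^{\alpha_0}(Q^\star)^{1-\alpha_0}\big]$, in which the supremum sits outside the sum.
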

 
If the order of the infimum over $\cM_n$ and the suprema over $\alpha, \sigma^{\n}$ could be exchanged in \eqref{eq:sc upper}, then 
the inequalities in \eqref{eq:sc lower}--\eqref{eq:sc upper} would become equalities, giving a full characterization of the strong converse exponent. However, the function on the right-hand side (or any of its reparametrizations) is not known to be jointly concave in $\big(\alpha, \sigma^{\n}\big)$, and therefore it is not clear whether the obvious idea of using a minimax argument works.

Not being able to characterize the strong converse exponent $\rmsc_r(\rho\| \{\cF_n\}_{n\in \N})$, we cannot derive a necessary and sufficient condition for the corresponding additivity problem. We can, however, prove a sufficient such condition.
To formulate it, let us introduce 
\begin{align}
 r_{\infty}(\rho\|\cF_1)
 &:=\sup_{\alpha\in[1,+\infty)}\left\{\alpha\widetilde D_\infty(\rho\|\cF_1)-\log\sandq_\alpha(\rho\|\cF_1)\right\}.
\end{align}
Note that by substituting $\alpha=1$, we get $r_{\infty}(\rho\|\cF_1)\ge \widetilde D_\infty(\rho\|\cF_1)$. According to~\cite[Corollary~V.12]{Mosonyi+22errorexponent} 
 and~\cite[Corollary~V.20]{Mosonyi+22errorexponent}, for every $D(\rho\|\cF_1)<r<r_{\infty}(\rho\|\cF_1)$, there exist $\alpha_0\in(1,+\infty)$ and $\sigma_0\in\cF_1$
such that 
\begin{align}\label{eq:ar def}
H_r^*(\rho\|\cF_1)=\frac{\alpha_0-1}{\alpha_0}\Big( r-\widetilde D_{\alpha_0}(\rho\| \sigma_0)\Big).
\end{align}

\begin{theorem}\label{thm:additivity-strong-converse}
Consider the quantum hypothesis testing problem~\eqref{eq:problem}, and assume that  Assumption~\ref{assumption:polar} and Assumption~\ref{assumption:support} are satisfied, and that
$D(\rho\|\cF_1) <   \widetilde D_\infty(\rho\|\cF_1)$. Let
\begin{align}\label{eq:sc rate condition}
D(\rho\|\cF_1) < r<  r_{\infty}(\rho\|\cF_1)
\end{align}
and
\begin{align}\label{eq:sc optimizers}
(\alpha_0, \sigma_0)\in \argmax_{(\alpha, \sigma)\in (1, +\infty)\times \cF_1} \frac{\alpha-1}{\alpha} \big(r - \widetilde D_\alpha(\rho\| \sigma)\big)
\end{align}
as in \eqref{eq:ar def},
so that $\sigma_0=\argmax_{\sigma\in\cF_1}H_r^*(\rho\|\sigma)=\argmin_{\sigma\in\cF_1}\widetilde D_{\alpha_0}(\rho\|\sigma)$.
Then, 
\begin{align}
&\rmsc_r(\rho\| \{\cF_n\}_{n\in \N}) = H_r^*(\rho\|\cF_1)=\frac{\alpha_0-1}{\alpha_0}\big( r - \widetilde D_{\alpha_0}(\rho\| \sigma_0)\big)
\label{eq:sc-exponent-thm}
\end{align}
implies
\begin{align}
\sup_{t \in \mathbb{R}, \tau \in \mathcal{F}_1}\textup{Tr}\Bigg[\tau 
\sigma_0^{-\frac{1+it}{2}}\big(\sigma_0^{\frac{1-\alpha_0}{2\alpha_0}} \rho\sigma_0^{\frac{1-\alpha_0}{2\alpha_0}} \big)^{\alpha_0}
\sigma_0^{-\frac{1-it}{2}}\Bigg] = \widetilde Q_{\alpha_0}(\rho\|\sigma_0).
\label{eq:sc-exponent-thm2}
\end{align}
Moreover, the converse implication is also true if at least one of the following conditions is satisfied:
\begin{enumerate}
\item\label{sc assumption1}
$1<\alpha_0\le 2$ and $r<\sand_{\infty}(\rho\|\cF_1)$.
\item\label{sc assumption2}
$1<\alpha_0\le 2$ and $\frac{\dd}{\dd\alpha} \log \widetilde Q_\alpha(\rho\| \sigma_0)\Big|_{\alpha=\alpha_0}<\sand_{\infty}(\rho\|\sigma_0)$.
\end{enumerate}
Furthermore, if $\rho$ and $\sigma_0$ commute, then $\frac{\dd}{\dd\alpha} \log \widetilde Q_\alpha(\rho\| \sigma_0)\Big|_{\alpha=\alpha_0}<\sand_{\infty}(\rho\|\sigma_0)$, and~\eqref{eq:sc-exponent-thm},~\eqref{eq:sc-exponent-thm2} hold without any further assumption. 
\end{theorem}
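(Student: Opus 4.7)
The plan is to identify the condition~\eqref{eq:sc-exponent-thm2} with the additivity criterion for the sandwiched R\'enyi divergence $\widetilde D_{\alpha_0}$ supplied by Theorem~\ref{thm:additivity-derivative} at $(\alpha,z)=(\alpha_0,\alpha_0)$. Indeed, for $z=\alpha_0>1$, Remark~\ref{rem:az additivity simplified} shows that the criterion rewrites precisely as~\eqref{eq:sc-exponent-thm2}, so both implications of the theorem reduce to showing the equivalence of~\eqref{eq:sc-exponent-thm} with the additivity
\begin{align}
\min_{\sigma^{(n)}\in\cS_n}\widetilde D_{\alpha_0}\bigl(\rho^{\otimes n}\big\|\sigma^{(n)}\bigr) = n\,\widetilde D_{\alpha_0}(\rho\|\sigma_0), \qquad \forall\,n\in \N.
\end{align}

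\textbf{Forward direction.} Combining the hypothesis~\eqref{eq:sc-exponent-thm} with the lower bound in Proposition~\ref{prop:sc bounds} gives, for every $n$ and every $\sigma^{(n)}\in\cS_n$,
\begin{align}
\frac{\alpha_0-1}{\alpha_0}\Bigl(r-\frac{1}{n}\widetilde D_{\alpha_0}\bigl(\rho^{\otimes n}\big\|\sigma^{(n)}\bigr)\Bigr) \le \frac{\alpha_0-1}{\alpha_0}\bigl(r-\widetilde D_{\alpha_0}(\rho\|\sigma_0)\bigr).
\end{align}
Since $\alpha_0>1$ this rearranges to $\widetilde D_{\alpha_0}(\rho^{\otimes n}\|\sigma^{(n)})\ge n\widetilde D_{\alpha_0}(\rho\|\sigma_0)$; the matching $\le$ is immediate from $\sigma_0^{\otimes n}\in\cS_n$ (Assumption~\ref{assumption:polar}) and tensor additivity of $\widetilde D_{\alpha_0}$ on products. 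Invoking Theorem~\ref{thm:additivity-derivative} then yields~\eqref{eq:sc-exponent-thm2}. Conversely, under~\eqref{eq:sc-exponent-thm2} the same theorem produces the additivity above, and evaluating the lower bound of Proposition~\ref{prop:sc bounds} at $(\alpha_0,\sigma_0^{\otimes n})$ yields $\rmsc_r\ge H_r^*(\rho\|\cS_1)$. For the matching upper bound I would plug into the measured upper bound of Proposition~\ref{prop:sc bounds} the sequence of measurement channels $\cM_n$ realizing the single-copy Mosonyi--Ogawa strong converse $\rmsc_r(\rho\|\sigma_0)=H_r^*(\rho\|\sigma_0)$ (available under either~\ref{sc assumption1} or~\ref{sc assumption2}, cf.~\cite{mosonyi2015quantum,Mosonyi+22errorexponent}), and use the just-established additivity together with Lemma~\ref{lemma:multiplicativity of support function} to argue
\begin{align}
\sup_{\sigma^{(n)}\in\cS_n} D_{\alpha_0}\bigl(\cM_n(\rho^{\otimes n})\big\|\cM_n(\sigma^{(n)})\bigr)\ge n\,\widetilde D_{\alpha_0}(\rho\|\sigma_0)-o(n).
\end{align}

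\textbf{Main obstacle and commuting case.} The delicate step is this last uniform inequality: the Mosonyi--Ogawa measurements are tailored to saturate the sandwiched divergence against $\sigma_0^{\otimes n}$ at $\alpha=\alpha_0$, but a priori some $\sigma^{(n)}\in\cS_n$ could shift the optimizer of the reliability function off $\alpha_0$ or enlarge the measured divergence, breaking the bound. Additivity at $\alpha_0$ pins $\sigma_0^{\otimes n}$ as the worst case there, while the hypothesis $\alpha_0\le 2$ (or the slope bound on $\log\widetilde Q_\alpha$) keeps the optimal parameter in the regime where the measured Petz divergence tracks the sandwiched one along the chosen test sequences. Finally, when $[\rho,\sigma_0]=0$, the phases $\sigma_0^{\pm it/2}$ in~\eqref{eq:sc-exponent-thm2} collapse through $\rho$ and reduce the supremum to the single-copy first-order optimality condition of $\sigma_0$ from Corollary~\ref{commuting}, which holds by construction; moreover a direct computation in the common eigenbasis shows $\frac{\dd}{\dd\alpha}\log\widetilde Q_\alpha(\rho\|\sigma_0)\bigr|_{\alpha=\alpha_0} = \sum_i p_i\log(\rho_i/\sigma_{0,i})$ with weights $p_i\propto\rho_i^{\alpha_0}\sigma_{0,i}^{1-\alpha_0}$, strictly smaller than $\max_i\log(\rho_i/\sigma_{0,i})=\widetilde D_\infty(\rho\|\sigma_0)$ unless $\rho=\sigma_0$. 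Hence hypothesis~\ref{sc assumption2} is in force and the reverse direction applies, delivering~\eqref{eq:sc-exponent-thm}.
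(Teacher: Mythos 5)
Your forward direction (that \eqref{eq:sc-exponent-thm} implies \eqref{eq:sc-exponent-thm2}) matches the paper: both use the lower bound of Proposition~\ref{prop:sc bounds} at $(\alpha_0,\sigma^{(n)})$ together with Theorem~\ref{thm:additivity-derivative} and Remark~\ref{rem:az additivity simplified}. The problem is the converse (achievability) direction, where your proposal has a genuine gap rather than a different but complete route.

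You propose to feed the Mosonyi--Ogawa single-copy measurements into the measured upper bound \eqref{eq:sc upper} of Proposition~\ref{prop:sc bounds}. But that upper bound contains $\sup_{\alpha>1}\sup_{\sigma^{(n)}\in\cS_n}$ \emph{inside} the infimum over $\cM_n$, so a single measurement sequence must simultaneously control $\frac{\alpha-1}{\alpha}\bigl(r-\frac1n D_\alpha(\cM_n(\rho^{\otimes n})\|\cM_n(\sigma^{(n)}))\bigr)$ for \emph{all} $\alpha>1$ and \emph{all} $\sigma^{(n)}\in\cS_n$, not just at $(\alpha_0,\sigma_0^{\otimes n})$. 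This is precisely the min--max exchange that the paper explicitly flags as open after Proposition~\ref{prop:sc bounds} (the objective is not known to be jointly concave in $(\alpha,\sigma^{(n)})$). Your "delicate step" paragraph names the obstacle but replaces the argument with the assertion that $\alpha_0\le 2$ "keeps the optimal parameter in the regime where the measured Petz divergence tracks the sandwiched one" --- that is not a proof, and it also misattributes the role of the hypotheses. In the paper, $1<\alpha_0\le 2$ is used for the \emph{operator convexity} of $x\mapsto x^{\alpha_0}$ in a direct bound on the type II error of an explicitly constructed test $T_n=\Pi_{\widehat\rho_n>e^{\gamma n}\sigma_0^{\otimes n}}$ built from the pinching $\widehat\rho_n=\mathcal P_{\sigma_0^{\otimes n}}(\rho^{\otimes n})$; the bound $\sup_{\tau^{(n)}\in\cS_n}\Tr[\tau^{(n)}T_n]\le e^{-nr}$ then follows from condition~\ref{A.4-polar} and \eqref{eq:sc-exponent-thm2}. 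Conditions \eqref{sc assumption1}/\eqref{sc assumption2} serve a separate purpose: they guarantee $D(\rho\|\sigma_0)<\gamma<\widetilde D_\infty(\rho\|\sigma_0)$ so that the type I exponent of this specific test equals $H_r^*(\rho\|\sigma_0)$ by the Mosonyi--Ogawa result. Without an explicit test construction (or a resolution of the min--max exchange), your achievability argument does not close. Your commuting-case computation is essentially right in spirit, though the degenerate alternative is $\rho=\kappa\sigma_0\rho^0$ rather than $\rho=\sigma_0$, and the paper rules it out by deriving a contradiction with $r>D(\rho\|\cS_1)$.
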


\begin{proof}
According to Theorem \ref{thm:additivity-derivative} and Remark \ref{rem:az additivity simplified}, the failure of the inequality in \eqref{eq:sc-exponent-thm2} is 
equivalent to the existence of an $n\in\bN$ such that $\sand_{\alpha_0}(\rho^{\otimes n}\|\cF_n)<n\sand(\rho\|\cF_1)$. By Proposition \ref{prop:sc bounds}, this implies
\begin{align}
\rmsc_r(\rho\| \{\cF_n\}_{n\in \N}) 
\geq 
\frac{\alpha_0-1}{\alpha_0}\left( r - \frac{1}{n}\widetilde D_{\alpha_0}(\rho^{\otimes n}\| \cF_n)\right)
>
\frac{\alpha_0-1}{\alpha_0}\left( r - \widetilde D_{\alpha_0}(\rho\| \sigma_0)\right)=H_r^*(\rho\|\sigma_0)\,.
\end{align}
This shows that~\eqref{eq:sc-exponent-thm} implies~\eqref{eq:sc-exponent-thm2}.

Assume now that the inequality in \eqref{eq:sc-exponent-thm2} holds; note that this is always satisfied if 
$\rho$ commutes with $\sigma_0$, according to Corollary~\ref{commuting}.
Note also that by
Proposition \ref{prop:sc bounds},
\begin{equation}
    \rmsc_r(\rho\| \{\cF_n\}_{n\in \N}) \geq \rmsc_r(\rho\| \sigma_0) =\frac{\alpha_0-1}{\alpha_0}\big( r - \widetilde D_{\alpha_0}(\rho\| \sigma_0)\big)=H_r^*(\rho\|\sigma_0)\,,
\end{equation}
and hence our goal is to prove the converse inequality.

Taking the derivative of $\frac{\alpha-1}{\alpha} \big(r - \widetilde D_\alpha(\rho\| \sigma_0)\big)$ with respect to $\alpha$, the optimality of $\alpha_0$ in \eqref{eq:sc optimizers} implies  
\begin{align}
r&= \gamma\alpha_0 -\log \widetilde Q_{\alpha_0}(\rho\| \sigma_0),\label{eq:scr-gamma-1}\\
H_r^*(\rho\|\sigma_0)&=
\gamma(\alpha_0-1)-\log \widetilde Q_{\alpha_0}(\rho\| \sigma_0)
=
\max_{\alpha\in(1,+\infty)}\left\{\gamma(\alpha-1)-\log \widetilde Q_{\alpha}(\rho\| \sigma_0)\right\},
\label{eq:scr-gamma-1-1}
\end{align}
where
\begin{align}\label{eq:scr-gamma-2}
\gamma := \frac{\dd}{\dd\alpha} \log \widetilde Q_\alpha(\rho\| \sigma_0)\Big|_{\alpha=\alpha_0},
\end{align}
and we used the convexity of the function $(1,+\infty)\ni\alpha\mapsto \log \widetilde Q_{\alpha}(\rho\| \sigma_0)$; see, e.g., \cite[Corollary 3.11]{mosonyi2015quantum}.

Assume first that $1<\alpha_0\le 2$, or that $\rho, \sigma_0$ commute.
Let $\mathcal P_{\sigma_0^{\otimes n}}(\cdot )$ be the pinching map with respect to $\sigma_0^{\otimes n}$. Then, there exists a 
probability measure $\mu_n$ on $\bR$ such that~\cite[Lemma 2.1]{Sutter+2017multivariate} 
\begin{align}
\widehat \rho_n := \mathcal P_{\sigma_0^{\otimes n}}(\rho^{\otimes n} ) = \int_{\bR}   \big(\sigma_0^{\otimes n}\big)^{it/2} \rho^{\otimes n}  \big(\sigma_0^{\otimes n}\big)^{-it/2} \dd \mu_n(t). 
\end{align}
Let 
\begin{align}
T_n = \Pi_{\widehat \rho_n> e^{\gamma n} \sigma_0^{\otimes n}} =\Pi_{\widehat \rho_n^{\alpha_0}> e^{\alpha_0\gamma n} \big(\sigma_0^{\otimes n}\big)^{\alpha_0}} 
\end{align}
be the orthogonal projection onto the span of the eigenvectors of $\widehat \rho_n- e^{\gamma n} \sigma_0^{\otimes n}$ with positive eigenvalues.
Using the fact that $\widehat \rho_n$ and $\sigma_0^{\otimes n}$ commute, we find that $T_n\leq e^{-\alpha_0\gamma n} \widehat \rho_n^{\alpha_0} \big(\sigma_0^{\otimes n}\big)^{-\alpha_0} $. Hence, for any $\tau^{\n} \in \cF_n$, the type II error is bounded as
\begin{align}
&e^{\alpha_0\gamma n} \Trm\big[\tau^{\n}  T_n \big]  \\
&\qquad\leq  
\Trm\Big[   \tau^{\n} \widehat \rho_n^{\alpha_0} \big(\sigma_0^{\otimes n}\big)^{-\alpha_0}  \Big] \\
&\qquad =
\Trm\Big[   \tau^{\n}  \big(\sigma_0^{\otimes n}\big)^{-\frac 12} \Big(  \big(\sigma_0^{\otimes n}\big)^{\frac{1-\alpha_0}{2\alpha_0}}   \widehat \rho_n \big(\sigma_0^{\otimes n}\big)^{\frac{1-\alpha_0}{2\alpha_0}}  \Big)^{\alpha_0}  \big(\sigma_0^{\otimes n}\big)^{-\frac 12} \Big]\\
&\qquad=  
\Trm\bigg[   \tau^{\n}  \big(\sigma_0^{\otimes n}\big)^{-\frac 12} \bigg(\int_{\bR}  \big(\sigma_0^{\otimes n}\big)^{\frac{1-\alpha_0}{2\alpha_0} + \frac{it}{2}}    \rho^{\otimes n} \big(\sigma_0^{\otimes n}\big)^{\frac{1-\alpha_0}{2\alpha_0}-\frac{it}{2}}  \dd \mu_n(t) \bigg)^{\alpha_0}  \big(\sigma_0^{\otimes n}\big)^{-\frac 12} \bigg]\\
&\qquad\leq  
\int_{\bR} \Trm\Big[   \tau^{\n}  \big(\sigma_0^{\otimes n}\big)^{-\frac 12} \Big(  \big(\sigma_0^{\otimes n}\big)^{\frac{1-\alpha_0}{2\alpha_0} + \frac{it}{2}}    \rho^{\otimes n} \big(\sigma_0^{\otimes n}\big)^{\frac{1-\alpha_0}{2\alpha_0}-\frac{it}{2}}   \Big)^{\alpha_0}  \big(\sigma_0^{\otimes n}\big)^{-\frac 12} \Big] \dd \mu_n(t)\label{eq:op-convexity-x-alpha0}\\
&\qquad\le
\sup_{t\in\bR} \Trm\Big[   \tau^{\n}  \big(\sigma_0^{\otimes n}\big)^{-\frac 12+ \frac{it}{2}} \Big(  \big(\sigma_0^{\otimes n}\big)^{\frac{1-\alpha_0}{2\alpha_0} }    \rho^{\otimes n} \big(\sigma_0^{\otimes n}\big)^{\frac{1-\alpha_0}{2\alpha_0}}   \Big)^{\alpha_0}  \big(\sigma_0^{\otimes n}\big)^{-\frac 12-\frac{it}{2}} \Big]\\
&\qquad=
\sup_{t \in \mathbb{R}}\textup{Tr}\bigg[\tau^{(n)} 
\left(\sigma_0^{-\frac{1+it}{2}}\Big(\sigma_0^{\frac{1-\alpha_0}{2\alpha_0}} \rho\sigma_0^{\frac{1-\alpha_0}{2\alpha_0}} \Big)^{\alpha_0}
\sigma_0^{-\frac{1-it}{2}}\right)^{\otimes n}\bigg].
\end{align}
Here,~\eqref{eq:op-convexity-x-alpha0} follows from the convexity of $x\mapsto x^{\alpha_0}$ 
when $\rho$ and $\sigma_0$ commute, 
and from the operator convexity of the same function when $1<\alpha_0\le 2$,
and the rest of the steps are straightforward. Taking now the supremum over $\tau^{(n)}\in\cF_n$
and using condition~\ref{A.4-polar} yield
\begin{align}
\sup_{\tau^{(n)}\in\cF_n}\Trm\big[\tau^{\n}  T_n \big]  
&\le 
e^{-\alpha_0\gamma n} \sup_{t \in \mathbb{R},\,\tau\in\cF_1}\textup{Tr}\bigg[\tau 
\sigma_0^{-\frac{1+it}{2}}\Big(\sigma_0^{\frac{1-\alpha_0}{2\alpha_0}} \rho\sigma_0^{\frac{1-\alpha_0}{2\alpha_0}} \Big)^{\alpha_0}
\sigma_0^{-\frac{1-it}{2}}\bigg]^n\\
&\le
e^{-\alpha_0\gamma n}Q_{\alpha_0}(\rho\|\sigma_0)^n=e^{-nr},
\end{align}
where the second inequality follows from \eqref{eq:sc-exponent-thm2} and the last equality from \eqref{eq:scr-gamma-1}.

The proof of the equality in \eqref{eq:sc-exponent-thm} will be complete if we can show that 
\begin{align}
\lim_{n\to \infty}  -\frac 1n \log \tr\big[ \rho^{\otimes n} T_n  \big] &=
H_r^*(\rho\|\sigma_0).
\end{align}
This follows immediately from \cite[Theorem IV.5]{MosonyiOgawa15two} if $D(\rho\|\sigma_0)<\gamma<\sand_{\infty}(\rho\|\sigma_0)$, and hence our aim is to show these inequalities. 

First, assume that $\gamma\le D(\rho\|\sigma_0)$. Then 
$\gamma\le \sand_{\alpha_0}(\rho\|\sigma_0)$ according to Lemma \ref{lemma:psi properties}, whence, by  
\eqref{eq:scr-gamma-1}, we get 
\begin{align}\label{eq:gamma-equiv}
\gamma &= \frac{1}{\alpha_0} r+ \frac{\alpha_0-1}{\alpha_0} \widetilde D_{\alpha_0}(\rho\| \sigma_0)  
\ge
\frac{1}{\alpha_0}r+\frac{\alpha_0-1}{\alpha_0}\gamma,
\end{align}
or equivalently $\gamma\ge r$. This yields 
$r\le\gamma\le D(\rho\|\sigma_0)\le \sand_{\alpha_0}(\rho\|\sigma_0)$, whence 
$H_r^*(\rho\|\cF_1)=\frac{\alpha_0-1}{\alpha_0}\left(r-\sand_{\alpha_0}(\rho\|\sigma_0)\right)\le 0$.
This, however, leads to a contradiction as the assumption $r>D(\rho\|\cF_1)$ implies that $r>\sand_{\alpha}(\rho\|\cF_1)$ for every 
$\alpha$ close enough to $1$, and for any such $\alpha$,
$H_r^*(\rho\|\cF_1)\ge\frac{\alpha-1}{\alpha}\left(r-\sand_{\alpha}(\rho\|\sigma)\right)> 0$.
The inequality $\gamma<\sand_{+\infty}(\rho\|\sigma_0)$ holds trivially under condition \eqref{sc assumption2}, and from 
condition \eqref{sc assumption1} it follows due to 
\eqref{eq:scr-gamma-1} as
\begin{align}
\gamma &= \frac{1}{\alpha_0} r+ \left(1-\frac{1}{\alpha_0}\right) \widetilde D_{\alpha_0}(\rho\| \sigma_0) 
<\sand_{\infty}(\rho\|\sigma_0),
\end{align}
where we used that 
$\sand_{\alpha_0}(\rho\|\sigma_0)\le \sand_{\infty}(\rho\|\sigma_0)$, according to Lemma~\ref{lemma:psi properties} below.

Finally, assume that $\rho$ commutes with $\sigma_0$. By \cite[Lemma II.10]{Mosonyi+22errorexponent}, there are two possibilities.
One is that the second derivative of $\alpha\mapsto\psi(\alpha):=\log\sand_{\alpha}(\rho\|\sigma_0)$ is strictly positive at every 
$\alpha\in(0,+\infty)$, and therefore $D(\rho\|\sigma_0)=\psi'(1)<\psi'(\alpha_0)=\gamma<\psi'(+\infty)=\sand_{\infty}(\rho\|\sigma_0)$.
Thus, the above argument can be used to derive the strong converse exponent. 
Alternatively, if $\psi''(\alpha)=0$ at some $\alpha\in(0,+\infty)$, then 
$\rho=\kappa\sigma_0\rho^0$ for some $\kappa>0$, whence
$\log\sand_{\alpha}(\rho\|\sigma_0)=(\alpha-1)\log\kappa$ for every $\alpha\in[1,+\infty]$, and 
therefore $\gamma=\psi'(\alpha_0)=\log\kappa=D(\rho\|\sigma_0)$.
This, however, contradicts the assumption that $D(\rho\|\cF_1)<r$, as shown above.

\end{proof}

The following lemma is well known in the literature; see, e.g., \cite{muller2013quantum,MosonyiOgawa15two}.

\begin{lemma}\label{lemma:psi properties}
Let $\rho,\sigma$ be such that $\supp(\rho)\subseteq \supp(\sigma)$, and let
\begin{align}\label{eq:psi def}
\psi(\alpha):=\log\sandq_{\alpha}(\rho\|\sigma),\ds\ds\ds\alpha\in(0,+\infty).
\end{align}
Then $\psi(\cdot)$ is a continuously differentiable convex function on $(0,+\infty)$ with $\psi(1)=0$.
In particular, $\alpha\mapsto\sand_{\alpha}(\rho\|\sigma)=\frac{\psi(\alpha)}{\alpha-1}$ and 
$\alpha\mapsto\psi'(\alpha)$ are monotone increasing on $(0,+\infty]$, where 
\begin{align}
\frac{\psi(\alpha)}{\alpha-1}\Bigg\vert_{\alpha=1}:=\lim_{\alpha\to 1}\frac{\psi(\alpha)}{\alpha-1}=D(\rho\|\sigma)
=\lim_{\alpha\to 1}\psi'(\alpha)=\psi'(1),
\end{align}
and 
\begin{align}\label{eq:infty limits}
\frac{\psi(\alpha)}{\alpha-1}\Bigg\vert_{\alpha=+\infty}:=\lim_{\alpha\to +\infty}\frac{\psi(\alpha)}{\alpha-1}=\sand_{\infty}(\rho\|\sigma)
=\lim_{\alpha\to +\infty}\psi'(\alpha)=:\psi'(+\infty).
\end{align}
\end{lemma}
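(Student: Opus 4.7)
The plan is straightforward: each claim in the lemma is either an immediate computation or a standard consequence of convex analysis applied to $\psi$, so the only substantive input is the convexity of $\psi$, which is a well-known result that can be quoted.

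First I would verify $\psi(1)=0$ by direct evaluation: $\sandq_1(\rho\|\sigma)=\Tr[\sigma^0\rho\sigma^0]=\Tr[\rho]=1$ under the support hypothesis, so $\psi(1)=\log 1=0$. Next, continuous differentiability and convexity of $\psi$ on $(0,+\infty)$ are standard results established in~\cite{muller2013quantum} and~\cite{MosonyiOgawa15two}, e.g., via Hadamard's three-lines theorem applied to the analytic extension $z\mapsto\log\Tr\big[\big(\sigma^{(1-z)/(2z)}\rho\sigma^{(1-z)/(2z)}\big)^z\big]$ on a suitable strip. In fact $\psi$ is real-analytic on $(0,+\infty)$, so continuous differentiability is automatic.

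Third, I would derive the monotonicity assertions from the following elementary convex-analytic fact: if $f:(0,+\infty)\to\bR$ is convex and differentiable with $f(a)=0$, then the difference quotient $x\mapsto f(x)/(x-a)$ is monotone increasing on its domain, and $f'$ is monotone increasing by convexity. Applied to $\psi$ at $a=1$, this yields the monotonicity of both $\sand_\alpha(\rho\|\sigma)=\psi(\alpha)/(\alpha-1)$ and $\psi'(\alpha)$ on $(0,+\infty)$. Moreover, the same convex-analytic fact gives $\lim_{\alpha\to 1}\psi(\alpha)/(\alpha-1)=\psi'(1)$ and that both one-sided limits at $+\infty$, namely $\lim_{\alpha\to+\infty}\psi(\alpha)/(\alpha-1)$ and $\lim_{\alpha\to+\infty}\psi'(\alpha)$, exist in $(-\infty,+\infty]$ and coincide (this is the standard identity $\lim_{x\to\infty}f'(x)=\lim_{x\to\infty}f(x)/x$ for convex differentiable $f$).

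Finally, I would identify these limiting values with the corresponding divergences. For $\alpha=1$, a direct differentiation of $\sandq_\alpha(\rho\|\sigma)$ at $\alpha=1$ (see e.g.~\cite{muller2013quantum}) gives $\lim_{\alpha\to 1}\sand_\alpha(\rho\|\sigma)=D(\rho\|\sigma)$, which combined with the previous step identifies $\psi'(1)=D(\rho\|\sigma)$. For $\alpha=+\infty$, I would use the identity $\sandq_\alpha(\rho\|\sigma)^{1/\alpha}=\big\|\sigma^{(1-\alpha)/(2\alpha)}\rho\,\sigma^{(1-\alpha)/(2\alpha)}\big\|_\alpha$ in terms of the Schatten $\alpha$-norm: the exponent $(1-\alpha)/(2\alpha)$ tends to $-1/2$, which by continuous functional calculus on $\supp(\sigma)$ yields operator-norm convergence of $\sigma^{(1-\alpha)/(2\alpha)}\rho\,\sigma^{(1-\alpha)/(2\alpha)}$ to $\sigma^{-1/2}\rho\,\sigma^{-1/2}$, while the Schatten $\alpha$-norm of a fixed positive operator converges to its operator norm as $\alpha\to\infty$. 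Taking logarithms gives $\sand_\alpha(\rho\|\sigma)\to\log\big\|\sigma^{-1/2}\rho\,\sigma^{-1/2}\big\|_\infty=\sand_\infty(\rho\|\sigma)$, establishing~\eqref{eq:infty limits}. The only nontrivial ingredient anywhere in the proof is the convexity of $\psi$, which is already available in the cited literature, so no substantial obstacle arises.
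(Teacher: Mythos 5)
Your proposal is correct. The paper itself offers no proof of this lemma — it simply declares it well known and cites \cite{muller2013quantum,MosonyiOgawa15two} — and your sketch accurately reconstructs the standard argument those references implicitly supply: the only nontrivial input is the convexity and $C^1$-regularity of $\psi$, after which $\psi(1)=0$, the monotone-slope property of convex functions, and the routine computations of the limits at $\alpha=1$ and $\alpha=+\infty$ (via the Schatten-norm representation) give all the remaining claims.
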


\section*{Acknowledgements}
SB is supported by the Iran National Science Foundation (INSF) under project No.\ 4031370. 
MT and RR are supported by the Ministry of Education, Singapore, through grant T2EP20124-0005 and by the National Research Foundation, Singapore through the National Quantum Office, hosted in A*STAR, under its Centre for Quantum Technologies Funding Initiative (S24Q2d0009). The authors are indebted to Mil{\'a}n Mosonyi for his comments on an earlier version of the paper which, in particular, strengthened Theorem~\ref{thm:additivity-strong-converse} and the results in Appendix~\ref{app:strong-converse}.

\bibliographystyle{ultimate}
\bibliography{my}


\appendix

\section{Support condition in computing derivatives}\label{app:derivative}

This appendix is devoted to proving a lemma about the derivative of certain matrix functions when some desired support condition is not met. This lemma indeed says that in computing certain derivatives we may restrict to the support of the underlying operator. To prove this lemma, we borrow ideas from~\cite[Section VII]{friedland2011explicit}. 

\begin{lemma}\label{lem:derivative-support}
Let $\phi:\mathbb R_{>0}\to \mathbb R$ be an analytic function such that 
\begin{align}\label{eq:lemma-cond-phi}
t^2\phi(\lambda(t)) = o(t), \qquad \text{ as } t\searrow 0,
\end{align}
for any function $\lambda(t)>0$ satisfying $\lambda(t)=O(t)$. Let $\sigma$ be a non-zero positive semidefinite operator and $\xi$ be a hermitian matrix such that $\sigma+t\xi$ is positive semidefinite for small values of $t>0$. Let $P$ be the orthogonal projection on the support of $\sigma$. Then, we have\footnote{As in the main text, $\phi(X)$ for a positive semidefinite operator $X$, is defined with respect to the support of $X$.}
\begin{align}\label{lem:phi-derivative-support}
\frac{\dd}{\dd t}  P\phi(\sigma + t\xi)P \bigg|_{t=0}=\frac{\dd}{\dd t}  \phi(\sigma + tP\xi P)\bigg|_{t=0}.
\end{align}
\end{lemma}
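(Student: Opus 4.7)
The plan is to work in the block structure induced by $\mathcal H=\supp(\sigma)\oplus \supp(\sigma)^\perp$. I would write $\sigma=\sigma_P\oplus 0$ with $\sigma_P>0$, and let $\xi_{11}, \xi_{22}$ and $\xi_{12}$ denote the diagonal and upper off-diagonal blocks of $\xi$ in this splitting. Set $M(t):=\sigma+t\xi$ and $N(t):=\sigma+tP\xi P=(\sigma_P+t\xi_{11})\oplus 0$. Since $P\phi(\sigma)P=\phi(\sigma_P)\oplus 0=\phi(N(0))$, the identity \eqref{lem:phi-derivative-support} will follow once we show
\begin{equation*}
P\phi(M(t))P=\phi(N(t))+o(t),\qquad t\searrow 0,
\end{equation*}
because $\phi(N(t))$ is manifestly differentiable at $t=0$ (its nontrivial block is $\phi(\sigma_P+t\xi_{11})$ with $\sigma_P>0$). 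The positivity constraint $M(t)\geq 0$ for small $t>0$, which in particular forces $\xi_{22}\geq 0$, will ultimately control the small eigenvalues of $M(t)$.

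By Rellich's theorem on Hermitian-analytic perturbations, the eigenvalues $\lambda_i(t)$ of $M(t)$ and a choice of orthonormal eigenvectors $v_i(t)$ are real-analytic in $t$. I would partition the indices into a \emph{large} family $L$ with $\lambda_i(0)>0$ (there are $r:=\dim\supp(\sigma)$ such indices, with $v_i(0)\in\supp(\sigma)$) and a \emph{small} family $S$ with $\lambda_i(0)=0$ (with $v_i(0)\in\supp(\sigma)^\perp$). By analyticity combined with the positivity of $M(t)$, each $\lambda_i$ for $i\in S$ either vanishes identically or satisfies $\lambda_i(t)>0$ and $\lambda_i(t)=O(t)$ for $t$ small. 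For such $i$, the fact that $v_i(0)\perp\supp(\sigma)$ gives $Pv_i(t)=O(t)$ and hence $P|v_i(t)\rangle\langle v_i(t)|P=O(t^2)$; combining with the hypothesis \eqref{eq:lemma-cond-phi} (and discarding the identically-zero modes via the support convention on $\phi$) shows that the small family contributes $o(t)$ to $P\phi(M(t))P$.

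For the large family, I would compare term by term with $\phi(N(t))=\sum_{i\in L}\phi(\mu_i(t))|w_i(t)\rangle\langle w_i(t)|$, where the $\mu_i(t), w_i(t)$ are the nonzero eigenvalues and eigenvectors of $N(t)$. First-order degenerate perturbation theory on each eigenspace of $\sigma_P$ allows one to pick synchronized bases $v_i(0)=w_i(0)$, because for any spectral projection $E_\lambda$ of $\sigma$ (automatically contained in $\supp(\sigma)$) one has $E_\lambda\xi E_\lambda=E_\lambda P\xi P E_\lambda$; so the effective first-order perturbation diagonalizing the degeneracy is the same for $M$ and for $N$. With this matching, the first-order eigenvalue corrections agree, giving $\lambda_i(t)=\mu_i(t)+O(t^2)$. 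Moreover, solving the first-order eigenvector equation inside $\supp(\sigma)$ yields $Pv_i(t)=w_i(t)+O(t^2)$, since the two equations coincide after applying $P$: the only discrepancy is the part of $\xi$ living outside $\supp(\sigma)$, which is annihilated by $P$ on the right since $v_i(0)\in\supp(\sigma)$. Analyticity of $\phi$ at $\mu_i(0)>0$ then gives $\phi(\lambda_i(t))\,P|v_i(t)\rangle\langle v_i(t)|P=\phi(\mu_i(t))|w_i(t)\rangle\langle w_i(t)|+O(t^2)$; summing over $i\in L$ and combining with the $o(t)$ bound on $S$ produces the claimed expansion.

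The principal technical difficulty will be the degenerate case, where the choice of first-order-perturbation eigenbasis must be synchronized between $M(t)$ and $N(t)$; this is exactly what the identity $E_\lambda\xi E_\lambda=E_\lambda P\xi P E_\lambda$ guarantees. A minor secondary point is the verification that $\lambda_i(t)=O(t)$ for each small eigenvalue, which follows from real-analyticity together with the sign constraint $\lambda_i(t)\geq 0$: the leading nonzero Taylor coefficient must then be non-negative and of order at least one.
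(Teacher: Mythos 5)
Your proposal is correct and follows essentially the same route as the paper's proof: Rellich's analytic perturbation theorem, an $O(t^2)$ bound on the overlap of the kernel modes with $\supp(\sigma)$ combined with hypothesis~\eqref{eq:lemma-cond-phi} to dispose of the small eigenvalues, and first-order perturbation theory showing the contribution of the positive eigenvalues depends only on $P\xi P$. The only caution is that in a degenerate eigenspace the per-eigenvector matching $Pv_i(t)=w_i(t)+O(t^2)$ need not hold individually; it holds after summing over the eigenspace, because the within-eigenspace rotation generator is anti-hermitian and cancels at first order — which is precisely the cancellation the paper invokes via $X^\dagger=-X$.
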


We note that condition \eqref{eq:lemma-cond-phi} cannot be completely omitted in the lemma
as~\eqref{lem:phi-derivative-support} does not hold, e.g., for $\phi(t)=1/t$ (see Example~\ref{example:derivative} below). We also note that this condition holds for functions $\log t$ and $t^\theta$ for $0<|\theta|<1$. 

\begin{proof}
Assume with no loss of generality that
\begin{align} \sigma = \begin{pmatrix}
\sigma_0 & 0 \\ 0 & 0
\end{pmatrix},\qquad \xi = \begin{pmatrix}
\xi_0 & \nu \\ \nu^\dagger  & \xi_1
\end{pmatrix},
\end{align}
with $\sigma_0$ being full-rank matrix of size $r\times r$. 
In this case, we need to show that the top-left block of size $r\times r$ of the derivative of $\phi(\sigma + t\xi)$ at $t=0$ is equal to the derivative of $\phi(\sigma_0 + t\xi_0)$ at $t=0$.

By Rellich's theorem~\cite{Rellich} (see also~\cite{Wimmer_perturbation}), there are analytic functions $\lambda_1(t), \dots, \lambda_d(t)$ and analytic unitary matrix $U(t)$ such that 
\begin{align}
\sigma + t\xi = U(t) \diag\big(\lambda_1(t), \dots, \lambda_d(t)\big)U(t)^\dagger,
\end{align}
where we take $d$ to be the dimension of the underlying space.  
We note that by assumptions $\lambda_i(t)\geq 0$ for small values of $t\geq 0$. Moreover, since $\sigma$ has $r$ positive eigenvalues, we can assume that $\lambda_1(0), \dots, \lambda_r(0)>0$ and $\lambda_{r+1}=\cdots=\lambda_d(0)=0$. This assumption implies that $U(0)$ which diagonalizes $\sigma$, is of the form $\diag(V_0, V_1)$ where $V_0$ and $V_1$ are unitaries of size $r\times r$ and $(d-r)\times (d-r)$, respectively. Now suppose that we replace $U(t)$ with $\diag(V^\dagger_0, V^\dagger_1)U(t)$. This would change $\sigma_0$ to $V_0^\dagger \sigma_0V_0$, which is diagonal since $U(0)^\dagger \sigma U(0)$ is diagonal. As a result, we can assume with no loss of generality that
\begin{align}
U(0)=I, \quad \text{ and } \quad \sigma_0 = \diag(\lambda_1, \dots, \lambda_r),\qquad \lambda_1=\lambda_1(0), \dots, \lambda_r=\lambda_r(0)>0\,.
\end{align}

By the analyticity of $U(t)$, we can expand $U(t)$ around $t=0$ as 
\begin{equation}
    U(t)= I + t X + O(t^2).
\end{equation}
We note that $U(t)U(t)^\dagger=I$ implies 
\begin{align}\label{eq:X-anti-herm}
X^\dagger = - X.
\end{align}
That is, $X$ is anti-hermitian. In particular, its diagonal entries vanish. 
We can also expand the eigenvalues as
\begin{equation}
    \lambda_i(t) = \lambda_i + t\mu_i +O(t^2),
\end{equation}
where $\lambda_{i}=0$ for $i\geq r+1$. We then have
\begin{align}
\sigma + t\xi & = U(t) \diag(\lambda_1(t), \dots, \lambda_d(t))U(t)^\dagger\\
& = \sigma + t\bigg(  X\diag(\lambda_1, \dots, \lambda_d) + \diag(\lambda_1, \dots, \lambda_d)X^\dagger + \diag(\mu_1, \dots, \mu_d)   \bigg) + O(t^2). 
\end{align}
This implies
\begin{align}
\xi & =   X\diag(\lambda_1, \dots, \lambda_d) + \diag(\lambda_1, \dots, \lambda_d)X^\dagger + \diag(\mu_1, \dots, \mu_d).
\end{align}
Comparing the diagonal entries on both sides, we find that
\begin{equation}
    \bra i \xi \ket i = \mu_i.
\end{equation}
Also, letting $\xi'= \xi - \diag(\mu_1, \dots, \mu_d)$ we have 
\begin{align}
\xi' & =   X\diag(\lambda_1, \dots, \lambda_d) + \diag(\lambda_1, \dots, \lambda_d)X^\dagger.
\end{align}
This equation together with~\eqref{eq:X-anti-herm} implies that if $\lambda_i\neq \lambda_j$, then 
\begin{align}\label{eq:X-entries}
\bra i X\ket j = \frac{1}{\lambda_j - \lambda_i} \bra i \xi' \ket j.
\end{align}

We now turn to the analysis of the top-left block of the derivative of $\phi(\sigma + t\xi)$. Denoting the top-left $r\times r$ block of a matrix $M$ by $[M]_{r\times r}$, we have
\begin{align}
\big[\phi(\sigma + t\xi)\big]_{r\times r} & =  \big[ U(t) \diag\big(\phi\circ \lambda_1(t), \dots, \phi\circ \lambda_d(t)\big) U(t)^\dagger\big]_{r\times r}\\
& = \big[ U(t) \diag\big(\phi\circ \lambda_1(t), \dots, \phi\circ \lambda_r(t) , 0 \dots, 0\big)U(t)^\dagger\big]_{r\times r} \\
&\quad\, + \big[ U(t) \diag\big(0, \dots, 0,\phi\circ \lambda_{r+1}(t), \dots, \phi\circ \lambda_d(t)\big)U(t)^\dagger\big]_{r\times r}.
\end{align}
We first show that the second term is of order $o(t)$ and does not affect the derivative at $t=0$. To this end, letting 
\begin{equation}
    U(t) = \begin{bmatrix}
W_0(t) & V(t) \\ V'(t) & W_1(t)
\end{bmatrix},
\end{equation}
we have 
\begin{align}
&\big[ U(t) \diag\big(0, \dots, 0,\phi\circ \lambda_{r+1}(t), \dots, \phi\circ \lambda_d(t)\big)U(t)^\dagger\big]_{r\times r} \\
& =   V(t) \diag\big(\phi\circ \lambda_{r+1}(t), \dots, \phi\circ \lambda_d(t)\big)   V(t)^\dagger .
\end{align}
On the other hand, since $U(0)=I$, we have $V(t) = tZ + O(t^2),$ where $Z$ comes from
\begin{align}
X= \begin{pmatrix}
X_0 & Z\\
-Z^\dagger & X_1
\end{pmatrix}.
\end{align}
Moreover, $\lambda_i=0$ for $r+1\leq i\leq d$, which implies $\lambda_i(t)=O(t)$. Thus, by assumption, for $r+1\leq i\leq d$ we have $t^2\phi\circ\lambda_i(t) = o(t)$.
Therefore, 
\begin{align}
&\big[U(t) \diag(0, \dots, 0,\phi\circ \lambda_{r+1}(t), \dots, \phi\circ \lambda_d(t))U(t)^\dagger\big]_{r\times r} \\
& =  t^2 \big(Z + O(t)\big) \diag\big(\phi\circ \lambda_{r+1}(t), \dots, \phi\circ \lambda_d(t)\big)  \big( Z^\dagger + O(t)\big)  \\
& = o(t).
\end{align}
We conclude that 
\begin{align}
\big[ \phi(\sigma + t\xi)\big]_{r\times r} &  = \big[ U(t) \diag\big(\phi\circ \lambda_1(t), \dots, \phi\circ \lambda_r(t) , 0 \dots, 0\big)U(t)^\dagger\big]_{r\times r} + o(t)\\
&  =  W_0(t) \diag\big(\phi\circ \lambda_1(t), \dots, \phi\circ \lambda_r(t) \big)W_0(t)^\dagger + o(t).
\end{align}
We have $\lambda_i(t) = \lambda_i + t\mu_i + O(t^2)$ with $\lambda_i>0$ for $1\leq i\leq r$. Then, by the analyticity of $\phi(\cdot)$ and $\lambda_i(t)$, we have $\phi\circ \lambda_i(t) = \phi(\lambda_i) + t\mu_i\phi'(\lambda_i) + O(t^2)$. We also have $W_0(t) = I + tX_0 + O(t^2)$. Putting these together, we arrive at
\begin{align}
&\big[ \phi(\sigma + t\xi)\big]_{r\times r} \\
&  = \phi(\sigma_0) + t  \,\diag\big(\mu_1\phi'(\lambda_1), \dots, \mu_rr\phi'(\lambda_r)\big)  \\
&\quad  + t\Big( X_0\, \diag\big(\phi( \lambda_1), \dots, \phi( \lambda_r)\big)+ \diag\big(\phi(\lambda_1), \dots, \phi(\lambda_r)\big)X_0^\dagger  \Big)   
 + o(t).
\end{align}
We note that $\mu_i=\bra i\xi\ket i = \bra i\xi_0\ket i$ for $1\leq i\leq r$ depends only on $\xi_0$.  Also, by~\eqref{eq:X-entries} if $\lambda_i\neq \lambda_j$, then $\bra i X_0\ket j$ depends only on $\xi_0$. On the other hand, if $\lambda_i=\lambda_j$, by $X_0^\dagger =-X_0$ we find that 
\begin{align} \big\langle  i\big|\Big( X_0 \diag\big(\phi\circ \lambda_1(t), \dots, \phi\circ \lambda_r(t)\big)+ \diag\big(\phi(\lambda_1), \dots, \phi(\lambda_r)\big)X_0^\dagger  \Big)\big| j\big\rangle=0.
\end{align}
Therefore, all the first order terms in $[ \phi(\sigma + t\xi)]_{r\times r}$ are independent of $\nu, \xi_1$.  Thus, in computing 
\begin{equation}
    \frac{\dd}{\dd t} \big[ \phi(\sigma + t\xi)\big]_{r\times r} \bigg|_{t=0},
\end{equation}
we can simply put $\nu=0$ and $\xi_1=0$. This gives the desired identity.
\end{proof}

The following example shows that the above lemma does not hold for $\phi(t)=1/t$.

\begin{example}\label{example:derivative}
    Let $\phi(t)=1/t$, $t\in(0,+\infty)$, which fails to satisfy \eqref{eq:lemma-cond-phi}. Let
    \begin{align}
        M_t = (1-t)\begin{pmatrix}
            A & 0 \\
            0 & 0
        \end{pmatrix} + t\begin{pmatrix}
            I & I \\
            I & I
        \end{pmatrix}
        =
        \underbrace{\begin{pmatrix}
            A & 0 \\
            0 & 0
        \end{pmatrix}}_{=\sigma} + t\underbrace{\begin{pmatrix}
            I-A & I \\
            I & I
        \end{pmatrix}}_{=\xi},\qquad P:=\sigma^0=\begin{pmatrix}I & 0 \\0 & 0\end{pmatrix},
    \end{align}
    with $A=\sigma_0$ being a positive definite $r\times r$ matrix and $I$ denoting the $r\times r$ identity matrix. Note that 
    $P\phi(\sigma+t\xi)P$ and $\phi(\sigma+tP\xi P)$ both have non-zero entries only in the top left block, which we denote by $[X]_{r\times r}$ for any $2r\times 2r$ matrix $X$. Thus, our goal is to show that  
    $[M_t^{-1}]_{r\times r}$ (identified with $P\phi(\sigma+t\xi)P$) 
    and the inverse of $A_t=(1-t)A+tI$ (identified with $\phi(\sigma+tP\xi P)$) have different derivatives 
    at $t=0$. In the special case $r=1$, this follows immediately from 
    \begin{align*}
    M_t^{-1}=\begin{pmatrix}\frac{1}{(1-t)A} & -\frac{1}{(1-t)A} \\-\frac{1}{(1-t)A} & \frac{1}{(1-t)A}+\frac{1}{t}\end{pmatrix},
    \qquad t>0.
    \end{align*}
    This already shows that \eqref{lem:phi-derivative-support} does not hold for $\phi(t)=1/t$ in general. 
    The case $r>1$ can also be explicitly computed, using the formula for the inverse of $2\times 2$ block matrices (see, e.g.,~\cite[Proposition~2.8.7]{Bernstein2009}), which gives
    \begin{align}
        \big[M_t^{-1}\big]_{r\times r} = A_t^{-1} +t A_t^{-2} (I-tA_t^{-1})^{-1}.  
    \end{align}
    We note that $\frac{\dd}{\dd t}A_t = I-A$ 
    and $A_t$ commutes with $I-A$. Therefore, 
    \begin{align}
        \frac{\dd}{\dd t} A_t^{-1}\Big|_{t=0} = -A^{-2}(I-A),
    \end{align}
    and 
    \begin{align}
        \frac{\dd}{\dd t}\big[M_t^{-1}\big]_{r\times r} \Big|_{t=0} & = -A^{-2}(I-A) + A^{-2},
    \end{align}
    which are clearly not equal to each other.
\end{example}

\section{Necessary and sufficient conditions for the optimizer}
\label{app:necessary and sufficient}

In this appendix, we review the necessary and sufficient conditions for the optimizer(s) of the $\alpha$-$z$-R\'enyi relative entropies derived in~\cite{rubboli2024new}. 

Let $\rho,\tau$ be two states. We define
\begin{align}
&\Xi'_{\alpha,z}(\rho,\sigma) :=  \begin{dcases}
\chi_{\alpha,1-\alpha}(\rho,\sigma) &  \textup{if} \; z=1-\alpha \\
\sigma^{-1}  \chi_{\alpha,\alpha-1}(\rho, \sigma) \sigma^{-1}  &  \textup{if} \; z=\alpha-1 \\
\sinc\left(\pi\frac{{1-\alpha}}{z}\right) \int_0^\infty (\sigma + t)^{-1}\chi_{\alpha,z}(\rho,\sigma) (\sigma + t)^{-1} t^\frac{1-\alpha}{z} \d t & \textup{if} \; |(1-\alpha)/z| \neq 1
\end{dcases} \,,
\end{align}
where
\begin{equation}
    \chi_{\alpha,z}(\rho,\sigma):=  \rho^\frac{\alpha}{2z}( \rho^{\frac{\alpha}{2z}}\sigma^{\frac{1-\alpha}{z}} \rho^{\frac{\alpha}{2z}})^{z-1} \rho^\frac{\alpha}{2z} \,.
\end{equation}
Here, all inverses are defined as generalized inverses. Moreover, we define
\begin{align}
S_{\alpha,z}(\rho) =  \begin{cases} 
\left\{ \sigma \; \text{is a state} : \supp(\rho) = \supp\!\left(\rho^0 \sigma \rho^0\right) \right\}  &  \textup{if} \; (1-\alpha)/z = 1,\\
\left\{ \sigma \; \text{is a state} : \supp(\rho) \subseteq \supp(\sigma)  \right\} &  \textup{otherwise}. 
\end{cases}
\end{align} 

The observation about the supports in Appendix~\ref{app:derivative} together with the results in~\cite[Theorem 4]{rubboli2024new} yield the following theorem.\footnote{Note that in~\cite[Theorem 4]{rubboli2024new}, the condition is stated as an inequality for all $\sigma \in \cF$. However, this is equivalent to verifying the inequality at the supremum over all $\sigma \in \cF$. Moreover, observe that the inequality is saturated at $\tau = \sigma_0$. Finally, we observe that although the set $\cF$ is assumed to consist of states, the same proof remains valid when $\cF$ is taken to be a set of positive operators.}
\begin{theorem}
\label{main Theorem}
Let $(\alpha,z) \in \mathcal{D}$, $z \neq \alpha-1$, $\rho$ be a quantum state and $\cF$ be a convex and compact subset of positive operators that includes at least one state whose support contains the support of $\rho$. Then, $\sigma_0 \in \argmin_{\sigma \in \cF} D_{\alpha,z}(\rho \| \sigma)$ if and only if $\sigma_0 \in S_{\alpha,z}(\rho)$ and 
\begin{equation}
    \sup_{\tau \in \cF}\textup{Tr}[\tau\, \Xi'_{\alpha,z}(\rho,\sigma_0)] =  Q_{\alpha,z}(\rho \|\sigma_0) \,.
\end{equation}
Moreover, if $\rho$ is full-rank, the same holds for $z=\alpha-1$.
\end{theorem}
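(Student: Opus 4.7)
The plan is to apply the standard first-order optimality condition for convex minimization, together with a careful treatment of supports when $\rho$ is not full-rank. For $(\alpha,z)\in\mathcal{D}$, the map $\sigma\mapsto D_{\alpha,z}(\rho\|\sigma)$ is convex on the set of positive operators whose support contains $\supp(\rho)$, and the assumption on $\cS$ guarantees finiteness at at least one feasible point, so any minimizer $\sigma_0$ necessarily lies in $S_{\alpha,z}(\rho)$. Convexity and compactness then reduce the optimality of $\sigma_0$ to non-negativity of the right directional derivative
\begin{equation}
\frac{\dd}{\dd t}\, D_{\alpha,z}\bigl(\rho \,\big\|\, (1-t)\sigma_0 + t\tau\bigr)\Big|_{t=0^+} \geq 0, \qquad \forall\, \tau \in \cS.
\end{equation}
Writing $D_{\alpha,z}=\tfrac{1}{\alpha-1}\log Q_{\alpha,z}$ and using that $\alpha-1$ and $Q_{\alpha,z}(\rho\|\sigma_0)-1$ share the appropriate sign structure, this translates into $DQ_{\alpha,z}(\sigma_0)[\tau-\sigma_0]\ge 0$ when $\alpha>1$ and $\le 0$ when $\alpha<1$.

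Next I would compute $DQ_{\alpha,z}$ via the chain rule: the derivative of the outer $x\mapsto x^z$ together with cyclicity of the trace packages the outer factors into $z\,\chi_{\alpha,z}(\rho,\sigma_0)$, so that
\begin{equation}
DQ_{\alpha,z}(\sigma_0)[\xi] = z\,\Tr\bigl[\chi_{\alpha,z}(\rho,\sigma_0)\, Dx^{(1-\alpha)/z}(\sigma_0)[\xi]\bigr].
\end{equation}
Substituting the integral representation of the Fr\'echet derivative of the power function from Corollary~\ref{Frechet derivative power} and absorbing $z\cdot\tfrac{1-\alpha}{z}=1-\alpha$ yields $DQ_{\alpha,z}(\sigma_0)[\xi]=(1-\alpha)\,\Tr[\xi\,\Xi'_{\alpha,z}(\rho,\sigma_0)]$. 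The optimality condition therefore becomes $\Tr[\tau\,\Xi'_{\alpha,z}(\rho,\sigma_0)]\le \Tr[\sigma_0\,\Xi'_{\alpha,z}(\rho,\sigma_0)]$ for all $\tau\in\cS$ (the sign flip when $\alpha<1$ cancels against $1-\alpha$). A scaling computation $Q_{\alpha,z}(\rho\|\lambda\sigma)=\lambda^{1-\alpha}Q_{\alpha,z}(\rho\|\sigma)$ gives $DQ_{\alpha,z}(\sigma_0)[\sigma_0]=(1-\alpha)\,Q_{\alpha,z}(\rho\|\sigma_0)$, whence $\Tr[\sigma_0\,\Xi'_{\alpha,z}(\rho,\sigma_0)]=Q_{\alpha,z}(\rho\|\sigma_0)$, upgrading the inequality to the claimed equality $\sup_{\tau\in\cS}\Tr[\tau\,\Xi'_{\alpha,z}(\rho,\sigma_0)]=Q_{\alpha,z}(\rho\|\sigma_0)$. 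The three cases in the definition of $\Xi'_{\alpha,z}$ correspond to the exponent $\theta:=(1-\alpha)/z\in\{-1,+1\}$ (where the power is $\sigma$ or $\sigma^{-1}$ and no integral representation is needed) and the generic $|\theta|<1$ case covered by Corollary~\ref{Frechet derivative power}.

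The main technical obstacle is the case where $\rho$ is not full-rank: then $\chi_{\alpha,z}$ and $\Xi'_{\alpha,z}$ are built from generalized inverses of $\sigma_0$, and one must verify that the actual pinched first-order perturbation of $\sigma\mapsto\sigma^{(1-\alpha)/z}$ at $\sigma_0$ coincides with the formula obtained by restricting all inputs to $\supp(\sigma_0)$. This is precisely the content of Lemma~\ref{lem:derivative-support}: the scalar function $\phi(x)=x^{\theta}$ with $0<|\theta|<1$ satisfies the hypothesis $t^2\phi(\lambda(t))=o(t)$ whenever $\lambda(t)=O(t)$, so the perturbation direction may be projected onto $\supp(\sigma_0)$ before differentiating without changing the pinched derivative. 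The excluded case $z=\alpha-1$ corresponds to $\theta=-1$, i.e.\ $\phi(x)=1/x$, which by Example~\ref{example:derivative} violates the lemma's hypothesis; the support-restriction argument collapses, which is precisely why the theorem requires $\rho$ to be full-rank there (so that the restriction becomes vacuous). Assembling these ingredients recovers~\cite[Theorem~4]{rubboli2024new} under the mildly weaker support hypothesis stated here.
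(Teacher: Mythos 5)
Your proposal follows essentially the same route as the paper, which establishes this theorem by invoking the first-order optimality characterization of~\cite[Theorem~4]{rubboli2024new} together with the support observation of Lemma~\ref{lem:derivative-support}; your reconstruction — chain rule for $Q_{\alpha,z}$, the integral representation of $Dx^{\theta}$ from Corollary~\ref{Frechet derivative power}, the scaling identity giving $\Tr[\sigma_0\,\Xi'_{\alpha,z}(\rho,\sigma_0)]=Q_{\alpha,z}(\rho\|\sigma_0)$, and the support lemma (with the correct diagnosis of why $\theta=-1$ forces the full-rank caveat) — is sound in all of its main steps.

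One justification, however, does not hold as written. For $0<\alpha<1$ the quantity $D_{\alpha,z}(\rho\|\sigma)$ is finite whenever $\rho\not\perp\sigma$, so finiteness of the minimum does \emph{not} force $\supp(\rho)\subseteq\supp(\sigma_0)$ (nor the weaker condition $\supp(\rho)=\supp(\rho^0\sigma_0\rho^0)$ when $z=1-\alpha$). The necessity of $\sigma_0\in S_{\alpha,z}(\rho)$ in that regime is a separate claim requiring its own perturbation argument — roughly, if the support condition fails, moving from $\sigma_0$ toward the element of $\cS$ whose support contains $\supp(\rho)$ strictly improves the objective — and this is where the hypothesis on $\cS$ actually enters, rather than through finiteness. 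Two smaller points: for $\alpha>1$ it is $Q_{\alpha,z}$, not $D_{\alpha,z}=\tfrac{1}{\alpha-1}\log Q_{\alpha,z}$, that is convex in $\sigma$ (the log of a convex function need not be convex), so the convex-optimization structure should be phrased directly in terms of $Q_{\alpha,z}$; and the sign bookkeeping passes through the factor $\tfrac{1}{(\alpha-1)Q_{\alpha,z}}$ with $Q_{\alpha,z}>0$, not through the sign of "$Q_{\alpha,z}(\rho\|\sigma_0)-1$", which is irrelevant here.
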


In the case the state $\rho$ commutes with the optimizer, the above theorem considerably simplifies as follows. 

\begin{corollary}
\label{commuting}
Let $(\alpha,z) \in \mathcal{D}$, $z \neq \alpha-1$,  $\rho$ be a quantum state, and $\cF$ be a convex and compact subset of positive operators that includes at least one state whose support contains the support of $\rho$. Then, a state $\sigma_0$ satisfying $[\rho,\sigma_0]= 0$ satisfies $\sigma_0 \in \argmin_{\sigma \in \mathcal{F}} D_{\alpha,z}(\rho \| \sigma)$ if and only if $\supp(\rho)\subseteq \supp(\sigma_0)$ and
\begin{equation}
    \sup_{\tau\in \cF}\textup{Tr}[\tau\rho^\alpha \sigma_0^{-\alpha}] =  \textup{Tr}[\rho^\alpha \sigma_0^{1-\alpha}] \,.
\end{equation}
Moreover, if $\rho$ is full-rank, the same holds for $z=\alpha-1$.
\end{corollary}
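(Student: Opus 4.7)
The plan is to invoke Theorem~\ref{main Theorem} and show that, under the commutativity assumption $[\rho,\sigma_0]=0$, both sides of its optimality condition collapse to the expressions stated in the corollary. Concretely, I will establish the two identities
\begin{align}
Q_{\alpha,z}(\rho\|\sigma_0) = \Trm\bigl[\rho^{\alpha}\sigma_0^{1-\alpha}\bigr],
\qquad
\Xi'_{\alpha,z}(\rho,\sigma_0) = \rho^{\alpha}\sigma_0^{-\alpha},
\end{align}
where negative powers are taken in the generalized sense on $\supp(\sigma_0)$. Granted these, the condition $\sup_{\tau\in\cS}\Trm[\tau\,\Xi'_{\alpha,z}(\rho,\sigma_0)]=Q_{\alpha,z}(\rho\|\sigma_0)$ is exactly the one in the corollary; and the support hypothesis $\sigma_0\in S_{\alpha,z}(\rho)$ reduces to $\supp(\rho)\subseteq\supp(\sigma_0)$ since in the borderline case $(1-\alpha)/z=1$ one has $\rho^0\sigma_0\rho^0=\sigma_0\rho^0$ (as $\sigma_0$ commutes with $\rho^0$), whose range equals $\supp(\rho)$ precisely when $\sigma_0$ is invertible on $\supp(\rho)$.

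The first identity is immediate: when $\rho$ and $\sigma_0$ commute, so do $\rho^{\alpha/(2z)}$ and $\sigma_0^{(1-\alpha)/z}$, and therefore $(\rho^{\alpha/(2z)}\sigma_0^{(1-\alpha)/z}\rho^{\alpha/(2z)})^{z}=\rho^{\alpha}\sigma_0^{1-\alpha}$. The same commutativity yields
\begin{align}
\chi_{\alpha,z}(\rho,\sigma_0) = \rho^{\alpha}\,\sigma_0^{(1-\alpha)(z-1)/z}.
\end{align}
Substituting this into the piecewise definition of $\Xi'_{\alpha,z}$ from Appendix~\ref{app:necessary and sufficient} handles the cases $z=1-\alpha$ and $z=\alpha-1$ directly, since the exponents collapse algebraically: for $z=1-\alpha$ one gets $\sigma_0^{-\alpha}$, and for $z=\alpha-1$ the two factors of $\sigma_0^{-1}$ shift the exponent $(1-\alpha)(z-1)/z=2-\alpha$ down to $-\alpha$.

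The main obstacle — but essentially the only non-trivial step — is the generic case $|(1-\alpha)/z|\neq 1$, where $\Xi'$ is given by an integral. Setting $\theta=(1-\alpha)/z\in(-1,1)$ and using commutativity, I will write
\begin{align}
\Xi'_{\alpha,z}(\rho,\sigma_0)
= \sinc(\theta\pi)\,\rho^{\alpha}\,\sigma_0^{(1-\alpha)(z-1)/z}\int_0^{\infty}(\sigma_0+t)^{-2}t^{\theta}\,\dd t,
\end{align}
and then evaluate the scalar integral eigenvalue-by-eigenvalue on $\supp(\sigma_0)$. The substitution $t=\mu u$ for an eigenvalue $\mu>0$ of $\sigma_0$ reduces it to $\mu^{\theta-1}\int_0^{\infty}u^{\theta}(1+u)^{-2}\dd u=\mu^{\theta-1}\,\Gamma(1+\theta)\Gamma(1-\theta)$, and Euler's reflection formula $\Gamma(1+\theta)\Gamma(1-\theta)=\theta\pi/\sin(\theta\pi)$ cancels exactly against $\sinc(\theta\pi)$. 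Since the remaining factor $\sigma_0^{\theta-1}$ combines with $\sigma_0^{(1-\alpha)(z-1)/z}=\sigma_0^{1-\alpha-\theta}$ to produce $\sigma_0^{-\alpha}$, I obtain $\Xi'_{\alpha,z}(\rho,\sigma_0)=\rho^{\alpha}\sigma_0^{-\alpha}$, as required. The final clause of the corollary (the case $z=\alpha-1$ when $\rho$ is full-rank) follows without modification from the corresponding clause of Theorem~\ref{main Theorem}.
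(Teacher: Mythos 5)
Your proposal is correct and follows exactly the route the paper intends (the paper states this corollary as an immediate consequence of Theorem~\ref{main Theorem}, leaving the commutative simplification implicit): you correctly reduce $Q_{\alpha,z}(\rho\|\sigma_0)$ to $\Trm[\rho^\alpha\sigma_0^{1-\alpha}]$ and $\Xi'_{\alpha,z}(\rho,\sigma_0)$ to $\rho^\alpha\sigma_0^{-\alpha}$ in all three branches, with the Beta-integral/reflection-formula evaluation matching the paper's own identity~\eqref{eq:int-rep-1-alpha-derivative}, and the support-condition reduction in the borderline case $(1-\alpha)/z=1$ is also handled correctly. No gaps.
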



\section{Continuity of the minimized divergences}\label{app:continuity}

In the following, we establish the continuity of the function $\rho \mapsto \inf_{\sigma \in \cF} D_{\alpha,z}(\rho\|\sigma)$ whenever the set $\cF$ contains a full-rank state. This result is particularly useful because it circumvents technical difficulties associated with states lacking full support and enables the extension of results valid for full-rank states to those without full support.
Our proof is based on the ideas in the proof of~\cite[Theorem 1]{lami2024continuity} and~\cite[Theorem 5]{lami2023attainability}. 

\begin{prop}\label{Continuity}
Let $\cF$ be a closed convex set of quantum states. Then, for any $\alpha,z\in(0,+\infty)$, 
$f_{\alpha,z}(\rho):=\inf_{\sigma\in\cF}D_{\alpha,z}(\rho\|\sigma)$ is lower semi-continuous on the set of quantum states. Moreover, it is upper semi-continuous if either of the following holds: 
\begin{enumerate}[label=(\roman*)] 
\item $\alpha\in(0,1)$ 
\item $\alpha>1$, $\max\{\alpha/2,\alpha-1\}\leq z\leq \alpha$, and $\cF$ contains a full-rank state.
\end{enumerate}
\end{prop}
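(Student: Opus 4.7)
The plan is to treat lower and upper semi-continuity separately. For lower semi-continuity, I would combine the (well-known) joint lower semi-continuity of $(\rho,\sigma)\mapsto D_{\alpha,z}(\rho\|\sigma)$ with the compactness of $\cS$---a closed subset of the compact set of quantum states. Given $\rho_n\to\rho$, extract a subsequence along which $f_{\alpha,z}(\rho_{n_k})\to\liminf_n f_{\alpha,z}(\rho_n)$, pick near-optimizers $\sigma_{n_k}\in\cS$, and by compactness extract a further convergent sub-subsequence $\sigma_{n_{k_j}}\to\sigma^\star\in\cS$; joint lower semi-continuity then yields $\liminf_n f_{\alpha,z}(\rho_n)\ge D_{\alpha,z}(\rho\|\sigma^\star)\ge f_{\alpha,z}(\rho)$.

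For upper semi-continuity in case~(i), the key observation is that for $\alpha\in(0,1)$ both exponents $\alpha/(2z)$ and $(1-\alpha)/z$ appearing in the definition of $D_{\alpha,z}$ are positive, so no singularity arises from the kernel of $\sigma$. Assuming $f_{\alpha,z}(\rho)<+\infty$ (else nothing to prove), I would fix $\delta>0$ and a near-optimizer $\sigma^\star\in\cS$ with $D_{\alpha,z}(\rho\|\sigma^\star)\le f_{\alpha,z}(\rho)+\delta$, so that $\rho\not\perp\sigma^\star$. Continuity of $\rho'\mapsto\rho'^{p}$ on positive semidefinite matrices for any $p>0$, together with continuity of $X\mapsto\textup{Tr}[X^z]$ for $z>0$, will imply that $\rho'\mapsto D_{\alpha,z}(\rho'\|\sigma^\star)$ is continuous at $\rho$---the non-orthogonality ensures the argument of the logarithm stays bounded away from zero. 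This gives $\limsup_n f_{\alpha,z}(\rho_n)\le\lim_n D_{\alpha,z}(\rho_n\|\sigma^\star)\le f_{\alpha,z}(\rho)+\delta$, and $\delta\to 0$ will close the case.

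For upper semi-continuity in case~(ii), the obstacle is that $(1-\alpha)/z<0$, so when the near-optimizer $\sigma^\star\in\cS$ is not full-rank, $\rho'\mapsto D_{\alpha,z}(\rho'\|\sigma^\star)$ may fail to be continuous at $\rho$: approximating states $\rho_n$ whose support exceeds $\supp(\sigma^\star)$ have divergence $+\infty$. My plan is to regularize using the full-rank state $\omega\in\cS$ guaranteed by assumption: set $\sigma^\star_\varepsilon:=(1-\varepsilon)\sigma^\star+\varepsilon\omega\in\cS$, which is positive definite for every $\varepsilon>0$, so the continuity argument from case~(i) applies to $\sigma^\star_\varepsilon$ and gives $\limsup_n f_{\alpha,z}(\rho_n)\le D_{\alpha,z}(\rho\|\sigma^\star_\varepsilon)$. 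Combining $\sigma^\star_\varepsilon\ge(1-\varepsilon)\sigma^\star$ with the monotonicity $\sigma\le\sigma'\Longrightarrow D_{\alpha,z}(\rho\|\sigma)\ge D_{\alpha,z}(\rho\|\sigma')$---a standard consequence of DPI (valid on the range of $(\alpha,z)$ in case~(ii)) obtained via the block-diagonal embedding $\rho\mapsto\rho\otimes\ketbra{0}{0}$, $\sigma\mapsto\sigma\otimes\ketbra{0}{0}+(\sigma'-\sigma)\otimes\ketbra{1}{1}$ followed by partial tracing---and the scaling identity $D_{\alpha,z}(\rho\|c\sigma)=D_{\alpha,z}(\rho\|\sigma)-\log c$ will yield
\begin{align*}
D_{\alpha,z}(\rho\|\sigma^\star_\varepsilon)\;\le\;D_{\alpha,z}\big(\rho\,\big\|\,(1-\varepsilon)\sigma^\star\big)\;=\;D_{\alpha,z}(\rho\|\sigma^\star)-\log(1-\varepsilon).
\end{align*}
Sending $\varepsilon\to 0$ and then $\delta\to 0$ will conclude. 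The main obstacle is precisely case~(ii): the convex-combination regularization with a full-rank element of $\cS$, together with monotonicity of the divergence in the second argument, is the key technical trick that bypasses the loss of control over $D_{\alpha,z}(\rho_n\|\sigma^\star)$ when $\supp(\rho_n)\not\subseteq\supp(\sigma^\star)$.
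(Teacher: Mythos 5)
Your proposal is correct, and for lower semi-continuity and case~(i) it is essentially the paper's argument (the paper routes both through a cited lemma on infima of lower/upper semi-continuous functions over compact sets, plus the observation that $D_{\alpha,z}(\rho\|\sigma)=\sup_{\ep>0}D_{\alpha,z}(\rho\|\sigma+\ep I)$ is a supremum of continuous functions; you instead assert joint lower semi-continuity and do the compactness extraction by hand, which is the same content).

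For case~(ii) you take a genuinely different route. The paper sets $Y_\ep(\rho):=\inf_{\sigma\in\cS}D_{\alpha,z}\big(\rho\big\|(1-\ep)\sigma+\ep\eta\big)$ with $\eta\in\cS$ full rank, uses \emph{convexity of $D_{\alpha,z}$ in the second argument} (valid on the stated range) to get $Y_\ep(\rho)\le(1-\ep)f_{\alpha,z}(\rho)+\ep D_{\alpha,z}(\rho\|\eta)$ and convexity of $\cS$ to get $f_{\alpha,z}\le Y_\ep$, concluding that $f_{\alpha,z}=\inf_{\ep}Y_\ep$ is an infimum of continuous functions, hence upper semi-continuous. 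You instead fix a near-optimizer $\sigma^\star$, regularize it to $\sigma^\star_\ep=(1-\ep)\sigma^\star+\ep\omega$, and control $D_{\alpha,z}(\rho\|\sigma^\star_\ep)$ via \emph{antitonicity in the second argument} (derived from DPI through the block-diagonal embedding, valid on the same range) together with the scaling identity $D_{\alpha,z}(\rho\|c\sigma)=D_{\alpha,z}(\rho\|\sigma)-\log c$. Both ingredient sets are standard and hold precisely on the range in~(ii); the paper's version is slightly more economical in that it needs only one structural property (convexity in $\sigma$) and avoids the DPI-based monotonicity lemma, while yours has the minor advantage of making the quantitative error $-\log(1-\ep)$ explicit and of not requiring convexity of the divergence. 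Either way the argument closes.
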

\begin{proof}
For any $\ep>0$, the function $g_{\alpha,z,\ep}(\rho,\sigma):= D_{\alpha,z}(\rho\|\sigma+\ep I)$ is continuous, whence 
$D_{\alpha,z}(\rho\|\sigma)=\lim_{\ep\searrow 0}D_{\alpha,z}(\rho\|\sigma+\ep I)=\sup_{\ep> 0}D_{\alpha,z}(\rho\|\sigma+\ep I)$
is lower semi-continuous. Therefore, $f$ is also lower semi-continuous 
(see, e.g., \cite[Lemma 2.3]{mosonyi2022geometric}). 

We note that $g_{\alpha,z,0}(\rho, \sigma)$ is
continuous when $\alpha\in(0,1)$, and therefore $f_{\alpha,z}$ is also continuous (see again \cite[Lemma 2.3]{mosonyi2022geometric}).

Hence, we only need to prove upper semi-continuity of $f$ when condition (ii) holds. In this case, we follow
the approach in the proof of~\cite[Theorem 1]{lami2024continuity}. Fix a full-rank state $\eta\in \cF$ and for any $\ep \in (0,1)$ let
\begin{equation}\label{eq:Y-eps-convexity}
Y_{\ep}(\rho):=\inf_{\sigma \in \cF} D_{\alpha,z}\big(\rho\big\|(1-\ep)\sigma+\ep\eta\big)
\le
(1-\ep) \inf_{\sigma \in\cF} D_{\alpha,z}\big(\rho\big\|\sigma\big)+ \ep D_{\alpha,z}\big(\rho\big\|\eta\big),
\end{equation}
where the inequality follows from the convexity of $D_{\alpha,z}(\cdot \| \cdot)$ in its second argument in the given range of 
$(\alpha,z)$. We note that $f_{\alpha,z}(\rho)\le Y_\ep(\rho)$ since by the convexity of $\cF$ states of the form $(1-\ep)\sigma + \ep \eta$ belong to $\cF$. Thus,
\begin{align}
 f_{\alpha,z}(\rho)\le \inf_{\ep\in(0,1)} Y_{\ep}(\rho)\le \liminf_{\ep\searrow 0} Y_{\ep}(\rho)\le f_{\alpha,z}(\rho),
    \end{align}
where the latter inequality follows from~\eqref{eq:Y-eps-convexity}.     
Hence, all the above inequalities hold with equality and
\begin{equation}
f_{\alpha,z}(\rho)=\inf_{\ep \in (0,1)}\inf_{\sigma\in\cF}\frac{1}{\alpha-1}\log Q_{\alpha,z}\big(\rho\big\|(1-\ep)\sigma+\ep\eta\big).
\end{equation}
Since $\rho \mapsto D_{\alpha,z}\big(\rho\big\|(1-\ep)\sigma+\ep\eta\big)$ is continuous for any fixed $\ep\in(0,1)$ and $\sigma\in\cF$, we find that $\rho\mapsto f_{\alpha,z}(\rho)$ is upper semi-continuous. 
\end{proof}


\section{Lower bounds for Stein exponent}\label{app:lower-Stein-exp}
In this appendix, we present a single-letter lower bound for the Stein exponent. This lower bound is in terms of additive quantities similar to those explored in Theorem~\ref{Additivity in the line}. Interestingly, our lower bound reduces to the KL divergence and is tight in the classical case. Such single-letter bounds may be obtained for the other error exponents by similar ideas.

Following the notation in Section~\ref{sec:additivity}, the R\'enyi divergence $D_{\alpha,1-\alpha}(\rho\|\sigma)$ is called the reversed sandwiched R\'enyi relative entropy. It does not satisfy the data-processing inequality for $\alpha \in [1/2,1)$, yet the condition for the minimum in Theorem~\ref{thm:derivative-condition} and the additivity result in Theorem~\ref{Additivity in the line} can be extended to this range since the corresponding optimization problem is convex. The point is that by the operator concavity of $x\mapsto x^{1-\alpha}$ for $\alpha\in(0,1)$, the function
\begin{align}
\sigma\mapsto Q_{\alpha,1-\alpha}(\rho\|\sigma)=\Trm\Big[\Big(\rho^\frac{\alpha}{2(1-\alpha)}\sigma\rho^\frac{\alpha}{2(1-\alpha)}\Big)^{1-\alpha}\Big]
\end{align} 
is concave.

The limit $\alpha \rightarrow 1^-$ of these relative entropies is given by~\cite[Section 3]{audenaert2015alpha}
\begin{align}\label{eq:limit-D-down}
\lim_{\alpha \rightarrow 1^-}D_{\alpha,1-\alpha}(\rho\|\sigma) = D^{\downarrow}\big(\rho\big\|\sigma):= &D(\rho\|\diag_{\rho}(\nu_1,\nu_2 /\nu_1,\dots,\nu_d /\nu_{d-1})\big).
\end{align}
Here, $d$ is the dimension of the underlying Hilbert space, and $\nu_k = \det (\sigma_{1:k,1:k})$ where $\sigma_{1:k,1:k}$ is the $k\times k$ matrix consisting of the first $k$ rows and $k$ columns of $\sigma$ in the eigenbasis of $\rho$. 

Another interesting property of the reversed sandwiched R\'enyi relative entropy is that 
 $D_{\alpha,1-\alpha}(\rho\|\sigma)$ $ \leq D_{\alpha,1}(\rho\|\sigma)$~\cite[Proposition 6]{lin2015investigating}. Then, the above equation and $\lim_{\alpha\to 1} D_{\alpha, 1}(\rho\|\sigma) = D(\rho\|\sigma)$ implies  $ D^{\downarrow}(\rho\| \sigma) \leq D(\rho\| \sigma)$.

\begin{lemma}\label{lem:D-downarrow-additive}
Suppose that Assumption~\ref{assumption:polar} and Assumption~\ref{assumption:support} are satisfied. Then, we have
 \begin{align}
 \lim_{\alpha \rightarrow 1^-}\min_{\sigma \in \mathcal{F}_1}D_{\alpha,1-\alpha}(\rho\|\sigma) = \min_{\sigma \in \mathcal{F}_1} D^{\downarrow}(\rho\|\sigma).    
 \end{align} 
Moreover, $\min_{\sigma^{\n}\in \cF_n} D^{\downarrow} \big(\rho^{\otimes n}\big\| \sigma^{\n}\big) = n \min_{\sigma\in \cF_1} D^\downarrow (\rho\| \sigma)$. 
\end{lemma}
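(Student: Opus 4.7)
The plan is to first establish part~1 (the exchange of limit and minimum) by a compactness argument on $\mathcal{S}_1$, and then bootstrap to part~2 by applying the single-$\alpha$ additivity of Theorem~\ref{Additivity in the line} to $(\alpha,z)=(\alpha,1-\alpha)$ and passing to the limit $\alpha\to 1^-$. The starting observation is the pointwise identity~\eqref{eq:limit-D-down} combined with the fact (standard for R\'enyi-type divergences on the diagonal $z=1-\alpha$) that $\alpha\mapsto D_{\alpha,1-\alpha}(\rho\|\sigma)$ is monotone non-decreasing on $(0,1)$, so that $D^\downarrow(\rho\|\sigma)=\sup_{\alpha\in(0,1)}D_{\alpha,1-\alpha}(\rho\|\sigma)$. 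The easy direction $\lim_{\alpha\to 1^-}\min_\sigma D_{\alpha,1-\alpha}(\rho\|\sigma)\leq \min_\sigma D^\downarrow(\rho\|\sigma)$ is then immediate: picking $\sigma^*\in\argmin_\sigma D^\downarrow(\rho\|\sigma)$ (which exists by lower semi-continuity of $D^\downarrow(\rho\|\cdot)$ and compactness of $\mathcal{S}_1$) gives $\min_\sigma D_{\alpha,1-\alpha}(\rho\|\sigma)\leq D_{\alpha,1-\alpha}(\rho\|\sigma^*)\leq D^\downarrow(\rho\|\sigma^*)=\min_\sigma D^\downarrow(\rho\|\sigma)$ for every $\alpha<1$.

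For the reverse direction, I would pick minimizers $\sigma_\alpha^*\in\argmin_\sigma D_{\alpha,1-\alpha}(\rho\|\sigma)$ (which exist by the convexity of the objective on the compact $\mathcal{S}_1$) and, along a subsequence, extract a limit point $\sigma_{\alpha_k}^*\to\sigma^\star\in\mathcal{S}_1$. Then, for any fixed $\beta<1$ and every $\alpha_k>\beta$, monotonicity yields
\begin{equation*}
D_{\beta,1-\beta}(\rho\|\sigma_{\alpha_k}^*)\leq D_{\alpha_k,1-\alpha_k}(\rho\|\sigma_{\alpha_k}^*) = \min_\sigma D_{\alpha_k,1-\alpha_k}(\rho\|\sigma).
\end{equation*}
Sending $k\to\infty$ (using continuity of $\sigma\mapsto D_{\beta,1-\beta}(\rho\|\sigma)$ on $\mathcal{S}_1$, which follows from an adaptation of Proposition~\ref{Continuity} to the second argument via Assumption~\ref{assumption:support}) and then $\beta\to 1^-$ gives $D^\downarrow(\rho\|\sigma^\star)\leq \lim_{\alpha\to 1^-}\min_\sigma D_{\alpha,1-\alpha}(\rho\|\sigma)$, which combined with $\min_\sigma D^\downarrow(\rho\|\sigma)\leq D^\downarrow(\rho\|\sigma^\star)$ closes the argument.

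For part~2, Theorem~\ref{Additivity in the line} applied to $(\alpha,z)=(\alpha,1-\alpha)$ — a valid instance for every $\alpha\in(0,1)$ by the extension noted at the start of this appendix, since the underlying optimization remains convex — yields the single-$\alpha$ weak additivity
\begin{equation*}
\min_{\sigma^{\n}\in\cS_n} D_{\alpha,1-\alpha}\big(\rho^{\otimes n}\big\|\sigma^{\n}\big) = n\min_{\sigma\in\cS_1} D_{\alpha,1-\alpha}(\rho\|\sigma)
\end{equation*}
for every $\alpha<1$. Passing to the limit $\alpha\to 1^-$ and invoking part~1 on both sides (applied to $\rho$ over $\cS_1$ and, separately, to $\rho^{\otimes n}$ over $\cS_n$, the latter inheriting the required assumptions from $\{\cS_n\}_{n\in \N}$) gives the desired additivity of $D^\downarrow$. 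The main technical obstacle is the joint handling of the limits in $\alpha$ and in $\sigma$ inside part~1, which is why the argument proceeds via an auxiliary parameter $\beta<1$ rather than a direct diagonal passage to the limit; the monotonicity of $\alpha\mapsto D_{\alpha,1-\alpha}(\rho\|\sigma)$ and continuity of $\sigma\mapsto D_{\beta,1-\beta}(\rho\|\sigma)$ on the full set $\mathcal{S}_1$ (including deficient-support states) are the two ingredients one must verify carefully, the latter by the same mollification trick used in Proposition~\ref{Continuity}.
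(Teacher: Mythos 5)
Your proposal is correct and follows the same overall structure as the paper's proof: monotonicity of $\alpha\mapsto D_{\alpha,1-\alpha}(\rho\|\sigma)$ turns the limit into a supremum, this supremum is exchanged with the minimum over $\sigma$, the pointwise identity~\eqref{eq:limit-D-down} identifies the result with $D^{\downarrow}$, and part~2 is obtained by applying the fixed-$\alpha$ additivity of Theorem~\ref{Additivity in the line} (in its extended range) and then part~1 at both the single-copy and $n$-copy levels. The only genuine difference is in how the sup--min exchange is justified: the paper cites an abstract minimax theorem together with lower semicontinuity of $\sigma\mapsto D_{\alpha,1-\alpha}(\rho\|\sigma)$, whereas you prove the nontrivial inequality by hand, extracting a convergent subsequence of minimizers $\sigma_{\alpha_k}^{*}\to\sigma^{\star}$ and comparing against a fixed $\beta<1$ before letting $\beta\to 1^-$. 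Your route is more elementary and self-contained; two small remarks on it. First, for the passage $k\to\infty$ you only need lower semicontinuity of $\sigma\mapsto D_{\beta,1-\beta}(\rho\|\sigma)$ (to get $D_{\beta,1-\beta}(\rho\|\sigma^{\star})\le\liminf_k D_{\beta,1-\beta}(\rho\|\sigma_{\alpha_k}^{*})$), which is exactly the ingredient the paper cites; the full continuity you invoke is stronger than required and would need more care for deficient-support states, so it is cleaner to rely on lower semicontinuity alone. Second, the existence of the minimizers $\sigma_\alpha^{*}$ follows from lower semicontinuity plus compactness of $\mathcal{S}_1$, not from convexity of the objective as you state; this is a cosmetic slip that does not affect the argument.
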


\begin{proof}
First, we note that $\alpha \mapsto D_{\alpha,1-\alpha}(\rho \| \sigma)$ is monotonically increasing on $(0,1)$~\cite[Lemma 25]{rubboli2024mixed}. Thus, the limit $\alpha\to 1^-$ can be replaced with a supremum over $\alpha\in (0,1)$. On the other hand, the order of supremum and minimum can be exchanged due to the minimax theorem~\cite[Lemma II.1]{mosonyi2022some} and the lower semicontinuity of the $\alpha$-$z$-R\'enyi relative entropies for $z=1-\alpha$ (see, e.g.,~\cite[Lemma 18]{rubboli2024new}). Then the first equation follows from~\eqref{eq:limit-D-down}. 

For the second equation, we use the extension of Theorem~\ref{Additivity in the line} as discussed above to compute
\begin{align}
\min_{\sigma^{\n}\in \cF_n} D^{\downarrow} \big(\rho^{\otimes n}\big\| \sigma^{\n}\big) & = \lim_{\alpha \rightarrow 1^-}\min_{\sigma^{\n} \in \mathcal{F}_n}D_{\alpha,1-\alpha}\big(\rho^{\otimes n}\big\|\sigma^{\n}\big) \\
& = \lim_{\alpha \rightarrow 1^-}\min_{\sigma \in \mathcal{F}_1} nD_{\alpha,1-\alpha}(\rho\|\sigma)\\
& =n \min_{\sigma \in \cF_1} D^{\downarrow}(\rho\| \sigma). 
\end{align}
\end{proof}

Using Lemma~\ref{lem:D-downarrow-additive} and $ D^{\downarrow}(\rho\| \sigma) \leq D(\rho\| \sigma)$ established above we find that
\begin{align}
\rms(\rho\|\{\mathcal{F}_n:\, n\in \N\}) &= \lim_{n\rightarrow \infty}\frac{1}{n}\min_{\sigma^{\n}\in\mathcal{F}_n}D\big(\rho^{\otimes n}\big\|\sigma^{\n}\big) \\
&\geq  \lim_{n\rightarrow \infty}\frac{1}{n}\min_{\sigma^{\n}\in\mathcal{F}_n}D^{\downarrow}\big(\rho^{\otimes n}\big\|\sigma^{\n}\big) \\
&= \min_{\sigma \in \mathcal{F}_1} D^{\downarrow}(\rho\|\sigma).
\end{align}
This lower bound is tight in the classical case, and is tighter than the bound in terms of fidelity (corresponding to $\alpha=1/2$) due to the monotonicity of $\alpha \mapsto D_{\alpha,1-\alpha}(\rho \| \sigma)$ mentioned above.

\section{Alternative proof of the inequality of~\cite{Audenaert+07discriminating}}\label{app:proof-Audenaert-ineq}

In this appendix, we provide an alternative proof of the inequality derived by Audenaert et al.~in~\cite{Audenaert+07discriminating}. 

\begin{lemma}
    Let $A,B$ be positive semidefinite operators. Then, for any $\alpha \in (0,1)$, there exists a test $0 \leq T \leq I$ such that
    \begin{equation}
        \textup{Tr}[(I-T)A]-\textup{Tr}[TB] = \textup{Tr}[A^\alpha B^{1-\alpha}]
    \end{equation}
\end{lemma}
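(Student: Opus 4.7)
The plan is to construct the desired test $T$ via an intermediate value argument on a one-parameter family of scalar tests. First, I would rewrite the left-hand side in the convenient form
\begin{equation*}
\textup{Tr}[(I-T)A] - \textup{Tr}[TB] = \textup{Tr}[A] - \textup{Tr}[T(A+B)],
\end{equation*}
so that the dependence on $T$ is affine and continuous. Restricting attention to the scalar family $T_s := s I$ with $s \in [0,1]$, the LHS becomes the real-valued function $g(s) := (1-s)\textup{Tr}[A] - s\,\textup{Tr}[B]$, which is strictly monotone and continuous and which sweeps the closed interval $[-\textup{Tr}[B],\, \textup{Tr}[A]]$ as $s$ ranges over $[0,1]$.

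Second, provided the target $\textup{Tr}[A^\alpha B^{1-\alpha}]$ lies in the interval $[-\textup{Tr}[B],\, \textup{Tr}[A]]$, the intermediate value theorem produces a unique $s^\star \in [0,1]$ with $g(s^\star) = \textup{Tr}[A^\alpha B^{1-\alpha}]$, given explicitly by
\begin{equation*}
s^\star = \frac{\textup{Tr}[A] - \textup{Tr}[A^\alpha B^{1-\alpha}]}{\textup{Tr}[A] + \textup{Tr}[B]},
\end{equation*}
and setting $T := s^\star I$ will yield the stated equality by construction. The lower endpoint of the required range is immediate, since $\textup{Tr}[A^\alpha B^{1-\alpha}] = \textup{Tr}\big[A^{\alpha/2} B^{1-\alpha} A^{\alpha/2}\big] \geq 0 \geq -\textup{Tr}[B]$ because the argument of the trace is positive semidefinite.

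The main obstacle will be establishing the upper endpoint, namely the inequality $\textup{Tr}[A^\alpha B^{1-\alpha}] \leq \textup{Tr}[A]$ needed to ensure $s^\star \geq 0$. Note that every admissible test $0 \leq T \leq I$ satisfies $\textup{Tr}[(I-T)A] - \textup{Tr}[TB] \leq \textup{Tr}[A]$, so no more sophisticated choice of test can enlarge the attainable range beyond what the scalar family already provides. Since this upper bound on $\textup{Tr}[A^\alpha B^{1-\alpha}]$ does not hold for arbitrary positive semidefinite $A, B$ (for instance, scalars $A=1$, $B=4$, $\alpha = 1/2$ give $\textup{Tr}[A^{1/2} B^{1/2}] = 2 > 1 = \textup{Tr}[A]$), closing the proof as written requires an implicit normalization of $A$ and $B$ (such as the operators being subnormalized states with $\textup{Tr}[A] \geq \textup{Tr}[A^\alpha B^{1-\alpha}]$), in which case the IVT construction above immediately suffices. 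Otherwise, the natural reading is the Audenaert-type inequality $\textup{Tr}[(I-T)A] + \textup{Tr}[TB] \leq \textup{Tr}[A^\alpha B^{1-\alpha}]$, which is the form actually invoked through Lemma~\ref{lem:Audenaert-ineq} in the main text.
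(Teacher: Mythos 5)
You are right that the displayed equation in the statement contains a sign typo: the paper's own derivation yields $\textup{Tr}[(I-T)A]+\textup{Tr}[TB]=\textup{Tr}[A^\alpha B^{1-\alpha}]$, and it is this plus-sign version that feeds into Lemma~\ref{lem:Audenaert-ineq}. Spotting that is a genuine contribution. However, your proof of the (corrected) statement has a fundamental gap: the intermediate-value argument over the scalar family $T_s=sI$ cannot reach the target. For the plus-sign version, $\textup{Tr}[(I-T_s)A]+\textup{Tr}[T_sB]=(1-s)\textup{Tr}[A]+s\,\textup{Tr}[B]$ sweeps only the interval between $\textup{Tr}[A]$ and $\textup{Tr}[B]$, whereas $\textup{Tr}[A^\alpha B^{1-\alpha}]$ can lie strictly below $\min\{\textup{Tr}[A],\textup{Tr}[B]\}$ — for orthogonal states it equals $0$. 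Indeed, the entire point of the lemma is to exhibit a test whose total error is as small as $\textup{Tr}[A^\alpha B^{1-\alpha}]$, something no trivial (scalar) test achieves; any IVT argument over the full set of tests would instead need the Audenaert inequality as an input to guarantee the target lies in the attainable range, which is circular. Your closing move — deferring to the ``Audenaert-type inequality'' of Lemma~\ref{lem:Audenaert-ineq} — is likewise circular, since that lemma is proved in the main text precisely by invoking this appendix lemma.

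What the paper actually does, and what is missing from your proposal, is the explicit construction of a nontrivial operator $T$. One writes
\begin{equation*}
\textup{Tr}[A]-\textup{Tr}\big[A^\alpha B^{1-\alpha}\big]=\int_0^1\frac{\dd}{\dd t}\,\textup{Tr}\big[A^\alpha\big(tA+(1-t)B\big)^{1-\alpha}\big]\,\dd t ,
\end{equation*}
computes the derivative via the integral representation \eqref{eq:x-alpha-integral} of $x\mapsto x^{1-\alpha}$, and reads off a double-integral expression for $T$ satisfying $\textup{Tr}[T(A-B)]=\textup{Tr}[A]-\textup{Tr}[A^\alpha B^{1-\alpha}]$, which rearranges to the claimed equality. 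The remaining work is to verify $0\leq T\leq I$: positivity is manifest, and $T\leq I$ follows from the operator monotonicity of $x\mapsto x^\alpha$ applied to $(tA)^\alpha\leq (tA+(1-t)B)^\alpha$ together with the identity \eqref{eq:int-rep-1-alpha-derivative}. None of these steps can be replaced by a convexity or intermediate-value argument alone, so your proposal does not establish the lemma.
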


\begin{proof}
We start with
\begin{align}
&\Trm[A] - \Trm\big[A^\alpha B^{(1-\alpha)}\big]\\  
&\quad= \int_{0}^1 \frac{\dd}{\dd t} \tr \big[ A^\alpha \big(t A + (1-t)B\big)^{1-\alpha} \big]\dd t\\
&\quad = \frac{\sin((1-\alpha)\pi)}{\pi}\int_0^1 \int_0^\infty \lambda^{1-\alpha} \Trm\left[ A^\alpha \frac{1}{\lambda + tA + (1-t)B} (A-B) \frac{1}{\lambda + tA + (1-t)B}  \right] \dd t\dd \lambda.
\end{align}
Here, to compute the derivative in the third line we use the integral representation~\eqref{eq:x-alpha-integral}.
Therefore, letting
\begin{align}
T = \frac{\sin((1-\alpha)\pi)}{\pi}\int_0^1 \int_0^\infty \lambda^{1-\alpha}   \frac{1}{\lambda + tA + (1-t)B} A^\alpha \frac{1}{\lambda + tA + (1-t)B}   \dd t\dd \lambda,
\end{align}
we have $\tr[A] - \tr[A^\alpha B^{(1-\alpha)}] = \tr[T(A-B)] $ or equivalently $\tr[(I-T)A] + \tr[TB]=\tr[A^\alpha B^{1-\alpha}]$. It remains to verify that $0\leq T\leq I$. The positivity of $T$ is clear from its definition. Next, by the operator monotonicity of $x\mapsto x^\alpha$ we have  
\begin{align}
T & = \frac{\sin((1-\alpha)\pi)}{\pi}\int_0^1 \int_0^\infty t^{-\alpha} \lambda^{1-\alpha}   \frac{1}{\lambda + tA + (1-t)B} (tA)^\alpha \frac{1}{\lambda + tA + (1-t)B}   \dd t\dd \lambda \\
& \leq  \frac{\sin((1-\alpha)\pi)}{\pi}\int_0^1 \int_0^\infty t^{-\alpha} \lambda^{1-\alpha}   \frac{1}{\lambda + tA + (1-t)B} (tA + (1-t)B)^\alpha \frac{1}{\lambda + tA + (1-t)B}   \dd t\dd \lambda \\
& = \bigg( (1-\alpha)\int_0^1 t^{-\alpha}\dd t \bigg)I \\
& = I. \label{eq:T-leq-I}
\end{align}
Here, the third line is obtained using
\begin{align}\label{eq:int-rep-1-alpha-derivative}
(1-\alpha) x^{-\alpha} = \frac{\sin((1-\alpha)\pi)}{\pi}\int_0^\infty \lambda^{1-\alpha} \frac{1}{(\lambda+x)^2}\dd \lambda,
\end{align}
which can be verified by computing the derivative of both sides in~\eqref{eq:x-alpha-integral} for $\theta=1-\alpha$. 
\end{proof}

\section{Bounds on the strong converse exponent}\label{app:strong-converse}

This appendix is devoted to proving the upper and lower bounds on the strong converse exponent given in Proposition \ref{prop:sc bounds}. We start by characterizing the strong converse exponent in the commutative case. 

\begin{theorem}\label{thm:SC-commutative}
Consider the quantum hypothesis testing problem~\eqref{eq:problem} and suppose that Assumption~\ref{assumption:polar} and Assumption~\ref{assumption:support} are satisfied. Assume further that all states in $\cF_n$, for any $n\in \N$, mutually commute and also commute with $\rho^{\otimes n}$. Then, the strong converse exponent is
\begin{align}\label{eq:Hoeffding-composite sc}
\rmsc_r(\rho\| \{\cF_n\}_{n\in \N}) &= \sup_{1\leq \alpha\leq \infty}\sup_{\sigma\in \cF_1} \frac{\alpha-1}{\alpha}\big(r - \widetilde D_\alpha(\rho\|\sigma)\big),
\end{align}
where $\rmsc_r(\rho\| \{\cF_n\}_{n\in \N})$ is defined in~\eqref{eq:def-strong-coverse-exp} in terms of type I and type II errors given in~\eqref{eq:type-I-error} and~\eqref{eq:type-II-error}.
\end{theorem}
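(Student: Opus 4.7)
The plan is to sandwich $\rmsc_r(\rho\|\{\cS_n\}_{n\in\N})$ using Proposition~\ref{prop:sc bounds} and then collapse both bounds to the stated single-copy formula under the commutativity hypothesis. The two key ingredients are: (i) in the commuting case a measurement in the common eigenbasis turns the measured R\'enyi divergence appearing in the upper bound~\eqref{eq:sc upper} into the sandwiched R\'enyi divergence, and (ii) additivity of $\min_{\sigma\in\cS_n}\widetilde D_\alpha(\rho^{\otimes n}\|\sigma)$ in the commuting setting, which follows automatically from Corollary~\ref{commuting} together with Theorem~\ref{thm:additivity-derivative}. There is no need to construct explicit tests or to run an Audenaert-type one-shot argument; all the analysis is done at the level of regularised divergences.

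For the lower bound, I would simply instantiate~\eqref{eq:sc lower} at $n=1$, which already yields
\begin{equation*}
\rmsc_r(\rho\|\{\cS_n\}_{n\in\N}) \;\ge\; \sup_{\alpha>1}\sup_{\sigma\in\cS_1}\tfrac{\alpha-1}{\alpha}\bigl(r-\widetilde D_\alpha(\rho\|\sigma)\bigr),
\end{equation*}
and I would extend the range to $\alpha\in[1,+\infty]$ via the monotone limits collected in Lemma~\ref{lemma:psi properties} (with the $\alpha=1$ contribution vanishing trivially). For the upper bound, the commutativity hypothesis provides a joint eigenbasis of $\rho^{\otimes n}$ and all elements of $\cS_n$. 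Choosing $\cM_n$ to be the rank-one projective measurement in this basis makes $D_\alpha(\cM_n(\rho^{\otimes n})\|\cM_n(\sigma^{(n)})) = \widetilde D_\alpha(\rho^{\otimes n}\|\sigma^{(n)})$ for every $\alpha>1$ and every $\sigma^{(n)}\in\cS_n$, so~\eqref{eq:sc upper} simplifies to
\begin{equation*}
\rmsc_r \;\le\; \inf_{n\in\N}\sup_{\alpha>1}\tfrac{\alpha-1}{\alpha}\Bigl(r - \tfrac{1}{n}\min_{\sigma^{(n)}\in\cS_n}\widetilde D_\alpha(\rho^{\otimes n}\|\sigma^{(n)})\Bigr).
\end{equation*}

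Finally, since $\rho$ commutes with every element of $\cS_1$, any single-copy optimiser of $\widetilde D_\alpha(\rho\|\cdot)$ on $\cS_1$ also commutes with $\rho$. The sandwiched R\'enyi divergence corresponds to $(\alpha,z)=(\alpha,\alpha)\in\mathcal D$ with $|(1-\alpha)/z|<1$ for $\alpha>1$, so Corollary~\ref{commuting} makes the additivity criterion of Theorem~\ref{thm:additivity-derivative} automatic, yielding $\min_{\sigma^{(n)}\in\cS_n}\widetilde D_\alpha(\rho^{\otimes n}\|\sigma^{(n)}) = n\min_{\sigma\in\cS_1}\widetilde D_\alpha(\rho\|\sigma)$ for every $\alpha>1$. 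Substituting this into the displayed upper bound kills the $n$-dependence and matches the lower bound, giving the claim on $(1,+\infty)$; the endpoints are absorbed by continuity. The main technical point I expect to have to be careful about is precisely this endpoint issue at $\alpha=+\infty$ (handled as a monotone limit via Lemma~\ref{lemma:psi properties}), along with verifying that a \emph{single} eigenbasis works simultaneously for all $\alpha$ and all $\sigma^{(n)}\in\cS_n$ — which is exactly what the hypothesis of joint commutativity of $\{\cS_n\}_{n\in\N}$ with $\rho^{\otimes n}$ provides.
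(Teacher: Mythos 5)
Your converse (lower-bound) step is fine and matches the paper's, and your observation that commutativity makes the additivity criterion of Theorem~\ref{thm:additivity-derivative} automatic via Corollary~\ref{commuting} is correct. The problem is the achievability direction: your entire upper bound on $\rmsc_r$ rests on inequality~\eqref{eq:sc upper} of Proposition~\ref{prop:sc bounds}, but in this paper that inequality is \emph{derived from} Theorem~\ref{thm:SC-commutative}. Concretely, the proof of Corollary~\ref{cor:app-SC-corollary} obtains~\eqref{eq:sc upper} by applying Theorem~\ref{thm:SC-commutative} to the commutative hypothesis testing problem $\rho'=\cM_n(\rho^{\otimes n})$, $\cS'_m=\{\cM_n^{\otimes m}(\sigma^{(mn)})\}$. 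So invoking~\eqref{eq:sc upper} to prove the commutative case is circular: the ``measured'' upper bound is exactly the statement that the composite \emph{classical} strong converse exponent has the claimed single-letter form, which is the content of the theorem you are trying to prove. Your remark that ``there is no need to construct explicit tests'' is precisely where the argument breaks down — an upper bound on $\rmsc_r$ is an achievability statement, and some test construction is unavoidable.

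The paper's proof supplies this missing ingredient as follows. For $r\le D(\rho\|\cS_1)$ both sides vanish. For $D(\rho\|\cS_1)<r<\widetilde D_\infty(\rho\|\cS_1)$ it invokes the commuting case of Theorem~\ref{thm:additivity-strong-converse}, whose achievability part builds explicit tests $T_n=\Pi_{\widehat\rho_n>e^{\gamma n}\sigma_0^{\otimes n}}$ from the pinching of $\rho^{\otimes n}$ with respect to $\sigma_0^{\otimes n}$, bounds the type II error against all of $\cS_n$ using the polar-set assumption, and controls the type I error via \cite[Theorem IV.5]{MosonyiOgawa15two}; the only input from Proposition~\ref{prop:sc bounds} there is the lower bound~\eqref{eq:sc lower}, which is independent. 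For $r\ge\widetilde D_\infty(\rho\|\cS_1)$ the paper uses a rescaling trick $T_n=e^{-n(r-r')}T'_n$ to transfer tests from rates $r'<\widetilde D_\infty(\rho\|\cS_1)$, recovering the $\alpha=+\infty$ branch $r-\widetilde D_\infty(\rho\|\cS_1)$ of the formula. If you want to salvage your route, you would have to prove the upper bound~\eqref{eq:sc upper} from first principles (i.e., solve the composite classical strong converse problem directly), which amounts to redoing exactly this test construction in the commutative setting.
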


\begin{proof}
The proof of the converse is similar to that for the Chernoff and Hoeffding exponents, and we skip it. The proof of achievability is more or less similar to that of Theorem~\ref{thm:additivity-strong-converse}, so we only discuss the differences. 

Observe that by the commutativity assumption together with Theorem~\ref{thm:Stein} and Theorem~\ref{thm:Stein-additivity}, we have $\rms(\rho\|\{\cF_n\}_{n\in \N}) =  D(\rho\|\cF_1)=\inf_{\sigma\in \cF_1}D(\rho\|\sigma)$. Therefore, by the definition of the Stein exponent, if $r\leq D(\rho\|\cF_1)$ then $\rmsc_r(\rho\| \{\cF_n\}_{n\in \N})=0$. In this case, by the monotonicity of $\alpha\mapsto \widetilde D_\alpha(\rho\| \sigma)$, for any $\alpha\geq 1$ we have $r\leq \widetilde D_\alpha(\rho\|\cF_1)$, so the right-hand side of~\eqref{eq:Hoeffding-composite sc} is also equal to $0$ (for the choice of $\alpha=1$). If $D(\rho\|\cF_1)=\widetilde D_{\infty}(\rho\|\cF_1)$, then this also shows that \eqref{eq:Hoeffding-composite sc} holds for every 
$r\le \widetilde D_{\infty}(\rho\|\cF_1)$. If $D(\rho\|\cF_1)<r<\widetilde D_{\infty}(\rho\|\cF_1)$, then~\eqref{eq:Hoeffding-composite sc} follows by Theorem \ref{thm:additivity-strong-converse} due to 
$\widetilde D_{\infty}(\rho\|\cF_1)\le r_{\infty}(\rho\|\cF_1)$. 

Hence, we need to prove \eqref{eq:Hoeffding-composite sc} for 
$\widetilde D_{\infty}(\rho\|\cF_1)\le r$. For the rest, fix such an $r$.
By the above, for any $r'<  \widetilde D_{\infty}(\rho\|\cF_1)$,  there is a sequence $(T'_n)_{n\in \N}$  of tests such that 
\begin{align}
\liminf_{n\to +\infty} -\frac 1n\log \beta_n(\cF_n|T'_n)\geq r',
\end{align}
and
\begin{align}
\lim_{n\to +\infty} -\frac 1n \log(1-\alpha_n(\rho| T'_n)) = \sup_{1< \alpha<+\infty}\sup_{\sigma\in \cF_1} \frac{\alpha-1}{\alpha}\big(r'-\widetilde D_\alpha(\rho\|\sigma)\big).
\end{align}
Now, let $T_n=e^{-n(r-r')}T'_n$. We then have 
\begin{align}
\liminf_{n\to +\infty} -\frac 1n\log \beta_n(\cF_n|T_n)\geq (r-r')+r'=r,
\end{align}
and similarly
\begin{align}
\lim_{n\to +\infty} -\frac 1n \log(1-\alpha_n(\rho| T_n)) = r-r'+\sup_{1< \alpha<+\infty}\sup_{\sigma\in \cF_1} \frac{\alpha-1}{\alpha}\big(r'-\widetilde D_\alpha(\rho\|\sigma)\big).
\end{align}
We conclude that 
\begin{align}
\rmsc_r(\rho\|\{\cF_n\}_{n\in \N})\leq \inf_{r'<\widetilde D_{\infty}(\rho\|\cF_1)} r-r'+\sup_{1< \alpha<+\infty}\sup_{\sigma\in \cF_1} \frac{\alpha-1}{\alpha}\big(r'-\widetilde D_\alpha(\rho\|\sigma)\big).
\end{align}
We note that here by the monotonicity of $\alpha\mapsto \widetilde D_\alpha(\rho\|\cF_1)$, if $r'>\widetilde D_{\alpha'}(\rho\|\cF_1)$, then the supremum over $\alpha$ can be restricted to $\alpha>\alpha'$. Thus, as $r'\searrow \widetilde D_{\infty}(\rho\| \cF_1)$, the optimal $\alpha$ tends to $+\infty$. Therefore, the infimum over $r'<\widetilde D_{\infty}(\rho\|\cF_1)$ is obtained in the limit $r'\searrow \widetilde D_{\infty}(\rho\|\cF_1)$ and is equal to $r-\widetilde D_{\infty}(\rho\| \cF_1)$ as desired.

\end{proof}

We can now present our bounds on the strong converse exponent.

\begin{corollary}\label{cor:app-SC-corollary}
Consider the quantum hypothesis testing problem~\eqref{eq:problem} for some state $\rho$ and suppose that the alternative hypothesis $\{\cF_n\}_{n\in \N}$ satisfies Assumption~\ref{assumption:polar} and Assumption~\ref{assumption:support}. Then, the strong converse exponent is bounded as
\begin{align}
&\sup_{\alpha>1}\sup_{n\in\N}\sup_{\sigma^{(n)}\in\cF_n}\frac{\alpha-1}{\alpha}\Big(
r-\frac{1}{m}\widetilde{D}_{\alpha}\big(\rho^{\otimes n}\|\sigma^{\n}\big)\Big)\\
&=
\sup_{\alpha>1}\lim_{n\to+\infty}\sup_{\sigma^{\n}\in\cF_n}\frac{\alpha-1}{\alpha}\Big(
r-\frac{1}{n}\widetilde{D}_{\alpha}\big(\rho^{\otimes n}\|\sigma^{\n}\big)\Big)\label{eq:sc lower bound1}\\
& \leq \rmsc_r(\rho\|\{\cF_n\}_{n\in \N})\\
&\leq \inf_{n\in\N}\inf_{\cM_{n}}
\sup_{\alpha>1}\sup_{\sigma^{\n}\in\cF_n}\frac{\alpha-1}{\alpha}\Big(r-\frac{1}{n}D_{\alpha}\big(\cM_{n}(\rho^{\otimes n})\|
\cM_{n}(\sigma^{\n})\big)\Big), 
\label{eq:sc lower bound2}
\end{align}
where in the last line $\cM_{n}$ runs over all measurements on the $n$-fold tensor product space with finitely many POVM elements and $\rmsc_r(\rho\| \{\cF_n\}_{n\in \N})$ is defined in~\eqref{eq:def-strong-coverse-exp} in terms of type I and type II errors given in~\eqref{eq:type-I-error} and~\eqref{eq:type-II-error}.
\end{corollary}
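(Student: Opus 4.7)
The plan is to prove the lower bound by reducing to a simple alternative hypothesis of the form $(\sigma^{(n_0)})^{\otimes m}\in\cS_{mn_0}$, whose strong converse exponent is known in closed form, and to prove the upper bound by measuring the states to reduce to a classical hypothesis-testing problem to which Theorem~\ref{thm:SC-commutative} applies. The main obstacle will be verifying that the measurement-induced classical family inherits the submultiplicativity condition~\ref{A.4-polar} needed to invoke Theorem~\ref{thm:SC-commutative}.

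\emph{Lower bound.} I fix $n_0\in\N$, $\sigma^{(n_0)}\in\cS_{n_0}$ and $\alpha>1$. By Assumption~\ref{assumption:polar}, $(\sigma^{(n_0)})^{\otimes m}\in\cS_{mn_0}$ for every $m$, so any test sequence $(T_n)_n$ with $\liminf_n -\tfrac{1}{n}\log\beta_n(\cS_n|T_n)\geq r$ satisfies, along the subsequence $n=mn_0$, the simple-hypothesis constraint $\liminf_m -\tfrac{1}{m}\log\Trm[T_{mn_0}(\sigma^{(n_0)})^{\otimes m}]\geq n_0 r$. Applying the simple strong-converse formula~\eqref{eq:Strong-Converse} at rate $n_0 r$ to the pair $(\rho^{\otimes n_0},\sigma^{(n_0)})$ and dividing by $n_0$ yields the bound $\tfrac{\alpha-1}{\alpha}\bigl(r-\tfrac{1}{n_0}\widetilde D_\alpha(\rho^{\otimes n_0}\|\sigma^{(n_0)})\bigr)$; taking the supremum over $(\alpha,n_0,\sigma^{(n_0)})$ delivers the claimed inequality. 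The equality of this supremum with the limit in~\eqref{eq:sc lower} then follows from Fekete's lemma applied to $n\mapsto\min_{\sigma^{(n)}\in\cS_n}\widetilde D_\alpha(\rho^{\otimes n}\|\sigma^{(n)})$, which is subadditive because tensor products of optimizers are feasible under Assumption~\ref{assumption:polar}.

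\emph{Upper bound.} I fix $n\in\N$ and a finite-outcome POVM $\cM_n$ on $\mathcal{H}^{\otimes n}$, set $p:=\cM_n(\rho^{\otimes n})$, and introduce the classical family $\mathcal{T}_m:=\{\cM_n^{\otimes m}(\tau):\tau\in\cS_{mn}\}$ of distributions on the outcome alphabet. I then verify that $\{\mathcal{T}_m\}_m$ inherits Assumption~\ref{assumption:polar} and Assumption~\ref{assumption:support}: convexity, compactness, and the tensor-product inclusion follow directly from those of $\{\cS_n\}_n$; support inclusion is preserved by $\cM_n$; and condition~\ref{A.4-polar} transfers via the identity $h_{\mathcal{T}_m}(X)=h_{\cS_{mn}}\bigl((\cM_n^{\otimes m})^\ast(X)\bigr)$ together with Lemma~\ref{lemma:multiplicativity of support function}. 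Since classical states commute, Theorem~\ref{thm:SC-commutative} applied to the null $p$ at rate $nr$ gives
\begin{equation*}
\rmsc_{nr}(p\|\{\mathcal{T}_m\}_m)=\sup_{\alpha>1}\sup_{\sigma^{(n)}\in\cS_n}\tfrac{\alpha-1}{\alpha}\bigl(nr-D_\alpha(\cM_n(\rho^{\otimes n})\|\cM_n(\sigma^{(n)}))\bigr),
\end{equation*}
where I used $\widetilde D_\alpha=D_\alpha$ on commuting states. To conclude, any classical test $S_m$ lifts to the quantum test $T_{mn}:=\sum_{\underline x}S_m(\underline x)\,M_{x_1}^\ast M_{x_1}\otimes\cdots\otimes M_{x_m}^\ast M_{x_m}$, which reproduces the classical error probabilities exactly; I extend this to $n'=mn+k$ with $0\le k<n$ by $T_{n'}:=T_{mn}\otimes I^{\otimes k}$, whereupon compactness of $\cS_k$ together with condition~\ref{A.4-polar} ensures the type-II error grows by at most a bounded factor (negligible as $n'\to\infty$), while the type-I error is unchanged. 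Since $mn/n'\to 1$, the resulting sequence shows $\rmsc_r(\rho\|\{\cS_n\}_n)\leq\tfrac{1}{n}\rmsc_{nr}(p\|\{\mathcal{T}_m\}_m)$, and minimizing over $(n,\cM_n)$ produces~\eqref{eq:sc upper}.
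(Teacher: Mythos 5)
Your proof is correct and follows essentially the same route as the paper's (much terser) argument: the lower bound via the simple-hypothesis formula~\eqref{eq:Strong-Converse} applied to $(\rho^{\otimes n_0},(\sigma^{(n_0)})^{\otimes m})$, the equality via Fekete's lemma, and the upper bound via Theorem~\ref{thm:SC-commutative} applied to the measured family $\cS'_m=\{\cM_n^{\otimes m}(\sigma^{(mn)}):\sigma^{(mn)}\in\cS_{mn}\}$; you additionally spell out the verification that this family inherits the assumptions and the lifting/padding of classical tests, which the paper leaves implicit. The only blemish is notational: for POVM elements $M_x$ the lifted test should be $T_{mn}=\sum_{\underline{x}}S_m(\underline{x})\,M_{x_1}\otimes\cdots\otimes M_{x_m}$ rather than involving $M_x^\ast M_x$.
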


\begin{proof}
Equality in~\eqref{eq:sc lower bound1} follows by the Fekete lemma since $n\mapsto \inf_{\sigma^{\n}\in \cF_n}\widetilde D_{\alpha}(\rho^{\otimes n}\| \sigma^{\n})$ is subadditive. Also, the first inequality is a consequence of~\eqref{eq:Strong-Converse}.  Finally, the last inequality is a consequence of Theorem~\ref{thm:SC-commutative} when we consider the hypothesis testing problem with $\rho'=\cM_n(\rho^{\otimes n})$ and $\cF'_m=\{   \cM_n^{\otimes m}(\sigma^{(mn)}):\, \sigma^{(mn)}\in \cF_{mn}    \}$.
\end{proof}

\begin{remark}
In the upper bound~\eqref{eq:sc lower bound2} we may restrict $\cM_n$ to symmetric measurements, i.e., measurements that output the same distribution if we apply a permutation of the subsystems of the input $\sigma^{\n}$. In this case, by the convexity of $ D_\alpha(\cdot\|\cdot)$ in its second argument, if we also impose Assumption~\ref{assumption:perm-inv} that $\cF_n$ is permutation invariant, we may restrict the supremum over $\cF_n$, to the subset of symmetric states in $\cF_n$ which we denote by $\cF_n^{\textup{symm}}$. Now, suppose that the former infimum and the latter supremum can be swapped. Then, we have 
\begin{align}
&\rmsc_r(\rho\|\{\cF_n\}_{n\in \N})\\
&\leq \inf_{n\in\N}\inf_{\cM_{n}}
\sup_{\alpha>1}\sup_{\sigma^{\n}\in\cF_n}\frac{\alpha-1}{\alpha}\Big(r-\frac{1}{n}D_{\alpha}\big(\cM_{n}(\rho^{\otimes n})\|
\cM_{n}(\sigma^{\n})\big)\Big)\\
&\leq \inf_{n\in\N}\inf_{\cM_{n}: \textup{symm}}
\sup_{\alpha>1}\sup_{\sigma^{\n}\in\cF_n}\frac{\alpha-1}{\alpha}\Big(r-\frac{1}{n}D_{\alpha}\big(\cM_{n}(\rho^{\otimes n})\|
\cM_{n}(\sigma^{\n})\big)\Big)\\
&= \inf_{n\in\N}\inf_{\cM_{n}: \textup{symm}}
\sup_{\alpha>1}\sup_{\sigma^{\n}\in\cF_n^{\textup{symm}}}\frac{\alpha-1}{\alpha}\Big(r-\frac{1}{n}D_{\alpha}\big(\cM_{n}(\rho^{\otimes n})\|
\cM_{n}(\sigma^{\n})\big)\Big)\\
&= \inf_{n\in\N} \sup_{\alpha>1}\sup_{\sigma^{\n}\in\cF_n^{\textup{symm}}}  \inf_{\cM_{n}: \textup{symm}}
\frac{\alpha-1}{\alpha}\Big(r-\frac{1}{n}D_{\alpha}\big(\cM_{n}(\rho^{\otimes n})\|
\cM_{n}(\sigma^{\n})\big)\Big),
\end{align}
where the last equality is our assumption. Now thinking of the pinching map as in~\cite[Lemma 2.4]{Berta+2021composite} we find that there is a polynomial $p(n)$ such that  
\begin{align}
\sup_{\cM_n} D_{\alpha}\big(\cM_{n}(\rho^{\otimes n})\|
\cM_{n}(\sigma^{\n})\big)\geq D_{\alpha}\big(\rho^{\otimes n}\|
\sigma^{\n}\big) - \log p(n),
\end{align}
for any permutation symmetric state $\sigma^{\n}$. Therefore, 
\begin{align}
&\rmsc_r(\rho\|\{\cF_n\}_{n\in \N})\\
&= \inf_{n\in\N} \sup_{\alpha>1}\sup_{\sigma^{\n}\in\cF_n^{\textup{symm}}}  
\frac{\alpha-1}{\alpha}\Big(r-\frac{1}{n}D_{\alpha}\big(\rho^{\otimes n}\|
\sigma^{\n}\big)\Big) + \frac {1}{n} \log p\n\\
&= \inf_{n\in\N} \sup_{\alpha>1}\sup_{\sigma^{\n}\in\cF_n}  
\frac{\alpha-1}{\alpha}\Big(r-\frac{1}{n}D_{\alpha}\big(\rho^{\otimes n}\|
\sigma^{\n}\big)\Big),
\end{align}
which coincides with the lower bound in Corollary~\ref{cor:app-SC-corollary}. 
\end{remark}

In the following lemma, we show that the supremum over the measurements can indeed be swapped with the infimum over $\cF_n$.\footnote{We do not, however, know that the optimization over $\alpha$ can also be swapped.}

\begin{lemma}\label{lemma:minimax for measured Renyi}
Let $\cF$ be a closed and convex subset of quantum states. Then, we have
\begin{align}
\sup_{\cM}\inf_{\sigma\in\cF}D_{\alpha}\big(\cM(\rho)\|\cM(\sigma)\big)
=
\inf_{\sigma\in\cF}\sup_{\cM}D_{\alpha}\big(\cM(\rho)\|\cM(\sigma)\big).
\end{align}
\end{lemma}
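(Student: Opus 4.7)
The trivial direction $\sup_{\cM}\inf_{\sigma\in\cS}D_{\alpha}(\cM(\rho)\|\cM(\sigma)) \le \inf_{\sigma\in\cS}\sup_{\cM}D_{\alpha}(\cM(\rho)\|\cM(\sigma))$ is just the standard weak-duality inequality. For the reverse, the plan is to pass to the Rényi exponential $Q_\alpha = \exp((\alpha-1)D_\alpha)$, which has more favourable convexity structure, apply the minimax result of Lemma~\ref{lem:minimax Farkas}, and then transfer the resulting identity back to $D_\alpha$ via the monotone map $t \mapsto (\log t)/(\alpha-1)$ (increasing for $\alpha > 1$, decreasing for $\alpha \in (0,1)$).

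The key structural observation is that the set of POVMs carries a natural convex structure via \emph{probabilistic mixing}: given $\cM_1 = \{M_{1,i}\}$, $\cM_2 = \{M_{2,j}\}$ and $\lambda \in [0,1]$, the mixed POVM $\cM = \lambda \cM_1 + (1-\lambda)\cM_2$ acts on the disjoint union of the outcome alphabets with effects $\{\lambda M_{1,i}\}\cup\{(1-\lambda)M_{2,j}\}$. A direct computation on the resulting disjoint-union distributions shows that
\[
Q_\alpha\bigl(\cM(\rho)\bigm\|\cM(\sigma)\bigr) \;=\; \lambda\, Q_\alpha\bigl(\cM_1(\rho)\bigm\|\cM_1(\sigma)\bigr) + (1-\lambda)\, Q_\alpha\bigl(\cM_2(\rho)\bigm\|\cM_2(\sigma)\bigr),
\]
so $Q_\alpha$ is \emph{affine} in $\cM$ (both convex and concave). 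On the $\sigma$-side, since the classical scalar map $q\mapsto q^{1-\alpha}$ is convex for $\alpha>1$ and concave for $\alpha\in(0,1)$, composition with the linear map $\sigma\mapsto\cM(\sigma)$ makes $\sigma\mapsto Q_\alpha(\cM(\rho)\|\cM(\sigma))$ convex for $\alpha>1$ and concave for $\alpha\in(0,1)$.

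With these facts in hand, I would apply Lemma~\ref{lem:minimax Farkas} with $X=\cS$ (convex and compact as a closed convex set of states in finite dimension) and $Y$ the convex set of POVMs under probabilistic mixing. For $\alpha>1$, take $f(\sigma,\cM)=Q_\alpha(\cM(\rho)\|\cM(\sigma))$: it is affine (hence concave) in $\cM$ and convex and lower semi-continuous in $\sigma$, so the Lemma yields $\inf_\sigma\sup_\cM Q_\alpha=\sup_\cM\inf_\sigma Q_\alpha$; passing through the increasing transform $(\log\cdot)/(\alpha-1)$ delivers the desired identity for $D_\alpha$. For $\alpha\in(0,1)$, apply the Lemma to $f=-Q_\alpha$ (still affine in $\cM$, and now convex in $\sigma$ since $Q_\alpha$ was concave there) to obtain $\sup_\sigma\inf_\cM Q_\alpha=\inf_\cM\sup_\sigma Q_\alpha$; since the transform $(\log\cdot)/(\alpha-1)$ is now \emph{decreasing}, this rearranges into exactly the $D_\alpha$ identity we want.

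The main technical obstacle is that Lemma~\ref{lem:minimax Farkas} is formulated for convex subsets of a topological vector space, whereas POVMs with different outcome alphabets do not sit inside a common ambient vector space, and \emph{elementwise} convex combinations on a fixed alphabet do not make $Q_\alpha$ affine. I would resolve this by restricting the supremum to POVMs on a sufficiently large fixed outcome alphabet---say of size at most $d^2$, which is known to suffice for the measured Rényi divergence in dimension $d$---embedded in the finite-dimensional space $(B(\cH))^{d^2}$, and noting that the probabilistic-mixing combinations of two such POVMs, after an obvious relabelling, still lie in a slightly larger but still finite and compact class, which is enough to run the minimax argument. Secondary issues such as the lower semi-continuity of $\sigma\mapsto Q_\alpha(\cM(\rho)\|\cM(\sigma))$ when $\cM(\sigma)$ develops zero coordinates are handled by the standard convention (values $+\infty$ in the $D_\alpha$-formulation, or the natural extension of $Q_\alpha$), which causes no problem for the outer infimum.
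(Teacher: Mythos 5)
Your overall strategy coincides with the paper's: both proofs convexify the measurement side by probabilistic mixing and then invoke Lemma~\ref{lem:minimax Farkas}, with the trivial direction supplied by weak duality. The one genuine difference is that you work with $Q_\alpha$ rather than $D_\alpha$. That is a real improvement in two respects: the mixing identity $Q_\alpha\big(\cM^p(\rho)\|\cM^p(\sigma)\big)=\sum_{\cM}p(\cM)\,Q_\alpha\big(\cM(\rho)\|\cM(\sigma)\big)$ is an exact affinity, whereas the paper works at the level of $D_\alpha$ and therefore needs the Jensen-type inequality~\eqref{eq:mixture inequality}, which only goes in the useful direction for $\alpha>1$; as a consequence your argument also covers $\alpha\in(0,1)$ (with the sign bookkeeping you describe), which the paper's proof as written does not.

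However, your resolution of the main technical obstacle does not work as stated. Restricting to POVMs on a fixed outcome alphabet of size $d^2$ (or any fixed size) does not yield a convex set under your mixing operation: the mixture of two $k$-outcome POVMs has $2k$ outcomes, so the class is not closed under convex combinations, and ``a slightly larger but still finite class'' cannot be convex either, since iterated mixing forces unbounded alphabet growth. The clean fix --- and the one the paper uses --- is to take $Y=\cP_f(\mathfrak{M})$, the finitely supported probability distributions on the set $\mathfrak{M}$ of all finite-outcome POVMs. This is a genuine convex subset of the vector space of finitely supported signed measures on $\mathfrak{M}$; the function $(\sigma,p)\mapsto\sum_{\cM}p(\cM)\,Q_\alpha\big(\cM(\rho)\|\cM(\sigma)\big)$ is affine in $p$ by construction and equals $Q_\alpha\big(\cM^p(\rho)\|\cM^p(\sigma)\big)$ by your affinity computation, and the supremum over $p$ coincides with the supremum over $\cM$ because every $\cM^p$ is again a POVM and every POVM is a point mass. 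With this substitution, your proof goes through.
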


We develop some notations before giving the proof of this lemma. Let $\mathfrak{M}$ be the set of POVMs with finitely many POVM elements. That is, $\mathfrak{M}$ is the set of measurements $\cM=\{M_1, \dots, M_k\}$ such that $M_i$'s are positive semidefinite and $\sum_i M_i=I$. Here, each measurement $\cM$ is understood as a multiset since there may exist repeated measurement operators.

We denote the set of finitely supported probability distributions on $\mathfrak{M}$ by $\cP_f(\mathfrak{M})$, and for 
$p\in\cP_f(\mathfrak{M})$, we define $\cM^p\in \mathfrak{M}$ as follows. Let $\supp(p) =\{\cM^{(1)},\ldots, \cM^{(r)}\}$ be the support of $p$, i.e., the set of measurements $\cM \in \mathfrak{M}$ such that $p(\cM)>0$. Then, we let \begin{align}
\cM^p = \bigcup_{j=1}^r p(\cM^{(j)}) \cM^{(j)},
\end{align}
where for a measurement $\cM=\{M_1, \dots, M_k\}$ and $p>0$ we let $p\cM = \{pM_1, \dots, pM_k\}$.
Observe that for any $p\in \cP_f(\mathfrak{M})$, $\cM^p$ is again a POVM. Moreover, for any 
$\alpha>1$ we have
\begin{align}
&\sum_{\cM\in\mathfrak{M}}p(\cM)D_{\alpha}\big(\cM(\rho)\|\cM(\sigma)\big)\\
&=
\sum_{\cM\in\supp(p)}p(\cM)\frac{1}{\alpha-1}\log\sum_{M\in\cM}
(\Trm M\rho)^{\alpha}(\Trm M\sigma)^{1-\alpha}\\
&\le
\frac{1}{\alpha-1}\log\sum_{\cM\in\supp(p)}p(\cM)\sum_{M\in\cM}
(\Trm M\rho)^{\alpha}(\Trm M\sigma)^{1-\alpha}\\
&=
D_{\alpha}\big( \cM^{P}(\rho)\|\cM^{P}(\sigma)\big).
\label{eq:mixture inequality}
\end{align}

\begin{proof}
The proof goes the same way as that of 
\cite[Lemma 13]{Brandao+2020adversarial}. 
We have
\begin{align}
&\sup_{\cM\in\mathfrak{M}}\inf_{\sigma\in\cF}D_{\alpha}\big(\cM(\rho)\|\cM(\sigma)\big)\\
&\le
\inf_{\sigma\in\cF}\sup_{\cM\in\mathfrak{M}}D_{\alpha}\big(\cM(\rho)\|\cM(\sigma)\big)\\
&=
\inf_{\sigma\in\cF}\sup_{p\in\cP_f(\mathfrak{M})}\sum_{\cM\in\mathfrak{M}}p(\cM)D_{\alpha}\big(\cM(\rho)\|\cM(\sigma)\big)\\
&=
\sup_{p\in\cP_f(\mathfrak{M})}\inf_{\sigma\in\cF}\sum_{\cM\in\mathfrak{M}}p(\cM)D_{\alpha}\big(\cM(\rho)\|\cM(\sigma)\big)\\
&\le
\sup_{p\in\cP_f(\mathfrak{M})}\inf_{\sigma\in\cF}D_{\alpha}\big(\cM^P(\rho)\|\cM^P(\sigma)\big)\\
&\le
\sup_{\cM\in\mathfrak{M}}\inf_{\sigma\in\cF}D_{\alpha}\big(\cM(\rho)\|\cM(\sigma)\big),
\end{align}
whence all the inequalities are in fact equalities. 
The first inequality and the first equality above are trivial. 
The second inequality follows from \eqref{eq:mixture inequality}, and the last inequality is based on the fact that $\cM^p\in\mathfrak{M}$ for every $p\in \cP_f(\mathfrak{M})$. 
The second equality follows from Lemma~\ref{lem:minimax Farkas} 
due to the fact that the function
\begin{align}
(\sigma,p)\ni\cF\times\cP_f(\mathfrak{M})\mapsto\sum_{\cM\in\mathfrak{M}}p(\cM)D_{\alpha}\big(\cM(\rho)\|\cM(\sigma)\big)
\end{align}
is convex and lower semi-continuous in $\sigma$, and affine (hence concave) in $p$. 
\end{proof}

\end{document}